\pdfoutput=1

\documentclass[english, 12pt]{article}
\newcommand{\blind}{1}

\input{macros}

\defaultbibliographystyle{apalike}
\defaultbibliography{main}

\newif\ifseparatebib

\def\separatebibflag{0} 

\ifnum\separatebibflag=1
\separatebibtrue
\else
\separatebibfalse
\fi

\usepackage{titlesec}

\titlespacing{\section}{0pt}{5pt}{5pt}
\titlespacing{\subsection}{0pt}{3pt}{3pt}
\usepackage{booktabs}
\usepackage{graphicx}

\definecolor{DSgray}{cmyk}{0,1,0,0}

\newcommand{\hbtheta}{\widehat{\btheta}}
\newcommand{\tbtheta}{\widetilde{\btheta}}
\newcommand{\hbSigma}{\widehat{\bSigma}}

\newcommand{\tDelta}{\widetilde{\Delta}}

\newcommand{\hbH}{\widehat{\bH}}
\newcommand{\hbV}{\widehat{\bV}}
\newcommand{\calD}{\mathcal{D}}
\newcommand{\calN}{\mathcal{N}}
\newcommand{\calB}{\mathcal{B}}

\newcommand{\E}{\mathbb{E}}

\newcommand{\ip}[2]{\left\langle #1,\, #2 \right\rangle}
\newcommand{\normQ}[1]{\big\lVert #1 \big\rVert_{\bQ}}
\newcommand{\angles}[1]{\left\langle #1 \right\rangle}

\newcommand{\brackets}[1]{\left[ #1 \right]}
\newcommand{\braces}[1]{\left\{ #1 \right\}}
\newcommand{\abs}[1]{\left\lvert#1\right\rvert}

\renewcommand{\baselinestretch}{1.7} 

\begin{document}

\title{Tuning-Free Efficient Estimation for Multi-Source Data via Covariance-Aware Shrinkage}
\renewcommand{\thefootnote}{\arabic{footnote}}
\if1\blind
{
\author{
	Wenbo Jing\thanks{College of Business, City University of Hong Kong. 
		Email: \href{mailto:wenbjing@cityu.edu.hk}{\texttt{wenbjing@cityu.edu.hk}}.}
	\hspace{2ex}
	Xi Chen\thanks{Stern School of Business, New York University. 
		Email: \href{mailto:xc13@stern.nyu.edu}{\texttt{xc13@stern.nyu.edu}}.}
	\hspace{2ex}
	Yaqi Duan\thanks{Stern School of Business, New York University. 
		Email: \href{mailto:yaqi.duan@stern.nyu.edu}{\texttt{yaqi.duan@stern.nyu.edu}}.}
	\hspace{2ex}
	Kaizheng Wang\thanks{Department of IEOR and Data Science Institute, Columbia University. 
		Email: \href{mailto:kaizheng.wang@columbia.edu}{\texttt{kaizheng.wang@columbia.edu}}.}
	\hspace{2ex}
	Yichen Zhang\thanks{Daniels School of Business, Purdue University. 
		Email: \href{mailto:zhang@purdue.edu}{\texttt{zhang@purdue.edu}}.}
}
} \fi

\if0\blind
{
	\author{}
} \fi

\date{}

\maketitle

\vspace{-20pt}

\begin{abstract}
	\linespread{1.5}\selectfont
Modern statistical learning problems often involve multiple related data sets, where learning efficiency on a target set can be improved by utilizing related source sets, while heterogeneity among the source sets may introduce bias. 
Existing approaches are limited by suboptimal performance in multi-source settings, insufficient use of covariance information, or the computational burden of tuning procedures. 
We propose a tuning-free and covariance-aware shrinkage framework that constructs shrinkage directions using covariance information to improve efficiency. 
We establish finite-sample risk bounds that yield an explicit risk-improving interval for the shrinkage size, making the procedure fully data-driven and tuning-free. 
When multiple source sets are available, we further propose a novel sequential algorithm that shrinks the estimator toward the sources one at a time according to their estimated risk reduction. 
The proposed algorithm asymptotically attains the oracle risk under mild conditions and is guaranteed to improve over the single-step shrinkage method in the literature. 
The framework is further extended to general smooth \(M\)-estimation problems via a local quadratic approximation. 
Numerical studies show substantial gains over competing methods, especially when  the source data sets are highly heterogeneous.
\end{abstract}


\ifseparatebib
\begin{bibunit}
	\fi
	
	\newpage
	
	\renewcommand{\baselinestretch}{1.7} 

\section{Introduction}

Modern statistical learning problems often involve multiple heterogeneous data sources. 
Such structures arise when data are collected across different studies, populations, or platforms, and the learning performance on any single data set can be improved by leveraging information from related source data sets.
This perspective is increasingly important in applications such as survival risk estimation \citep{li2023cox}, 
multi-site clinical prediction \citep{hickey2024recast},  neuroimaging analysis \citep{ma2024multitask}, among others.
The statistical value of multi-source learning comes from the possibility of borrowing strength across related data sets, but the usefulness of each source depends on its relationship with the data set or task of primary interest. 
The central question is how to use external information to improve the target estimator while controlling the bias caused by incompatible sources.

This question has motivated extensive work in transfer learning and multi-task learning. 
Recent statistical transfer-learning methods focus on specific problems, including binary and nonparametric classification \citep{reeve2021adaptive, cai2021transfer}, high-dimensional regression \citep{li2022transfer}, and graphical models \citep{li2023graphical}. These methods typically impose model-specific similarity structures, such as sparse parameter differences, posterior drift, or restrictions on the source-target transfer relationship. In parallel, multi-task learning takes a complementary view by estimating related tasks jointly through shared regularized, clustered, or low-rank structures \citep{evgeniou2005learning,jacob2008clustered,pong2010trace}. 

Among the many ways to combine information across related data sets, \textit{shrinkage} offers a particularly attractive and analytically transparent principle. 
This idea originated from the James--Stein estimator in multivariate normal mean estimation,  which shows that shrinking a noisy estimator toward a suitable target can reduce risk \citep{james1961estimation,efron1973stein,green1991james}. 
Recent work has adapted this principle to transfer and data integration problems. 
For example, \citet{han2024improving} use James--Stein shrinkage to integrate external summary information for prediction in linear regression, \citet{dempsey2025improving} develop shrinkage estimators for general \(M\)-estimation with external auxiliary information, and \citet{abba2026bayesian} study Bayesian shrinkage priors for transfer learning in normal means and multiple linear regression.

Although the existing methods highlight the appeal of safe transfer across heterogeneous data sets, they do not fully address the multi-source target-estimation problem. 
First, existing shrinkage constructions are mainly designed for the two-set setting, that is, the setting with one target data set and one source data set. 
When more than one source is available, a common strategy in related shrinkage-based approaches is to aggregate the source sets first and then shrink the target estimator once along the resulting pooled direction \citep{wang2026divide}. This strategy works when the average source discrepancy is small but can be distorted when multiple source sets are highly different from each other.  In the case when several sources are close to the target but a few are far away, pooling mixes the close and far sources before shrinkage. The far sources can then contaminate the pooled direction and make the aggregated source estimator less helpful for transfer. The resulting shrinkage can be overly mild, leaving useful close sources underutilized.


Second, multi-task learning methods  typically focus on the rate of parameter estimation, while not explicitly pursuing the optimal statistical efficiency (e.g., \citealp{evgeniou2005learning, duan2023adaptive}). The role of covariance in determining such efficiency is ignored. Indeed, two source estimators may be similarly close to the target estimator in parameter value, but their covariance matrices can be substantially different. A smaller covariance matrix indicates a less variable and hence more efficient estimator, which should provide more reliable information for estimating the target parameter. Therefore, a low-covariance source estimator should contribute more than a high-covariance source estimator.

Third, many existing methods require tuning parameters that control the strength of regularization or information transfer across data sets, such as regularization parameters and testing thresholds \citep{evgeniou2005learning,li2022transfer,yang2022elastic,duan2023adaptive}. These parameters are typically selected by data-driven tuning procedures, but the statistical effect of this additional tuning step is often not explicitly accounted for. This issue can be particularly severe when the target sample size is small, where tuning may be unstable and sample splitting for validation further reduces the effective target information available for estimation. These three issues motivate a tuning-free shrinkage approach that can handle multiple sources correctly and account for the covariance to improve efficiency.

\subsection{Main Contributions}

This paper develops a tuning-free and covariance-aware shrinkage framework for transfer learning with multiple source data sets. 
We start with the two-set Gaussian mean problem, where we propose a data-adaptive shrinkage estimator that shrinks the target-only estimator toward a source-assisted estimator. 
The shrinkage direction is constructed using the covariance matrices of the single-set estimators, which quantify the estimation uncertainty, rather than only their point estimates. 
Intuitively, this covariance-aware design improves efficiency by assigning larger weights to source estimators with smaller covariance.
We establish finite-sample risk bounds showing that the proposed shrinkage estimator is guaranteed to improve over the target-only estimator with an explicit interval of shrinkage sizes. 
This interval is determined directly from the data with a closed form, providing rigorous theoretical guarantees while keeping the procedure tuning-free.

We establish the equivalence between our proposed method and a regularized optimization problem that pools the target and source data together. This equivalence provides an optimization-based interpretation of the proposed shrinkage estimator, but our estimator itself does not require an optimization procedure. Instead, it is directly computed, with the shrinkage level determined in a tuning-free manner. In addition, our proposed method can be implemented without pooling individual-level data, which is useful when source data cannot be shared due to privacy, storage, or communication constraints.

One major contribution of this paper is the generalization from two-set shrinkage to multi-set. 
To tackle this challenge, we propose a greedy-sequential algorithm that evaluates the source data sets sequentially, selects the source that gives the largest estimated risk reduction, updates the estimator and its covariance matrix, and repeats this process. 
This source-by-source construction allows the method to absorb information from compatible sources while reducing the influence of less compatible ones. 
We establish rigorous theoretical guarantees for this procedure, showing that the greedy rule asymptotically attains the oracle risk of the estimator that knows the homogeneous sources in advance and achieves a strictly smaller quadratic risk than the pool-then-shrink estimator under mild conditions.

Furthermore, we extend the framework to general $M$-estimation problems. 
For local \(M\)-estimators, the covariance-aware shrinkage direction can be constructed using the asymptotic covariance matrices of the corresponding estimators. 
We show that the target risk admits a local quadratic approximation in the shrinkage level, with remainder terms that are of smaller order under standard regularity conditions. 
This result justifies the proposed shrinkage and greedy sequential methods in general estimation problems.

In numerical studies, the proposed estimators deliver large and consistent improvements over existing benchmarks across both simulations and real-data analysis. 
The advantage is especially clear in multi-set settings with heterogeneous covariance structures, where the risk of the greedy sequential estimator remains nearly stable as covariance heterogeneity increases, while competing methods deteriorate substantially. 
In our simulations, the proposed greedy estimator reduces risk by about \(45\%\) on average relative to the best competing method in the covariance-heterogeneity setting.
In the real data analysis, it reduces the average classification error by about \(37\%\) relative to ARMUL and about \(39\%\) relative to the target-only estimator.

\subsection{Related Work}\label{sec:related-work}

\vspace{5pt}
\noindent\textbf{Transfer learning.}
A large statistical literature on transfer learning covers a broad range of tasks. 
One line of work studies classification and nonparametric transfer under distributional changes \citep{cai2021transfer,reeve2021adaptive,fan2025robust}, and another line focuses on high-dimensional generalized linear models, where source-target similarity is often encoded through sparse contrasts or related structural assumptions \citep{li2022transfer,tian2023transfer,li2024knowledge, gu2025angle,zhao2026riw}. 
Transfer ideas have also been developed for graphical models \citep{li2023graphical}, survival risk estimation \citep{li2023cox}, functional and structured mean estimation \citep{cai2024functional,wang2025piecewise}, and sequential decision \citep{cai2024bandit,chen2025transferq}.  In addition, recent work characterizes the intrinsic cost of adaptation in
transfer learning \citep{chakraborty2026statistical}.

\vspace{5pt}
\noindent\textbf{Multi-task learning.}
Multi-task learning solves related estimation tasks by exploiting shared structure across task-specific parameters. 
Classical methods include regularized multi-task learning \citep{evgeniou2005learning}, 
clustered multi-task learning \citep{jacob2008clustered}, and low-rank formulation \citep{pong2010trace}. 
More recent statistical work studies representation learning with adaptivity to unknown task similarity and robustness to outlying tasks \citep{duan2023adaptive, knight2023multitask, tian2025similar, kim2026multitask}, and also for block-wise missing data  \citep{sui2025heterogeneous}.
Notably, \cite{duan2023adaptive} develop ARMUL, an adaptive and robust multi-task framework for general convex losses. \citet{knight2023multitask} study multi-task learning from summary statistics, which is useful when individual-level data are unavailable across all tasks. 
\citet{kim2026multitask} further proposes matrix-weighted regularization for multi-task linear regression, using task-specific second-moment geometry in the penalty to weaken eigenvalue lower-bound conditions. 
However, most existing multi-task methods are not covariance-aware in the sense that the covariance matrix of each single estimator does not directly determine the borrowing direction and borrowing size. 
Although many procedures achieve optimal or near-optimal rates, the optimal rate alone does not guarantee the statistical efficiency. 
Our method uses covariance information directly in the shrinkage direction and size, leading to a more efficiency-oriented construction.

\vspace{5pt}
\noindent\textbf{Shrinkage and data integration.}
Classical James--Stein theory shows that shrinking an estimator can reduce the risk in multivariate mean estimation \citep{james1961estimation,efron1973stein,green1991james}. 
This principle has been adapted to modern data integration problems in several directions. 
For example, data-enriched regression and causal shrinkage estimators use external or observational data to improve estimation under a bias--variance tradeoff \citep{COS15,rosenman2023combining}, while safe-integration procedures such as elastic integrative analysis and the information-sharing dial control the extent of borrowing from heterogeneous data sources \citep{yang2022elastic,hector2024turning}. 
Recent James--Stein type developments include convex aggregation for high-dimensional multiple mean estimation \citep{blanchard2024estimation}, shrinkage integration of external reduced-model summaries for linear prediction \citep{han2024improving}, and Bayesian shrinkage priors for  regression models \citep{abba2026bayesian, lai2026bayesian}.
Two related concurrent works, \citet{dempsey2025improving}
and \citet{wang2026divide}, also develop shrinkage-based procedures. However, they mainly focus on the two-set setting, and the multi-source extension in \cite{wang2026divide} first aggregates the sources before applying shrinkage, which can reduce the effective contribution of informative sources when the source collection is heterogeneous. 
Our sequential algorithm instead incorporates sources one by one and updates the estimator and its covariance after each step, allowing each source to be evaluated by its incremental risk reduction and leading to oracle risk guarantees under source heterogeneity.


The rest of the paper is organized as follows. Section~\ref{sec:twoset} develops the covariance-aware shrinkage estimator for the two-set problem, and Section~\ref{sec:equivalence} connects it with a regularized multi-task learning method. 
 Section~\ref{sec:multiset} introduces the greedy sequential shrinkage algorithm for multiple source sets, and Section~\ref{sec:general-M-estimation} further extends the framework to general $M$-estimation problems. Section~\ref{sec:numerical_study} presents simulation studies and the real-data analysis, followed by conclusions in Section~\ref{sec:conc}.

\section{Data-Adaptive Shrinkage Estimator}\label{sec:twoset}

We begin with the two-set Gaussian mean estimation problem. Suppose that we observe i.i.d.~samples on the target set $\calD_1=\{\bx_{1i}\}_{i=1}^{n_1} \sim \calN(\btheta_1^{\star}, \bSigma_1)$ and the source set $\calD_2=\{\bx_{2i}\}_{i=1}^{n_2} \sim \calN(\btheta_2^{\star}, \bSigma_2)$, where the population means $\btheta_1^{\star}, \btheta_2^{\star} \in \R^p$ are unknown, and the covariance matrices $\bSigma_1, \bSigma_2  \in \R^{p \times p}$ are positive definite and known. Our goal is to estimate  $\btheta_1^{\star}$. For an estimator $\hbtheta$, we evaluate its performance using the risk $\EE\big[\big\|{\hbtheta-\btheta_1^{\star}}\big\|_{\bQ}^2\big]$, where $\bQ$ is a known positive definite matrix satisfying $0<\underline q< \lambda_{\min}(\bQ)<\lambda_{\max}(\bQ)\leq \overline q<\infty$  for some constants $\underline q$ and $\overline q$, with $\lambda_{\min}$ and $\lambda_{\max}$ denoting the minimum and maximum eigenvalues of a matrix. Throughout the paper, for any positive integer \(k\), let \([k]:=\{1,2,\dots,k\}\).

\begin{remark}\itshape
	The risk metric $\EE\big[\big\|{\hbtheta-\btheta^{\star}}\big\|_{\bQ}^2\big]$ is a generalized version of the classical Mean Squared Error (MSE), where the matrix $\bQ$ provides flexibility to tailor the metric for different purposes. For example, choosing $\bQ=\bI_{p}$, the identity matrix, recovers the standard MSE. In other settings, selecting a different $\bQ$ yields metrics aligned with alternative objectives. In a regression model $y=\bx^{\top}\btheta^{\star}+\epsilon$, choosing $\bQ=\EE\big[\bx\bx^{\top}\big]$ converts the metric into $\EE\big[\big\|{\bx^{\top}\hbtheta-\bx^{\top}\btheta^{\star}}\big\|_{2}^2\big]$, which quantifies the prediction error of the regressor.
\end{remark} 

If the two populations share the same mean, i.e.,  $\btheta_1^{\star}=\btheta_2^{\star}$, then the Maximum Likelihood Estimator (MLE) of $\btheta_1^{\star}$ is
\begin{equation}\label{eq:MLE}
	\overline{\btheta}=\argmin_{\btheta} \frac{1}{2} \sum_{j=1}^{2}\sum_{i=1}^{n_j}\norm{\bx_{ji}-\btheta}_{\bSigma_j^{-1}}^2=\big(n_1\bSigma_1^{-1}+n_2\bSigma_2^{-1}\big)^{-1} \big(n_1\bSigma_1^{-1}\tbtheta_1+n_2\bSigma_2^{-1}\tbtheta_2\big),
\end{equation}
where $\tbtheta_j=\frac{1}{n_j}\sum_{i=1}^{n_j}\bx_{ji}$ $ (j=1, 2)$ denotes the sample mean of $\calD_j$ and is the single-set MLE for $\btheta_j^{\star}$. Compared to the single-set MLE $\tbtheta_1$, the pooled estimator $\overline{\btheta}$ reduces variance by transferring information across both datasets, as 
\[
\cov\big(\overline{\btheta}\big)=\big(n_1\bSigma_1^{-1}+n_2\bSigma_2^{-1}\big)^{-1} \preceq n_1^{-1}\bSigma_1 = \cov\big(\tbtheta_1\big) . 
\]

For a general setting where $\btheta_1^{\star}$ may be different from $\btheta_2^{\star}$, pooling introduces bias and may perform worse than using the single-set estimator $\tbtheta_1$ alone. The challenge is therefore to design an estimator that adapts to the unknown similarity between populations, which should borrow strength from $\calD_2$ when information borrowing is beneficial, yet remain faithful to $\calD_1$ when the two datasets disagree. This motivates the following shrinkage estimator:
\begin{equation}\label{eq:shrinkage-estimator}
	\overline{\btheta}_{t}:=(1-t)\tbtheta_1+t\overline{\btheta} = \tbtheta_1 + t\bW_2\Big(\tbtheta_2-\tbtheta_1\Big),
\end{equation}
where $\bW_2:=n_2 \big(n_1\bSigma_1^{-1}+n_2\bSigma_2^{-1}\big)^{-1}\bSigma_2^{-1}$ and $t \in [0, 1]$. 

The shrinkage estimator \eqref{eq:shrinkage-estimator} shrinks the single-set estimator $\tbtheta_1$ to the pooled estimator $\overline{\btheta}$, and the amount of shrinkage is controlled by the parameter $t$. By adjusting $t$ properly, the estimator behaves similarly to the pooled MLE when $\btheta_1^{\star}\approx\btheta_2^{\star}$, while avoiding excessive bias when the two populations differ. We next develop a data-adaptive method for selecting the shrinkage size $t$.

\subsection{Adaptive Shrinkage Size} \label{sec:adaptive-t}

In this section, we develop a data-adaptive procedure for selecting the shrinkage parameter $t$ in \eqref{eq:shrinkage-estimator}.
Our strategy is to use Stein's unbiased risk estimation (SURE) to construct an unbiased estimator of the risk $\E\normQ{\overline\btheta_t-\btheta_1^{\star}}^2$ for each $t \in [0, 1]$ and minimize it to select the optimal $t$. The following lemma provides an identity for evaluating the risk of a broad class of estimators obtained by perturbing a Gaussian vector, and will be applied in particular to the shrinkage estimator $\overline\btheta_t$.

\begin{lemma}\label{lem:sure}
	Let $\bx\sim\mathcal{N}(\btheta,\bSigma)$. Suppose $\bg:\R^p\to\R^p$ is weakly differentiable and
	$\E\big|(x_k-\theta_k) g_{k'}(\bx)\big| + \E\left|\frac{\partial g_{k'}(\bx)}{\partial x_{k}}\right| < \infty$ for all $k, k'\in [p]$.
	Denote the Jacobian matrix of $\bg$ by $\bJ$. Then, for all $\bQ\succ 0$,
	\[
	\E \normQ{\bx+\bg(\bx)-\btheta}^2
	= \ip{\bSigma}{\bQ} + \E\!\left[2\ip{\bJ(\bx)\bSigma}{\bQ} + \normQ{\bg(\bx)}^2 \right].
	\]
\end{lemma}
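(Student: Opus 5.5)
The plan is to expand the square and reduce the cross term to the multivariate Stein identity. Write $\bx+\bg(\bx)-\btheta = (\bx-\btheta)+\bg(\bx)$. Then
\[
\normQ{\bx+\bg(\bx)-\btheta}^2 = \normQ{\bx-\btheta}^2 + 2(\bx-\btheta)^\top\bQ\,\bg(\bx) + \normQ{\bg(\bx)}^2 .
\]
The first term has expectation $\E\,\mathrm{tr}(\bQ(\bx-\btheta)(\bx-\btheta)^\top)=\ip{\bSigma}{\bQ}$. The last term is kept as is. So everything reduces to showing
\[
\E\big[(\bx-\btheta)^\top\bQ\,\bg(\bx)\big]=\E\,\ip{\bJ(\bx)\bSigma}{\bQ}.
\]
The integrability hypothesis $\E|(x_k-\theta_k)g_{k'}(\bx)|<\infty$ guarantees that this expectation is finite and that it can be split coordinatewise.

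The key step is the Gaussian integration-by-parts identity $\E[(\bx-\btheta)\,h(\bx)]=\bSigma\,\E[\nabla h(\bx)]$ for a scalar weakly differentiable $h$ with $\E|(x_k-\theta_k)h|+\E|\partial_k h|<\infty$. Applying it to $h=g_{k'}$ for each $k'$ and collecting the results gives $\E[(\bx-\btheta)\bg(\bx)^\top]=\bSigma\,\E[\bJ(\bx)^\top]$, where $J_{k'k}=\partial g_{k'}/\partial x_k$. Hence
\[
\E[(\bx-\btheta)^\top\bQ\bg(\bx)]=\mathrm{tr}\big(\bQ\,\E[\bg(\bx)(\bx-\btheta)^\top]\big)=\E\,\mathrm{tr}(\bQ\bJ\bSigma)=\E\,\ip{\bJ(\bx)\bSigma}{\bQ},
\]
using symmetry of $\bQ$ and $\bSigma$ with $\ip{\bA}{\bB}=\mathrm{tr}(\bA^\top\bB)$. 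It is worth checking the transpose conventions carefully at this point, so that the result comes out as $\bJ\bSigma$ rather than $\bSigma\bJ$. The two trace forms agree in any case, since $\mathrm{tr}(\bQ\bJ\bSigma)=\mathrm{tr}(\bSigma\bQ\bJ)$, and the Frobenius pairing then matches the stated form.

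To prove the Gaussian identity, I would reduce to the standard case. Write $\bx=\btheta+\bSigma^{1/2}\bz$ with $\bz\sim\calN(0,\bI_p)$ and set $\tilde h(\bz)=h(\btheta+\bSigma^{1/2}\bz)$. The one-dimensional Stein lemma gives $\E[z_k \tilde h(\bz)]=\E[\partial_k\tilde h(\bz)]$. Chain rule for weakly differentiable functions under a linear change of variables gives $\nabla\tilde h=\bSigma^{1/2}\nabla h$. Then $\E[(\bx-\btheta)h]=\bSigma^{1/2}\E[\bz\tilde h]=\bSigma^{1/2}\E[\nabla\tilde h]=\bSigma\,\E[\nabla h]$.

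I expect the one-dimensional Stein lemma under only weak differentiability and integrability to be the main technical obstacle. The plan there is Stein's original argument. Fix the other coordinates and use that $\tilde h$ is absolutely continuous in $z_k$ for almost every value of the remaining coordinates. Write $\phi'(z)=-z\phi(z)$, so that $\E[\partial_k\tilde h]=\int\partial_k\tilde h(z)\,\phi(z)\,dz$. Represent $\phi(z)=\int_z^\infty u\phi(u)\,du$ for $z>0$ and $\phi(z)=-\int_{-\infty}^z u\phi(u)\,du$ for $z<0$, then apply Fubini. The integrability assumptions justify the Fubini exchange and avoid boundary terms; this recovers $\E[z_k\tilde h]$. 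Finally, integrate over the remaining coordinates.
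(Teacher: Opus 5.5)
Your proposal is correct and follows the same route as the paper. The paper simply invokes Stein's lemma through Exercise 2.56 in \cite{robert2007bayesian}; your expansion of the square, the transpose bookkeeping that yields $\ip{\bJ(\bx)\bSigma}{\bQ}$, and the whitened one-dimensional Stein argument supply the details it omits, with one step left implicit: the integrability hypotheses are stated in $\bx$-coordinates but carry over to $\bz=\bSigma^{-1/2}(\bx-\btheta)$, because each $z_k$ is a fixed linear combination of the $x_l-\theta_l$, each $\partial_k\tilde h$ is a fixed linear combination of the $\partial_l h$, and the hypotheses are assumed for every index.
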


To apply Lemma \eqref{lem:sure}, we view $\tbtheta_1 \sim \calN(\btheta_1^{\star}, n_1^{-1}\bSigma_1)$ as the basic observation $\bx$, treat $\tbtheta_2$ and $t$ as fixed, and let $\bg(\bx)=t\bW_2\big(\tbtheta_2-\bx\big)$, which yields
\begin{equation}\label{eq:MSE-in-t}
	\E \normQ{\overline{\btheta}_t-\btheta_1^{\star}}^2
	= \ip{n_1^{-1}\bSigma_1}{\bQ}  -2t\ip{n_1^{-1}\bW_2\bSigma_1}{\bQ} +\E\!\left[ t^2 \normQ{\bW_2\big(\tbtheta_2-\tbtheta_1\big)}^2\right].
\end{equation}
Hence, the risk is a quadratic function of $t$ with coefficients that are either known or can be unbiasedly estimated from the data. Specifically, the quadratic function
\[
q\big(t; \tbtheta_1, \tbtheta_2\big)=\ip{n_1^{-1}\bSigma_1}{\bQ} -2t\ip{n_1^{-1}\bW_2\bSigma_1}{\bQ} + t^2 \normQ{\bW_2\big(\tbtheta_2-\tbtheta_1\big)}^2,
\]
serves as an unbiased estimator of the risk $\E\normQ{\overline\btheta_t-\btheta_1^{\star}}^2$. Minimizing 
$q\big(t; \tbtheta_1, \tbtheta_2\big)$ gives the data-dependent choice
$
\widehat{t} = \ip{n_1^{-1}\bW_2\bSigma_1}{\bQ} / {\normQ{\bW_2\big(\tbtheta_2-\tbtheta_1\big)}^2}
$ 
and the corresponding shrinkage estimator 
\begin{equation}
	\overline\btheta_{\widehat{t} }
	= \tbtheta_1 + \frac{\ip{n_1^{-1}\bW_2\bSigma_1}{\bQ} }{ {\normQ{\bW_2\big(\tbtheta_2-\tbtheta_1\big)}^2}}\bW_2\Big(\tbtheta_2-\tbtheta_1\Big).
	\label{eq:shrinkage-tstar}
\end{equation}

Although \eqref{eq:shrinkage-tstar} is obtained by minimizing an unbiased estimator of the risk, it does not necessarily minimize the true risk itself. The key issue is that Lemma \ref{lem:sure} provides an unbiased risk expression only for a fixed value of 
$t$, whereas the quantity $\widehat{t}$ in \eqref{eq:shrinkage-tstar} is data dependent. Consequently, the guarantee in \eqref{eq:MSE-in-t} does not automatically extend to the plug-in estimator $\overline\btheta_{\widehat{t}}$. This  necessitates a more rigorous analysis of the risk of the shrinkage estimator and motivates considering a possible refinement. Specifically, we consider a family of estimators
\begin{equation}\label{eq:shrinkage-estimator-s}
	\widehat{\btheta}_s = 
	\tbtheta_1 + \frac{s}{\normQ{\bW_2\big(\tbtheta_2-\tbtheta_1\big)}^2}\,
	\bW_2\big(\tbtheta_2-\tbtheta_1\big), \qquad \text{ for any } s>0.
\end{equation}
The estimator in \eqref{eq:shrinkage-tstar} corresponds to the particular choice $s = \bar s := \ip{n_1^{-1}\bW_2\bSigma_1}{\bQ}$.
The following theorem provides a non-asymptotic upper bound on the risk of $\widehat{\btheta}_s$ and identifies a range of $s$ that guarantees improvement over the single-set estimator.
\begin{theorem}\label{thm:risk-bound-s}
	Let $\bS=n_1^{-1}\bQ^{1/2}\bW_2\bSigma_1\,\bQ^{1/2}$. If $\Tr(\bS) > 2\lVert\bS\rVert_2$, then for any $s \in (0,\,2\Tr(\bS)-4\lVert\bS\rVert_2)$, we have
	\begin{equation}
		\E\normQ{\widehat{\btheta}_s-\btheta_1^{\star}}^2 
		- \ip{n_1^{-1}\bSigma_1}{\bQ}
		\;\le\; -s\,[2\Tr(\bS) - 4\lVert\bS\rVert_2 - s]\cdot
		\E\left[\normQ{\bW_2\big(\tbtheta_2-\tbtheta_1\big)}^{-2}\right],
		\label{eq:risk-upper}
	\end{equation}
	which implies
	$\E\normQ{\widehat{\btheta}_s-\btheta_1^{\star}}^2 < \E\normQ{\tbtheta_1-\btheta_1^{\star}}^2=\ip{n_1^{-1}\bSigma_1}{\bQ}$.
\end{theorem}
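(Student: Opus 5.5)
Condition on $\tbtheta_2$ and apply Lemma~\ref{lem:sure} with $\bx=\tbtheta_1\sim\calN(\btheta_1^\star,n_1^{-1}\bSigma_1)$. This gives an exact SURE expression for the risk of $\widehat\btheta_s$; its divergence term is then bounded by a matrix-norm inequality, and the result is integrated over $\tbtheta_2$.

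\textbf{Setting up the perturbation.} Write $\bu=\bW_2(\tbtheta_2-\bx)$ and $\bg(\bx)=s\,\bu/\normQ{\bu}^2$, so that $\widehat\btheta_s=\bx+\bg(\bx)$. Since $\bW_2$ is invertible and $\tbtheta_2-\bx$ is Gaussian, $\bu$ has a nondegenerate distribution. The singularity of $\bg$ is of order $\|\bu\|^{-1}$, and its Jacobian is of order $\|\bu\|^{-2}$. For $p\ge3$ these are integrable, so the integrability condition of Lemma~\ref{lem:sure} holds and $\bg$ is weakly differentiable. The hypothesis $\Tr(\bS)>2\|\bS\|_2$ forces $p\ge3$, since $\Tr(\bS)\le p\|\bS\|_2$ when $\bS$ has positive eigenvalues. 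Lemma~\ref{lem:sure} then gives
\[
\E\normQ{\widehat\btheta_s-\btheta_1^\star}^2=\ip{n_1^{-1}\bSigma_1}{\bQ}+\E\Big[2\ip{\bJ(\bx)n_1^{-1}\bSigma_1}{\bQ}+s^2\normQ{\bu}^{-2}\Big].
\]

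\textbf{Computing the Jacobian term.} Let $\bv=\bQ^{1/2}\bu$. The Jacobian is
\[
\bJ=-s\Big(\frac{\bW_2}{\|\bv\|^2}-\frac{2\bu\bu^\top\bQ\bW_2}{\|\bv\|^4}\Big).
\]
Using cyclicity of the trace together with $\bS=n_1^{-1}\bQ^{1/2}\bW_2\bSigma_1\bQ^{1/2}$,
\[
\ip{\bJ n_1^{-1}\bSigma_1}{\bQ}=-\frac{s}{\|\bv\|^2}\Big(\Tr(\bS)-\frac{2\bv^\top\bS\bv}{\|\bv\|^2}\Big).
\]
The point to check is that the same matrix $\bS$ appears in both pieces. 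This holds because $\Tr(\bQ\bW_2\bSigma_1)=n_1\Tr(\bS)$, and the quadratic term is $\Tr(\bQ\bu\bu^\top\bQ\bW_2\bSigma_1)$ up to scaling. Note that $\bS$ need not be symmetric. However, $\bW_2\bSigma_1=n_2(n_1\bSigma_1^{-1}+n_2\bSigma_2^{-1})^{-1}\bSigma_2^{-1}\bSigma_1$ is not obviously symmetric either, so I bound the quadratic form by $|\bv^\top\bS\bv|\le\|\bS\|_2\|\bv\|^2$, which is valid regardless of symmetry.

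\textbf{Bounding and assembling.} The bound gives
\[
2\ip{\bJ n_1^{-1}\bSigma_1}{\bQ}\le -\frac{s\,(2\Tr(\bS)-4\|\bS\|_2)}{\|\bv\|^2}.
\]
Adding the $s^2\|\bv\|^{-2}$ term and taking the expectation over $\tbtheta_1$ and then $\tbtheta_2$ (Fubini) yields \eqref{eq:risk-upper}, because $\|\bv\|^2=\normQ{\bW_2(\tbtheta_2-\tbtheta_1)}^2$. For $s$ in the stated interval, $s[2\Tr(\bS)-4\|\bS\|_2-s]>0$. Moreover $\E\|\bv\|^{-2}\in(0,\infty)$: it is positive trivially, and finite because $p\ge3$ and the Gaussian density of $\bv$ is bounded. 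This gives the strict improvement.

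\textbf{Main obstacle.} The delicate step is verifying the regularity conditions of Lemma~\ref{lem:sure} near $\bu=0$. This means checking weak differentiability across the singularity and integrability of $\|\bu\|^{-2}$ under a non-isotropic Gaussian. I would handle it by a local change to spherical coordinates, using that $\int_{\|\bv\|\le 1}\|\bv\|^{-2}\,d\bv<\infty$ for $p\ge3$, while the density of $\bv$ is bounded.
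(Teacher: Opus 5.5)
Your proposal is correct and follows essentially the same route as the paper's proof: condition on $\tbtheta_2$, apply Lemma~\ref{lem:sure} to $\bg(\bx)=s\,\bu/\normQ{\bu}^2$, recognize both Jacobian contributions through the same matrix $\bS$, bound $\bv^\top\bS\bv\le\|\bS\|_2\|\bv\|^2$, and integrate using that $\E\normQ{\bW_2(\tbtheta_2-\tbtheta_1)}^{-2}$ is finite and positive. Your observation that $\Tr(\bS)>2\|\bS\|_2$ already forces $p\ge 3$ makes explicit an assumption that the paper's integrability lemma simply imposes; also, although your bound does not need it, $\bS$ is in fact symmetric, since $n_1^{-1}\bW_2\bSigma_1=n_1^{-1}\bSigma_1-(n_1\bSigma_1^{-1}+n_2\bSigma_2^{-1})^{-1}$.
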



Theorem \ref{thm:risk-bound-s} asserts that, under the condition $\Tr(\bS)>2\norm{\bS}_2$, the shrinkage estimator \eqref{eq:shrinkage-estimator-s} yields a strict risk reduction for every $s$ within the admissible interval  $(0,\,2\Tr(\bS)-4\lVert\bS\rVert_2)$  compared to the single-set estimator $\tbtheta_1$. 	Moreover, the quadratic upper bound in \eqref{eq:risk-upper} is minimized at
$
\underline{s} = \Tr(\bS) - 2\lVert\bS\rVert_2,
$
which leads to the corresponding estimator
\begin{equation}
	\widehat{\btheta}_{\underline{s}}
	= \tbtheta_1 + 
	\frac{\Tr(\bS)-2\lVert\bS\rVert_2}{\normQ{\bW_2\big(\tbtheta_2-\tbtheta_1\big)}^2}
	\,\bW_2\Big(\tbtheta_2-\tbtheta_1\Big).
	\label{eq:shrinkage-sopt}
\end{equation}
Meanwhile, recall that $\overline\btheta_{\widehat{t}}$ in \eqref{eq:shrinkage-tstar} corresponds to $\widehat{\btheta}_{\overline{s}}$ with $\overline{s} = \ip{n_1^{-1}\bW_2\bSigma_1}{\bQ}=\Tr(\bS)$.  Theorem \ref{thm:risk-bound-s} ensures that both $\widehat{\btheta}_{\underline{s}}$ and $\widehat{\btheta}_{\overline{s}}$ lead to risk improvement over $ \tbtheta_1$, provided that $\Tr(\bS)>2\lVert\bS\rVert_2$ and $\Tr(\bS)>4\lVert\bS\rVert_2$, respectively. 
In addition, it is worth noting that, under a simplified setting where $\bSigma_j=\sigma_j^2\bI_{p}$ for $j=1, 2$ and $\bQ=\bI_p$, the estimator $\widehat{\btheta}_{\underline{s}}$ coincides with the estimator proposed in \cite{green1991james}. 


\begin{remark}[Tuning-free Property]\label{rmk:tuning-free}\itshape
A key feature of our shrinkage construction is its tuning-free nature, since the shrinkage size is determined explicitly by the risk bound rather than by cross-validation or sample splitting. Specifically, the preceding risk bound identifies an explicit interval $[\underline{s}, \overline{s}]$, depending only on $\bS$, and Theorem~\ref{thm:risk-bound-shat} below further shows that shrinkage sizes in this interval yield guaranteed risk improvement. 

In practice, when $\bSigma_1$ and $\bSigma_2$ are unknown, we compute covariance estimators $\big\{\widehat\bSigma_j\big\}$, replace $\bS$ by the corresponding plug-in estimator $\widehat{\bS}$, and use 
$
\widehat{s}
=
\Tr\big(\widehat{\bS}\big)-\big\|\widehat{\bS}\big\|_2
$,
the midpoint of the data-driven interval
$
\big[
\Tr\big(\widehat{\bS}\big)-2\big\|\widehat{\bS}\big\|_2,\,
\Tr\big(\widehat{\bS}\big)
\big]
$.
This choice preserves the tuning-free nature of the method when the covariance matrices are unknown. 
Section~\ref{sec:plugin-covariance-stability} of the supplementary materials further provides a stability analysis for the covariance plug-in step,
showing that replacing the population covariance matrices by \(\widehat\bSigma_j\) perturbs the shrinkage size only by a
lower-order term. Details on computing $\big\{\widehat\bSigma_j\big\}$ for general $M$-estimation problems are given in Section \ref{sec:general-M-estimation}. 
\end{remark}

Furthermore, we can show that the two choices $\underline{s}$ and $\overline{s}$ provide, respectively, a lower and an upper bound for the
risk-optimal shrinkage size
$
s^{\star}:=\argmin_{s>0}\;\E\normQ{\widehat{\btheta}_s-\btheta_1^{\star}}^2
$.
In general, $s^{\star}$ is difficult to estimate directly from the data.
Therefore, the explicit and computable bounds $\underline{s}$ and $\overline{s}$ provide practically valuable
guidance for choosing the shrinkage magnitude.
In the next theorem, we establish an explicit upper bound on the risk of $\widehat{\btheta}_{s}$ for all
$s \in [\underline{s},\overline{s}]$.

\begin{theorem}\label{thm:risk-bound-shat}
	When $\Tr(\bS) > 4\lVert\bS\rVert_2$, by selecting $s \in [\underline{s}, \overline{s}]$, we have
	\[
	\begin{aligned}
		\E\normQ{\widehat{\btheta}_s-\btheta_1^{\star}}^2
		&\le \ip{\big(n_1\bSigma_1^{-1} + n_2\bSigma_2^{-1}\big)^{-1}}{\bQ}
		+ \frac{\Tr(\bS)\,\normQ{\bW_2(\btheta_2^{\star}-\btheta_1^{\star})}^2}
		{\Tr(\bS)+\normQ{\bW_2(\btheta_2^{\star}-\btheta_1^{\star})}^2}
		+ 4\lVert \bS\rVert_2.
	\end{aligned}
	\]
\end{theorem}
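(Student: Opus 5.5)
The plan is to redo the Stein computation of Theorem~\ref{thm:risk-bound-s} while keeping track of the constants. We then lower bound $\E[\normQ{\bW_2(\tbtheta_2-\tbtheta_1)}^{-2}]$ by Jensen's inequality. Finally, we identify the mean of $\normQ{\bW_2(\tbtheta_2-\tbtheta_1)}^2$ in terms of $\Tr(\bS)$ and the pooled covariance.

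\textbf{Step 1 (Stein bound).} Write $\bd=\tbtheta_2-\tbtheta_1$ and $r=\normQ{\bW_2\bd}^2$. Apply Lemma~\ref{lem:sure} with $\bx=\tbtheta_1\sim\calN(\btheta_1^\star,n_1^{-1}\bSigma_1)$, with $\tbtheta_2$ held fixed (it is independent of $\tbtheta_1$), and with $\bg(\bx)=s\bW_2(\tbtheta_2-\bx)/\normQ{\bW_2(\tbtheta_2-\bx)}^2$. The Jacobian is
\[
\bJ=-s\big[\bW_2/r-2\bW_2\bd\bd^\top\bW_2^\top\bQ\bW_2/r^2\big].
\]
This gives
\[
\E\normQ{\widehat\btheta_s-\btheta_1^\star}^2=\ip{n_1^{-1}\bSigma_1}{\bQ}+\E\!\left[\frac{-2s\Tr(\bS)+s^2}{r}+\frac{4s\,\bz^\top\bS\bz}{r^2}\right],
\]
where $\bz=\bQ^{1/2}\bW_2\bd$. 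We bound $\bz^\top\bS\bz\le\lVert\bS\rVert_2 r$. Integrability is fine for $p\ge 3$, which is implied by $\Tr(\bS)>4\lVert\bS\rVert_2$. The result is the bound of Theorem~\ref{thm:risk-bound-s}, with reduction factor $s(2\Tr\bS-4\lVert\bS\rVert_2-s)$.

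\textbf{Step 2 (uniform lower bound on the factor).} The factor is concave in $s$, so on $[\underline s,\overline s]$ it is minimized at an endpoint. At $\overline s=\Tr\bS$ it equals
\[
\Tr\bS(\Tr\bS-4\lVert\bS\rVert_2)=(\Tr\bS-2\lVert\bS\rVert_2)^2-4\lVert\bS\rVert_2^2,
\]
which is no larger than the value $(\Tr\bS-2\lVert\bS\rVert_2)^2$ at $\underline s$. Hence for all admissible $s$,
\[
\E\normQ{\widehat\btheta_s-\btheta_1^\star}^2\le \ip{n_1^{-1}\bSigma_1}{\bQ}-\Tr\bS(\Tr\bS-4\lVert\bS\rVert_2)\,\E[r^{-1}].
\]
The coefficient is positive by assumption, so Jensen's inequality $\E[r^{-1}]\ge 1/\E r$ goes in the right direction.

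\textbf{Step 3 (computing $\E r$ and the oracle term).} This is the step I expect to need the most care. Set $\bA=(n_1\bSigma_1^{-1}+n_2\bSigma_2^{-1})^{-1}$. Note that $\bW_2\bd=\overline\btheta-\tbtheta_1$ and $\bW_2=\bI-n_1\bA\bSigma_1^{-1}$.

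First, $\E[\bW_2\bd]=\bW_2(\btheta_2^\star-\btheta_1^\star)$. Second, since $\overline\btheta$ is the efficient combination, $\cov(\overline\btheta,\tbtheta_1)=\bA$, so $\cov(\bW_2\bd)=n_1^{-1}\bSigma_1-\bA$. Third, $n_1^{-1}\bW_2\bSigma_1=n_1^{-1}\bSigma_1-\bA$ as well. Together these give
\[
\E r=b+\Tr\bS,\qquad b:=\normQ{\bW_2(\btheta_2^\star-\btheta_1^\star)}^2,
\]
and also $\ip{n_1^{-1}\bSigma_1}{\bQ}-\Tr\bS=\ip{\bA}{\bQ}$.

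\textbf{Step 4 (assembling).} Substituting into the bound from Step 2 gives
\[
\text{risk}\le \ip{\bA}{\bQ}+\Tr\bS-\frac{(\Tr\bS)^2}{\Tr\bS+b}+\frac{4\lVert\bS\rVert_2\Tr\bS}{\Tr\bS+b}.
\]
The middle two terms simplify to $\Tr\bS\,b/(\Tr\bS+b)$. The last term is at most $4\lVert\bS\rVert_2$ because $b\ge0$. This yields the stated bound.
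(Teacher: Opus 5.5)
Your proposal is correct and follows the paper's proof essentially step for step: the Stein bound of Theorem~\ref{thm:risk-bound-s}, the Jensen lower bound $\E[r^{-1}]\ge 1/(\Tr(\bS)+b)$, and the identity $\ip{n_1^{-1}\bSigma_1}{\bQ}=\ip{(n_1\bSigma_1^{-1}+n_2\bSigma_2^{-1})^{-1}}{\bQ}+\Tr(\bS)$. The only difference is cosmetic: you bound the reduction factor on all of $[\underline{s},\overline{s}]$ by its value at $\overline{s}$, whereas the paper evaluates both endpoints separately and then appeals to the quadratic dependence of the bound on $s$.
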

The upper bound in Theorem \ref{thm:risk-bound-shat} consists of three terms. The first term is the risk of the oracle estimator \eqref{eq:MLE} that would be attainable if the two populations shared the same mean. 
The second term captures the effect of heterogeneity between the two means. It is bounded above by
$
\Tr(\bS)\wedge
\bigl\|\bW_2(\btheta_2^\star-\btheta_1^\star)\bigr\|_{\bQ}^2
$, which increases with the discrepancy between $\btheta_1^\star$ and $\btheta_2^\star$ but is capped by \(\Tr(\bS)\).
The third term is the price of using a data-adaptive shrinkage size, arising from controlling the random denominator in \eqref{eq:shrinkage-estimator-s}. It scales with the operator norm $\|\bS\|_2$ and is thus dimension-free, which becomes negligible compared to the first term for large $p$.

To further illustrate the upper bound in Theorem~\ref{thm:risk-bound-shat}, consider a specific setting
$\bQ=\bSigma_1=\bSigma_2=\bI_p$ and $n_2/n_1=\kappa$. In this case, the first term becomes
$\frac{p}{n_1+n_2}=\frac{p}{n_1(1+\kappa)}$, which corresponds to the risk of the pooled estimator \eqref{eq:MLE} when $\btheta_1^{\star}=\btheta_2^{\star}$. The second term can be upper bounded by
$\|\btheta_2^\star-\btheta_1^\star\|_2^2 \wedge \frac{p}{n_1}$, which captures the bias induced by source heterogeneity while remaining truncated at the target-only local rate. When $\btheta_2^\star$ is close to $\btheta_1^\star$, this term reduces to the squared discrepancy $\|\btheta_2^\star-\btheta_1^\star\|_2^2$. When the discrepancy is large, it is no larger than $p/n_1$, indicating that the procedure performs limited transfer and preserves the target-only rate. The third term equals $\frac{4\kappa}{n_1(1+\kappa)}$, representing the cost of estimating the shrinkage size. Relative to the pooled variance term $\frac{p}{n_1(1+\kappa)}$, this additional cost amounts to an effective dimension inflation from $p$ to $p+4\kappa$. For fixed $\kappa$, the inflation is dimension-free and is negligible compared to the leading pooled variance term as $p$ increases.


\begin{remark}\itshape
	The pooled estimator in \eqref{eq:MLE} corresponds to the full-shrinkage case \(t=1\) in \eqref{eq:shrinkage-estimator}.
	Although it attains the oracle variance under the common-mean model, its risk under
	heterogeneous populations contains an additional bias term. Specifically,
	$
	\mathbb E\bigl\|\overline\btheta-\btheta_1^\star\bigr\|_{\bQ}^2
	=
	\Bigl\langle
	\bigl(n_1\bSigma_1^{-1}+n_2\bSigma_2^{-1}\bigr)^{-1},\bQ
	\Bigr\rangle
	+
	\bigl\|\bW_2(\btheta_2^\star-\btheta_1^\star)\bigr\|_{\bQ}^2
	$.
A sufficient condition under which the risk upper bound in Theorem \ref{thm:risk-bound-shat} is smaller than the 
	risk of the pooled estimator is
	$
		\bigl\|\bW_2(\btheta_2^\star-\btheta_1^\star)\bigr\|_{\bQ}^4
	>
	4\|\bS\|_2\big(\Tr(\bS)+
	\bigl\|\bW_2(\btheta_2^\star-\btheta_1^\star)\bigr\|_{\bQ}^2\big)
	$.
	This condition shows that the proposed adaptive shrinkage estimator is guaranteed to improve
	over the pooled estimator whenever the  discrepancy term
	\(\|\bW_2(\btheta_2^\star-\btheta_1^\star)\|_{\bQ}^2\) is sufficiently large. Under the simple case
	\(\bQ=\bSigma_1=\bSigma_2=\bI_p\) with \(n_2/n_1=\kappa\),
	the sufficient condition above reduces to
	$
	\|\btheta_2^\star-\btheta_1^\star\|_2^2
	>
	\frac{2(1+\kappa)(1+\sqrt{p+1})}{n_1\kappa}
	$. Thus, the proposed estimator is guaranteed to outperform the pooled estimator once the squared discrepancy exceeds the order of
$
	\frac{(1+\kappa)\sqrt{p}}{n_1\kappa}
$.
\end{remark}

\subsection{Equivalence to Regularized MLE}\label{sec:equivalence}

In this section, we establish the equivalence between the proposed estimator \eqref{eq:shrinkage-estimator} and a regularized MLE that explicitly accounts for potential discrepancies between the two population means.  
To formalize this, define the regularized estimator
\begin{equation}\label{eq:regularized-MSE}
	\big(\hbtheta_{1}^{\lambda}, \hbtheta_2^{\lambda}\big):= \argmin_{\btheta_1,\btheta_2} \bigg\{ \frac{1}{2}
	\sum_{j=1}^{2}\sum_{i=1}^{n_j} \| \bx_{ji} - \btheta_j \|_{ \bSigma_j^{-1} }^2 +\lambda \norm{(n_1^{-1} \bSigma_1 + n_2^{-1} \bSigma_2)^{-1/2}(\btheta_1-\btheta_2)}_2 \bigg\}.
\end{equation}

The regularization parameter $\lambda$ controls the extent to which the two means are encouraged to be close. When $\lambda$ is large, the penalty enforces $\btheta_1 \approx \btheta_2$, leading to an estimator similar to the pooled MLE  in \eqref{eq:MLE}. When $\lambda$ is small, the penalty is weak, and each estimate $\hbtheta_j$ remains close to its own empirical mean $\tbtheta_j$. The regularization therefore implements an interpolation between full pooling and complete separation.  This idea is similar to the multi-task objective in \citet{duan2023adaptive}. However, our penalty term is reparameterized by the inverse-covariance metric $(n_1^{-1}\bSigma_1+n_2^{-1}\bSigma_2)^{-1/2}$, which aligns the shrinkage strength with statistical uncertainty and yields a closed-form solution equivalent to the proposed shrinkage method. 


\begin{theorem}\label{thm:regularized-MLE}
	Let $\bz = \left(n_1^{-1}\bSigma_1 + n_2^{-1}\bSigma_2\right)^{-1/2}\Big(\tbtheta_{1} - \tbtheta_{2}\Big).
	$
	The solution to \eqref{eq:regularized-MSE} satisfies
	\begin{align*}
		& \widehat\btheta_1^{\lambda} = \widetilde{\btheta}_1 + \min \Big\{ \frac{\lambda}{\| \bz \|_2}, 1 \Big\}\cdot \bW_2\Big(\tbtheta_2-\tbtheta_1\Big)
		=  \overline{\btheta} -  \Big(1-\frac{\lambda}{\| \bz \|_2}\Big)_+ \cdot \bW_2\Big(\tbtheta_2-\tbtheta_1\Big)
		, \\
		& \widehat\btheta_2^{\lambda} = \widetilde{\btheta}_2 +\min \Big\{ \frac{\lambda}{\| \bz \|_2}, 1 \Big\}\cdot \bW_1\Big(\tbtheta_1-\tbtheta_2\Big)
		=  \overline{\btheta} -  \Big(1-\frac{\lambda}{\| \bz \|_2}\Big)_+ \cdot \bW_1\Big(\tbtheta_1-\tbtheta_2\Big)
		,
	\end{align*}
	where $\bW_j:=n_j \big(n_1\bSigma_1^{-1}+n_2\bSigma_2^{-1}\big)^{-1}\bSigma_j^{-1} $ for $j\in\{1, 2\}$, and $\overline{\btheta}$ is defined in \eqref{eq:MLE}.
	Consequently, $\widehat{\btheta}_1^{\lambda} = \widehat{\btheta}_2^{\lambda} = \overline{\btheta}$ holds if and only if $\| \bz \|_2 \leq \lambda$. In addition, $ \big\| ( n_j^{-1} \bSigma_j )^{-1/2} \big( \widehat\btheta_j^{\lambda} - \widetilde{\btheta}_j \big) \big\|_2 \leq \lambda$ holds for $j\in\{1, 2\}$.
\end{theorem}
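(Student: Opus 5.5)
The plan is to reduce the problem to the difference $\bdelta=\btheta_1-\btheta_2$, where the objective becomes a group-lasso-type problem with an explicit solution. Put $\bA_j=n_j\bSigma_j^{-1}$ and $\bM=n_1^{-1}\bSigma_1+n_2^{-1}\bSigma_2=\bA_1^{-1}+\bA_2^{-1}$. Expanding the sums of squares shows that, up to an additive constant, the data term equals $\frac12\|\btheta_1-\tbtheta_1\|_{\bA_1}^2+\frac12\|\btheta_2-\tbtheta_2\|_{\bA_2}^2$. Reparametrize with $\bdelta=\btheta_1-\btheta_2$ and the weighted average $\bmu=(\bA_1+\bA_2)^{-1}(\bA_1\btheta_1+\bA_2\btheta_2)$. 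The penalty depends only on $\bdelta$. For fixed $\bdelta$, minimizing the data term over $\bmu$ is an unconstrained quadratic problem. Its minimizer satisfies $\btheta_1=\overline{\btheta}+\bW_2(\bdelta-\bd)$ and $\btheta_2=\overline{\btheta}-\bW_1(\bdelta-\bd)$, where $\bd=\tbtheta_1-\tbtheta_2$. Here I use $\bW_1+\bW_2=\bI$, $\overline{\btheta}=\tbtheta_1-\bW_2\bd$, and the identity $\bA_1\bW_2=\bA_2\bW_1=(\bA_1^{-1}+\bA_2^{-1})^{-1}=\bM^{-1}$. That identity follows from $\bA_1(\bA_1+\bA_2)^{-1}\bA_2=(\bA_2^{-1}+\bA_1^{-1})^{-1}$.

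Plugging back in, the profiled data term is $\frac12\|\bdelta-\bd\|_{\bM^{-1}}^2$; this is the standard parallel-sum computation, which I will verify by direct expansion. With $\bu=\bM^{-1/2}\bdelta$ and $\bz=\bM^{-1/2}\bd$, the problem becomes $\min_{\bu}\frac12\|\bu-\bz\|_2^2+\lambda\|\bu\|_2$. Its solution is block soft-thresholding, $\bu=(1-\lambda/\|\bz\|_2)_+\bz$, and I will justify it through the subdifferential of $\|\cdot\|_2$. Hence $\bdelta-\bd=-\min\{\lambda/\|\bz\|_2,1\}\,\bd$.

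Substituting into the expressions for $\btheta_1$ and $\btheta_2$ gives $\widehat\btheta_1^\lambda=\overline{\btheta}-(1-\lambda/\|\bz\|_2)_+\bW_2(\tbtheta_2-\tbtheta_1)$. Using $\overline{\btheta}=\tbtheta_1+\bW_2(\tbtheta_2-\tbtheta_1)$ and $1-(1-a)_+=\min\{a,1\}$, this yields the first form. The $\widehat\btheta_2^\lambda$ formulas follow symmetrically. Uniqueness holds because the objective is strictly convex, so the two estimates coincide and equal $\overline{\btheta}$ if and only if $\bdelta=0$. That happens exactly when the soft-threshold vanishes, which by the formula is $\|\bz\|_2\le\lambda$.

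For the final bound, write $\widehat\btheta_1^\lambda-\tbtheta_1=-c\bW_2\bd$ with $c=\min\{\lambda/\|\bz\|_2,1\}$. Then $\bA_1^{1/2}\bW_2\bd=\bA_1^{-1/2}\bM^{-1}\bd=\bA_1^{-1/2}\bM^{-1/2}\bz$. Since $\bA_1^{-1}\preceq\bM$, the matrix $\bB=\bA_1^{-1/2}\bM^{-1/2}$ satisfies $\bB\bB^\top=\bA_1^{-1/2}\bM^{-1}\bA_1^{-1/2}\preceq\bI$, so $\|\bB\|_2\le1$. Therefore the norm is at most $c\|\bz\|_2\le\lambda$, and the case $j=2$ is symmetric. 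The main obstacle is the profiling step: verifying cleanly that eliminating $\bmu$ yields exactly the $\bM^{-1}$-norm in $\bdelta$, with the correct $\bW_j$ coefficients. I would handle it by writing the first-order conditions in $(\btheta_1,\btheta_2)$ directly and then using $\bA_1\bW_2=\bA_2\bW_1=\bM^{-1}$.
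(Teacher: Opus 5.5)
Your approach is sound, but one displayed formula is wrong. Because $\bA_1\bW_2=\bA_2\bW_1$, the first-order condition in $\bm{\mu}$ gives $\bm{\mu}=\overline{\btheta}$ for every $\bm{\delta}$. Hence the correct expressions are $\btheta_1=\overline{\btheta}+\bW_2\bm{\delta}=\tbtheta_1+\bW_2(\bm{\delta}-\bm{d})$ and $\btheta_2=\overline{\btheta}-\bW_1\bm{\delta}=\tbtheta_2-\bW_1(\bm{\delta}-\bm{d})$.

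What you wrote is $\btheta_1=\overline{\btheta}+\bW_2(\bm{\delta}-\bm{d})$ and $\btheta_2=\overline{\btheta}-\bW_1(\bm{\delta}-\bm{d})$. This version fails in three ways:
\begin{itemize}
\item it gives $\btheta_1-\btheta_2=\bm{\delta}-\bm{d}\neq\bm{\delta}$;
\item it returns $\overline{\btheta}$ instead of $\tbtheta_1$ at $\lambda=0$;
\item after substituting $\bm{\delta}-\bm{d}=-c\,\bm{d}$, it yields $\overline{\btheta}+c\,\bW_2(\tbtheta_2-\tbtheta_1)$ rather than the claimed expression.
\end{itemize}
Your profiled term $\frac12\|\bm{\delta}-\bm{d}\|_{\bm{M}^{-1}}^2$, your final formulas, and the identity $\widehat\btheta_1^\lambda-\tbtheta_1=-c\,\bW_2\bm{d}$ all agree with the corrected version. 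So this is a slip, not a missing idea.

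The profiling step you flagged as the main obstacle is settled in one line by the exact orthogonal decomposition
\[
\tfrac12\|\btheta_1-\tbtheta_1\|_{\bA_1}^2+\tfrac12\|\btheta_2-\tbtheta_2\|_{\bA_2}^2
=\tfrac12\|\bm{\mu}-\overline{\btheta}\|_{\bA_1+\bA_2}^2+\tfrac12\|\bm{\delta}-\bm{d}\|_{\bm{M}^{-1}}^2 .
\]
It follows from $\btheta_1-\tbtheta_1=(\bm{\mu}-\overline{\btheta})+\bW_2(\bm{\delta}-\bm{d})$ and $\btheta_2-\tbtheta_2=(\bm{\mu}-\overline{\btheta})-\bW_1(\bm{\delta}-\bm{d})$. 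The cross term vanishes because $\bA_1\bW_2-\bA_2\bW_1=\bm{0}$. The quadratic term uses $\bW_2^\top\bA_1\bW_2+\bW_1^\top\bA_2\bW_1=(\bW_1+\bW_2)^\top\bm{M}^{-1}=\bm{M}^{-1}$.

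The paper takes a different route and never reparametrizes. It writes the joint subgradient conditions in $(\btheta_1,\btheta_2)$ and expresses each $\widehat\btheta_j^\lambda$ through a subgradient $\boldsymbol\eta$ of the norm. Subtracting gives $\bm{M}^{-1/2}(\widehat\btheta_1^\lambda-\widehat\btheta_2^\lambda)=\bz-\lambda\boldsymbol\eta$. It then identifies $\boldsymbol\eta=\bz/\|\bz\|_2$ or $\boldsymbol\eta=\bz/\lambda$ in the two cases, and rewrites the result using $\bW_j=n_{3-j}^{-1}\bSigma_{3-j}\bm{M}^{-1}$. This needs less algebra and no parallel-sum identity. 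However, it effectively re-derives block soft-thresholding through a case split and leaves uniqueness implicit.

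Your decomposition gives more structural insight. The problem separates exactly into a mean part, always solved by $\overline{\btheta}$ whatever $\lambda$ is, and a difference part, which is the Euclidean-norm proximal map in whitened coordinates. Uniqueness then follows from strict convexity, and the ``if and only if'' claim follows directly from the soft-threshold formula. Your justification of $\|\bA_1^{-1/2}\bm{M}^{-1/2}\|_2\le 1$ via $\bA_1^{-1}\preceq\bm{M}$ also supplies the step that the paper's final inequality uses without comment.
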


Theorem~\ref{thm:regularized-MLE} reveals that the regularized estimator $\hbtheta_1^{\lambda}$ is stochastically equivalent to the shrinkage estimator $\overline{\btheta}_t$ in \eqref{eq:shrinkage-estimator}. Both estimators shrink the single-set MLE $\tbtheta_1$ along the same random direction $\bW_2\big(\tbtheta_2-\tbtheta_1\big)$, with the amount of shrinkage determined by a scalar factor. In particular, for any given $t$, choosing the data-dependent regularization parameter $\lambda=t\|\bz\|_2$ yields $\hbtheta_1^{\lambda}=\overline{\btheta}_t$. Thus, the regularized formulation provides a stochastic equivalent representation of the proposed shrinkage estimator.


\begin{remark}\itshape
	Theorem  \ref{thm:regularized-MLE} provides a pre-test interpretation of
	the regularized estimator $\big(\hbtheta_{1}^{\lambda}, \hbtheta_2^{\lambda}\big)$. The statistic $\bz$ can be used to test \(H_0:\btheta_1^\star=\btheta_2^\star\) against
	\(H_1:\btheta_1^\star\neq\btheta_2^\star\). Suppose we set the critical value to be $\lambda$ and get a rejection region $\{ \norm{\bz}_2 > \lambda \}$. If $\norm{\bz}_2 \leq \lambda$, then $H_0$ is not rejected, and $\hbtheta_{1}^{\lambda}= \hbtheta_2^{\lambda}=\overline{\btheta}$. Under $H_0$, $\bz \sim \calN(\bm{0},\bI_p)$, and thus the significance level can be controlled to be $\alpha$ if one chooses $\lambda^2$ to be the $1-\alpha$ quantile of a $\chi^2_p$ distribution.  When $H_0$ is rejected, the estimates $\hbtheta_1^{\lambda}$ and $\hbtheta_2^{\lambda}$ are not the same but shrunk toward each other. Theorem  \ref{thm:regularized-MLE} implies that the distance $\big\| \widehat{\btheta}^{\lambda}_j - \widetilde\btheta_j \big\|_2$  is always bounded by the rate of $O(\lambda/\sqrt{n_j})$. This provides a safety net for the regularized estimator as any discrepancy between the two sets can only have limited impact on $\widehat{\btheta}_j^{\lambda}$. 
\end{remark}
In addition, we discuss the relationship between our proposed estimator and the classical James--Stein (JS) estimators in Section~\ref{sec:relationship} of the supplementary materials. In short, we show that, in the homoscedastic Gaussian mean setting, our proposed shrinkage estimator is guaranteed to have strictly smaller Euclidean risk than the JS estimator under mild conditions.
\begin{remark}\itshape
	Our proposed shrinkage estimator also admits an empirical Bayes interpretation. 
	Consider the Gaussian hierarchical model with the improper prior $p(\btheta_1^\star)\propto 1$ and
	$
	\btheta_2^\star-\btheta_1^\star \mid \gamma
	\sim
	\calN\big(\bm{0},\,\gamma(n_1^{-1}\bSigma_1+n_2^{-1}\bSigma_2)\big)
	$
	for some $\gamma\ge 0$. Under this model, the posterior mean of $\btheta_1^\star$ is
	$
	\EE\big[\btheta_1^\star\mid \tbtheta_1,\tbtheta_2,\gamma\big]
	=
	\tbtheta_1+\frac{1}{1+\gamma}\bW_2\big(\tbtheta_2-\tbtheta_1\big).
	$
	Therefore, our shrinkage estimator can be viewed as an empirical Bayes posterior mean that estimates the scalar borrowing strength $(1+\gamma)^{-1}$ from the data. 
\end{remark}

\section{Sequential Shrinkage for Multiple Datasets}\label{sec:multiset}\label{sec:greedy}



This section proposes the greedy sequential shrinkage procedure under the multi-set Gaussian mean model and establishes its theoretical guarantees. Suppose that we observe i.i.d.~samples on the target set $\calD_1=\{\bx_{1i}\}_{i=1}^{n_1} \sim \calN(\btheta_1^{\star}, \bSigma_1)$ and $m-1$ source sets $\calD_j=\{\bx_{ji}\}_{i=1}^{n_j} \sim \calN(\btheta_j^{\star}, \bSigma_j)$ for $j\in\{2,3,\dots, m\}$ with unknown population means $\{\btheta_j^{\star}\}_{j\in[m]} \subset \R^p$ and known covariance matrices $\{\bSigma_j\}_{j \in [m]} \subset \R^{p \times p}$. Similar to the two-set setting, our goal is to estimate the population mean of the target set, $\btheta_1^{\star}$, and we use the risk $\EE\big[\big\|{\hbtheta-\btheta_1^{\star}}\big\|_{\bQ}^2\big]$ to evaluate an estimator $\hbtheta$. If the $m$ populations share the same mean, i.e.,  $\btheta_1^{\star}=\btheta_2^{\star}=\dots=\btheta_m^{\star}$, then the MLE of $\btheta_1^{\star}$ is
\begin{equation}\label{eq:MLE-multi-set}
	\overline{\btheta}=\argmin_{\btheta} \frac{1}{2} \sum_{j=1}^{m}\sum_{i=1}^{n_j}(\btheta-\bx_{ji})^{\top}\bSigma_j^{-1}(\btheta-\bx_{ji})=\Big(\sum_{j=1}^{m}n_j\bSigma_j^{-1}\Big)^{-1} \Big(\sum_{j=1}^{m}n_j\bSigma_j^{-1}\tbtheta_j\Big),
\end{equation}
where $\tbtheta_j=\frac{1}{n_j}\sum_{i=1}^{n_j}\bx_{ji}$ $ (j\in[m])$ denotes the sample mean of $\calD_j$.  
In heterogeneous settings, however, directly using \(\overline{\btheta}\) can be biased if some source means are far from \(\btheta_1^\star\). A direct multi-source analogue of the two-dataset estimator shrinks \(\tbtheta_1\) along the pooled direction in \eqref{eq:MLE-multi-set}. Specifically, with \(\bW_j=n_j(\sum_{k=1}^{m}n_k\bSigma_k^{-1})^{-1}\bSigma_j^{-1}\), \(j\in\{2,3,\dots, m\}\), define the single-step shrinkage estimator by
\begin{equation}\label{eq:shrinkage-estimator-multi-set}
	\widehat{\btheta}_{\rm ss}
	=
	\tbtheta_1
	+
	\frac{s}{
		\left\|
		\sum_{j=2}^{m}
		\bW_j(\tbtheta_j-\tbtheta_1)
		\right\|_{\bQ}^2
	}
	\sum_{j=2}^{m}
	\bW_j(\tbtheta_j-\tbtheta_1),
	\qquad s\in [\underline{s}, \overline{s}],
\end{equation}
where $\underline{s}=\Tr(\bS)-2\norm{\bS}_2$ and $\overline{s}=\Tr(\bS)$ with $\bS
=
\sum_{j=2}^{m}
n_1^{-1}
\bQ^{1/2}
\bW_j
\bSigma_1
\bQ^{1/2}
$.
This single-step estimator is a useful baseline and also motivates the sequential construction below. The detailed theoretical analysis for \eqref{eq:shrinkage-estimator-multi-set} is deferred to Section~\ref{sec:single-step-shrinkage} of the supplementary materials.

The single-step estimator in \eqref{eq:shrinkage-estimator-multi-set} shrinks the target estimator
\(\tbtheta_1\) along a single pooled direction formed by all source datasets. This construction is effective when
the weighted population discrepancy
$
\sum_{j=2}^{m}\bW_j(\btheta_j^\star-\btheta_1^\star)
$
is small. In heterogeneous multi-source problems, however, some sources may be close to the target, whereas others may have population parameters far from \(\btheta_1^\star\). A single pooled direction can then be dominated by heterogeneous sources and exhibit a large overall discrepancy from the target, which may lead the estimator to shrink only mildly even when some sources are highly informative. This motivates a sequential procedure that examines the sources one at a time. At each step, the procedure evaluates the potential risk reduction from each remaining source and selects the best one to shrink. In this way, the estimator can identify informative sources and shrink toward them sequentially, while leaving sources that are far from the target unused.

We now propose the greedy sequential shrinkage estimator. The procedure starts from the target estimator
\(\widehat\btheta^{[0]}=\tbtheta_1\) with covariance matrix
\(\bV^{[0]}=n_1^{-1}\bSigma_1\) and a source index set $\mathcal A^{[0]}=\{2, 3, \dots, m\}$. Suppose that after \(\ell\) steps, the current estimator is
\(\widehat\btheta^{[\ell]}\), its covariance matrix under the oracle homogeneous model is \(\bV^{[\ell]}\), and the
remaining source index set is \(\mathcal A^{[\ell]}\). For each \(j\in\mathcal A^{[\ell]}\), define
$
	\bW_j^{[\ell]}
	:=
	\big[
	\left(\bV^{[\ell]}\right)^{-1}
	+
	n_j\bSigma_j^{-1}
	\big]^{-1}
	n_j\bSigma_j^{-1}
$,
which is the optimal linear aggregation weight for combining the current estimator
\(\widehat\btheta^{[\ell]}\) with the source estimator \(\tbtheta_j\) under the working model
\(\btheta_j^\star=\btheta_1^\star\). The corresponding shrinkage direction is
$
	\tDelta_j^{[\ell]}
	:=
	\bW_j^{[\ell]}
	\left(
	\tbtheta_j-\widehat\btheta^{[\ell]}
	\right)
$.

To determine the shrinkage magnitude, define
$
	\bS_j^{[\ell]}
	:=
	\bQ^{1/2}
	\bW_j^{[\ell]}
	\bV^{[\ell]}
	\bQ^{1/2}
$,
and
$
	\underline s_j^{[\ell]}
	:=
	\Tr\big(\bS_j^{[\ell]}\big)
	-
	2\big\|\bS_j^{[\ell]}\big\|_2
$.
The quantity \(\underline s_j^{[\ell]}\) is the analogue of the conservative shrinkage size
\(\underline s=\Tr(\bS)-2\|\bS\|_2\) in the single-step estimator, computed locally for the candidate pair
\((\widehat\btheta^{[\ell]},\tbtheta_j)\). We then set the shrinkage size
$
	\gamma_j^{[\ell]}
	:=
	\min\Big\{
	1,
		\underline s_j^{[\ell]}
/
		\big\|
		\tDelta_j^{[\ell]}
		\big\|_{\bQ}^2	
	\Big\}
$,
where the truncation at one prevents the update from moving beyond the oracle aggregation estimator formed by
\(\widehat\btheta^{[\ell]}\) and \(\tbtheta_j\). The candidate risk-reduction criterion is defined as
\[
	\mathfrak S_j^{[\ell]}
	:=
	2\gamma_j^{[\ell]}\underline s_j^{[\ell]}
	-
	\left(
	\gamma_j^{[\ell]}
	\right)^2
	\left\|
	\tDelta_j^{[\ell]}
	\right\|_{\bQ}^2.
\]
This quantity, computed from the risk upper bound in Theorem~\ref{thm:risk-bound-s} with
\(s=\gamma_j^{[\ell]}\|\tDelta_j^{[\ell]}\|_{\bQ}^2\), measures the estimated decrease in the
quadratic risk upper bound obtained by adding source \(j\) at step \(\ell\).  The greedy rule then selects the source with the largest estimated decrease, i.e., 
\[
j_{\ell+1}\in\argmax_{j\in\mathcal A^{[\ell]}}\mathfrak S_j^{[\ell]}.
\]

After selecting \(j_{\ell+1}\), we update the estimator by
$
\widehat\btheta^{[\ell+1]}
=
\widehat\btheta^{[\ell]}
+
\gamma_{j_{\ell+1}}^{[\ell]}
\tDelta_{j_{\ell+1}}^{[\ell]}
$, and the corresponding covariance matrix is updated as
\begin{equation}\label{eq:V-update}
	\bV^{[\ell+1]}
	=
	\left(
	\bI_p-\gamma_{j_{\ell+1}}^{[\ell]}\bW_{j_{\ell+1}}^{[\ell]}
	\right)
	\bV^{[\ell]}
	\left(
	\bI_p-\gamma_{j_{\ell+1}}^{[\ell]}\bW_{j_{\ell+1}}^{[\ell]}
	\right)^\top+
	\left(\gamma_{j_{\ell+1}}^{[\ell]}\right)^2
	\bW_{j_{\ell+1}}^{[\ell]}
	n_{j_{\ell+1}}^{-1}\bSigma_{j_{\ell+1}}
	\left(\bW_{j_{\ell+1}}^{[\ell]}\right)^\top .
\end{equation}
Finally, we remove the selected source from the candidate set and repeat the same procedure until all
sources have been considered. The full procedure is summarized in Algorithm~\ref{alg:greedy-sequential-shrinkage}.

\begin{algorithm}[!t]
	\caption{Greedy Sequential Shrinkage}
	\label{alg:greedy-sequential-shrinkage}
	\small
	\setlength{\baselineskip}{0.92\baselineskip}
	\setlength{\abovedisplayskip}{2pt}
	\setlength{\belowdisplayskip}{2pt}
	\setlength{\abovedisplayshortskip}{1pt}
	\setlength{\belowdisplayshortskip}{1pt}
	\begin{algorithmic}[1]
		\Statex \hspace{-18pt}\textbf{Input:} Local estimators \(\{\tbtheta_j\}_{j=1}^m\), covariance matrices
		\(\{\bSigma_j\}_{j=1}^m\), sample sizes \(\{n_j\}_{j=1}^m\), and matrix \(\bQ\).
		\Statex\hspace{-18pt}\textbf{Output:} Greedy sequential shrinkage estimator \(\widehat\btheta_{\rm gs}\).
		
		\State Initialize
		$
		\widehat\btheta^{[0]}=\tbtheta_1$,
		$\bV^{[0]}=n_1^{-1}\bSigma_1$, and 
		$\mathcal A^{[0]}=\{2,3,\dots, m\}$.

		\For{\(\ell=0,\ldots,m-2\)}
		\For{each \(j\in\mathcal A^{[\ell]}\)}
		\State Compute \(\bW_j^{[\ell]}=\{(\bV^{[\ell]})^{-1}+n_j\bSigma_j^{-1}\}^{-1}n_j\bSigma_j^{-1}\),  \(\tDelta_j^{[\ell]}=\bW_j^{[\ell]}\big(\tbtheta_j-\widehat\btheta^{[\ell]}\big)\),  and  \(\bS_j^{[\ell]}=\bQ^{1/2}\bW_j^{[\ell]}\bV^{[\ell]}\bQ^{1/2}\).
		\State Compute \(\underline s_j^{[\ell]}=\Tr\big(\bS_j^{[\ell]}\big)-2\big\|\bS_j^{[\ell]}\big\|_2\) and \(\gamma_j^{[\ell]}=\min\Big\{1,\underline s_j^{[\ell]}/\big\|\tDelta_j^{[\ell]}\big\|_{\bQ}^2\Big\}\).
		\State Compute \(\mathfrak S_j^{[\ell]}=2\gamma_j^{[\ell]}\underline s_j^{[\ell]}-(\gamma_j^{[\ell]})^2\big\|\tDelta_j^{[\ell]}\big\|_{\bQ}^2\).
		\EndFor
		
		\State Select \(j_{\ell+1}\in\argmax_{j\in\mathcal A^{[\ell]}}\mathfrak S_j^{[\ell]}\).
		
		\State Update
		$
		\widehat\btheta^{[\ell+1]}
		=
		\widehat\btheta^{[\ell]}
		+
		\gamma_{j_{\ell+1}}^{[\ell]}
		\tDelta_{j_{\ell+1}}^{[\ell]}
		$ and update $\bV^{[\ell+1]}$ by \eqref{eq:V-update}.
		
		\State Set \(\mathcal A^{[\ell+1]}=\mathcal A^{[\ell]}\setminus\{j_{\ell+1}\}\).
		\EndFor
		
		\State Return \(\widehat\btheta_{\rm gs}=\widehat\btheta^{[m-1]}\).
	\end{algorithmic}
\end{algorithm}

The intuitive advantage of the greedy sequential rule is most transparent in the presence of both homogeneous
and heterogeneous sources. Homogeneous sources refer to those whose population parameters are close to the
target parameter, while heterogeneous sources refer to those with substantial population discrepancies from the
target. A single-step estimator forms one pooled shrinkage direction using all sources simultaneously. When the
source collection contains heterogeneous sources, this pooled direction can be dominated by cross-source
discrepancy, leading to a small shrinkage size and limited information transfer. The greedy sequential estimator
instead evaluates sources one at a time. Homogeneous sources tend to yield
smaller discrepancies \(\|\tDelta_j^{[\ell]}\|_{\bQ}\) and larger estimated reductions $\mathfrak S_j^{[\ell]}$ in the risk upper bound. The procedure can therefore
prioritize shrinkage toward homogeneous sources and apply little shrinkage toward heterogeneous sources, thereby
improving information transfer in mixed source collections.

We then rigorously establish this advantage of greedy sequential shrinkage.

\begin{assumption}\label{assump:gs-1}
	Let \(\mathcal I=\{2,3,\dots, m\}\). Let \(\mathcal I_0\subseteq\mathcal I\) be the  set of matching sources, that is, \(\mathcal I_0:=\{j \in \mathcal I \mid \btheta_j^\star=\btheta_1^\star\}\), and let \(\mathcal I_1=\mathcal I\setminus\mathcal I_0\). Assume that $\mathcal I_0$ and $\mathcal I_1$ are both non-empty. 
\end{assumption}

Assumption~\ref{assump:gs-1} formalizes the mixed-source setting. The set \(\mathcal I_0\) contains the sources whose population parameters coincide with the target parameter, while \(\mathcal I_1\) contains the remaining sources. Both sets are assumed to be nonempty, since this is the regime in which the distinction between single-step and sequential shrinkage is most relevant.

\begin{assumption}\label{assump:gs-2}
Assume that \(n_j \asymp n\) and that, for some constants \(v_{-},v_{+}>0\), \(v_{-}n^{-1}\bI_p \preceq n_j^{-1}\bSigma_j \preceq v_{+}n^{-1}\bI_p\) for all \(j\in[m]\). For \(r\in[m]\), define \(\mathscr V_r:=\{\bV\succ0: v_- (rn)^{-1}\bI_p\preceq\bV\preceq v_+ n^{-1}\bI_p\}\). For \(\bV\in\mathscr V_r\) and \(j\in[m]\), let \(\bW_{j,\bV}:=(\bV^{-1}+n_j\bSigma_j^{-1})^{-1}n_j\bSigma_j^{-1}\) and \(\bS_{j,\bV}:=\bQ^{1/2}\bW_{j,\bV}\bV\bQ^{1/2}\). Then assume that
\begin{equation}\label{eq:eff-dim}
	d_{\rm eff}:=\min_{1\le r\le m}\min_{1\le j\le m}\inf_{\bV\in\mathscr V_r}\frac{\Tr(\bS_{j,\bV})}{\|\bS_{j, \bV}\|_2} > 2.
\end{equation}
\end{assumption}

Assumption \ref{assump:gs-2} imposes a uniform effective-dimension condition along the sequential path. For each possible current covariance matrix \(\bV\) and candidate source \(j\), \(\bS_{j,\bV}=\bQ^{1/2}\bW_{j,\bV}\bV\bQ^{1/2}\) is the sequential analogue of the matrix \(\bS\) in the single-step risk bound. The ratio \(\Tr(\bS_{j,\bV})/\|\bS_{j,\bV}\|_2\) measures the effective number of eigendirections contributing to the trace term. It is bounded by \(p\), and is of order \(p\) when the eigenvalues of \(\bS_{j,\bV}\) are not overly concentrated. We theoretically show that the sets \(\{\mathscr V_r\}_{r \in [m]}\) contain the entire
sequential covariance path in Algorithm \ref{alg:greedy-sequential-shrinkage}, that is,
$
\bV^{[\ell]}\in\mathscr V_{\ell+1}$
for all
$\ell =0,1,\ldots,m-1
$. It follows that every candidate source $j$ in Algorithm \ref{alg:greedy-sequential-shrinkage} satisfies
$
\Tr\big(\bS_j^{[\ell]}\big) / \big\|\bS_j^{[\ell]}\big\|_2
\ge d_{\rm eff}>2
$,
which is parallel to the two-set trace condition \(\Tr(\bS)>2\|\bS\|_2\) in
Theorem~\ref{thm:risk-bound-s}. 
Thus, Assumption~\ref{assump:gs-2} ensures a positive shrinkage gain in the local two-set
shrinkage problem uniformly at
every step of the sequential procedure. 
	\begin{remark}\label{rmk:illustration-eff-dim}\itshape
	To further illustrate Assumption \ref{assump:gs-2}, we consider a simplified setting where $\bQ=\bI_p$,
	$n_j=n$, $\bSigma_j=\bI_p$ for all $j\in[m]$,
	and \(v_-=v_+=1\). Under this setting, for any $\bV \in \mathscr V_r$,
	we can show that
	$
	\frac{\Tr(\bS_{j,\bV})}
	{\|\bS_{j,\bV}\|_2}
	\ge
	1+\frac{2(p-1)}{r(r+1)}
	$,
	and hence
	$
	d_{\rm eff}
	=
	1+\frac{2(p-1)}{m(m+1)}
	$.
	Therefore, \(d_{\rm eff}\asymp p\) when \(m=O(1)\), and in particular,
	Assumption~\ref{assump:gs-2} holds if and only if
	$
	p>1+\frac{m(m+1)}{2}
	$. The proof of this result is provided in Section \ref{sec:proof-multiset} of the supplementary materials.
\end{remark}

Under the two assumptions above, the following result shows that the greedy sequential algorithm successfully separates homogeneous and heterogeneous sources when their population discrepancy is larger than the statistical noise level.

\begin{theorem}
	\label{thm:greedy-seq-large-separation-trace-condition}
	Suppose Assumptions \ref{assump:gs-1} and \ref{assump:gs-2} hold. 
Define $\delta_{\min}:=\min_{j\in\mathcal I_1}
\left\|
\btheta_j^\star-\btheta_1^\star
\right\|_{\bQ}$ and
$
\delta_{\max}
:=
\max_{j\in\mathcal I_1}
\left\|
\btheta_j^\star-\btheta_1^\star
\right\|_{\bQ}.
$ Then, for any $0< \rho <1$,  if $m=O(1)$,  $p \gg  \log\left(n\delta_{\max}^2/\rho\right)$, $d_{\rm eff} \asymp p$, and $\delta_{\min} \gg \sqrt{p/n}$ as $n, p \to \infty$, we have:
\begin{enumerate}
	\item The first \(m_0\) selected indices are exactly the elements of \(\mathcal I_0\) with
	probability at least \(1-\rho\), i.e., 
	$\PP\left(\{j_1, j_2, \dots, j_{m_0}\} = \mathcal{I}_0\right) \geq 1-\rho$, where $m_0:=|\mathcal I_0| \geq 1$.
	
	\item 
	The risk of the greedy sequential shrinkage estimator \(\widehat\btheta_{\rm gs}\) satisfies
	\begin{equation}
		\label{eq:gsls-risk-upper-clean}
		\begin{aligned}
			\EE\normQ{\widehat\btheta_{\rm gs}-\btheta_1^\star}^2=	\ip{\bV_{\mathcal I_0}}{\bQ}+
			o\left(\frac pn\right),
		\end{aligned}
	\end{equation}
	where 
	$\bV_{\mathcal I_0}
	:=
	\left(
	n_1\bSigma_1^{-1}
	+
	\sum_{j\in\mathcal I_0}n_j\bSigma_j^{-1}
	\right)^{-1}
	$.
	
	\item Let $\widehat\btheta_{\rm ss}$ denote the single-step shrinkage estimator  \eqref{eq:shrinkage-estimator-multi-set} with any $s \in [\underline{s}, \overline{s}]$. Define
$
	\delta_{\mathcal I}:=
	\big\|\sum_{j\in\mathcal I}\bW_j
	(\btheta_j^\star-\btheta_1^\star)\big\|_{\bQ}
$. If $\delta_{\mathcal I}\gg \sqrt{p / n}$,
	we have
	$
	\EE\normQ{\widehat\btheta_{\rm gs}-\btheta_1^\star}^2
	<
	\EE\normQ{\widehat\btheta_{\rm ss}-\btheta_1^\star}^2
	$,
	when $n$ is sufficiently large.
\end{enumerate}
\end{theorem}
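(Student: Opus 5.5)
The proof works on a high-probability ``good event'' $\mathcal E$ and controls everything else by crude moment bounds. First, I would show by induction that $\bV^{[\ell]}\in\mathscr V_{\ell+1}$ along the path. This holds because \eqref{eq:V-update} with $\gamma\in[0,1]$ interpolates between $\bV^{[\ell]}$ and the oracle aggregate $[(\bV^{[\ell]})^{-1}+n_j\bSigma_j^{-1}]^{-1}$, and both lie in the appropriate Loewner sandwich. Hence Assumption~\ref{assump:gs-2} gives $\underline s_j^{[\ell]}\ge (1-2/d_{\rm eff})\Tr(\bS_j^{[\ell]})\asymp p/n$ uniformly. Next, I would write each direction as signal plus noise: $\tDelta_j^{[\ell]}=\bW_j^{[\ell]}(\btheta_j^\star-\btheta_1^\star)+\bW_j^{[\ell]}(\tbtheta_j-\btheta_j^\star-(\widehat\btheta^{[\ell]}-\btheta_1^\star))$ plus a bias of $\widehat\btheta^{[\ell]}$. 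The Gaussian part has covariance of order $n^{-1}\bI_p$. By Hanson--Wright/$\chi^2$ concentration with a union bound over the $O(1)$ indices $j$ and steps $\ell$, on an event of probability $1-\rho$ we get $\|\text{noise}\|_{\bQ}^2=\E\|\text{noise}\|_{\bQ}^2\,(1+O(\sqrt{\log(1/\rho)/p}))$ and cross terms $|\langle \text{signal},\text{noise}\rangle_{\bQ}|\lesssim \|\text{signal}\|_{\bQ}\sqrt{\log(n\delta_{\max}^2/\rho)/n}$. This is where $p\gg\log(n\delta_{\max}^2/\rho)$ enters.

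For part 1, I would argue by induction on $\ell<m_0$, assuming only $\mathcal I_0$ indices have been chosen so far. Then $\widehat\btheta^{[\ell]}$ is unbiased, and for $j\in\mathcal I_0$ the direction is pure noise with $\|\tDelta_j^{[\ell]}\|_{\bQ}^2\approx \E\|\tDelta_j^{[\ell]}\|_{\bQ}^2$. A direct computation gives $\E\|\tDelta_j^{[\ell]}\|_{\bQ}^2=\Tr(\bS_j^{[\ell]})$ (the SURE identity behind Lemma~\ref{lem:sure}), so $\gamma_j^{[\ell]}\to 1-O(1/d_{\rm eff})$ and $\mathfrak S_j^{[\ell]}\gtrsim p/n$. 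For $j\in\mathcal I_1$, $\bW_j^{[\ell]}$ has eigenvalues bounded below by a constant, so $\|\tDelta_j^{[\ell]}\|_{\bQ}^2\gtrsim\delta_{\min}^2\gg p/n$. This gives $\gamma_j^{[\ell]}=\underline s_j^{[\ell]}/\|\tDelta_j^{[\ell]}\|_{\bQ}^2$ and $\mathfrak S_j^{[\ell]}=(\underline s_j^{[\ell]})^2/\|\tDelta_j^{[\ell]}\|_{\bQ}^2\lesssim (p/n)^2/\delta_{\min}^2=o(p/n)$. The gap closes the induction and yields the first $m_0$ selections equal to $\mathcal I_0$ on $\mathcal E$.

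For part 2, on $\mathcal E$ the first $m_0$ steps have $\gamma=1-O(1/p)$. I would therefore compare $\widehat\btheta^{[m_0]}$ with the oracle sequential aggregate, which equals the oracle MLE with covariance exactly $\bV_{\mathcal I_0}$ (sequential Gaussian combination with exact weights reproduces joint GLS). The deviation from $\gamma=1$ costs $O(1/p)\cdot p/n=o(p/n)$ in risk. The later steps toward $\mathcal I_1$ move the estimator by $\gamma_j\|\tDelta_j\|_{\bQ}=\underline s_j/\|\tDelta_j\|_{\bQ}\lesssim (p/n)/\delta_{\min}=o(\sqrt{p/n})$, so their contribution, including cross terms, is $o(p/n)$. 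On $\mathcal E^c$, every update has $\|\gamma\tDelta\|_{\bQ}\le\|\tDelta\|_{\bQ}$ bounded polynomially. Cauchy--Schwarz with $\PP(\mathcal E^c)\le\rho$ made $\ll$ any power of $n\delta_{\max}^2$ (allowed by $p\gg\log$) makes this piece $o(p/n)$. The main obstacle is making this tail argument rigorous: $\rho$ is fixed in the statement, so I would rerun the concentration step with $\rho_n=\rho\,(n\delta_{\max}^2)^{-2}$, which the hypothesis $p\gg\log(n\delta_{\max}^2/\rho)$ still permits.

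For part 3, I would lower-bound the risk of $\widehat\btheta_{\rm ss}$. Its direction $\sum_j\bW_j(\tbtheta_j-\tbtheta_1)$ has norm concentrating at $\delta_{\mathcal I}^2+O(p/n)$ with $\delta_{\mathcal I}^2\gg p/n$. The realized shrinkage is therefore $s/\|\cdot\|^2\to0$, and a direct expansion gives risk $\ip{n_1^{-1}\bSigma_1}{\bQ}-O\big((p/n)^2/\delta_{\mathcal I}^2\big)=\ip{n_1^{-1}\bSigma_1}{\bQ}-o(p/n)$. Since $m_0\ge1$, we have $\ip{n_1^{-1}\bSigma_1}{\bQ}-\ip{\bV_{\mathcal I_0}}{\bQ}\gtrsim p/n$ by the eigenvalue bounds in Assumption~\ref{assump:gs-2}. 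Comparing with part 2 then gives the strict inequality for large $n$.
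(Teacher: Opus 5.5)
Your plan has the same architecture as the paper's proof:
\begin{itemize}
\item the deterministic sandwich $\bV^{[\ell]}\in\mathscr V_{\ell+1}$;
\item score separation, with $\mathfrak S_j^{[\ell]}\gtrsim p/n$ for matching sources versus $\lesssim (p/n)^2/\delta_{\min}^2$ for the others;
\item comparison with the oracle GLS estimator on $\mathcal I_0$;
\item the bound $\underline s/\|\tDelta\|_{\bQ}\lesssim (p/n)/\delta_{\min}$ on each later move toward $\mathcal I_1$;
\item a polynomial moment bound on the exceptional event, with the confidence level tuned to absorb $\log(n\delta_{\max}^2)$;
\item the $p/n$ gap between $\ip{n_1^{-1}\bSigma_1}{\bQ}$ and $\ip{\bV_{\mathcal I_0}}{\bQ}$.
\end{itemize}

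The gap is in how you obtain $\gamma_j^{[\ell]}\to 1$ along the realized path, which Part 2 (and hence Part 3) rests on. You apply $\chi^2$/Hanson--Wright concentration, and the identity $\E\|\tDelta_j^{[\ell]}\|_{\bQ}^2=\Tr(\bS_j^{[\ell]})$, directly to $\tDelta_j^{[\ell]}=\bW_j^{[\ell]}(\tbtheta_j-\widehat\btheta^{[\ell]})$. That step fails for two reasons.
\begin{itemize}
\item $\widehat\btheta^{[\ell]}$ is a nonlinear function of all the data, through the random $\gamma$'s and the argmax selections. Those selections depend on the $\mathcal I_1$ samples and on $\tbtheta_j$ itself. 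So $\widehat\btheta^{[\ell]}$ is not a Gaussian vector with mean $\btheta_1^\star$ and covariance $\bV^{[\ell]}$, and your unbiasedness claim is unjustified.
\item $\bV^{[\ell]}$, $\bW_j^{[\ell]}$ and $\bS_j^{[\ell]}$ are themselves random, because \eqref{eq:V-update} uses data-dependent $\gamma$'s. A union bound over $(j,\ell)$ therefore never produces a fixed Gaussian quadratic form.
\end{itemize}

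For Part 1 this is harmless. The inequality $\mathfrak S\ge \underline s^{\,2}/(\|\tDelta\|_{\bQ}^2+\underline s)$ combines with the bound $\|\tDelta_j^{[\ell]}\|_{\bQ}\lesssim\sqrt{p/n}$. That bound holds by induction over the first $m_0$ steps once $\max_k\|\tbtheta_k-\btheta_k^\star\|_{\bQ}\lesssim\sqrt{p/n}$. Together they give $\mathfrak S_j^{[\ell]}\gtrsim p/n$, which is what the paper uses. Part 2, however, needs the sharp ratio $\underline s_j^{[\ell]}/\|\tDelta_j^{[\ell]}\|_{\bQ}^2\ge 1-o(1)$, and crude bounds cannot deliver it.

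The missing ingredient is the paper's Step 4, which concentrates only genuinely Gaussian surrogates.
\begin{itemize}
\item Fix a deterministic pair $\mathcal J\subsetneq\mathcal I_0$, $j\in\mathcal I_0\setminus\mathcal J$; there are at most $2^{m_0}m_0$ such pairs.
\item Let $\overline\btheta^{(\mathcal J)}$ be the GLS aggregate of $\tbtheta_1$ and $\{\tbtheta_k\}_{k\in\mathcal J}$, with covariance $\bV^{(\mathcal J)}=(n_1\bSigma_1^{-1}+\sum_{k\in\mathcal J}n_k\bSigma_k^{-1})^{-1}$.
\item Set $\overline\bW_j^{(\mathcal J)}=\{(\bV^{(\mathcal J)})^{-1}+n_j\bSigma_j^{-1}\}^{-1}n_j\bSigma_j^{-1}$.
\item The oracle direction $\overline\bW_j^{(\mathcal J)}(\tbtheta_j-\overline\btheta^{(\mathcal J)})$ is then exactly $\calN(\bm{0},\overline\bW_j^{(\mathcal J)}\bV^{(\mathcal J)})$, so concentration applies to it.
\end{itemize}
One then runs a single induction controlling $\|\widehat\btheta^{[\ell]}-\overline\btheta^{(\mathcal J_\ell)}\|_{\bQ}$ and $\|\bV^{[\ell]}-\bV^{(\mathcal J_\ell)}\|_2$ simultaneously, at relative order $\xi_t$ of size $\sqrt{t/d_{\rm eff}}+t/d_{\rm eff}$. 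Closeness at step $\ell$ transfers the concentration to $\tDelta_j^{[\ell]}$ and to $\underline s_j^{[\ell]}$. That gives $1-\gamma_j^{[\ell]}\lesssim\xi_t$, which in turn gives closeness at step $\ell+1$. Comparing with the oracle only at the end, as you propose, reverses this dependency.

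Two smaller points.
\begin{itemize}
\item $1-\gamma$ is not $O(1/p)$ or $O(1/d_{\rm eff})$ as you state. $\|\tDelta\|_{\bQ}^2$ fluctuates at relative scale $\sqrt{\Tr(\bS^2)}/\Tr(\bS)\ge p^{-1/2}$, while $\underline s$ sits only $2\|\bS\|_2$ below $\Tr(\bS)$, so $1-\gamma$ is typically of order $p^{-1/2}$ and only $O(\xi_t)$ on the good event. This is harmless: the resulting risk cost is $O(\xi_t)\,p/n=o(p/n)$ once $t=o(d_{\rm eff})$. That is exactly where $p\gg\log(n\delta_{\max}^2/\rho)$ and $d_{\rm eff}\asymp p$ are used.
\item Your Part 3 argument is fine. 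If you carry out the expansion via Lemma~\ref{lem:sure}, the rank-one Jacobian term is nonnegative because $\bS\succeq 0$. The bound then handles the untruncated estimator \eqref{eq:shrinkage-estimator-multi-set} directly, whereas the paper's Step 6 inserts a truncation $\min\{1,\cdot\}$.
\end{itemize}
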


Theorem~\ref{thm:greedy-seq-large-separation-trace-condition} has three implications. First, with probability tending to one, the greedy rule selects all homogeneous sources before selecting any heterogeneous source. This selection consistency follows from the separation condition \(\delta_{\min}\gg\sqrt{p/n}\): for homogeneous sources, the observed discrepancies are of the same order as the estimation noise, whereas for heterogeneous sources the population discrepancy dominates the noise.

Second, after the homogeneous sources have been selected, the greedy sequential estimator achieves the oracle risk
\(\ip{\bV_{\mathcal I_0}}{\bQ}\) up to a lower-order term. This oracle benchmark is the risk of the estimator that knows the set of homogeneous sources in advance and aggregates the target with exactly those sources. Hence \eqref{eq:gsls-risk-upper-clean} shows that the cost of adaptively identifying the homogeneous sources is asymptotically negligible.

Third, we compare greedy sequential shrinkage with single-step shrinkage. When the pooled discrepancy
\(\delta_{\mathcal I}\) is larger than the noise level, the single-step direction is contaminated by the heterogeneous sources and cannot attain the oracle risk based on \(\mathcal I_0\). In contrast, the greedy sequential estimator first shrinks toward the homogeneous sources and assigns negligible useful shrinkage to the heterogeneous sources. This yields a strictly smaller quadratic risk than the single-step shrinkage estimator for sufficiently large \(n\). Specifically, under the conditions in Theorem \ref{thm:greedy-seq-large-separation-trace-condition}, we show that
$
	\EE
	\normQ{
		\widehat\btheta_{\rm ss}-\btheta_1^\star
	}^2
	\ge
	\ip{n_1^{-1}\bSigma_1}{\bQ}
	-
	o\left(\frac pn\right)
$.
By \eqref{eq:gsls-risk-upper-clean}, we have
\[
\EE
\normQ{
	\widehat\btheta_{\rm ss}-\btheta_1^\star
}^2
-
\EE
\normQ{
	\widehat\btheta_{\rm gs}-\btheta_1^\star
}^2
\ge
\ip{
	n_1^{-1}\bSigma_1-\bV_{\mathcal I_0}
}{\bQ}
-
o\left( p / n\right)
\gtrsim p / n
.\] 
Hence, when the pooled discrepancy
\(\delta_{\mathcal I}\) is large, the proposed estimator asymptotically
attains the oracle risk,
whereas the single-step estimator remains asymptotically as risky as
the target-only estimator, and the risk difference is of order \(p/n\).

\section{Extension to General $M$-estimation}
\label{sec:general-M-estimation}

We now provide theoretical guarantees for the proposed shrinkage estimators for $M$-estimation problems with general loss functions. Suppose that, for $j\in[m]$, we have access to a dataset $\cD_j= \{ \bx_{ji} \}_{i=1}^{n_j}$ consisting of i.i.d.~samples from a distribution $\cP_j$ over the sample space $\cX_j$, and there is a loss function $\ell_j (\cdot, \bx):\RR^p \to \RR$. The population loss minimizer for task $j$ is defined as
$
	\btheta^{\star}_j \in \argmin_{\btheta \in \RR^p} \EE_{\bx \sim \cP_j} \ell_j (\btheta , \bx)
$.
Our goal is to learn $\btheta_1^{\star}$, the population loss minimizer for the target task. Furthermore, define the local $M$-estimator $\widetilde{\btheta}_j \in \argmin_{\btheta} f_j(\btheta)$, where $f_j(\btheta): = \frac{1}{n_j} \sum_{i=1}^{n_j } \ell_j (\btheta , \bx_{ji})$ denotes the empirical loss. Under mild conditions, it holds that
\[
\sqrt{n_j} \Big( \widetilde{\btheta}_j  - \btheta^{\star}_j \Big) \stackrel{d}{\longrightarrow} \calN (\bm{0}, \bSigma_j ) \text{ with } \bSigma_j=\bH_j^{\star -1} \bV_j^{\star} \bH_j^{\star -1}, 
\]
where $\bV_j^{\star} = \EE_{\bx \sim \cP_j} \big[ \nabla \ell_j (\btheta^{\star}_j , \bx)  \nabla \ell_j (\btheta^{\star}_j , \bx)^{\top} \big]$ and $\bH_j^{\star} = \EE_{\bx \sim \cP_j} \big[ \nabla^2 \ell_j (\btheta^{\star}_j , \bx)\big]$.

When $\btheta^{\star}_1=\cdots=\btheta^{\star}_m$, we can construct an unbiased aggregated estimator by weighted averaging the local $M$-estimators, that is, 
\begin{equation}\label{eq:set}
	\tbtheta_{\{\bW_j\}}
	:=
	\sum_{j=1}^m \bW_j\tbtheta_j,
	\quad
	\sum_{j=1}^m \bW_j=\bI_{p \times p}, \quad \bW_j \in \RR^{p\times p},
	\quad \forall j\in[m].
\end{equation}
The following proposition  identifies the choice of the weight matrices $\{\bW_j\}$ that minimizes the asymptotic covariance matrix of $\tbtheta_{\{\bW_j\}}$.

\begin{prop}\label{prop:optimal-aggregation}
	Assume that $\btheta^{\star}_1=\cdots=\btheta^{\star}_m$, and $\sqrt{n_j} \big( \widetilde{\btheta}_j  - \btheta^{\star}_j \big) \stackrel{d}{\longrightarrow} \calN (\bm{0}, \bSigma_j )$ for all $j\in[m]$. Then, for all matrix weights \(\{\bW_j\}_{j=1}^m\) satisfying \eqref{eq:set}, we have
$
		\big(\sum_{j=1}^m n_j^{-1}\bW_j\bSigma_j\bW_j^{\top}\big)^{-1/2}
		\big(\tbtheta_{\{\bW_j\}} - \btheta^{\star}_1\big)
		\stackrel{d}{\longrightarrow} \calN(\bm{0}, \bI_{p\times p})
$.
	Moreover, the asymptotic covariance matrix $\sum_{j=1}^m n_j^{-1}\bW_j\bSigma_j\bW_j^{\top}$ is minimized at $\{\bW_j^{\star}\}_{j=1}^m$ with
	$
	\bW_j^{\star}
	=
	n_j\Big(\sum_{k=1}^m n_k\bSigma_k^{-1}\Big)^{-1} \bSigma_j^{-1}
$,
	in the sense that
$
	\sum_{j=1}^m n_j^{-1}\bW_j\bSigma_j\bW_j^{\top}
	\ \succeq\ 
	\sum_{j=1}^m n_j^{-1}\bW_j^{\star}\bSigma_j\bW_j^{\star\top}
$
	for all matrix weights $\{\bW_j\}_{j=1}^m$.
\end{prop}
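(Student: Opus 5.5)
The proposition has two parts, and I will handle them separately: asymptotic normality of the aggregated estimator, then a matrix Gauss--Markov type optimality for the weights.

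\emph{Asymptotic normality.} Using the constraint $\sum_j\bW_j=\bI_p$ and the common mean $\btheta_1^\star=\cdots=\btheta_m^\star$, I first write
\[
\tbtheta_{\{\bW_j\}}-\btheta_1^\star=\sum_{j=1}^m\bW_j\big(\tbtheta_j-\btheta_j^\star\big).
\]
The datasets are independent, so the summands are independent. Each satisfies $\bW_j(\tbtheta_j-\btheta_j^\star)=\bW_j n_j^{-1/2}\bZ_{j,n}$, where $\bZ_{j,n}:=\sqrt{n_j}(\tbtheta_j-\btheta_j^\star)\stackrel{d}{\to}\calN(\bm 0,\bSigma_j)$. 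Set $\bC:=\sum_j n_j^{-1}\bW_j\bSigma_j\bW_j^\top$. Then
\[
\bC^{-1/2}\big(\tbtheta_{\{\bW_j\}}-\btheta_1^\star\big)=\sum_j\bA_j\bZ_{j,n},\qquad \bA_j:=n_j^{-1/2}\bC^{-1/2}\bW_j .
\]
By construction $\sum_j\bA_j\bSigma_j\bA_j^\top=\bI_p$, so each $\bA_j$ is bounded in operator norm by $\lambda_{\min}(\bSigma_j)^{-1/2}$. With $m$ fixed, the joint vector $(\bZ_{1,n},\dots,\bZ_{m,n})$ converges to independent Gaussians by independence. I will pass to subsequences along which each $\bA_j\to\bA_j^\infty$, apply Slutsky and the continuous mapping theorem to get convergence to $\calN\big(\bm 0,\sum_j\bA_j^\infty\bSigma_j\bA_j^{\infty\top}\big)=\calN(\bm 0,\bI_p)$, and conclude because every subsequence has a further subsequence with the same limit.

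The main obstacle is that the weights and $\bC$ may depend on $n$. I will need either that $\bW_j$ are fixed or possibly $n$-dependent but bounded, and the subsequence argument is exactly the device that covers the second case. I also need $\bC\succ0$, which holds because $\sum_j\bW_j=\bI_p$ forces some $\bW_j$ to act nontrivially on every direction.

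\emph{Optimality.} Define $\bP:=\sum_k n_k\bSigma_k^{-1}$, so that $\bW_j^\star=n_j\bP^{-1}\bSigma_j^{-1}$ and $\sum_j\bW_j^\star=\bI_p$. For arbitrary feasible weights write $\bW_j=\bW_j^\star+\bD_j$ with $\sum_j\bD_j=\bm 0$. Expanding $\bC$ gives
\[
\bC=\sum_j n_j^{-1}\bW_j^\star\bSigma_j\bW_j^{\star\top}+\big(\text{cross terms}\big)+\sum_j n_j^{-1}\bD_j\bSigma_j\bD_j^\top .
\]
The cross terms are $\sum_j n_j^{-1}\bD_j\bSigma_j\bW_j^{\star\top}=\sum_j\bD_j\bP^{-1}$ (using symmetry of $\bSigma_j$ and $\bP$), which equals $\big(\sum_j\bD_j\big)\bP^{-1}=\bm 0$, together with its transpose, which also vanishes. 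The remaining extra term $\sum_j n_j^{-1}\bD_j\bSigma_j\bD_j^\top$ is positive semidefinite, which proves $\bC\succeq\sum_j n_j^{-1}\bW_j^\star\bSigma_j\bW_j^{\star\top}=\bP^{-1}$. This second part is routine.
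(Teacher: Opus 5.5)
Your proposal is correct and follows the paper's route. The optimality part is the same completing-the-square argument (the paper expands $\sum_j n_j^{-1}(\bW_j-\bW_j^\star)\bSigma_j(\bW_j-\bW_j^\star)^\top$ and uses $n_j^{-1}\bSigma_j\bW_j^{\star\top}=(\sum_k n_k\bSigma_k^{-1})^{-1}$, which is exactly your cross-term cancellation). For asymptotic normality the paper only asserts that it follows from marginal normality and independence, and your subsequence argument makes this rigorous. The bound $\|\bA_j\|_2\le\lambda_{\min}(\bSigma_j)^{-1/2}$ holds for arbitrary feasible weights, so your argument does not need any boundedness assumption on the $\bW_j$.
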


Motivated by Proposition~\ref{prop:optimal-aggregation}, we define $\overline{\btheta}$ as the optimally aggregated estimator under the common-parameter regime, i.e., 
\begin{equation}\label{eq:MLE-general}
	\overline{\btheta}
	:=
	\tbtheta_{\{\bW_j^{\star}\}}
	=
	\Big(\sum_{k=1}^m n_k\bSigma_k^{-1}\Big)^{-1}
	\Big(\sum_{k=1}^m n_k\bSigma_k^{-1}\tbtheta_k\Big),
\end{equation}
which reduces to $\eqref{eq:MLE}$ in the two-set Gaussian mean example. For a general case where $\btheta_1^{\star}$ may be different from $\btheta_2^{\star},\dots,\btheta_m^{\star}$, we propose the shrinkage estimator $\overline{\btheta}_{t}$ similar to $\eqref{eq:shrinkage-estimator}$. The only difference is that the covariance matrix $\bSigma_j$ is unknown and needs to be estimated. Specifically,
\begin{equation}\label{eq:shrinkage-estimator-general}
	\overline{\btheta}_{t}
	:=
	(1-t)\tbtheta_1+t\overline{\btheta}
	=
	\tbtheta_1 + t\sum_{j=2}^m \widehat{\bW}_j\Big(\tbtheta_j-\tbtheta_1\Big),
\end{equation}
where $t \in [0, 1]$ and the matrix weights
$
\widehat{\bW}_j
=
n_j\left(\sum_{k=1}^m n_k\hbSigma_k^{-1}\right)^{-1}\hbSigma_j^{-1}
$,
with $\widehat\bSigma_j = \widehat\bH_j^{-1} \widehat\bV_j \widehat\bH_j^{-1}$,
$\widehat\bV_j =  \frac{1}{n_j} \sum_{i=1}^{n_j } \nabla \ell_j \big(\tbtheta_j , \bx_{ji}\big) \nabla \ell_j \big(\tbtheta_j , \bx_{ji}\big)^{\top}$,
and $\widehat\bH_j =  \frac{1}{n_j} \sum_{i=1}^{n_j } \nabla^2 \ell_j \big(\tbtheta_j , \bx_{ji}\big)$ for $j\in[m]$.  Any positive definite
covariance estimator \(\widehat\bSigma_j\) or directly any precision estimator
\(\widehat\bOmega_j\) for \(\bSigma_j^{-1}\), can be incorporated into the
matrix weights,

We now establish theoretical guarantees for the proposed estimator. To proceed, we first impose the following regularity conditions, which are commonly used in the analysis of $M$-estimators.

\begin{assumption}\label{assump:1}
Assume that the loss function $\ell_j(\cdot , \bx):\RR^p \to \RR$ is convex and three times differentiable for $j\in[m]$.  
	Let $\bV_j^{\star}=\EE_{\bx \sim \mathcal{P}_j}\big[\nabla\ell_j(\btheta_j^{\star} , \bx)\nabla\ell_j(\btheta_j^{\star} , \bx)^{\top}\big]$ and $\bH_j^{\star}=\EE_{\bx \sim \mathcal{P}_j}\big[\nabla^2\ell_j(\btheta_j^{\star} , \bx)\big]$. Assume that there exists a constant $M \geq 1$ such that $M^{-1} \leq \lambda_{\min}(\bH_j^{\star}) \leq \lambda_{\max}(\bH_j^{\star}) \leq M$  and $M^{-1} \leq \lambda_{\min}(\bV_j^{\star}) \leq \lambda_{\max}(\bV_j^{\star}) \leq M$ for $j\in[m]$, where $\lambda_{\min}$ ($\lambda_{\max}$) denotes the minimum (maximum) eigenvalue of a matrix.
\end{assumption}

\begin{assumption}\label{assump:2}
	Assume that there exist positive constants $\tau$ and $r$ such that, for $j\in[m]$, it holds that $\norm{\bv^{\top}\nabla \ell_j(\btheta, \bx_{j1})}_{\psi_2} \leq \tau$ and $\norm{\bv^{\top}\nabla^2 \ell_j(\btheta, \bx_{j1})\bv}_{\psi_1} \leq \tau^2$  for all $\btheta \in \mathcal{B}_j(r):=\big\{\btheta: \|\btheta-\btheta^{\star}_j\|_2 \leq r\big\}$ and all $\bv \in \mathbb{S}^{p-1}$.
\end{assumption}

\begin{assumption}\label{assump:3} Assume that there exist positive constants $L$ and $c$ such that, for $j \in [m]$, it holds that
	$\sup_{\btheta \in \mathcal{B}_j(r)}\norm{\EE_{\bx \sim \mathcal{P}_j}\left[\nabla^3 \ell_j(\btheta, \bx)\right]}_{\mathrm{op}} \leq L$,
	and
\[\EE_{\bx \sim \mathcal{P}_j}\Big[\sup_{\btheta \in \mathcal{B}_j(r)}\norm{\nabla^3 \ell_j(\btheta, \bx)}_{\mathrm{op}}+\sup_{\btheta \in \mathcal{B}_j(r)}\norm{\nabla \ell_j(\btheta, \bx) \otimes \nabla^2 \ell_j(\btheta, \bx)}_{\mathrm{op}}\Big] \leq Lp^{c}, \] 
where $\mathcal{B}_j(r)$ is defined in Assumption \ref{assump:2}, and  $\norm{\bT}_{\mathrm{op}}:=\sup\limits_{\bx, \by, \bz \in \mathbb{S}^{p-1}} \sum\limits_{i, j, k \in [p]} T_{i, j, k}x_iy_jz_k$ for a third order tensor $\bT \in \RR^{p \times p \times p}$.
\end{assumption}

Assumption \ref{assump:1} requires uniform eigenvalue bounds on the population Hessian $\bH_j^\star$ and the score second-moment matrix $\bV_j^\star$, which guarantees local convexity and prevents the asymptotic covariance
$\bSigma_j = \bH_j^{\star-1} \bV_j^\star \bH_j^{\star-1}$ from becoming ill-conditioned.
Assumption \ref{assump:2} assumes tail conditions for the gradient and Hessian within a local neighborhood of $\btheta_j^\star$. Such conditions are standard for deriving concentration bounds for empirical processes, ensuring that
$\nabla f_j(\tbtheta_j)$ and $\nabla^2 f_j(\tbtheta_j)$ remain close to their population counterparts.
Assumption \ref{assump:3} bounds the third derivative of the loss and plays two roles in our analysis. First, it justifies a second-order Taylor expansion of the empirical loss around $\btheta_j^\star$, with a remainder term that can be controlled uniformly over $\calB_j(r)$. Second, it helps quantify the additional error incurred by plugging $\widehat{\bSigma}_j$ into the weights $\widehat{\bW}_j$, since estimating $\bH_j^\star$ and $\bV_j^\star$ relies on third-order smoothness. These assumptions are all standard and widely adopted in theoretical analyses of $M$-estimation (see, e.g., \citealp{mei2018landscape},  \citealp{ostrovskii2021finite}, \citealp{duan2023adaptive}, among others).

We then establish the estimation error of \eqref{eq:shrinkage-estimator-general} in the following theorem.


\begin{theorem}\label{thm:risk-upper-general}
	Suppose that Assumptions \ref{assump:1}--\ref{assump:3} hold. When $(p\log n_1)/n_1=o(1)$, $m=O(1)$, and $n_1 \lesssim n_j$ for all $j \in [m]$ as \(n_1\to\infty\), we have
	\[
	\normQ{\overline{\btheta}_{t} - \btheta_1^{\star}}^2 \leq \Upsilon_0+\Upsilon_1t+\Upsilon_2t^2+\zeta_1+\zeta_2,
	\]
where 
\begin{align*}
	\Upsilon_0
	&=\nabla f_1\big(\btheta_1^{\star}\big)^{\top}\bH_1^{\star-1}\bQ\bH_1^{\star-1}\nabla f_1\big(\btheta_1^{\star}\big),\\
	\Upsilon_1
	&=-2\sum_{j=2}^m\nabla f_1\big(\btheta_1^{\star}\big)^{\top}\bH_1^{\star-1}\bQ
	\bW_j\bH_1^{\star-1}\nabla f_1\big(\btheta_1^{\star}\big)
	+2\sum_{j=2}^m \nabla f_1\big(\btheta_1^{\star}\big)^{\top}\bH_1^{\star-1}\bQ\bW_j\bH_j^{\star-1}\nabla f_j\big(\btheta_j^{\star}\big)\\
	&\quad -2\sum_{j=2}^m\nabla f_1\big(\btheta_1^{\star}\big)^{\top}\bH_1^{\star-1}\bQ \bW_j\big(\btheta_j^{\star}-\btheta_1^{\star}\big), \\
	\Upsilon_2 
	&= \norm{\sum_{j=2}^m \widehat\bW_j\Big(\tbtheta_j-\tbtheta_1\Big)}_{\bQ}^2,
	\zeta_1 =  O_{\PP}\left(\sqrt{\frac{p\log n_1}{n_1}} \sum_{j=2}^m \norm{\btheta_j^{\star}-\btheta_1^{\star}}_2^2\right),
	\text{ and }
	\zeta_2 = O_{\PP}\left(\Big(\frac{p\log n_1}{n_1}\Big)^{3/2}\right).
\end{align*}
	We further have that $\EE[\Upsilon_0]=\angles{n_1^{-1}\bSigma_1, \bQ}$ and $\EE[\Upsilon_1]=-\sum_{j=2}^{m}2\angles{n_1^{-1}\bW_j\bSigma_1, \bQ}$. 
\end{theorem}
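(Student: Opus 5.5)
The plan is to expand $\overline{\btheta}_t-\btheta_1^\star$ exactly, using the representation in \eqref{eq:shrinkage-estimator-general}:
\[
\overline{\btheta}_t-\btheta_1^\star=(\tbtheta_1-\btheta_1^\star)+t\,\widehat{\bD},\qquad \widehat{\bD}:=\sum_{j=2}^m\widehat\bW_j(\tbtheta_j-\tbtheta_1).
\]
Squaring in the $\bQ$-norm gives the exact identity
\[
\normQ{\overline{\btheta}_t-\btheta_1^\star}^2=\normQ{\tbtheta_1-\btheta_1^\star}^2+2t\,(\tbtheta_1-\btheta_1^\star)^\top\bQ\widehat{\bD}+t^2\normQ{\widehat{\bD}}^2 .
\]
The $t^2$ coefficient is exactly $\Upsilon_2$, so no approximation is needed there. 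The work is to replace the constant and linear coefficients by their leading Bahadur-type terms, with all remainders collected into $\zeta_1+\zeta_2$. Since $t\in[0,1]$, any remainder in the linear term may be bounded in absolute value uniformly in $t$.

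The first step is a finite-sample Bahadur representation for each local $M$-estimator. Under Assumptions \ref{assump:1}--\ref{assump:3}, with probability $1-O(n_1^{-c})$ one has
\[
\tbtheta_j-\btheta_j^\star=-\bH_j^{\star-1}\nabla f_j(\btheta_j^\star)+\br_j,\qquad \|\br_j\|_2=O_\PP\!\Big(\frac{p\log n_j}{n_j}\Big),
\]
together with $\|\nabla f_j(\btheta_j^\star)\|_2=O_\PP(\sqrt{p\log n_j/n_j})$. This follows the standard localization argument (e.g.\ \citealp{ostrovskii2021finite}, \citealp{duan2023adaptive}):
\begin{itemize}
\item sub-Gaussian gradient concentration from Assumption \ref{assump:2}, via a net argument;
\item Hessian concentration in operator norm from the $\psi_1$ condition;
\item Taylor expansion of $\nabla f_j$ around $\btheta_j^\star$, with the third-derivative remainder controlled by Assumption \ref{assump:3};
\item convexity, to confine $\tbtheta_j$ to $\calB_j(r)$.
\end{itemize}
Plugging this into the constant term gives $\Upsilon_0$ plus a cross term of order $\sqrt{p\log n_1/n_1}\cdot p\log n_1/n_1$ and a square of order $(p\log n_1/n_1)^2$. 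Both are absorbed in $\zeta_2$.

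The second step handles the weights. I would show $\|\widehat\bW_j-\bW_j\|_2=O_\PP(\sqrt{p\log n_1/n_1})$, where $\bW_j=\bW_j^\star$ as in Proposition \ref{prop:optimal-aggregation}. The argument is:
\begin{itemize}
\item bound $\|\widehat\bH_j-\bH_j^\star\|_2$ and $\|\widehat\bV_j-\bV_j^\star\|_2$ by $O_\PP(\sqrt{p\log n/n})$, using concentration at $\btheta_j^\star$ plus a Lipschitz step in $\btheta$ through the third-derivative and $\nabla\ell\otimes\nabla^2\ell$ bounds of Assumption \ref{assump:3};
\item propagate this through inverses and products, using the eigenvalue bounds in Assumption \ref{assump:1}, which keep $\bSigma_j$ well conditioned.
\end{itemize}
Then I decompose
\[
\tbtheta_j-\tbtheta_1=(\btheta_j^\star-\btheta_1^\star)-\bH_j^{\star-1}\nabla f_j(\btheta_j^\star)+\bH_1^{\star-1}\nabla f_1(\btheta_1^\star)+\br_j-\br_1 .
\]
Multiplying by $2(\tbtheta_1-\btheta_1^\star)^\top\bQ\bW_j$ with the true weights reproduces the three pieces of $\Upsilon_1$, and the products involving $\br$ are again $O_\PP((p\log n_1/n_1)^{3/2})$.

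The weight-error contributions are the main obstacle. The bias piece $(\widehat\bW_j-\bW_j)(\btheta_j^\star-\btheta_1^\star)$, paired with $\tbtheta_1-\btheta_1^\star$, has size
\[
\sqrt{p\log n_1/n_1}\cdot\sqrt{p\log n_1/n_1}\cdot\delta_j,\qquad \delta_j:=\|\btheta_j^\star-\btheta_1^\star\|_2 .
\]
Using $ab\le a^2+b^2$, this splits into a part of order $\sqrt{p\log n_1/n_1}\,\delta_j^2$, which goes to $\zeta_1$, and a part of order $(p\log n_1/n_1)^{3/2}$, which goes to $\zeta_2$. The noise pieces with $\widehat\bW_j-\bW_j$ are of order $(p\log n_1/n_1)^{3/2}$ and also go to $\zeta_2$. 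A subtlety is that $\widehat\bSigma_j$ is evaluated at $\tbtheta_j$, which is far from $\btheta_1^\star$ when $\delta_j$ is large. Working in $\calB_j(r)$ around each $\btheta_j^\star$ avoids this, since the weights only use data set $j$ near its own minimizer. Finally, $n_1\lesssim n_j$ and $m=O(1)$ let all rates be expressed through $n_1$.

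The expectations follow from i.i.d.\ scores with mean zero, which makes the cross terms between the independent data sets $1$ and $j$ vanish. The remaining moment identity is
\[
\EE\big[\nabla f_1(\btheta_1^\star)\nabla f_1(\btheta_1^\star)^\top\big]=n_1^{-1}\bV_1^\star ,
\]
which gives $\EE\Upsilon_0=\langle\bH_1^{\star-1}n_1^{-1}\bV_1^\star\bH_1^{\star-1},\bQ\rangle=\langle n_1^{-1}\bSigma_1,\bQ\rangle$. Similarly, $\EE\Upsilon_1=-2\sum_{j}\mathrm{tr}(\bQ\bW_j n_1^{-1}\bSigma_1)$, because the second and third terms of $\Upsilon_1$ have mean zero.
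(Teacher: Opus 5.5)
Your architecture matches the paper's proof: the exact quadratic expansion with $\Upsilon_2$ left untouched, the Bahadur representation, the perturbation bound for $\widehat\bW_j$, the AM--GM absorption of the bias--weight cross terms into $\zeta_1,\zeta_2$, and the expectations from mean-zero independent scores. However, the step that supplies all the rates does not work as you describe it. You bound $\|\widehat\bH_j-\bH_j^\star\|_2$ and $\|\widehat\bV_j-\bV_j^\star\|_2$ by concentration at the fixed point $\btheta_j^\star$, followed by a Lipschitz step to $\tbtheta_j$ through the third-derivative and $\nabla\ell_j\otimes\nabla^2\ell_j$ bounds of Assumption~\ref{assump:3}; the Bahadur remainder is handled the same way, through a third-derivative Taylor remainder. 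Assumption~\ref{assump:3} bounds these envelopes only in expectation and only by $Lp^{c}$. Hence the empirical Lipschitz constant $n_j^{-1}\sum_{i}\sup_{\btheta\in\calB_j(r)}\|\nabla^3\ell_j(\btheta,\bx_{ji})\|_{\mathrm{op}}$ is only $O_{\PP}(p^{c})$. Your route therefore gives:
\begin{itemize}
\item $\|\widehat\bH_j-\bH_j^\star\|_2=O_{\PP}(p^{c}\sqrt{p\log n_j/n_j})$, with the same inflation for $\widehat\bV_j$ and $\widehat\bW_j$;
\item $\|\br_j\|_2=O_{\PP}(p^{c}\,p\log n_j/n_j)$;
\item if localization is done the same way, the requirement $p^{c}\sqrt{p\log n_j/n_j}\to0$, which is not assumed.
\end{itemize}
As a result $\zeta_2$ comes out larger than claimed by a power of $p$.

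The missing ingredient is concentration that is uniform over $\calB_j(r)$, as in Lemmas~\ref{lem:tail-bounds} and~\ref{lem:consistency-HhatVhat}. There the ball is discretized with mesh $2\sqrt{p}\,r\,n_j^{-\gamma}$, so the $Lp^{c}$ envelope contributes only $O(Lp^{c+1/2}n_j^{1-\gamma})$, which is negligible for large $\gamma$. The $\psi_1$ tails are then union-bounded over the $9^p n_j^{\gamma p}$ net points, which yields the $\sqrt{p\log n_j/n_j}$ rate. Passing from $\btheta_j^\star$ to the random point $\tbtheta_j$ then uses only the population bound $\sup_{\btheta\in\calB_j(r)}\|\EE\nabla^3\ell_j(\btheta,\bx)\|_{\mathrm{op}}\le L$, which carries no $p^{c}$.

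Separately, you misclassify one cross term. The term $2\br_1^\top\bQ\bW_j(\btheta_j^\star-\btheta_1^\star)$ comes from the bias part of $\tbtheta_j-\tbtheta_1$ with the true weights. It is of order $(p\log n_1/n_1)\|\btheta_j^\star-\btheta_1^\star\|_2$, not $(p\log n_1/n_1)^{3/2}$. It needs the same AM--GM split you already apply to the weight-error term; the paper does this by pairing $\br_1$ with $\widehat\bW_j(\btheta_j^\star-\btheta_1^\star)$.
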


Theorem~\ref{thm:risk-upper-general} shows that, under a general loss function, the squared error of \(\overline{\btheta}_t\) is approximated by the quadratic function
$
q(t):=\Upsilon_0+\Upsilon_1t+\Upsilon_2t^2
$,
up to the higher-order remainder terms \(\zeta_1\) and \(\zeta_2\). 
Specifically, \(\zeta_1\) is of smaller order than \(\Upsilon_2\) governed by \(\sum_{j=2}^{m}\norm{\btheta_j^\star-\btheta_1^\star}_2^2\), and \(\zeta_2\) is of smaller order than \(\Upsilon_0=O_{\PP}(p\log n_1/n_1)\).
Thus, \(q(t)\) captures the key behavior of the shrinkage estimator under a general loss.

This quadratic approximation clarifies the risk reduction mechanism in the general-loss setting. 
When \(t=0\), the estimator reduces to the single-set estimator \(\tbtheta_1\), with the risk $\EE[q(0)]
=
\angles{n_1^{-1}\bSigma_1,\bQ}$.
By Theorem~\ref{thm:risk-upper-general},
$
\left.
\frac{\mathrm d}{\mathrm dt}\EE[q(t)]
\right|_{t=0}
=
\EE[\Upsilon_1]
=
-2\sum_{j=2}^{m}
\angles{n_1^{-1}\bW_j\bSigma_1,\bQ}
<0
$.
It follows that there exists a positive value of \(t\) such that
$
\EE[q(t)]<\EE[q(0)]
$.
In other words, the leading quadratic approximation predicts that moving away from the single-set estimator along the shrinkage direction can reduce the risk. 

Furthermore, the quadratic approximation allows the proposed shrinkage estimators to be applied to general loss functions without essential modification.  For the two-set estimator in \eqref{eq:shrinkage-estimator-s}, as well as its multi-set analogue in \eqref{eq:shrinkage-estimator-multi-set}, the shrinkage level \(s\) can be selected by minimizing the corresponding empirical quadratic approximation of the risk, leading to estimators with the same form.
The greedy sequential procedure in Algorithm \ref{alg:greedy-sequential-shrinkage} can also be implemented analogously by evaluating the decrease, at each step, in the approximated risk for each remaining source. This justifies extending the methods developed under the Gaussian setting to general loss functions.

\section{Numerical Study} \label{sec:numerical_study}

This section evaluates the proposed shrinkage estimators through simulations and real-data analysis. Section~\ref{sec:simulation} conducts simulations under both two-set and multi-set settings with parameter and covariance heterogeneity, and Section~\ref{sec:real-data} examines a public-coverage task to assess the practical performance of the proposed methods in real-world applications.

\subsection{Simulations}\label{sec:simulation}

In this section, we use Monte Carlo simulations to evaluate the performance of our proposed estimators under different scenarios. We begin with two-set simulations. 

\vspace{5pt}

\noindent\textbf{Two-set settings.} Recall that $\mathcal{D}_1=\{\bx_{1i}\}_{i=1}^{n_1}$ denotes the target dataset, with sample size $n_1$ and parameter $\btheta_1^\star\in\mathbb{R}^p$, and  $\mathcal{D}_2=\{\bx_{2i}\}_{i=1}^{n_2}$ denotes an auxiliary (source) dataset, with sample size $n_2$ and parameter $\btheta_2^\star\in\mathbb{R}^p$. Throughout the simulations, we set $\btheta_1^\star=\mathbf{1}_p/\sqrt{p}$, where $\mathbf{1}_p$ is the $p$-dimensional vector of ones. To control the heterogeneity between the two datasets, in each Monte Carlo replication we independently draw a random direction $\bv\in\mathbb{R}^p$ uniformly from the unit sphere and set
$
\btheta_2^\star=\btheta_1^\star+\delta\bv
$,
where $\delta\ge 0$ controls the distance between $\btheta_1^\star$ and $\btheta_2^\star$. Specifically, we vary $p\in\{5,20,50\}$ and $\delta\in[0,1]$, and report the risk, measured by the mean squared error (MSE), of each estimator for estimating $\btheta_1^\star$, averaged over $100$ independent replications. We compare the following estimators of $\btheta_1^\star$: (1) the single-set estimator $\widetilde{\btheta}_1$; (2) the pooled estimator obtained by \eqref{eq:MLE}; (3) the shrinkage estimator \eqref{eq:shrinkage-estimator-s}
	with  $\widehat s=\Tr\big(\widehat{\bS}\big)-\big\|\widehat{\bS}\big\|_2$, where $\widehat{\bS}=n_1^{-1}\widehat{\bW}_2\hbSigma_1$ and $
	\widehat{\bW}_2
	=
	n_2\Big(n_1\hbSigma_1^{-1}+n_2\hbSigma_2^{-1}\Big)^{-1}\hbSigma_2^{-1}
	$.

The matrix $\hbSigma_j$ is an estimator of the covariance or asymptotic covariance of $\widetilde{\btheta}_j$. Following Section \ref{sec:general-M-estimation}, we compute  $\widehat\bSigma_j = \widehat\bH_j^{-1} \widehat\bV_j \widehat\bH_j^{-1}$ with $\widehat\bV_j =  \frac{1}{n_j} \sum_{i=1}^{n_j } \nabla \ell_j \big(\tbtheta_j , \bx_{ji}\big) \nabla \ell_j \big(\tbtheta_j , \bx_{ji}\big)^{\top}$ and $\widehat\bH_j =  \frac{1}{n_j} \sum_{i=1}^{n_j } \nabla^2 \ell_j \big(\tbtheta_j , \bx_{ji}\big)$ for $j\in\{1, 2\}$, where $\ell_j(\btheta,\bx)$ denotes the loss function of the corresponding problem.
We consider three problems: Gaussian mean estimation, linear regression, and logistic regression. For Gaussian mean estimation, we generate independent observations $\bx_{ji}\sim \calN(\btheta_j^\star,\bSigma_j)$ for $j\in\{1,2\}$ and use an AR(1) correlation structure $\bR(\rho)$ with $\bR_{kl}(\rho)=\rho^{|k-l|}$ for covariance matrices
$
\bSigma_j=\sigma_j^2\bR(\rho)
$.
We specify $(\rho,\sigma_1,\sigma_2)=(0.25,1.0,1.0)$ and $(n_1,n_2)=(250,1000)$. Here, $\bSigma_1$ and $\bSigma_2$ are treated as unknown when forming $\bW_2$ and  replaced by the sample covariance computed from $\mathcal{D}_1$ and $\mathcal{D}_2$. Details of the linear regression and logistic regression settings are relegated to Section \ref{sec:numerical-appendix} of the supplementary materials.

\begin{figure*}[t]
	\centering
	\includegraphics[width=1.0\textwidth]{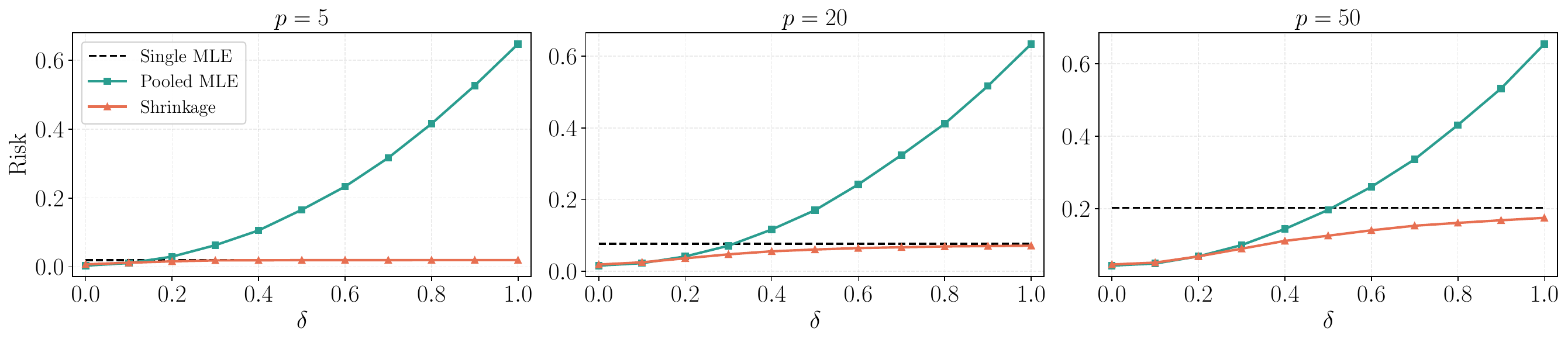}
	\caption{Two-set simulation under the Gaussian mean estimation setting.
		Risk of the single-set estimator, the pooled estimator, and the proposed shrinkage estimator as a function of heterogeneity level $\delta$, for $p\in\{5,20,50\}$.}
	\label{fig:two-set-gaussian}
\end{figure*}

Figure \ref{fig:two-set-gaussian} (and Figures \ref{fig:two-set-linear} and \ref{fig:two-set-logistic} in the supplementary materials) summarizes the risk of the single-set estimator, the pooled estimator, and the proposed shrinkage estimator as the heterogeneity level $\delta$ increases.
Across all three settings, the pooled estimator performs best when $\delta$ is near $0$, reflecting the variance reduction from borrowing strength under near-homogeneity.
However, its risk increases rapidly as $\delta$ grows, indicating that even moderate discrepancies between $\btheta_1^\star$ and $\btheta_2^\star$ can lead to substantial bias accumulation under full pooling.

In contrast, the proposed shrinkage estimator exhibits markedly more stable performance over a wide range of $\delta$.
When $\delta$ is small, the shrinkage estimator closely tracks the pooled estimator and inherits its variance reduction.
As $\delta$ increases, it automatically reduces the amount of information transfer and approaches the behavior of the single-set estimator, thereby avoiding the sharp risk inflation observed for pooling.
This adaptive interpolation is precisely the design goal of our method, that is, to borrow strength when the two datasets are sufficiently similar, while remaining faithful to the target sample when they disagree.

A consistent pattern across the three settings is that, as $p$ increases, the shrinkage estimator yields increasingly visible risk reduction relative to $\widetilde{\btheta}_1$, especially when the heterogeneity level is small to moderate.
In other words, the empirical advantage of aggregate-by-shrinkage  becomes more pronounced as the dimension grows. This trend is consistent with the risk guarantee in \eqref{eq:risk-upper}, which implies that the risk reduction of our proposed estimator over the single-set estimator is at least
$
\big(\Tr(\bS)-2\|\bS\|_2\big)^2\cdot
\mathbb{E}\Big[\big\|\bW_2(\widetilde{\btheta}_2-\widetilde{\btheta}_1)\big\|_2^{-2}\Big]
$ for $s=\underline s$ and
$
\Tr(\bS)\big(\Tr(\bS)-4\|\bS\|_2\big)\cdot
\mathbb{E}\Big[\big\|\bW_2(\widetilde{\btheta}_2-\widetilde{\btheta}_1)\big\|_2^{-2}\Big]
$ for $s=\bar s$.
Both bounds typically become  larger as the dimension increases, since $\Tr(\bS)$ sums all of the eigenvalues of $\bS$, while $\|\bS\|_2$ depends only on the largest one.

\begin{figure}[t]
	\centering
	\includegraphics[width=\linewidth]{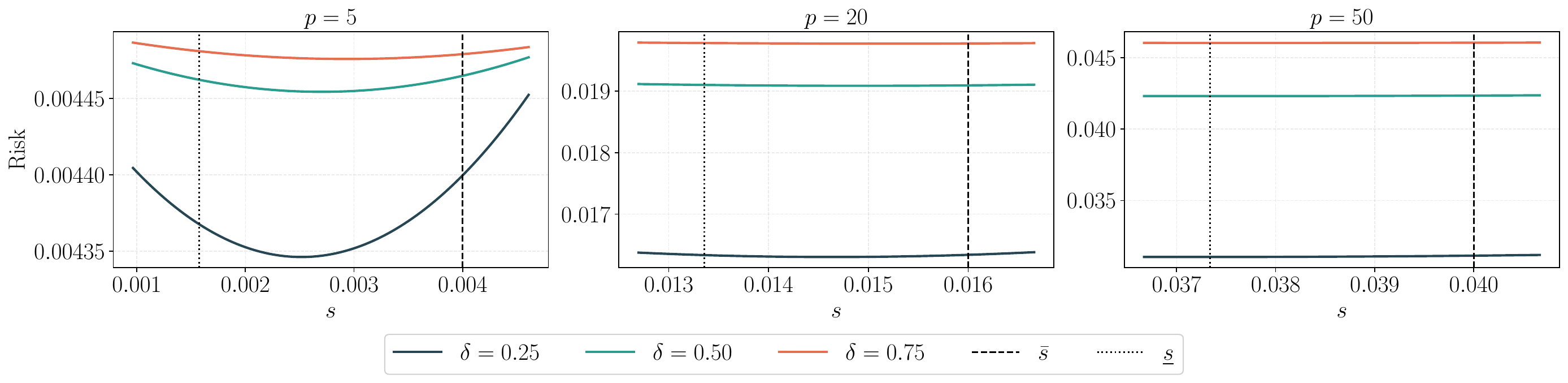}
	\caption{Sensitivity of the shrinkage estimator to the choice of $s$. We plot the risk as a function of $s$ under three dimensions ($p=5$, $20$, and $50$) and three heterogeneity levels ($\delta=0.25$, $0.5$, and $0.75$).
		The vertical dashed lines indicate the two  bounds $s=\bar s$ and $s=\underline s$.}
	\label{fig:s-sweep}
\end{figure}


\vspace{5pt}

\noindent\textbf{Sensitivity to $s$.}  We conduct an additional experiment as follows to further assess the sensitivity of the shrinkage estimator to the choice of $s$. We fix the data-generating mechanism and, for dimensions $p\in\{5,20, 50\}$ and heterogeneity levels $\delta\in\{0.25, 0.5, 0.75\}$, evaluate the risk of $\widehat{\btheta}_s$ over a grid of $s$ values, while marking the two plug-in choices $s=\bar s$ and $s=\underline s$.

Figure~\ref{fig:s-sweep} shows that the risk as a function of $s$ is typically smooth and unimodal, and the curve is relatively flat around its minimum, indicating that the estimator is not overly sensitive to the precise choice of $s$.
In most displayed configurations,  the empirically optimal choice of $s$ (i.e., the minimizer of the risk curve) falls between the two plug-in values $s=\underline s$ and $s=\bar s$, which provides numerical evidence that these two choices act as lower and upper bounds for the oracle-optimal $s^{\star}$. 
As $\delta$ increases, the two marked choices yield more similar risks, and the overall performance becomes largely insensitive to moderate perturbations of $s$.
This experiment supports the practical robustness of the proposed procedure and suggests that the  plug-in choice $(\underline s + \bar s) / 2$ provides a reliable operating point without tuning.

\begin{figure*}[t]
	\centering
	\includegraphics[width=\textwidth]{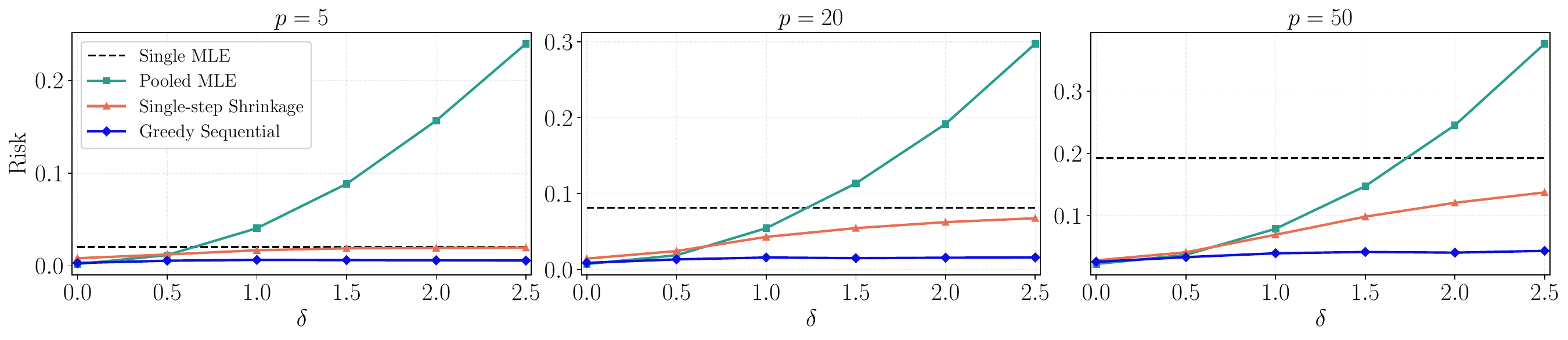}
	\caption{Multi-set simulation results for Gaussian mean estimation with unknown covariance matrices, plotted against the heterogeneity level \(\delta\) for \(p\in\{5,20,50\}\).}
	\label{fig:multiset-gaussian-unknown}
\end{figure*}

\begin{figure*}[t]
	\centering
	\includegraphics[width=\textwidth]{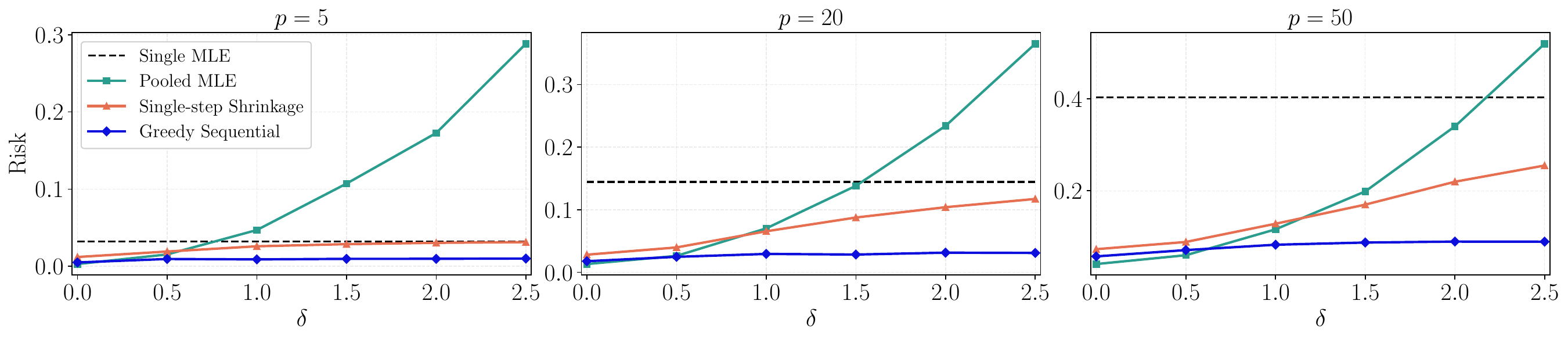}
	\caption{Multi-set simulation results for linear regression, plotted against the heterogeneity level \(\delta\) for \(p\in\{5,20,50\}\).}
	\label{fig:multiset-linear}
\end{figure*}

\begin{figure*}[t]
	\centering
	\includegraphics[width=\textwidth]{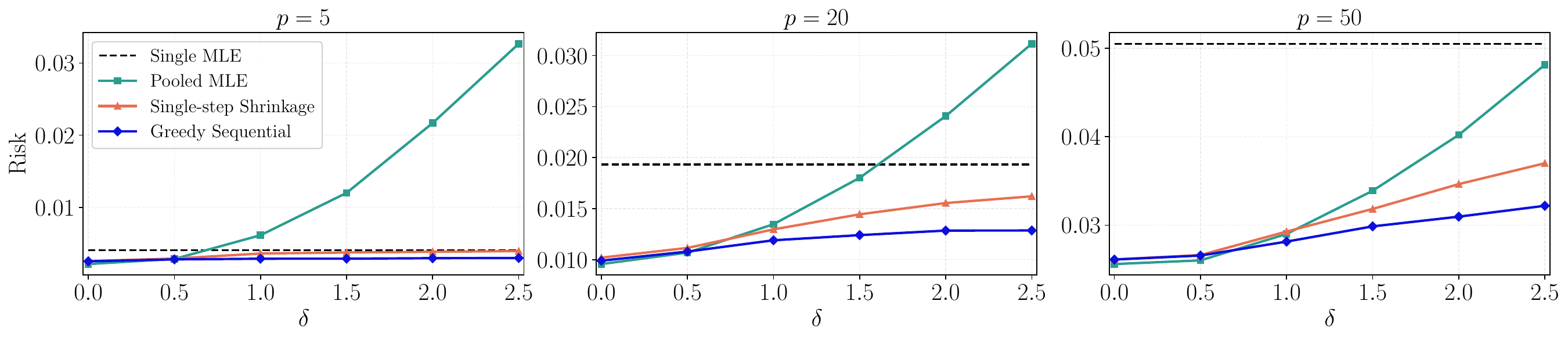}
	\caption{Multi-set simulation results for logistic regression, plotted against the heterogeneity level \(\delta\) for \(p\in\{5,20,50\}\).}
	\label{fig:multiset-logistic}
\end{figure*}

\vspace{5pt}

\noindent\textbf{Multi-set Setting.} We now examine the multi-set setting. The simulation design is similar to the two-set experiments above, except that there are ten auxiliary sets, including five homogeneous auxiliary sets with $\btheta_j^{\star}=\btheta_1^{\star}$ and five heterogeneous auxiliary sets with $\btheta_j^{\star}=\btheta_1^{\star}+\delta \bv_j$, where each $\bv_j$ is independently drawn uniformly  from the unit sphere. 

In addition to the single-set estimator and the pooled estimator, we report the risk of the single-step shrinkage estimator \eqref{eq:shrinkage-estimator-multi-set} and the greedy sequential shrinkage estimator in Algorithm \ref{alg:greedy-sequential-shrinkage}.  We still study three models: Gaussian mean estimation with unknown covariance, linear regression, and logistic regression. For the Gaussian mean and linear regression settings, each set has sample size \(250\). For the logistic regression setting, the target set has sample size \(10000\), and each auxiliary set has sample size \(1000\). The other settings are the same as the two-set simulations. We vary \(p\in\{5,20,50\}\) and \(\delta\in[0, 2.5]\), and the reported risk is the average over 100 repeats.

Figures~\ref{fig:multiset-gaussian-unknown}--\ref{fig:multiset-logistic} show a consistent pattern across the three models. When \(\delta\) is small, the pooled estimator is competitive, reflecting the variance reduction from combining nearly homogeneous sets. As \(\delta\) increases, the pooled estimator deteriorates quickly, since the heterogeneous sets introduce an increasing bias. The single-step shrinkage estimator is more stable than the pooled estimator, but its risk still increases with \(\delta\), as its shrinkage direction is formed using all auxiliary sets. The greedy sequential estimator has the smallest risk over most settings. Its risk remains relatively stable as \(\delta\) grows, which suggests that the sequential update can exploit the homogeneous sets while reducing the influence of heterogeneous sets. This advantage is most visible for moderate and large values of \(\delta\), where the auxiliary collection contains both useful and misleading sets.

\begin{figure*}[!t]
	\centering
	\includegraphics[width=1.0\textwidth]{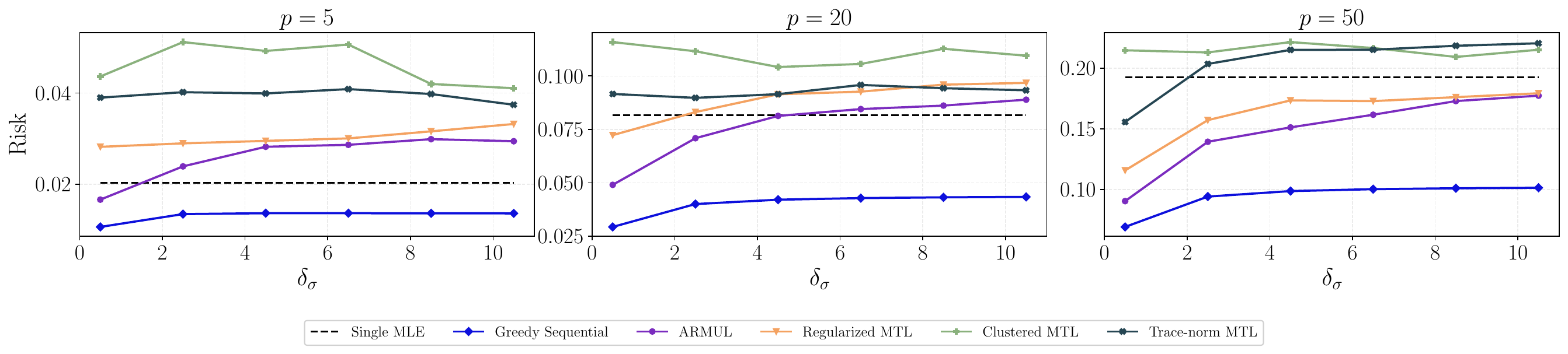}
	\caption{Multi-set simulation results for Gaussian mean estimation with unknown covariance matrices, plotted against the covariance-heterogeneity level \(\delta_\sigma\) for \(p\in\{5,20,50\}\).}
	\label{fig:multiset-gaussian-hetero-sigma}
\end{figure*}

\vspace{5pt}

\noindent\textbf{Covariance Heterogeneity.}
The preceding multi-set simulations use the same covariance structure across all sets. We next study a more heterogeneous setting in which the auxiliary sets differ not only in their population parameters but also in their covariance scales. We focus on the Gaussian mean model with unknown covariance matrices. As before, we set \(\btheta_1^\star=\mathbf 1_p/\sqrt p\) for the target set and generate five homogeneous auxiliary sets with \(\btheta_j^{\star}=\btheta_1^{\star}\) and five heterogeneous auxiliary sets with \(\delta=\norm{\btheta_j^{\star}-\btheta_1^{\star}}_2=1\). For each group, we let the covariance matrix of the \(j\)-th auxiliary set be
$
\bSigma_j
=
\left(1+(j-1)\delta_\sigma\right)^2 \bR(0.5)
$ for $j=1,\ldots,5$,  where \(\bR(0.5)_{kl}=0.5^{|k-l|}\). Thus, \(\delta_\sigma\) controls the degree of covariance heterogeneity. Each set has sample size \(n_j=250\), and the reported risks are averaged over \(100\) replications.

We compare the proposed greedy sequential shrinkage estimator with the single-set estimator and four multi-task learning (MTL) methods from the literature, including regularized MTL \citep{evgeniou2005learning}, clustered MTL \citep{jacob2008clustered}, trace-norm MTL \citep{pong2010trace}, and ARMUL \citep{duan2023adaptive}. The tuning parameters of the benchmark methods are selected by cross-validation, while our proposed method is tuning-free.

Figure~\ref{fig:multiset-gaussian-hetero-sigma} shows that the proposed greedy sequential estimator remains stable as the covariance heterogeneity level \(\delta_\sigma\) increases, whereas the competing MTL methods are more sensitive to covariance heterogeneity. Across all three dimensions, the greedy sequential estimator achieves the smallest risk throughout the entire range of \(\delta_\sigma\). Compared with the best competing MTL benchmark, its risk is lower by at least  \(24\%\) in the displayed settings. The advantage is especially pronounced when \(\delta_\sigma\) is large. For instance, at \(p=50\) and \(\delta_\sigma=10.5\), the  sequential estimator reduces the risk by  \(45\%\) relative to ARMUL.

The behavior of the competing methods suggests that accounting for heterogeneity in the population parameters \(\btheta_j^{\star}\) alone is insufficient when the auxiliary sets also have highly unequal covariance scales. In particular, ARMUL, despite its robustness to task heterogeneity, exhibits increasing risk as \(\delta_\sigma\) grows. In contrast, our greedy sequential procedure explicitly uses the estimated covariance matrices in both the construction of the shrinkage direction and the sequential source-selection rule. As a result, low-variance auxiliary estimates receive more weight, while high-variance auxiliary estimates are naturally downweighted. This covariance-aware weighting is crucial for achieving the optimal efficiency gain.


\subsection{Real Data Analysis}\label{sec:real-data}

In this section, we evaluate the proposed method on the American Community Survey (ACS)
Public Coverage task constructed by \cite{ding2021retiring}, which predicts public
health insurance coverage among low-income individuals under age 65. The
covariates include demographic, socioeconomic, employment, migration,
military-service, disability, fertility, and race variables.

Individuals from different U.S. states are viewed as different sets. Data from ten states form a mixed-source pool: Georgia, Florida, North Carolina, Virginia, Kentucky, and West Virginia form a geographically closer Southeastern group, Texas and New Mexico form a Southwestern group, while Idaho and Massachusetts are geographically far from most of the others. Thus, for a target state, the candidate sources contain both nearby and distant states. In particular, Georgia, Idaho, and Texas are used as target sets, since they have the smallest sample sizes. The remaining states provide auxiliary sets with different sample sizes and different degrees of similarity to the target.

\begin{table}[t]
	\centering
	\caption{Prediction accuracy (\%) on the public coverage data of different methods. The rows report the mean accuracy for each target state and the average across the three targets.}
	\label{tab:real-public-coverage}
	\vspace{15pt}
	\resizebox{\textwidth}{!}{
		\begin{tabular}{lcccccccc}
			\toprule
			Target 
			& Proposed
			& Single-step 
			& Pooled 
			& Single-set
			& ARMUL 
			& Regularized
			& Clustered
			& Trace-norm \\
			\midrule
GA 
& \textbf{80.41} & 78.28 & 78.28 & 70.60 & 71.43 & 71.33 & 66.05 & 71.38 \\
ID 
& \textbf{84.81} & 81.74 & 81.74 & 74.88 & 75.75 & 75.22 & 70.54 & 75.71 \\
TX 
& \textbf{83.16} & 79.76 & 79.76 & 69.84 & 71.03 & 70.76 & 65.21 & 70.98 \\
\midrule
Average 
& \textbf{82.79} & 79.93 & 79.93 & 71.77 & 72.73 & 72.44 & 67.27 & 72.69 \\
			\bottomrule
		\end{tabular}
	}
\end{table}

Table~\ref{tab:real-public-coverage} reports the prediction accuracy on the public coverage data for all methods considered in the simulation studies, including the proposed greedy sequential shrinkage estimator, the single-step shrinkage estimator, the pooled estimator, the single-set estimator, and the MTL benchmarks shown in Figure~\ref{fig:multiset-gaussian-hetero-sigma}. The greedy sequential estimator achieves the highest accuracy for every target state, showing that the improvement is not driven by a single favorable target. 

We further assess the statistical significance of the accuracy improvements using paired Wilcoxon signed-rank tests. For each competing method, we form paired differences in accuracy between the greedy sequential estimator and that method over the same target--repetition pairs. The null hypothesis is that these paired differences are centered at zero, while the one-sided alternative is that they are systematically positive. All resulting \(p\)-values are below \(0.001\), providing strong evidence that the accuracy gains of the greedy sequential estimator are statistically significant.

The performance of the single-step shrinkage estimator is essentially identical to the pooled estimator in Table \ref{tab:real-public-coverage}. This suggests that the auxiliary sets contain substantial transferable information, and thus the estimated shrinkage size is often selected to be close to one, making single-step shrinkage nearly the same as full pooling. However, the pooled direction is constructed using all auxiliary sets at once and may therefore be affected by less compatible sources. In contrast, the greedy sequential estimator updates the shrinkage direction source by source. This sequential construction allows it to absorb informative auxiliary sources while reducing the impact of less compatible ones, which explains its consistent advantage over pooling-type methods.

Among the MTL benchmarks, the ARMUL method is the strongest competitor, as expected from its design for adaptive and robust multi-task learning. However, its average accuracy remains close to that of the single-set estimator and is still below the pooled estimator, indicating that ARMUL's robustness-oriented design may lead to conservative borrowing in this data set, although the auxiliary sources contain substantial transferable information.
By comparison, the source-by-source evaluation in the proposed greedy-sequential estimator allows it to borrow aggressively when an auxiliary source is informative, rather than imposing a global multi-task regularization. At the same time, since the shrinkage directions and selection scores incorporate the estimated covariance matrices, the proposed method weights auxiliary information according to its uncertainty, which is essential for achieving optimal efficiency under covariance heterogeneity.

\section{Conclusion}\label{sec:conc}

This paper develops a tuning-free, covariance-aware shrinkage framework for multi-source transfer learning. By using covariance information to construct borrowing directions and risk-guaranteed shrinkage levels, the proposed methods adaptively borrow from useful sources without tuning. The greedy sequential algorithm further handles heterogeneous sources by incorporating them according to estimated risk reduction, and the framework extends to general $M$-estimation through a local quadratic risk approximation. Future work may study non-smooth losses and settings with more limited summary information.


\ifseparatebib
\renewcommand{\baselinestretch}{1.45}\selectfont
\renewcommand{\refname}{References}
\putbib[main]
\end{bibunit}
\else
\bibliographystyle{apalike}
\bibliography{main}
\fi

\newpage 
\appendix

\ifseparatebib
\begin{bibunit}
	\fi

\begin{center}
	\vspace*{1em}
	
	{\Large\scshape Supplementary Materials for}  \\[2.0em]
	
	{\large\bfseries
		\begin{minipage}{0.86\textwidth}
			\centering
			Tuning-Free Efficient Estimation for Multi-Source Data via Covariance-Aware Shrinkage
		\end{minipage}
	}
	
	\vspace{0.8em}
	\vspace{1.2em}
\end{center}

The supplementary materials provide additional theoretical and numerical details for the main paper.  Section~\ref{sec:single-step-shrinkage} develops the direct single-step multi-source shrinkage estimator, which serves as the baseline motivating the greedy sequential method in Section~\ref{sec:greedy} in the main text.  Section~\ref{sec:relationship} gives a comparison with the classical James--Stein estimator.Section~\ref{sec:plugin-covariance-stability} provides a stability analysis for the covariance plug-in step. Section~\ref{sec:numerical-appendix} gives the simulation details and additional figures for the numerical studies. Section~\ref{sec:proofs-appendix} provides the technical proofs of all theoretical results.
	
We use the notation \([k]=\{1,\ldots,k\}\) for any positive integer \(k\). 
For a vector \(\bv\), let \(\|\bv\|_p\) denote its $L_p$ norm. For a positive semidefinite matrix \(\bA\), let \(\|\bv\|_{\bA}^2=\bv^\top\bA\bv\), and let \(\|\bA\|_2\), \(\|\bA\|_{\mathrm F}\), \(\Tr(\bA)\), \(\lambda_{\min}(\bA)\), and \(\lambda_{\max}(\bA)\) denote the spectral norm, Frobenius norm, trace, smallest eigenvalue, and largest eigenvalue of $\bA$, respectively. 
For symmetric matrices \(\bA\) and \(\bB\), \(\bA\preceq\bB\) if \(\bB-\bA\) is positive semidefinite, and \(\bA\prec\bB\) if \(\bB-\bA\) is positive definite. 
For sequences \(\{a_n\}\) and \(\{b_n\}\), \(a_n\lesssim b_n\) or $a_n=O(b_n)$ if \(a_n\le Cb_n\) for a universal constant \(C>0\), \(a_n\gtrsim b_n\) if \(b_n\lesssim a_n\), and \(a_n\asymp b_n\) if both \(a_n\lesssim b_n\) and \(a_n\gtrsim b_n\). 
Also, \(a_n\ll b_n\) or \(a_n=o(b_n)\) if \(a_n/b_n\to0\).
For random variables \(\{X_n\}\) and positive numbers \(\{a_n\}\), \(X_n=O_{\PP}(a_n)\) if \(X_n/a_n\) is bounded in probability. 
For a random variable \(X\), \(\|X\|_{\psi_1}\) and \(\|X\|_{\psi_2}\) denote the sub-exponential and sub-Gaussian Orlicz norms, respectively.

\section{Single-Step Shrinkage for Multiple Datasets}\label{sec:single-step-shrinkage}

This section gives the detailed development of the direct multi-source analogue of the two-dataset shrinkage estimator. The pooled estimator in \eqref{eq:MLE-multi-set} can be rewritten as
\begin{equation}\label{eq:pooled-direction-multiset}
	\overline{\btheta}
	=
	\tbtheta_1
	+
	\sum_{j=2}^{m}
	\bW_j(\tbtheta_j-\tbtheta_1),
\end{equation}
where
\begin{equation}\label{eq:weight-multiset}
	\bW_j
	=
	n_j
	\left(\sum_{k=1}^{m}n_k\bSigma_k^{-1}\right)^{-1}
	\bSigma_j^{-1},
	\qquad j\in\{2,3,\dots, m\}.
\end{equation}
The following estimator is referred to as the \emph{single-step shrinkage estimator}, since it performs one shrinkage update from the single-set estimator toward the pooled estimator:
\begin{equation}\label{eq:shrinkage-estimator-multiset}
	\overline{\btheta}_{t}
	:=
	(1-t)\tbtheta_1+t\overline{\btheta}
	=
	\tbtheta_1
	+
	t\sum_{j=2}^{m}\bW_j(\tbtheta_j-\tbtheta_1),
	\qquad t\in\R.
\end{equation}
Similar to \eqref{eq:MSE-in-t}, the risk of \(\overline{\btheta}_{t}\) admits the following quadratic representation:
\begin{equation}\label{eq:MSE-in-t-multiset}
	\begin{aligned}
		\E \normQ{\overline{\btheta}_{t}-\btheta_1^\star}^2
		=\ip{n_1^{-1}\bSigma_1}{\bQ}
		-
		2t
		\sum_{j=2}^{m}
		\ip{n_1^{-1}\bW_j\bSigma_1}{\bQ}+
		t^2
		\E\left[
		\left\|
		\sum_{j=2}^{m}
		\bW_j(\tbtheta_j-\tbtheta_1)
		\right\|_{\bQ}^2
		\right].
	\end{aligned}
\end{equation}
Therefore, the formal minimizer of the right-hand side suggests the plug-in estimator
\begin{equation}\label{eq:shrinkage-tstar-multiset}
	\overline{\btheta}_{\widehat t}
	=
	\tbtheta_1
	+
	\frac{
		\sum_{j=2}^{m}
		\ip{n_1^{-1}\bW_j\bSigma_1}{\bQ}
	}{
		\left\|
		\sum_{j=2}^{m}
		\bW_j(\tbtheta_j-\tbtheta_1)
		\right\|_{\bQ}^2
	}
	\sum_{j=2}^{m}
	\bW_j(\tbtheta_j-\tbtheta_1).
\end{equation}

The estimator in \eqref{eq:shrinkage-tstar-multiset} is obtained by plugging a data-dependent \(\widehat t\) into a risk identity that is valid only for fixed \(t\). Thus, although \eqref{eq:shrinkage-tstar-multiset} is a natural SURE-type choice, its risk is not directly controlled by \eqref{eq:MSE-in-t-multiset}. For this reason, we again introduce the refined family indexed by \(s\):
\begin{equation}\label{eq:shrinkage-estimator-s-multi-set}
	\widehat{\btheta}_{s}
	=
	\tbtheta_1
	+
	\frac{s}{
		\left\|
		\sum_{j=2}^{m}
		\bW_j(\tbtheta_j-\tbtheta_1)
		\right\|_{\bQ}^2
	}
	\sum_{j=2}^{m}
	\bW_j(\tbtheta_j-\tbtheta_1),
	\qquad s\in\R.
\end{equation}
The plug-in estimator in \eqref{eq:shrinkage-tstar-multiset} corresponds to the particular choice
$
\bar s
:=
\sum_{j=2}^{m}
\ip{n_1^{-1}\bW_j\bSigma_1}{\bQ}
$.

The next theorem provides an upper bound for the risk of \eqref{eq:shrinkage-estimator-s-multi-set}.

\begin{theorem}\label{thm:risk-bound-s-multiset}
	Define
	$
	\bS
	:=
	\sum_{j=2}^{m}
	n_1^{-1}
	\bQ^{1/2}
	\bW_j
	\bSigma_1
	\bQ^{1/2}
	$.
Suppose that there exists a constant \(M \geq 1\) such that
$
	M^{-1}
	\le
	\lambda_{\min}(\bSigma_j)
	\le
	\lambda_{\max}(\bSigma_j)
	\le
	M$ for all $j\in[m]$.
	If \(\Tr(\bS)>2\|\bS\|_2\), then for any
	$
	s\in\big(0,\;2\Tr(\bS)-4\|\bS\|_2\big)
$,
	we have
	\begin{equation}\label{eq:risk-upper-multiset}
		\begin{aligned}
			&\E\normQ{\widehat{\btheta}_{s}-\btheta_1^\star}^2
			-
			\ip{n_1^{-1}\bSigma_1}{\bQ}
			\le
			s
			\left[
			-2\Tr(\bS)
			+
			4\|\bS\|_2
			+
			s
			\right]
			\E\left[
			\left\|
			\sum_{j=2}^{m}
			\bW_j(\tbtheta_j-\tbtheta_1)
			\right\|_{\bQ}^{-2}
			\right],
		\end{aligned}
	\end{equation}
	which implies
	$
	\E\normQ{\widehat{\btheta}_{s}-\btheta_1^\star}^2
	<
	\E\normQ{\tbtheta_1-\btheta_1^\star}^2=	\ip{n_1^{-1}\bSigma_1}{\bQ}
	$.
\end{theorem}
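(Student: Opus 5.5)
Our plan is to reduce Theorem~\ref{thm:risk-bound-s-multiset} to the same Stein-identity argument behind Theorem~\ref{thm:risk-bound-s}, using Lemma~\ref{lem:sure} with $\tbtheta_1$ as the Gaussian observation. We condition on the source estimators $\{\tbtheta_j\}_{j\ge 2}$, which are independent of $\tbtheta_1$. Note that $\sum_{j=2}^m \bW_j = \bI_p - \bW_1$, where $\bW_1 = n_1(\sum_k n_k\bSigma_k^{-1})^{-1}\bSigma_1^{-1}$. Hence the direction can be written as
\begin{equation*}
\bu(\bx) := \sum_{j=2}^m \bW_j(\tbtheta_j - \bx) = \bc - \bA\bx,
\end{equation*}
where $\bA := \sum_{j\ge2}\bW_j$ and $\bc := \sum_{j\ge 2}\bW_j\tbtheta_j$ is fixed given the sources. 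We then set $\bg(\bx) = s\,\bu(\bx)/\|\bu(\bx)\|_{\bQ}^2$ and $\bx=\tbtheta_1\sim\calN(\btheta_1^\star, n_1^{-1}\bSigma_1)$.

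\textbf{Step 1 (Jacobian).} A direct computation gives
\begin{equation*}
\bJ(\bx) = -\frac{s}{\|\bu\|_{\bQ}^2}\Big(\bI_p - \frac{2\bu\bu^\top\bQ}{\|\bu\|_{\bQ}^2}\Big)\bA .
\end{equation*}
Lemma~\ref{lem:sure} then yields
\begin{equation*}
\E\|\widehat\btheta_s-\btheta_1^\star\|_{\bQ}^2 - \ip{n_1^{-1}\bSigma_1}{\bQ}
= \E\Big[\frac{-2s\,\ip{n_1^{-1}\bA\bSigma_1}{\bQ} + 4s\,\bu^\top\bQ\bA n_1^{-1}\bSigma_1\bQ\bu/\|\bu\|_{\bQ}^2 + s^2}{\|\bu\|_{\bQ}^2}\Big].
\end{equation*}
By cyclicity of the trace, $\ip{n_1^{-1}\bA\bSigma_1}{\bQ} = \Tr(\bS)$. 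Writing $\bv = \bQ^{1/2}\bu$, the second term in the numerator equals $4s\,\bv^\top\bS\bv/\|\bv\|_2^2$. This is at most $4s\|\bS\|_2$, because $\bS = \bQ^{1/2}\bW_1'\bQ^{1/2}$ with $\bA n_1^{-1}\bSigma_1 = (\sum_k n_k\bSigma_k^{-1})^{-1}\sum_{j\ge2} n_j\bSigma_j^{-1} n_1^{-1}\bSigma_1$. Here we only need $\bv^\top\bS\bv\le\|\bS\|_2\|\bv\|^2$, which holds for any matrix once $\|\bS\|_2$ is read as the operator norm. Substituting this bound gives exactly \eqref{eq:risk-upper-multiset}. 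Strict improvement then follows for $s\in(0,2\Tr(\bS)-4\|\bS\|_2)$, since the bracket is negative and $\E\|\bu\|_{\bQ}^{-2}>0$.

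\textbf{Step 2 (validity of Lemma~\ref{lem:sure}).} This step is the main obstacle and is where the eigenvalue bounds on $\bSigma_j$ enter. We must check weak differentiability of $\bg$ and the integrability of $(x_k-\theta_k)g_{k'}$ and $\partial g_{k'}/\partial x_k$. Both are dominated by $C\|\bu\|^{-1}$ and $C\|\bu\|^{-2}$ respectively. The matrix $\bA = \bI-\bW_1$ is invertible under the eigenvalue bounds (its eigenvalues are bounded away from $0$ given $m\ge2$). Therefore $\bu = \bc-\bA\bx$ is a nondegenerate Gaussian in $\R^p$ given the sources, with covariance bounded below by $cn_1^{-1}\bI$. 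This gives $\E\|\bu\|^{-2}<\infty$ when $p\ge 3$. The condition $\Tr(\bS)>2\|\bS\|_2$ forces $p\ge3$, because $\Tr(\bS)\le p\|\bS\|_2$ (up to the symmetry caveat, which I would handle by noting $\bS$ is similar to a positive semidefinite matrix). The singularity at $\bu=0$ is a point in codimension $p\ge3$, so $\bg$ is weakly differentiable there, as in the classical James--Stein argument. Finally, we integrate the conditional identity over the sources by Fubini, using the same integrability bounds uniformly in $\bc$.
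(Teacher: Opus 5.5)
Your proposal is correct and follows essentially the same route as the paper. Hold the source estimators fixed, apply Lemma~\ref{lem:sure} with $\tbtheta_1$ as the Gaussian observation, identify the trace term as $\Tr(\bS)$, bound the rank-one term of the Jacobian by $\|\bS\|_2\|\bu\|_{\bQ}^2$, and justify integrability through the inverse moment of a non-central $\chi^2_p$ variable. Two of your steps make explicit what the paper leaves implicit: the conditioning, with a bound on $\E[\|\bu\|_{\bQ}^{-2}\mid \tbtheta_2,\ldots,\tbtheta_m]$ uniform in $\mathbf{c}$, and the deduction of $p\ge 3$ from $\Tr(\bS)>2\|\bS\|_2$. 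Your symmetry caveat is unnecessary: since $\sum_{j\ge2}\bW_j n_1^{-1}\bSigma_1 = n_1^{-1}\bSigma_1-\big(\sum_{k=1}^m n_k\bSigma_k^{-1}\big)^{-1}$, the matrix $\bS$ is symmetric positive semidefinite.
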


Theorem~\ref{thm:risk-bound-s-multiset} shows that the single-step shrinkage estimator achieves uniform risk improvement over the single-set estimator \(\tbtheta_1\) throughout an explicit admissible interval of \(s\). Let
$
\underline s=
\Tr(\bS)-2\|\bS\|_2$.
The value \(\underline s\) minimizes the quadratic upper bound in \eqref{eq:risk-upper-multiset}, whereas \(\bar s=\Tr(\bS)
\) corresponds to the SURE-type estimator in \eqref{eq:shrinkage-tstar-multiset}. The interval
$
s\in[\underline s,\bar s]
$
therefore provides an admissible shrinkage range in the multi-source setting. The left endpoint \(\underline s\) is the most conservative choice suggested by the upper bound in \eqref{eq:risk-upper-multiset}, while the right endpoint recovers the plug-in choice induced by \(\widehat t\).

The next theorem provides a common risk upper bound for all \(s\) in this interval.

\begin{theorem}\label{thm:risk-bound-shat-multiset}
	Under the assumptions in Theorem~\ref{thm:risk-bound-s-multiset}, if
	$
	\Tr(\bS)>4\|\bS\|_2
	$,
	then for any
	$
	s\in[\underline s,\bar s]
	$,
	we have
	\begin{equation}\label{eq:risk-bound-shat-multiset}
		\begin{aligned}
			\E\normQ{\widehat{\btheta}_{s}-\btheta_1^\star}^2
			&\le
			\ip{
				\left(
				\sum_{j=1}^{m}
				n_j\bSigma_j^{-1}
				\right)^{-1}
			}{\bQ}
+
			\frac{
				\Tr(\bS)
				\left\|
				\sum_{j=2}^{m}
				\bW_j(\btheta_j^\star-\btheta_1^\star)
				\right\|_{\bQ}^2
			}{
				\Tr(\bS)
				+
				\left\|
				\sum_{j=2}^{m}
				\bW_j(\btheta_j^\star-\btheta_1^\star)
				\right\|_{\bQ}^2
			}
			+
			4\|\bS\|_2.
		\end{aligned}
	\end{equation}
\end{theorem}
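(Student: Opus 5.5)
Write $\bu:=\sum_{j=2}^m\bW_j(\tbtheta_j-\tbtheta_1)$ and $\bmu:=\E\bu=\sum_{j=2}^m\bW_j(\btheta_j^\star-\btheta_1^\star)$. The plan is to reduce the claim to a single inequality about the Gaussian vector $\bu$ and then invoke Jensen.

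\textbf{Step 1 (apply the previous bound).} I would start from the bound \eqref{eq:risk-upper-multiset} of Theorem~\ref{thm:risk-bound-s-multiset}. The factor $s(s-2\Tr(\bS)+4\|\bS\|_2)$ is a convex quadratic in $s$. On $[\underline s,\bar s]$ it is maximized at an endpoint, and at both endpoints it equals $-\Tr(\bS)(\Tr(\bS)-4\|\bS\|_2)$. Hence for every $s\in[\underline s,\bar s]$,
\[
\E\normQ{\widehat\btheta_s-\btheta_1^\star}^2\le \Tr(\bS)+4\|\bS\|_2\Tr(\bS)\,\E\|\bu\|_\bQ^{-2}-\Tr(\bS)^2\,\E\|\bu\|_\bQ^{-2},
\]
using $\ip{n_1^{-1}\bSigma_1}{\bQ}=\Tr(n_1^{-1}\bQ^{1/2}\bSigma_1\bQ^{1/2})$. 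Two pieces are then needed: a lower bound on $\E\|\bu\|_\bQ^{-2}$, and control of the middle term. This assumes the positive term $4\|\bS\|_2\Tr(\bS)\E\|\bu\|^{-2}_\bQ$ can be absorbed. If not, I would go back into the proof of Theorem~\ref{thm:risk-bound-s-multiset} and use its exact Stein-type expression, in which the $\|\bS\|_2$ term enters as $4s\|\bS\|_2\E\|\bu\|^{-2}_\bQ\le 4\|\bS\|_2$ once one notes $s\,\E\|\bu\|_\bQ^{-2}$ is controlled.

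\textbf{Step 2 (Jensen).} By convexity of $x\mapsto 1/x$, $\E\|\bu\|_\bQ^{-2}\ge 1/\E\|\bu\|_\bQ^2=1/(\Tr(\bQ\cov(\bu))+\|\bmu\|_\bQ^2)$. Next I would compute $\cov(\bu)$. Since $\sum_{j\ge1}\bW_j=\bI$ and $\bu=\overline\btheta-\tbtheta_1$, we get $\cov(\bu)=n_1^{-1}\bSigma_1-\bV$ with $\bV=(\sum_j n_j\bSigma_j^{-1})^{-1}$. One checks that $\cov(\overline\btheta,\tbtheta_1)=\bV$ and that $(\bI-\bW_1)n_1^{-1}\bSigma_1=\sum_{j\ge2}\bW_jn_1^{-1}\bSigma_1$. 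Combining these gives $\Tr(\bQ\cov(\bu))=\Tr(\bS)$, and also $\ip{\bV}{\bQ}=\ip{n_1^{-1}\bSigma_1}{\bQ}-\Tr(\bS)$. This identity is the crux and mirrors the two-set case.

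\textbf{Step 3 (assemble).} Write $T=\Tr(\bS)$ and $D=\|\bmu\|_\bQ^2$. Then
\[
\ip{n_1^{-1}\bSigma_1}{\bQ}-\frac{T^2}{T+D}=\ip{\bV}{\bQ}+T-\frac{T^2}{T+D}=\ip{\bV}{\bQ}+\frac{TD}{T+D}.
\]
This is exactly the first two terms of \eqref{eq:risk-bound-shat-multiset}. The residual is $4\|\bS\|_2T\,\E\|\bu\|_\bQ^{-2}$, or the corresponding Stein term, and it must be bounded by $4\|\bS\|_2$.

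\textbf{Main obstacle.} The hard part is this last step. Jensen gives only a lower bound on $\E\|\bu\|_\bQ^{-2}$, but here I need the upper bound $T\,\E\|\bu\|^{-2}_\bQ\le1$, which is false in general. So I expect to avoid the crude endpoint bound and instead redo the Stein computation from the proof of Theorem~\ref{thm:risk-bound-s-multiset}. In that computation the cross term appears as $-2s\,\E[(\Tr(\bS)-2\bu^\top\bQ^{1/2}\bS\bQ^{1/2}\bu/\|\bu\|_\bQ^2)\|\bu\|_\bQ^{-2}]$, and the problematic quantity is $4s\,\E[\bu^\top\bQ^{1/2}\bS\bQ^{1/2}\bu\,\|\bu\|_\bQ^{-4}]$. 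Because $s\le\bar s$, this can be bounded by splitting $s\|\bu\|^{-2}_\bQ$ into the regions $\le1$ and $>1$ and using that the $s^2$ term compensates on the second region. I would try to show the total excess is at most $4\|\bS\|_2$ by completing the square in $s/\|\bu\|_\bQ^2$.
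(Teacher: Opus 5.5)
Your ingredients match the paper's: the bound of Theorem~\ref{thm:risk-bound-s-multiset}, Jensen's inequality $\E\|\bu\|_{\bQ}^{-2}\ge 1/(\Tr(\bS)+D)$ with $D:=\|\E\bu\|_{\bQ}^2$, and the identity $\ip{n_1^{-1}\bSigma_1}{\bQ}=\ip{\bV}{\bQ}+\Tr(\bS)$, which you derive correctly. But the proposal stops short of a proof. Step~3 leaves the term $4\|\bS\|_2\Tr(\bS)\,\E\|\bu\|_{\bQ}^{-2}$ uncontrolled. Your proposed remedy (re-entering the Stein computation, splitting on $s\|\bu\|_{\bQ}^{-2}\le 1$, completing the square) is not carried out and does not obviously help: the quantity you isolate, $4s\,\E[\bu^\top\bQ^{1/2}\bS\bQ^{1/2}\bu\,\|\bu\|_{\bQ}^{-4}]$, is again only bounded by $4s\|\bS\|_2\E\|\bu\|_{\bQ}^{-2}$.

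The obstacle is self-inflicted. Write $T=\Tr(\bS)$ and $\sigma=\|\bS\|_2$. You split the coefficient $-T(T-4\sigma)$ into $+4\sigma T$ and $-T^2$, and then Jensen applies only to the negative piece. The hypothesis $T>4\sigma$ is there precisely so that the \emph{whole} coefficient is negative, so the Jensen lower bound can be applied to the combined term:
\[
\begin{aligned}
\E\normQ{\widehat\btheta_s-\btheta_1^\star}^2
&\le\ip{n_1^{-1}\bSigma_1}{\bQ}-T(T-4\sigma)\,\E\|\bu\|_{\bQ}^{-2}
\le\ip{\bV}{\bQ}+T-\frac{T(T-4\sigma)}{T+D}\\
&=\ip{\bV}{\bQ}+\frac{TD}{T+D}+\frac{4\sigma T}{T+D}.
\end{aligned}
\]
Since $D\ge 0$, you have $4\sigma T/(T+D)\le 4\sigma$, which finishes the proof. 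No upper bound on $\E\|\bu\|_{\bQ}^{-2}$ is needed, and the Stein identity need not be revisited.

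There are two smaller slips in Step~1.
\begin{itemize}
\item The two endpoint values are not equal: at $\underline s$ the factor is $-(T-2\sigma)^2=-T(T-4\sigma)-4\sigma^2$. Your conclusion still holds, because $\underline s$ is the vertex of the convex quadratic, so its maximum on $[\underline s,\bar s]$ is the value $-T(T-4\sigma)$ attained at $\bar s$.
\item The leading term of your display should be $\ip{n_1^{-1}\bSigma_1}{\bQ}$, not $\Tr(\bS)$.
\end{itemize}

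With these fixes, your reduction to the single endpoint $\bar s$ is a slightly leaner version of the paper's argument. The paper bounds both endpoints separately, obtaining the sharper residual $4\sigma(T-\sigma)/(T+D)$ at $\underline s$, and then invokes convexity in $s$.
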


The upper bound in Theorem~\ref{thm:risk-bound-shat-multiset} contains three terms. The first term is the oracle risk that would be attained if all datasets shared the same population mean. The second term quantifies the price of heterogeneity between the target and the sources. It depends on the weighted population discrepancy
$
\sum_{j=2}^{m}\bW_j(\btheta_j^\star-\btheta_1^\star)
$,
which is the population version of the pooled shrinkage direction. The third term is the price of using a data-adaptive shrinkage magnitude and depends only on the spectral size of \(\bS\).

To better understand the bound, consider the illustrative setting where \(\bQ=\bSigma_j=\bI_p\) for all \(j\in[m]\), and let \(\kappa_j:=n_j/n_1\) for \(j\in\{2,3,\dots, m\}\), with \(\kappa:=\sum_{j=2}^{m}\kappa_j\). Then
the first term in \eqref{eq:risk-bound-shat-multiset} reduces to
$
\frac{p}{n_1(1+\kappa)}
$,
which is the variance reduction obtained by pooling all datasets under homogeneity. The third term becomes
$
\frac{4\kappa}{n_1(1+\kappa)}
$,
which is of order \(n_1^{-1}\) and does not increase with the dimension \(p\). The second term is governed by
$
\left\|
\sum_{j=2}^{m}
\frac{\kappa_j}{1+\kappa}
(\btheta_j^\star-\btheta_1^\star)
\right\|_2^2
$.
Thus, if all source means are close to the target mean, the single-step shrinkage estimator can achieve substantial variance reduction while keeping the heterogeneity cost small. 

However, the source datasets can be a mixture of homogeneous sources, whose population parameters are close to the target parameter \(\btheta_1^\star\), and heterogeneous sources, whose population parameters are substantially different from \(\btheta_1^\star\). The former can provide transferable information for estimating \(\btheta_1^\star\), whereas the latter may introduce bias. In this case, the weighted discrepancy can be large due to the heterogeneous sources. As a result, the single-step shrinkage estimator may choose a small shrinkage size and fail to fully use the transferable information from the target-compatible sources. This limitation motivates the greedy sequential shrinkage estimator, which incorporates sources one at a time and updates the shrinkage direction after each selected source.

\section{Relationship to the James--Stein Estimator}\label{sec:relationship}

In this section, we discuss the relationship between our shrinkage estimator and the classical James--Stein (JS) estimator \citep{james1961estimation}, which was proposed under a homoscedastic setting where $\tbtheta_1 \sim \calN(\btheta_1^{\star}, n_1^{-1}\sigma_1^2\bI_{p})$ and $\bQ=\bI_p$. The JS estimator shrinks $\tbtheta_1$ toward the origin and takes the form
\[
\widehat{\btheta}_{\mathrm{JS}, s}=\tbtheta_1-\frac{s\tbtheta_1 }{\tbtheta_1^{\top}\tbtheta_1}, \qquad s>0.
\] 
It is well known that the JS estimator improves upon the usual MLE when the dimension $p\geq3$ by shrinking individual estimates towards the origin, leading to a variance reduction larger than the additional bias and therefore lowers the overall risk. In contrast, our estimator shrinks $\tbtheta_1$
toward a direction informed by the source dataset.

The risk of $\widehat{\btheta}_{\mathrm{JS}, s}$ follows directly from Lemma \ref{lem:sure}:
\[
\E\norm{\widehat{\btheta}_{\mathrm{JS}, s}-\btheta_1^{\star}}_2^2 = pn_1^{-1}\sigma_1^2+ s\left[-2n_1^{-1}\sigma_1^2(p-2)+s\right]\E\Big[\big\|\tbtheta_1\big\|_2^{-2}\Big],
\]
which is minimized when $\underline{s}_{\mathrm{JS}}=n_1^{-1}\sigma_1^{2}(p-2)$. Substituting this choice yields
\begin{equation}\label{eq:bound-JS}
	\E\norm{\widehat{\btheta}_{\mathrm{JS}, \underline{s}_{\mathrm{JS}}}-\btheta_1^{\star}}_2^2-pn_1^{-1}\sigma_1^2 = - n_1^{-2}\sigma_1^4(p-2)^2\E\Big[\big\|\tbtheta_1\big\|_2^{-2}\Big]\leq \frac{-n_1^{-2}\sigma_1^4(p-2)^2}{p\sigma_1^{2}n_1^{-1}+\norm{\btheta_1^{\star}}_2^2}.
\end{equation}
Under the same homoscedastic setting, i.e., $\tbtheta_2 \sim \calN(\btheta_2^{\star}, n_2^{-1}\sigma_2^2\bI_{p})$, the risk of our proposed shrinkage estimator \eqref{eq:shrinkage-sopt} can be upper bounded using Theorem \ref{thm:risk-bound-shat}:
\begin{equation}\label{eq:bound-shrinkage-simple}
	\E\norm{\widehat{\btheta}_{\underline{s}}-\btheta_1^{\star}}_2^2 - pn_1^{-1}\sigma_1^2 \leq \frac{-n_1^{-2}w\sigma_1^4(p-2)^2}{p\sigma_1^{2}n_1^{-1}+w\norm{\btheta_1^{\star}-\btheta_2^{\star}}_2^2}, \text{ where } w=\frac{n_2\sigma_2^{-2}}{n_1\sigma_1^{-2}+n_2\sigma_2^{-2}}.
\end{equation}
When $\norm{\btheta_1^{\star}-\btheta_2^{\star}}_2 < \norm{\btheta_1^{\star}}_2$ and $n_2 > \frac{p\sigma_2^2}{\norm{\btheta_1^{\star}}_2^2-\norm{\btheta_1^{\star}-\btheta_2^{\star}}_2^2}$, the upper bound in  \eqref{eq:bound-shrinkage-simple} is strictly smaller than the bound  in \eqref{eq:bound-JS}, which indicates that our proposed estimator is likely to outperform the JS estimator when the two population means are sufficiently close and the source sample size is sufficiently large.  The next theorem strengthens this comparison by establishing a sufficient condition under which $\widehat{\btheta}_{\underline{s}}$ is guaranteed to achieve a lower risk than $\widehat{\btheta}_{\mathrm{JS}, s}$ for every $s>0$.

\begin{theorem}\label{thm:JS}
	Suppose that $p\geq 3$ and $\|\btheta_2^\star-\btheta_1^\star\|_2^2 \leq (1-\frac{2}{p})\|\btheta_1^\star\|_2^2$. If the sample sizes satisfy
	$n_1
	> \frac{2p\sigma_1^2}{(p-2)\|\btheta_1^\star\|_2^2 
		- p\|\btheta_2^\star-\btheta_1^\star\|_2^2}$ and  $n_2 >
	\frac{
		p^2\sigma_2^2
	}{
		(p-2)\|\btheta_1^\star\|_2^2
		- p\|\btheta_2^\star-\btheta_1^\star\|_2^2
		- 2p\sigma_1^2/n_1
	}$,
	then, for all $s>0$, we have 
	\[\E\big\|\widehat{\btheta}_{\underline{s}}-\btheta_1^{\star}\big\|_2^2 < \E\big\|\widehat{\btheta}_{\mathrm{JS}, s}-\btheta_1^{\star}\big\|_2^2.\]
\end{theorem}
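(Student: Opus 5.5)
The plan is to compare an upper bound on the risk of $\widehat{\btheta}_{\underline{s}}$ with the exact risk of $\widehat{\btheta}_{\mathrm{JS},s}$, minimized over $s$. By the risk identity derived from Lemma~\ref{lem:sure}, for every $s>0$
\[
\E\big\|\widehat{\btheta}_{\mathrm{JS},s}-\btheta_1^\star\big\|_2^2 \ge pn_1^{-1}\sigma_1^2 - n_1^{-2}\sigma_1^4(p-2)^2\,\E\big[\|\tbtheta_1\|_2^{-2}\big].
\]
This holds because the quadratic $s[-2n_1^{-1}\sigma_1^2(p-2)+s]$ is minimized at $\underline{s}_{\mathrm{JS}}$. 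It therefore suffices to show that the risk reduction of $\widehat{\btheta}_{\underline{s}}$ strictly exceeds $n_1^{-2}\sigma_1^4(p-2)^2\E\|\tbtheta_1\|_2^{-2}$.

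For the JS side I need an \emph{upper} bound on $\E\|\tbtheta_1\|_2^{-2}$; Jensen only gives a lower bound. Write $\|\tbtheta_1\|_2^2 = (\sigma_1^2/n_1)\chi'^2_p(\lambda)$ with $\lambda = n_1\|\btheta_1^\star\|_2^2/\sigma_1^2$. I would use the Poisson-mixture representation, $\E[1/\chi'^2_p(\lambda)] = \E_K[1/(p-2+2K)]$ with $K\sim\mathrm{Poisson}(\lambda/2)$. By Jensen applied to the convex map $k\mapsto 1/(p-2+2k)$ this gives only a lower bound, so instead I would use the standard bound $\E[1/(p-2+2K)]\le 1/(p-2+\lambda)\cdot(1+\text{small})$. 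Cleaner is the known inequality $\E[1/\chi'^2_p(\lambda)]\le 1/(p-2+\lambda)\cdot p/(p-2)$, or more directly $\E_K[1/(p-2+2K)]\le \frac{1}{\lambda}\,\PP$-type bounds. Concretely, I aim for
\[
\E\|\tbtheta_1\|_2^{-2}\le \frac{1}{(p-2)\sigma_1^2/n_1+\|\btheta_1^\star\|_2^2}\cdot c_p,
\]
with an explicit $c_p$ such as $p/(p-2)$. This $c_p$ is what produces the factors $p$ versus $p-2$ in the hypotheses.

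For our estimator I use the exact risk bound from Theorem~\ref{thm:risk-bound-s} with $s=\underline{s}$. In the homoscedastic case $\bS=n_1^{-1}\sigma_1^2w\bI_p$, so $\Tr(\bS)-2\|\bS\|_2 = n_1^{-1}\sigma_1^2w(p-2)$. The reduction is then $n_1^{-2}\sigma_1^4w^2(p-2)^2\E\|w(\tbtheta_2-\tbtheta_1)\|_2^{-2}$, which by Jensen (since $1/x$ is convex) is at least
\[
\frac{n_1^{-2}\sigma_1^4(p-2)^2}{p(\sigma_1^2/n_1+\sigma_2^2/n_2)+\|\btheta_2^\star-\btheta_1^\star\|_2^2}.
\]
The weight $w$ cancels, which is exactly why Jensen runs in the helpful direction on this side.

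It then remains to verify the strict inequality
\[
p\sigma_1^2/n_1+p\sigma_2^2/n_2+\|\btheta_2^\star-\btheta_1^\star\|_2^2 < \frac{p-2}{p}\|\btheta_1^\star\|_2^2 + \frac{(p-2)\sigma_1^2}{n_1}\cdot(\cdots).
\]
Multiplying by $p$ and rearranging should reproduce the two sample-size conditions. The $n_2$ condition absorbs the $p^2\sigma_2^2/n_2$ term, and the $n_1$ condition ensures the denominator in the $n_2$ condition is positive.

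The main obstacle is the upper bound on $\E\|\tbtheta_1\|_2^{-2}$ with the right constants. I would first try the bound $\E_K[1/(p-2+2K)]\le p/\big((p-2)(p+\lambda)\big)$. This follows by writing $\frac{1}{p-2+2K}\le\frac{p}{p-2}\cdot\frac{1}{p+2K}$ and then applying a concavity-free identity: for the Poisson variable $K$, $\E[1/(p+2K)]\le 1/(p+\lambda)$ up to that factor, obtained via the representation $\int_0^1 t^{p-1}\E[t^{2K}]\,dt$. I would then match the resulting constants against the stated thresholds, adjusting the choice of $c_p$ if the hypotheses suggest a different form.
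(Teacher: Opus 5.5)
\textbf{Gap: the key inverse-moment bound is justified by a false inequality.} Your overall route is the paper's, and your $\widehat{\btheta}_{\underline{s}}$ side (with $w$ cancelling) is correct. The gap is in the proof of the bound on $\E\|\tbtheta_1\|_2^{-2}$, which the whole comparison rests on. You write $\frac{1}{p-2+2K}\le\frac{p}{p-2}\cdot\frac{1}{p+2K}$, which is true. You then claim $\E[1/(p+2K)]\le 1/(p+\lambda)$ for $K\sim\mathrm{Poisson}(\lambda/2)$, and that is false. The map $k\mapsto 1/(p+2k)$ is convex, so Jensen gives $\E[1/(p+2K)]\ge 1/(p+\lambda)$, strictly whenever $\lambda>0$. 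Under the theorem's hypotheses $\btheta_1^\star\neq\bm{0}$, so $\lambda>0$ and no integral representation can rescue this step.

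The paper does not prove this bound either; it cites Lemma~1 of Casella (1982), namely $\E[X^{-1}]\le\frac{p}{(p-2)(p+\lambda)}$ for $X\sim\chi^2_{p,\lambda}$. For a self-contained argument, apply Jensen after a Poisson identity, so it runs in the helpful direction. Write
\[
\frac{1}{p-2+2K}=\frac{1}{p-2}\Big(1-\frac{2K}{p-2+2K}\Big),
\]
and use $\E[Kh(K)]=\tfrac{\lambda}{2}\E[h(K+1)]$ with $h(k)=1/(p-2+2k)$. This gives $\E\big[\tfrac{2K}{p-2+2K}\big]=\lambda\,\E\big[\tfrac{1}{p+2K}\big]\ge\frac{\lambda}{p+\lambda}$, hence
\[
\E\Big[\frac{1}{p-2+2K}\Big]\le\frac{1}{p-2}\Big(1-\frac{\lambda}{p+\lambda}\Big)=\frac{p}{(p-2)(p+\lambda)}.
\]

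\textbf{The fallback bound you floated is too weak.} Your ``concrete aim'' $\E\|\tbtheta_1\|_2^{-2}\le\frac{p/(p-2)}{(p-2)\sigma_1^2/n_1+\|\btheta_1^\star\|_2^2}$ is valid but not sharp enough. Carrying it through the algebra requires
\[
n_1\big[(p-2)\|\btheta_1^\star\|_2^2-p\|\btheta_2^\star-\btheta_1^\star\|_2^2\big]>4(p-1)\sigma_1^2
\]
(together with the corresponding $n_2$ condition), and $4(p-1)>2p$ for $p\ge3$. So it does not prove the theorem as stated.

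\textbf{With the correct bound, your plan goes through.} The denominator becomes $p\sigma_1^2/n_1+\|\btheta_1^\star\|_2^2$ and your ``$(\cdots)$'' equals $1$. The comparison reduces to
\[
\frac{p\sigma_2^2}{n_2}+\|\btheta_2^\star-\btheta_1^\star\|_2^2+\frac{2\sigma_1^2}{n_1}<\frac{p-2}{p}\|\btheta_1^\star\|_2^2 .
\]
After multiplying by $p$, this is exactly the two sample-size conditions, and strictness comes from their strict inequalities. This matches the paper's proof.
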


A generalized version of James-Stein's estimator for non-diagonal covariance $\bSigma_1$ was given in \cite{bock1975minimax}:
\[
\widehat{\btheta}_{\mathrm{JS}, s}=\tbtheta_1-\frac{s\tbtheta_1 }{n_1\tbtheta_1^{\top}\bSigma_1^{-1}\tbtheta_1},
\] 
where $s>0$ and $\tbtheta_1 \sim \mathcal{N}(\btheta_1^{\star}, n_1^{-1}\bSigma_1)$. By Lemma \ref{lem:sure}, the risk of $\widehat{\btheta}_{\mathrm{JS}, s}$ with $\bQ=\bI_{p}$ becomes 
\[
\E\norm{\widehat{\btheta}_{\mathrm{JS}, s}-\btheta_1^{\star}}_2^2 \leq n_1^{-1} \Tr(\bSigma_1)+ sn_1^{-1}\left[(s+4)\norm{\bSigma_1}_2-2\Tr(\bSigma_1)\right]\E\Big[\big(n_1\tbtheta_1^{\top}\bSigma_1^{-1}\tbtheta_1\big)^{-1}\Big].
\]
When $\Tr(\bSigma_1) > 2\|\bSigma_1\|_2$, the upper bound is minimized at $\underline{s}_{\mathrm{JS}}=(\Tr(\bSigma_1)-2\norm{\bSigma_1}_2) / \norm{\bSigma_1}_2$. Similar to Theorem \ref{thm:JS}, we  also have a sufficient condition, given as
\begin{equation*}
	\frac{(\Tr(\bS)-2\lVert\bS\rVert_2)^2}
	{\Tr(\bS)+\norm{\bW_2(\btheta_2^{\star}-\btheta_1^{\star})}_2^2} \geq 	
	\frac{\bigl(\operatorname{Tr}(\bm{\Sigma}_{1})-2\,\lambda_{\min}(\bm{\Sigma}_{1})\bigr)^{2}}{n_1
		\lambda_{\min}(\bm{\Sigma}_{1})}
	\,
	\frac{p}{p-2}\frac{1}{p+n_1\btheta_1^{\star\top}\bSigma_1^{-1}\btheta_1^{\star}},
\end{equation*}
under which $\widehat{\btheta}_{\underline{s}}$ is guaranteed to achieve a lower risk compared to $\widehat{\btheta}_{\mathrm{JS}, s}$ for all $s>0$.

\section{Plug-in Covariance Estimation and Stability}
\label{sec:plugin-covariance-stability}

The  risk bounds in Theorems \ref{thm:risk-bound-s} and \ref{thm:risk-bound-shat} are stated for the Gaussian mean model with known covariance matrices. When the covariance matrices are unknown, the same method can be used with plug-in covariance estimators. This section discusses the stability of the plug-in implementation and relates it to the general $M$-estimation problem in Section \ref{sec:general-M-estimation}. 

Given positive definite covariance estimators \(\widehat\bSigma_1\) and \(\widehat\bSigma_2\), define
$
\widehat\bW_2
=
n_2\big(n_1\widehat\bSigma_1^{-1}+n_2\widehat\bSigma_2^{-1}\big)^{-1}\widehat\bSigma_2^{-1}$ and
$\widehat\bS
=
n_1^{-1}\bQ^{1/2}\widehat\bW_2\widehat\bSigma_1\bQ^{1/2}$.
As discussed in Remark \ref{rmk:tuning-free}, the population midpoint shrinkage size is
$
s_0:=\Tr(\bS)-\|\bS\|_2$,  and we calculate its plug-in analogue
$
\widehat s:=\Tr\big(\widehat\bS\big)-\big\|\widehat\bS\big\|_2
$ as the data-adaptive shrinkage size. We first have the following result for the estimation error between $s_0$ and $\widehat{s}$.

\begin{prop}
	\label{prop:plugin-stability}
	Suppose that \(n_1\asymp n_2\), and that the eigenvalues of \(\bSigma_1\) and \(\bSigma_2\) are uniformly bounded away from zero and infinity. Let
	$
	\eta_n
	:=
	\max_{j=1,2}\big\|\widehat\bSigma_j-\bSigma_j\big\|_2
	$.
	If \(\eta_n=o_{\PP}(1)\), then
	$
	\big\|\widehat\bW_2-\bW_2\big\|_2
	=
	O_{\PP}(\eta_n)$ and
	$
	\big\|\widehat\bS-\bS\big\|_2
	=
	O_{\PP}\left(\frac{\eta_n}{n_1}\right)$,
	and hence
	$
	|\widehat s-s_0|
	=
	O_{\PP}\left(\frac{p\eta_n}{n_1}\right)
	$.
	In particular, if $\frac{p\log n_1}{n_1}=o(1)$ and 
	$
	\eta_n
	=
	O_{\PP}\left(\sqrt{\frac{p\log n_1}{n_1}}\right)
	$,
	then
	\[
	|\widehat s-s_0|
	=
	O_{\PP}\left(
	\frac{p}{n_1}
	\sqrt{\frac{p\log n_1}{n_1}}
	\right).
	\]
\end{prop}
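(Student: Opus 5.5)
The argument is a chain of matrix perturbation bounds: first for the precision matrices, then for $\widehat\bW_2$, then for $\widehat\bS$, and finally for the scalar $\widehat s$ via Lipschitz continuity of trace and operator norm.

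\textbf{Step 1: precision matrices.} Because $\eta_n=o_{\PP}(1)$ and the eigenvalues of $\bSigma_j$ are bounded away from zero and infinity, Weyl's inequality gives that, with probability tending to one, the eigenvalues of $\widehat\bSigma_j$ are also bounded in $[c,C]$. The identity
\[
\widehat\bSigma_j^{-1}-\bSigma_j^{-1}=\widehat\bSigma_j^{-1}\big(\bSigma_j-\widehat\bSigma_j\big)\bSigma_j^{-1}
\]
then yields $\|\widehat\bSigma_j^{-1}-\bSigma_j^{-1}\|_2=O_{\PP}(\eta_n)$.

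\textbf{Step 2: the weight matrix.} Write
\[
\bW_2=\big(\bI_p+(n_1/n_2)\bSigma_2\bSigma_1^{-1}\big)^{-1}=\bSigma_2\big(\bSigma_2+(n_2/n_1)\bSigma_1\big)^{-1}\cdot (n_2/n_1),
\]
or more simply set $\bA=n_1\bSigma_1^{-1}+n_2\bSigma_2^{-1}$ and $\widehat\bA$ analogously. Since $n_1\asymp n_2$, the matrices $n_1^{-1}\bA$ and $n_1^{-1}\widehat\bA$ have eigenvalues bounded away from zero (w.p.$\to1$). Step 1 gives $\|n_1^{-1}(\widehat\bA-\bA)\|_2=O_{\PP}(\eta_n)$, and hence $\|n_1(\widehat\bA^{-1}-\bA^{-1})\|_2=O_{\PP}(\eta_n)$. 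Decomposing
\[
\widehat\bW_2-\bW_2=n_2(\widehat\bA^{-1}-\bA^{-1})\widehat\bSigma_2^{-1}+n_2\bA^{-1}\big(\widehat\bSigma_2^{-1}-\bSigma_2^{-1}\big)
\]
shows each term is $O_{\PP}(\eta_n)$, using $n_2/n_1=O(1)$ and bounded norms.

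\textbf{Step 3: the matrix $\widehat\bS$.} Decompose
\[
\widehat\bW_2\widehat\bSigma_1-\bW_2\bSigma_1=(\widehat\bW_2-\bW_2)\widehat\bSigma_1+\bW_2(\widehat\bSigma_1-\bSigma_1).
\]
Here $\|\bW_2\|_2=O(1)$, since $\bW_2=n_2\bA^{-1}\bSigma_2^{-1}$ with $\|\bA^{-1}\|_2\lesssim n_1^{-1}$. Together with $\|\bQ\|_2\le\overline q$, this gives $\|\widehat\bS-\bS\|_2=O_{\PP}(\eta_n/n_1)$.

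\textbf{Step 4: the scalar $\widehat s$.} Use $|\Tr(\widehat\bS)-\Tr(\bS)|\le p\|\widehat\bS-\bS\|_2$, which holds for any $p\times p$ matrix since $|\Tr(\bM)|\le p\|\bM\|_2$. Also use $\big|\|\widehat\bS\|_2-\|\bS\|_2\big|\le\|\widehat\bS-\bS\|_2$. Summing gives $|\widehat s-s_0|\le(p+1)\|\widehat\bS-\bS\|_2=O_{\PP}(p\eta_n/n_1)$. The final display follows by substituting $\eta_n=O_{\PP}(\sqrt{p\log n_1/n_1})$; the condition $p\log n_1/n_1=o(1)$ serves only to guarantee $\eta_n=o_{\PP}(1)$.

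\textbf{Main obstacle.} The only delicate point is the correct $n_1$-scaling in the inverse of $\bA$, together with the fact that $\bS$ and $\widehat\bS$ need not be symmetric. For the norm comparisons only the operator norm and the trace are used, and both are Lipschitz in the operator norm regardless of symmetry, so no spectral perturbation theory is needed. A high-probability event on which $\widehat\bSigma_j$ is well conditioned is needed to convert the deterministic bounds into $O_{\PP}$ statements.
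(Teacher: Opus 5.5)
Your proposal is correct and follows essentially the same route as the paper: inverse-perturbation bounds for $\widehat\bSigma_j^{-1}$ and for $(n_1\widehat\bSigma_1^{-1}+n_2\widehat\bSigma_2^{-1})^{-1}$, then the two-term decomposition of $\widehat\bW_2\widehat\bSigma_1-\bW_2\bSigma_1$, then $|\widehat s-s_0|\le(p+1)\|\widehat\bS-\bS\|_2$; the one difference is that you make the $n_1$-scaling of the inverse explicit, where the paper only cites ``the same perturbation bound.'' Two slips, neither of which affects the argument: the second closed form you wrote for $\bW_2$ should be $(n_2/n_1)\bSigma_1\big(\bSigma_2+(n_2/n_1)\bSigma_1\big)^{-1}$ (you have $\bSigma_2$ in front), and $\bS$ is in fact symmetric, since $\bW_2\bSigma_1=n_1^{-1}\bSigma_1\big(n_1^{-1}\bSigma_1+n_2^{-1}\bSigma_2\big)^{-1}\bSigma_1$, though your bounds do not use symmetry.
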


Recall that Theorem \ref{thm:risk-bound-s} shows that the shrinkage estimator gains risk improvement for any shrinkage size in the population interval
$
\left(0,\,2\Tr(\bS)-4\|\bS\|_2\right)
$.
The midpoint choice \(s_0=\Tr(\bS)-\|\bS\|_2\) lies in this interval whenever
$
\Tr(\bS)>3\|\bS\|_2
$.
If
$
|\widehat s-s_0|
<
\Tr(\bS)-3\|\bS\|_2
$,
then the plug-in choice \(\widehat s\) also belongs to the population risk-improving interval. Under the eigenvalue conditions in Proposition~\ref{prop:plugin-stability}, 
$
\Tr(\bS)-3\|\bS\|_2\asymp \frac{p}{n_1} \gg |\widehat{s}-s_0|
$, i.e., the plug-in error is smaller than the margin, ensuring that $\widehat{s}$ falls in the risk-improving interval with probability tending to one.




For general smooth \(M\)-estimators, the proposed method in Section \ref{sec:general-M-estimation} uses the sandwich covariance estimator
$
\widehat\bSigma_j
=
\widehat\bH_j^{-1}\widehat\bV_j\widehat\bH_j^{-1}
$,
where
$
\widehat\bV_j
=
\frac1{n_j}\sum_{i=1}^{n_j}
\nabla\ell_j(\tbtheta_j,x_{ji})
\nabla\ell_j(\tbtheta_j,x_{ji})^\top$ and
$
\widehat\bH_j
=
\frac1{n_j}\sum_{i=1}^{n_j}
\nabla^2\ell_j(\tbtheta_j,x_{ji})
$.
Under the regularity conditions in Theorem \ref{thm:risk-upper-general}, Lemma \ref{lem:consistency-HhatVhat} shows that
\[
\max_{j\in[m]}
\left(
\|\widehat\bH_j-\bH_j^\star\|_2
+
\|\widehat\bV_j-\bV_j^\star\|_2
\right)
=
O_{\PP}\left(
\sqrt{\frac{p\log n_1}{n_1}}
\right),
\]
which yields
\[
\max_{j\in[m]}
\left(
\|\widehat\bSigma_j-\bSigma_j\|_2
+
\|\widehat\bSigma_j^{-1}-\bSigma_j^{-1}\|_2
\right)
=
O_{\PP}\left(
\sqrt{\frac{p\log n_1}{n_1}}
\right),
\]
satisfying the conditions in Proposition \ref{prop:plugin-stability}.

The sandwich covariance estimators and their inverses are guaranteed to behave well under the condition \(p\log n_1/n_1=o(1)\). When \(p\) is comparable to or larger than \(n_1\), regularized covariance or precision estimation may be needed before applying the transfer procedure. The regularization parameter arising from this preliminary estimation step is not a transfer-tuning parameter introduced by the proposed shrinkage framework. The theory in this paper focuses on the covariance-estimable case and ensures that the transfer step itself remains fully tuning-free. Indeed, given the covariance estimators, all of the elements in our proposed greedy sequential shrinkage algorithm are determined analytically from the risk bounds, without cross-validation, sample splitting, or any additional transfer-tuning parameter.

\section{Additional Details and Results in Numerical Studies}\label{sec:numerical-appendix}

This section complements Section~\ref{sec:numerical_study} in the main text by providing the simulation details and results of the two-set linear regression and logistic regression settings.

\emph{Linear regression.} For the linear regression experiment, we use the same covariance matrices as in the Gaussian mean setting, namely
\(\bSigma_j=\sigma_j^2\bR(\rho)\). We generate covariates
\(\bx_{ji}\sim \calN(\mathbf{0},\bSigma_j)\) and responses
$
y_{ji}=\bx_{ji}^\top\btheta_j^\star+\varepsilon_{ji}$, $\varepsilon_{ji}\sim \calN(0,\sigma_\varepsilon^2)
$,
independently. We set
\((\rho,\sigma_1,\sigma_2,\sigma_\varepsilon)=(0.25,1.0,1.0,0.5)\)
and \((n_1,n_2)=(250,1000)\). The loss function is
$
\ell(\btheta,(\bx,y))
=
\frac{1}{2}\big(y-\bx^{\top}\btheta\big)^2
$.
The single-set estimator \(\widetilde{\btheta}_j\) is the ordinary least squares estimator, and \(\widehat\bSigma_j\) is computed using the sandwich form described in Section~\ref{sec:general-M-estimation} in the main text.

\emph{Logistic regression.} For the logistic regression experiment, we again use
\(\bSigma_j=\sigma_j^2\bR(\rho)\) and generate covariates
\(\bx_{ji}\sim \calN(\mathbf{0},\bSigma_j)\). Conditional on \(\bx_{ji}\), the binary response satisfies
$
y_{ji}\mid \bx_{ji}\sim \mathrm{Bernoulli}(\pi_{ji})$,
$\pi_{ji}
=
\big\{1+\exp(-\bx_{ji}^\top\btheta_j^\star)\big\}^{-1}.
$
We set
\((\rho,\sigma_1,\sigma_2)=(0.25,1.0,1.0)\)
and \((n_1,n_2)=(2500,10000)\). The loss function is the logistic negative log-likelihood
$
\ell(\btheta,(\bx,y))
=
\log\!\big(1+\exp(\bx^{\top}\btheta)\big)-y\,\bx^{\top}\btheta
$.
The single-set \(M\)-estimator \(\widetilde{\btheta}_j\) is computed by Newton's method, and \(\widehat\bSigma_j\) is formed using the same sandwich construction.

\begin{figure*}[t]
	\centering
	\includegraphics[width=1.0\textwidth]{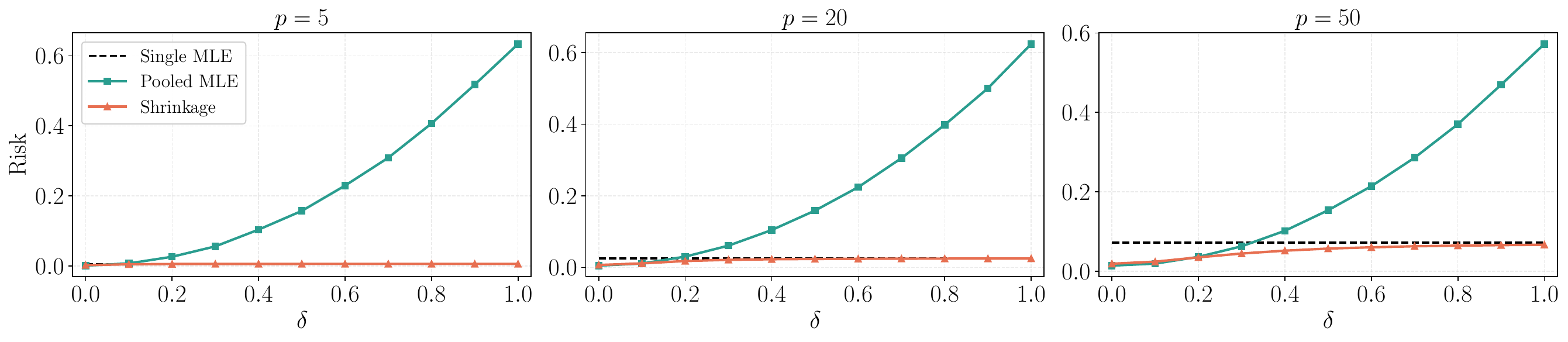}
	\caption{Simulation results under the two-set linear regression setting.}
	\label{fig:two-set-linear}
\end{figure*}

\begin{figure*}[t]
	\centering
	\includegraphics[width=0.95\textwidth]{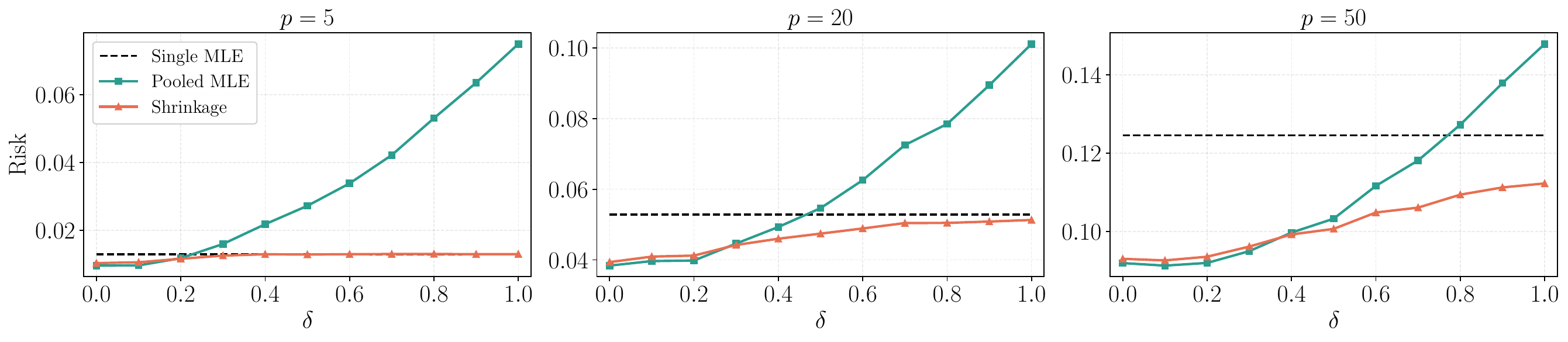}
	\caption{Simulation results under the two-set  logistic regression setting.}
	\label{fig:two-set-logistic}
\end{figure*}

Figures~\ref{fig:two-set-linear} and \ref{fig:two-set-logistic} show that the behavior of our proposed method is consistent with the Gaussian mean results in the main text. When the source and target are close, the proposed shrinkage estimator effectively borrows information from the source and improves over the target-only estimator. As the heterogeneity level increases, the pooled estimator becomes increasingly biased, whereas the proposed estimator remains stable by adapting the shrinkage magnitude to the observed source-target discrepancy.

\section{Technical Proof of the Theoretical Results} \label{sec:proofs-appendix}

\subsection{Proof of the Results for Gaussian Mean Estimation}

\begin{proof}[Proof of Theorem \ref{thm:risk-bound-s}]
	Recall that
	\[
	\widehat{\btheta}_s 
	= \tbtheta_1 
	+ \frac{s}{\normQ{\bW_2\big(\tbtheta_2-\tbtheta_1\big)}^2}\,
	\bW_2\big(\tbtheta_2-\tbtheta_1\big).
	\]
	To show that
	\begin{equation}
		\label{eq:risk-upper-proof}
		\E\normQ{\widehat{\btheta}_s-\btheta_1^{\star}}^2
		\le 
		\ip{n_1^{-1}\bSigma_1}{\bQ}
		+ s[-2\Tr(\bS)+4\lVert\bS\rVert_2+s]\,
		\E\!\left(\frac{1}{\normQ{\bW_2\big(\tbtheta_2-\tbtheta_1\big)}^2}\right),
	\end{equation}
	for any $s\ge 0$, where $\bS=n_1^{-1}\bQ^{1/2}\bW_2\bSigma_1\bQ^{1/2}$, we treat  $\tbtheta_2$ and $s$ as fixed and define the function
	\[
	\bg(\bx)
	=
	\frac{s}{\normQ{\bW_2\big(\tbtheta_2-\bx\big)}^2}\,
	\bW_2\big(\tbtheta_2-\bx\big),
	\qquad 
	\bx\in\R^p.
	\]
	The Jacobian matrix of $\bg(\bx)$ is
	\[
	\bJ(\bx)
	=
	\frac{s}{\normQ{\bW_2(\tbtheta_2-\bx)}^4}
	\Big(
	-\normQ{\bW_2\big(\tbtheta_2-\bx\big)}^2\,\bW_2
	+2\,\bW_2\big(\bx-\tbtheta_2\big)\big(\bx-\tbtheta_2\big)^{\!\top}
	\bW_2^{\!\top}\bQ\bW_2
	\Big).
	\]
	The next lemma verifies that the conditions in Lemma~\ref{lem:sure} hold for
	$\bg$ defined above.
	
	\begin{lemma}\label{lem:integrability}
		Suppose $p\ge 3$. There exists a constant $\underline{\lambda}>0$ such that
		\begin{equation}\label{eq:inverse-moment}
			\frac{1}{\Tr(\bS)+\normQ{\bW_2(\btheta_2^{\star}-\btheta_1^{\star})}^2}
			\;\le\;
			\E\!\left(\frac{1}{\normQ{\bW_2\big(\tbtheta_2-\tbtheta_1\big)}^2}\right)
			\leq \frac{n_1(n_1+n_2)}{n_2\underline{\lambda}(p-2)},
		\end{equation}
		and, for each $k, k'\in [p]$,
		$
		\E\big| (\tbtheta_{1,k}-\btheta_{1,k}^{\star})\, g_{k'}\big(\tbtheta_1\big)\big|
		\;+\;
		\E\left|\frac{\partial g_{k'}\big(\tbtheta_1\big)}{\partial x_{k}}\right|
		< \infty
		$.
	\end{lemma}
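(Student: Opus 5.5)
Let $\bY:=\bQ^{1/2}\bW_2(\tbtheta_2-\tbtheta_1)$, so that $\normQ{\bW_2(\tbtheta_2-\tbtheta_1)}^2=\|\bY\|_2^2$. The plan is to exploit that $\bY$ is exactly Gaussian. By independence of $\tbtheta_1$ and $\tbtheta_2$,
\[
\bY\sim\calN\big(\bmu,\bGamma\big),\qquad \bmu=\bQ^{1/2}\bW_2(\btheta_2^\star-\btheta_1^\star),\qquad \bGamma=\bQ^{1/2}\bW_2\big(n_1^{-1}\bSigma_1+n_2^{-1}\bSigma_2\big)\bW_2^\top\bQ^{1/2}.
\]
A direct computation with $\bW_2=n_2(n_1\bSigma_1^{-1}+n_2\bSigma_2^{-1})^{-1}\bSigma_2^{-1}$ gives $\bW_2(n_1^{-1}\bSigma_1+n_2^{-1}\bSigma_2)\bW_2^\top = n_1^{-1}\bW_2\bSigma_1$. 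Indeed, $\bW_2(n_1^{-1}\bSigma_1+n_2^{-1}\bSigma_2)=n_1^{-1}\bSigma_1$, since $(n_1^{-1}\bSigma_1+n_2^{-1}\bSigma_2)^{-1}=n_1n_2^{-1}\bSigma_2^{-1}(n_1\bSigma_1^{-1}+n_2\bSigma_2^{-1})^{-1}n_1\bSigma_1^{-1}\cdot n_1^{-1}\cdots$, which I will verify. This yields $n_1^{-1}\bSigma_1\bW_2^\top=n_1^{-1}\bW_2\bSigma_1$, a symmetric matrix, so $\bGamma=\bS$ and $\E\|\bY\|_2^2=\Tr(\bS)+\normQ{\bW_2(\btheta_2^\star-\btheta_1^\star)}^2$.

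\textbf{Lower bound.} Jensen's inequality for the convex map $x\mapsto 1/x$ immediately gives the lower bound in \eqref{eq:inverse-moment}.

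\textbf{Upper bound.} Let $\underline\lambda$ be a lower bound on $n_1 n_2^{-1}(n_1+n_2)\lambda_{\min}(\bS)$, equivalently $\lambda_{\min}(\bS)\ge n_2\underline\lambda/(n_1(n_1+n_2))$. This holds since $\bS$ is congruent to $n_1^{-1}\bW_2\bSigma_1=n_2 n_1^{-1}(n_1\bSigma_1^{-1}+n_2\bSigma_2^{-1})^{-1}$ and the eigenvalues of $\bQ$, $\bSigma_j$ are bounded. Write $\bY=\bmu+\bS^{1/2}\bZ$ with $\bZ\sim\calN(\bm 0,\bI_p)$. Then $\|\bY\|_2^2\ge\lambda_{\min}(\bS)\|\bS^{-1/2}\bmu+\bZ\|_2^2$, a noncentral $\chi^2_p$ variable. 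Its inverse moment is at most that of a central $\chi^2_p$, namely $1/(p-2)$, because the noncentral $\chi^2$ is a Poisson mixture of central $\chi^2_{p+2K}$ and $1/(p+2K-2)\le1/(p-2)$. Combining these gives the upper bound, which is finite because $p\ge3$.

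\textbf{Integrability.} Write $\bx=\tbtheta_1$ and $\bu=\bW_2(\tbtheta_2-\bx)$. Then $|g_{k'}|\lesssim s\|\bu\|_2^{-1}$, and each entry of $\bJ$ is $\lesssim s\|\bu\|_2^{-2}$, using $\|\bu\|_{\bQ}\asymp\|\bu\|_2$ and the fact that $\bW_2$ is bounded. For the Jacobian entries, $\E\|\bY\|^{-2}<\infty$ has just been shown. For $\E|(x_k-\theta_k)g_{k'}|$, I apply Cauchy--Schwarz: $\E|x_k-\theta_k|^2<\infty$ and $\E\|\bY\|_2^{-2}<\infty$. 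Both terms are therefore finite. Weak differentiability of $\bg$ follows because $\bg$ is smooth away from the affine point $\bx=\tbtheta_2$ (a null set, with $\bW_2$ invertible) and its singularity is $O(\|\bu\|^{-1})$, which is locally integrable with a locally integrable derivative $O(\|\bu\|^{-2})$ when $p\ge3$.

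\textbf{Main obstacle.} I expect the main obstacle to be the covariance identity $\bGamma=\bS$, together with the symmetry of $\bW_2\bSigma_1$. I would establish it by writing $\bA=n_1\bSigma_1^{-1}$ and $\bB=n_2\bSigma_2^{-1}$. Then $\bW_2=(\bA+\bB)^{-1}\bB$ and $n_1^{-1}\bSigma_1=\bA^{-1}$, so $n_1^{-1}\bW_2\bSigma_1=(\bA+\bB)^{-1}\bB\bA^{-1}=\bA^{-1}-(\bA+\bB)^{-1}$, which is symmetric. Moreover, $\bW_2(\bA^{-1}+\bB^{-1})\bW_2^\top=(\bA+\bB)^{-1}\bB(\bA^{-1}+\bB^{-1})\bB(\bA+\bB)^{-1}=(\bA+\bB)^{-1}\bB\bA^{-1}(\bA+\bB)(\bA+\bB)^{-1}=(\bA+\bB)^{-1}\bB\bA^{-1}$, confirming the identity.
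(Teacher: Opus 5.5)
Your proposal is correct and follows essentially the paper's route: Jensen for the lower bound, then a reduction of $\normQ{\bW_2(\tbtheta_2-\tbtheta_1)}^2$ to a multiple of a noncentral $\chi^2_p$ variable with inverse moment at most $1/(p-2)$ via the Poisson-mixture representation, then Cauchy--Schwarz plus this inverse moment for the integrability conditions. The differences are cosmetic: you whiten $\bQ^{1/2}\bW_2(\tbtheta_2-\tbtheta_1)$ using the exact identity that its covariance is $\bS$, which gives the same $\underline{\lambda}$ as the paper's whitening by $\overline{\bSigma}$, and there is one harmless slip, namely $n_1^{-1}\bW_2\bSigma_1=n_1^{-1}\bSigma_1-(n_1\bSigma_1^{-1}+n_2\bSigma_2^{-1})^{-1}$ rather than $n_2n_1^{-1}(n_1\bSigma_1^{-1}+n_2\bSigma_2^{-1})^{-1}$, which is immaterial since only $\bS\succ 0$ is needed.
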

	
	Applying Lemma \ref{lem:sure} with $\bx=\tbtheta_1$ and $\bSigma=n_1^{-1}\bSigma_1$ gives
	\[
	\begin{aligned}
		\E\normQ{\widehat{\btheta}_s-\btheta_1^\star}^2
		&=
		\ip{n_1^{-1}\bSigma_1}{\bQ}
		+\E\!\left[
		2\ip{\bJ\big(\tbtheta_1\big)\,n_1^{-1}\bSigma_1}{\bQ}
		+\normQ{\bg\big(\tbtheta_1\big)}^2
		\right].
	\end{aligned}
	\]
	We analyze each term.  
	First,
	\[
	\normQ{\bg\big(\tbtheta_1\big)}^2
	=
	\frac{s^2}{\normQ{\bW_2\Delta_{\tbtheta}}^4}
	\,\Delta_{\tbtheta}^{\!\top}\bW_2^{\!\top}\bQ\bW_2\Delta_{\tbtheta}
	=
	\frac{s^2}{\normQ{\bW_2\Delta_{\tbtheta}}^2}.
	\]
	Next, 
	\[
	\bJ\big(\tbtheta_1\big)
	=
	\frac{s}{\normQ{\bW_2\big(\tbtheta_2-\tbtheta_1\big)}^4}
	\Big(
	-\normQ{\bW_2\big(\tbtheta_2-\tbtheta_1\big)}^2\,\bW_2
	+2\,\bW_2\big(\tbtheta_1-\tbtheta_2\big)\big(\tbtheta_1-\tbtheta_2\big)^{\!\top}
	\bW_2^{\!\top}\bQ\bW_2
	\Big).
	\]
	Using $\bS=n_1^{-1}\bQ^{1/2}\bW_2\bSigma_1\bQ^{1/2}$, we have
	$
	\ip{\bW_2\,n_1^{-1}\bSigma_1}{\bQ}=\Tr(\bS)
	$.
	Moreover, the second term inside $\bJ\big(\tbtheta_1\big)$ satisfies
	\[
	\left\langle
	\bW_2\Delta_{\tbtheta}\Delta_{\tbtheta}^{\!\top}\bW_2^{\!\top}\bQ\bW_2	n_1^{-1}\bSigma_1,\ \bQ
	\right\rangle
	=
	\Delta_{\tbtheta}^{\!\top}\bW_2^{\!\top}\bQ^{1/2}\,\bS\,\bQ^{1/2}\bW_2\Delta_{\tbtheta}
	\le
	\lVert\bS\rVert_2\,\normQ{\bW_2\Delta_{\tbtheta}}^2.
	\]
	Combining the pieces,
	\[
	\begin{aligned}
		2\ip{\bJ(\tbtheta_1)n_1^{-1}\bSigma_1}{\bQ}
		&\le
		2\,\frac{s}{\normQ{\bW_2\Delta_{\tbtheta}}^4}
		\big[
		-\normQ{\bW_2\Delta_{\tbtheta}}^2\,\Tr(\bS)
		+2\,\lVert\bS\rVert_2\,\normQ{\bW_2\Delta_{\tbtheta}}^2
		\big] \\
		&= \frac{s}{\normQ{\bW_2\Delta_{\tbtheta}}^2}
		\big[-2\Tr(\bS)+4\lVert\bS\rVert_2\big].
	\end{aligned}
	\]
	Putting everything into Lemma~\ref{lem:sure},
	\[
	\begin{aligned}
		\E\normQ{\widehat{\btheta}_s-\btheta_1^\star}^2
		&\le 
		\ip{n_1^{-1}\bSigma_1}{\bQ}
		+\E\!\left[
		\frac{s}{\normQ{\bW_2\Delta_{\tbtheta}}^2}
		\big[-2\Tr(\bS)+4\lVert\bS\rVert_2\big]
		+\frac{s^2}{\normQ{\bW_2\Delta_{\tbtheta}}^2}
		\right] \\
		&=
		\ip{n_1^{-1}\bSigma_1}{\bQ}
		+
		s[-2\Tr(\bS)+4\lVert\bS\rVert_2+s]\,
		\E\!\left(
		\frac{1}{\normQ{\bW_2(\tbtheta_2-\tbtheta_1)}^2}
		\right),
	\end{aligned}
	\]
	which is exactly inequality~\eqref{eq:risk-upper}.
\end{proof}
\begin{proof}[Proof of Theorem \ref{thm:risk-bound-shat}]
	By Theorem \ref{thm:risk-bound-s},
	\[
	\E\normQ{\widehat{\btheta}_s-\btheta_1^\star}^2\leq
	\ip{n_1^{-1}\bSigma_1}{\bQ}
	+ s[-2\Tr(\bS)+4\lVert\bS\rVert_2+s]\,
	\E\!\left(\frac{1}{\normQ{\bW_2(\tbtheta_2-\tbtheta_1)}^2}\right).
	\]
	The minimizer of the quadratic part is
	\[
	\underline{s}
	= \Tr(\bS)-2\lVert\bS\rVert_2.
	\]
	Substituting $s=\underline{s}$ and using \eqref{eq:inverse-moment},
	\begin{equation}\label{eq:bound-shat-1}
	\begin{aligned}
		\E\normQ{\widehat{\btheta}_{\underline{s}}-\btheta_1^{\star}}^2
		&\le 
		\ip{n_1^{-1}\bSigma_1}{\bQ}
		- \frac{(\Tr(\bS)-2\lVert\bS\rVert_2)^2}
		{\Tr(\bS)+\normQ{\bW_2(\btheta_2^{\star}-\btheta_1^{\star})}^2}.
	\end{aligned}
	\end{equation}
	Next, apply the identity
	\[
	\ip{n_1^{-1}\bSigma_1}{\bQ}
	=
	\ip{(n_1\bSigma_1^{-1}+n_2\bSigma_2^{-1})^{-1}}{\bQ}
	+
	\Tr(\bS),
	\]
	which follows from the matrix identity
	\[
	(\bA^{-1}+\bB^{-1})^{-1}
	= 
	\bA - \bA(\bA+\bB)^{-1}\bA,
	\quad \text{ with }
	\bA=n_1^{-1}\bSigma_1,\ 
	\bB=n_2^{-1}\bSigma_2.
	\]
	Thus,
	\[
	\begin{aligned}
		\E\normQ{\widehat{\btheta}_{\underline{s}}-\btheta_1^{\star}}^2
		&\le 
		\ip{(n_1\bSigma_1^{-1}+n_2\bSigma_2^{-1})^{-1}}{\bQ}
		+ \Tr(\bS)
		- \frac{(\Tr(\bS)-2\lVert\bS\rVert_2)^2}
		{\Tr(\bS)+\normQ{\bW_2(\btheta_2^{\star}-\btheta_1^{\star})}^2} \\
		&\leq
		\ip{(n_1\bSigma_1^{-1}+n_2\bSigma_2^{-1})^{-1}}{\bQ}
		+ \frac{\Tr(\bS)\,\normQ{\bW_2(\btheta_2^{\star}-\btheta_1^{\star})}^2}
		{\Tr(\bS)+\normQ{\bW_2(\btheta_2^{\star}-\btheta_1^{\star})}^2} \\
		&\quad + 4\lVert\bS\rVert_2
		\left(1 - \frac{\lVert\bS\rVert_2}{\Tr(\bS)}\right)\\
&\leq
\ip{(n_1\bSigma_1^{-1}+n_2\bSigma_2^{-1})^{-1}}{\bQ}
+ \frac{\Tr(\bS)\,\normQ{\bW_2(\btheta_2^{\star}-\btheta_1^{\star})}^2}
{\Tr(\bS)+\normQ{\bW_2(\btheta_2^{\star}-\btheta_1^{\star})}^2} + 4\lVert\bS\rVert_2.
	\end{aligned}
	\]
	This proves the first desired inequality.
	
	Similarly, plugging $s=\overline{s}=\Tr(\bS)$ into \eqref{eq:risk-upper-proof},
	\[
	\begin{aligned}
		\E\normQ{\widehat{\btheta}_{\bar s}-\btheta_1^{\star}}^2
		&\le 
		\ip{n_1^{-1}\bSigma_1}{\bQ}
		+
		\Tr(\bS)\,[ -2\Tr(\bS)+4\lVert\bS\rVert_2+\Tr(\bS)]
		\,\E\!\left(\frac{1}{\normQ{\bW_2(\tbtheta_2-\tbtheta_1)}^2}\right)
		\\
		&=
		\ip{n_1^{-1}\bSigma_1}{\bQ}
		+
		\Tr(\bS)[\Tr(\bS)-4\lVert\bS\rVert_2]
		\,\E\!\left(\frac{1}{\normQ{\bW_2(\tbtheta_2-\tbtheta_1)}^2}\right),
	\end{aligned}
	\]
	leading to
	\[
	\begin{aligned}
		\E\normQ{\widehat{\btheta}_{\bar s}-\btheta_1^{\star}}^2
		&\le 
		\ip{(n_1\bSigma_1^{-1}+n_2\bSigma_2^{-1})^{-1}}{\bQ}
		+
		\frac{\Tr(\bS)\,\normQ{\bW_2(\btheta_2^{\star}-\btheta_1^{\star})}^2}
		{\Tr(\bS)+\normQ{\bW_2(\btheta_2^{\star}-\btheta_1^{\star})}^2}
		+4\lVert\bS\rVert_2.
	\end{aligned}
	\]
	These inequalities establish the upper bound in the theorem for any $s\in [\underline{s}, \overline{s}]$ since \eqref{eq:risk-upper-proof} is a quadratic function in $s$.	
	
	\end{proof}
	
		\begin{proof}[Proof of Theorem \ref{thm:regularized-MLE}]
		Recall that
		$
		\bz = ( n_1^{-1}\bSigma_1 + n_2^{-1}\bSigma_2)^{-1/2}(\tbtheta_1 - \tbtheta_2)
		$, and
		denote the objective in \eqref{eq:regularized-MSE} by
		\[
		F(\btheta_1,\btheta_2)
		=
		\frac{n_1}{2}\|\btheta_1 - \tbtheta_1\|_{\bSigma_1^{-1}}^{2}
		+
		\frac{n_2}{2}\|\btheta_2 - \tbtheta_2\|_{\bSigma_2^{-1}}^{2}
		+
		\lambda \big\|( n_1^{-1}\bSigma_1 + n_2^{-1}\bSigma_2)^{-1/2}(\btheta_1 - \btheta_2)\big\|_2.
		\]
		The subgradient of the penalty term is
		\[
		\partial\|\bx\|_2 =
		\begin{cases}
			\{\bx/\|\bx\|_2\}, & \bx\neq 0,\\[3pt]
			\{\bu:\|\bu\|_2\le 1\}, & \bx = 0.
		\end{cases}
		\]
		The optimality condition $\bm{0}\in \partial F(\hbtheta_1^\lambda,\hbtheta_2^\lambda)$ implies that there exists
		$\boldsymbol\eta \in \partial\|\bx\|_2\big|_{\bx=( n_1^{-1}\bSigma_1 + n_2^{-1}\bSigma_2)^{-1/2}(\hbtheta_1^\lambda - \hbtheta_2^\lambda)}$ such that
		\begin{equation}\label{eq:opt-sys}
			\begin{cases}
				n_1\bSigma_1^{-1}(\hbtheta_1^\lambda - \tbtheta_1) + \lambda\,( n_1^{-1}\bSigma_1 + n_2^{-1}\bSigma_2)^{-1/2}\boldsymbol\eta = 0,\\[3pt]
				n_2\bSigma_2^{-1}(\hbtheta_2^\lambda - \tbtheta_2) - \lambda\,( n_1^{-1}\bSigma_1 + n_2^{-1}\bSigma_2)^{-1/2}\boldsymbol\eta = 0.
			\end{cases}
		\end{equation}
		Solving \eqref{eq:opt-sys}, we obtain
		\begin{equation}\label{eq:sol-in-eta}
			\hbtheta_1^\lambda = \tbtheta_1 - \lambda n_1^{-1}\bSigma_1 ( n_1^{-1}\bSigma_1 + n_2^{-1}\bSigma_2)^{-1/2}\boldsymbol\eta,
			\qquad
			\hbtheta_2^\lambda = \tbtheta_2 + \lambda n_2^{-1}\bSigma_2 ( n_1^{-1}\bSigma_1 + n_2^{-1}\bSigma_2)^{-1/2}\boldsymbol\eta.
		\end{equation}
		Subtracting the two expressions in \eqref{eq:sol-in-eta} yields
		\[
		\hbtheta_1^\lambda - \hbtheta_2^\lambda
		=
		(\tbtheta_1 - \tbtheta_2) - \lambda ( n_1^{-1}\bSigma_1 + n_2^{-1}\bSigma_2)^{1/2}\boldsymbol\eta
		=
		( n_1^{-1}\bSigma_1 + n_2^{-1}\bSigma_2)^{1/2}(\bz - \lambda\boldsymbol\eta).
		\]
		Thus
		\[
		( n_1^{-1}\bSigma_1 + n_2^{-1}\bSigma_2)^{-1/2}(\hbtheta_1^\lambda - \hbtheta_2^\lambda)
		= \bz - \lambda\boldsymbol\eta.
		\]
		
		\paragraph{Case 1: $\|\bz\|_2>\lambda$.}
		Then $\hbtheta_1^\lambda \neq \hbtheta_2^\lambda$, and hence
		\[
		\boldsymbol\eta = 
		\frac{( n_1^{-1}\bSigma_1 + n_2^{-1}\bSigma_2)^{-1/2}(\hbtheta_1^\lambda - \hbtheta_2^\lambda)}
		{\|( n_1^{-1}\bSigma_1 + n_2^{-1}\bSigma_2)^{-1/2}(\hbtheta_1^\lambda - \hbtheta_2^\lambda)\|_2}
		=
		\frac{\bz - \lambda\boldsymbol\eta}{\|\bz - \lambda\boldsymbol\eta\|_2},
		\]
		which implies that $\bz - \lambda\bm\eta$
		must be a positive scalar multiple of  $\bm\eta$.
		Then letting $\boldsymbol\eta=c\bz$ for some $c>0$ and solving for $c$ lead to $\bm\eta = \bz / \norm{\bz}_2$.
		Plugging this into \eqref{eq:sol-in-eta} gives
		\[
		\hbtheta_1^\lambda 
		= \tbtheta_1 - \lambda n_1^{-1}\bSigma_1( n_1^{-1}\bSigma_1 + n_2^{-1}\bSigma_2)^{-1/2}\frac{\bz}{\|\bz\|_2},
		\qquad
		\hbtheta_2^\lambda
		= \tbtheta_2 + \lambda n_2^{-1}\bSigma_2( n_1^{-1}\bSigma_1 + n_2^{-1}\bSigma_2)^{-1/2}\frac{\bz}{\|\bz\|_2}.
		\]
		
		\paragraph{Case 2: $\|\bz\|_2\le \lambda$.}
		In this case, 	the subgradient $\boldsymbol\eta = \bz/\lambda$ satisfies $\|\boldsymbol\eta\|_2\le 1$ and leads to $\hbtheta_1^\lambda=\hbtheta_2^\lambda$.  
		Plugging $\bm\eta$ into \eqref{eq:sol-in-eta} gives
		\[
		\hbtheta_1^\lambda = \tbtheta_1 - n_1^{-1}\bSigma_1 ( n_1^{-1}\bSigma_1 + n_2^{-1}\bSigma_2)^{-1/2}\bz,
		\qquad
		\hbtheta_2^\lambda = \tbtheta_2 + n_2^{-1}\bSigma_2 ( n_1^{-1}\bSigma_1 + n_2^{-1}\bSigma_2)^{-1/2}\bz.
		\]
		Since
		\[
		\bW_j := n_j\big(n_1\bSigma_1^{-1} + n_2\bSigma_2^{-1}\big)^{-1}\bSigma_j^{-1}=n_{3-j}^{-1}\bSigma_{3-j}\big(n_1^{-1}\bSigma_1 + n_2^{-1}\bSigma_2\big)^{-1}, \qquad j=1,2.
		\]
		Then the expressions for $\hbtheta_j^\lambda$ in both cases combine into
		\[
		\hbtheta_1^\lambda
		=
		\tbtheta_1 
		+ \min\Big\{\frac{\lambda}{\|\bz\|_2},1\Big\}\bW_2(\tbtheta_2 - \tbtheta_1),
		\qquad
		\hbtheta_2^\lambda
		=
		\tbtheta_2 
		+ \min\Big\{\frac{\lambda}{\|\bz\|_2},1\Big\}\bW_1(\tbtheta_1 - \tbtheta_2),
		\]
		which are the first identities in the theorem.
		
		Furthermore, recall that
		\[
		\overline{\btheta}
		=
		\big(n_1\bSigma_1^{-1}+n_2\bSigma_2^{-1}\big)^{-1}
		\big(n_1\bSigma_1^{-1}\tbtheta_1 + n_2\bSigma_2^{-1}\tbtheta_2\big),
		\]
		which yields
		\[
		\overline{\btheta} - \tbtheta_1
		=
		n_2\big(n_1\bSigma_1^{-1}+n_2\bSigma_2^{-1}\big)^{-1}\bSigma_2^{-1}(\tbtheta_2 - \tbtheta_1)
		=
		\bW_2\big(\tbtheta_2 - \tbtheta_1\big).
		\]
		and
		\[
		\hbtheta_1^\lambda - \overline{\btheta}
		=
		-\Big(1 - \tfrac{\lambda}{\|\bz\|_2}\Big)_+\,\bW_2(\tbtheta_2 - \tbtheta_1).
		\]
		The expression for $\hbtheta_2^\lambda$ is analogous. 
		
		In addition, using
		\[
		\hbtheta_1^\lambda - \tbtheta_1
		=
		- \min\Big\{1, \frac{\lambda}{\norm{\bz}_2}\Big\}n_1^{-1}\bSigma_1 ( n_1^{-1}\bSigma_1 + n_2^{-1}\bSigma_2)^{-1/2}\bz,
		\]
		we obtain
		\[
		\left\|(n_1^{-1}\bSigma_1)^{-1/2}(\hbtheta_1^\lambda - \tbtheta_1)\right\|_2
		\leq
		\lambda
		\left\|(n_1^{-1}\bSigma_1)^{1/2}( n_1^{-1}\bSigma_1 + n_2^{-1}\bSigma_2)^{-1/2}\bz\right\|_2/\|\bz\|_2 \leq \lambda.
		\]
		Similarly,
		\[
		\left\|(n_2^{-1}\bSigma_2)^{-1/2}(\hbtheta_2^\lambda - \tbtheta_2)\right\|_2
		\leq
		\lambda
		\left\|(n_2^{-1}\bSigma_2)^{1/2}( n_1^{-1}\bSigma_1 + n_2^{-1}\bSigma_2)^{-1/2}\bz\right\|_2/\|\bz\|_2 \leq \lambda.
		\]
		This completes the proof.
	\end{proof}
	
	\begin{proof}[Proof of Proposition \ref{prop:plugin-stability}]
		Since the eigenvalues of $\bSigma_1$ and $\bSigma_2$ are bounded away for zero and infinity, the assumption \(\eta_n=o_{\PP}(1)\) implies
		\[
		\left\|\widehat\bSigma_j^{-1}-\bSigma_j^{-1}\right\|_2
		=
		O_{\PP}(\eta_n),
		\qquad j=1,2,
		\]
		by the standard inverse perturbation identity. Applying the same perturbation bound to the inverse of \(n_1\bSigma_1^{-1}+n_2\bSigma_2^{-1}\) gives
		\[
		\left\|\widehat\bW_2-\bW_2\right\|_2=O_{\PP}(\eta_n).
		\]
		Consequently,
		\[
		\begin{aligned}
			\|\widehat\bS-\bS\|_2
			&\le
			n_1^{-1}\|\bQ\|_2
			\bigl\|
			\widehat\bW_2\widehat\bSigma_1-\bW_2\bSigma_1
			\bigr\|_2 \\
			&\le
			n_1^{-1}\|\bQ\|_2
			\left(
			\|\widehat\bW_2-\bW_2\|_2\|\widehat\bSigma_1\|_2
			+
			\|\bW_2\|_2\|\widehat\bSigma_1-\bSigma_1\|_2
			\right)
			=
			O_{\PP}\left(\frac{\eta_n}{n_1}\right).
		\end{aligned}
		\]
		Finally,
		\[
		\begin{aligned}
			|\widehat s-s_0|
			&\le
			\left|\Tr(\widehat\bS)-\Tr(\bS)\right|
			+
			\left|\|\widehat\bS\|_2-\|\bS\|_2\right| \\
			&\le
			(p+1)\|\widehat\bS-\bS\|_2,
		\end{aligned}
		\]
		which gives the stated rate.
	\end{proof}

	\begin{proof}[Proof of Theorem \ref{thm:JS}]
		We first establish a lower bound for $\E\norm{\widehat{\btheta}_{\mathrm{JS}, s}-\btheta_1^{\star}}_2^2$.
		Recall that 
		\[
		\widehat{\bm{\theta}}_{\mathrm{JS}, s}
		:=\widetilde{\bm{\theta}}_{1}
		- s\,\frac{\widetilde{\bm{\theta}}_{1}}{
			\widetilde{\bm{\theta}}_{1}^{\top}\big(n_{1}^{-1}\bm{\Sigma}_{1}\big)^{-1}\widetilde{\bm{\theta}}_{1}}.
		\]
		Applying Lemma \ref{lem:sure} with
		\[
		\bm{g}(\bm{x})
		= -s\,\frac{\bm{x}}{
			\bm{x}^{\top}\big(n_{1}^{-1}\bm{\Sigma}_{1}\big)^{-1}\bm{x}}
		\]
		and
		\[
		\bm{J}(\bm{x})
		= -s\left[
		\frac{\bm{I}_{p}}{
			\bm{x}^{\top}(n_{1}^{-1}\bm{\Sigma}_{1})^{-1}\bm{x}}
		- \frac{2\,\bm{x}\bigl((n_{1}^{-1}\bm{\Sigma}_{1})^{-1}\bm{x}\bigr)^{\!\top}}{
			\bigl(\bm{x}^{\top}(n_{1}^{-1}\bm{\Sigma}_{1})^{-1}\bm{x}\bigr)^{2}}
		\right]
		\]
		leads to
		\begin{equation}
			\begin{aligned}
				\mathbb{E}\Bigl\|\widehat{\bm{\theta}}_{\mathrm{JS}, s}-\bm{\theta}_{1}^{\star}\Bigr\|_{2}^{2}
				&=n_{1}^{-1}\operatorname{Tr}(\bm{\Sigma}_{1})+(s^{2}+4s)\,
				\mathbb{E}\!\left[
				\frac{\widetilde{\bm{\theta}}_{1}^{\top}\widetilde{\bm{\theta}}_{1}}{
					\bigl(\widetilde{\bm{\theta}}_{1}^{\top}n_{1}\bm{\Sigma}_{1}^{-1}
					\widetilde{\bm{\theta}}_{1}\bigr)^{2}}
				\right]  -2s\,n_{1}^{-1}\operatorname{Tr}(\bm{\Sigma}_{1})\,
				\mathbb{E}\!\left[
				\frac{1}{
					\widetilde{\bm{\theta}}_{1}^{\top}n_{1}\bm{\Sigma}_{1}^{-1}\widetilde{\bm{\theta}}_{1}}
				\right]\\
				&\geq n_{1}^{-1}\operatorname{Tr}(\bm{\Sigma}_{1})
				+
				n_{1}^{-1}\left\{
				(s^{2}+4s)\,\lambda_{\min}(\bm{\Sigma}_{1})
				-2s\,\operatorname{Tr}(\bm{\Sigma}_{1})
				\right\}
				\mathbb{E}\left[
				\frac{1}{
					\widetilde{\bm{\theta}}_{1}^{\top}(n_{1}^{-1}\bm{\Sigma}_{1})^{-1}
					\widetilde{\bm{\theta}}_{1}}
				\right]\\
				&\geq n_{1}^{-1}\operatorname{Tr}(\bm{\Sigma}_{1})
				-
				n_{1}^{-1}
				\frac{\bigl(\operatorname{Tr}(\bm{\Sigma}_{1})-2\,\lambda_{\min}(\bm{\Sigma}_{1})\bigr)^{2}}{
					\lambda_{\min}(\bm{\Sigma}_{1})}
				\,
				\mathbb{E}\left[
				\frac{1}{
					\widetilde{\bm{\theta}}_{1}^{\top}(n_{1}^{-1}\bm{\Sigma}_{1})^{-1}
					\widetilde{\bm{\theta}}_{1}}
				\right]\\
				&\geq n_{1}^{-1}\operatorname{Tr}(\bm{\Sigma}_{1})
				-
				n_{1}^{-1}
				\frac{\bigl(\operatorname{Tr}(\bm{\Sigma}_{1})-2\,\lambda_{\min}(\bm{\Sigma}_{1})\bigr)^{2}}{
					\lambda_{\min}(\bm{\Sigma}_{1})}
				\,
				\frac{1}{p-2}\frac{p}{p+n_1\btheta_1^{\star\top}\bSigma_1^{-1}\btheta_1^{\star}},
			\end{aligned}
		\end{equation}
		where the last inequality follows from Lemma 1 in \cite{casella1982limit}.
		
		Recall that \eqref{eq:bound-shat-1} shows 
		\begin{align*}
			\E\norm{\widehat{\btheta}_{\underline{s}}-\btheta_1^{\star}}_2^2
			&\le 
			n_1^{-1}\Tr(\bSigma_1)
			- \frac{(\Tr(\bS)-2\lVert\bS\rVert_2)^2}
			{\Tr(\bS)+\norm{\bW_2(\btheta_2^{\star}-\btheta_1^{\star})}_2^2}.
		\end{align*}
		Therefore, we can ensure that
		$	\E\norm{\widehat{\btheta}_{\underline{s}}-\btheta_1^{\star}}_2^2 \leq 	\mathbb{E}\Bigl\|\widehat{\bm{\theta}}_{\mathrm{JS}, s}-\bm{\theta}_{1}^{\star}\Bigr\|_{2}^{2}
		$ if
		\begin{equation}\label{eq:condition-JS-proof}
			\frac{(\Tr(\bS)-2\lVert\bS\rVert_2)^2}
			{\Tr(\bS)+\norm{\bW_2(\btheta_2^{\star}-\btheta_1^{\star})}_2^2} \geq 	
			\frac{\bigl(\operatorname{Tr}(\bm{\Sigma}_{1})-2\,\lambda_{\min}(\bm{\Sigma}_{1})\bigr)^{2}}{n_1
				\lambda_{\min}(\bm{\Sigma}_{1})}
			\,
			\frac{p}{p-2}\frac{1}{p+n_1\btheta_1^{\star\top}\bSigma_1^{-1}\btheta_1^{\star}}.
		\end{equation}
		In particular, when $\bSigma_j = \sigma_j^2 \bI_{p}$ ($j=1, 2$), we have $\bW_2 =w\bI_p$ with $w = \frac{n_2\sigma_2^{-2}}{n_1\sigma_1^{-2}+n_2\sigma_2^{-2}}$ and $\bS = n_1^{-1}\sigma_1^2w\bI_p$, then the condition
		 \begin{equation*}
			\frac{(\Tr(\bS)-2\lVert\bS\rVert_2)^2}
			{\Tr(\bS)+\norm{\bW_2(\btheta_2^{\star}-\btheta_1^{\star})}_2^2} \geq 	
			\frac{\bigl(\operatorname{Tr}(\bm{\Sigma}_{1})-2\,\lambda_{\min}(\bm{\Sigma}_{1})\bigr)^{2}}{n_1
				\lambda_{\min}(\bm{\Sigma}_{1})}
			\,
			\frac{p}{p-2}\frac{1}{p+n_1\btheta_1^{\star\top}\bSigma_1^{-1}\btheta_1^{\star}}
		\end{equation*}
		reduces to
		\[
		\frac{n_2\sigma_1^4(p-2)^2}{%
			n_1\bigl(\|\btheta_2^\star-\btheta_1^\star\|_2^2 n_1 n_2 
			+ p(n_1\sigma_2^2 + n_2\sigma_1^2)\bigr)}
		\;\ge\;
		\frac{p\sigma_1^4(p-2)}{%
			n_1\bigl(\|\btheta_1^\star\|_2^2 n_1 + p\sigma_1^2\bigr)}.
		\]
		After rearranging terms, this inequality is equivalent to
		\[
		n_2\Bigl(
		n_1\bigl((p-2)\|\btheta_1^\star\|_2^2 
		- p\|\btheta_2^\star-\btheta_1^\star\|_2^2\bigr)
		- 2p\sigma_1^2
		\Bigr)
		\;\ge\;
		p^2 n_1\sigma_2^2.
		\]
		Provided that
		\[
		n_1\bigl((p-2)\|\btheta_1^\star\|_2^2 
		- p\|\btheta_2^\star-\btheta_1^\star\|_2^2\bigr)
		> 2p\sigma_1^2,
		\]
		it becomes
		\[
		n_2 \;\ge\;
		\frac{
			p^2\sigma_2^2
		}{
			(p-2)\|\btheta_1^\star\|_2^2
			- p\|\btheta_2^\star-\btheta_1^\star\|_2^2
			- 2p\sigma_1^2/n_1
		}.
		\]
	\end{proof}

\subsection{Proof of the Results for Multiple Sets}\label{sec:proof-multiset}

\begin{proof}[Proof of Remark \ref{rmk:illustration-eff-dim}]
		To further illustrate Assumption~\ref{assump:gs-2}, consider the
	simplified setting where \(\bQ=\bI_p\), \(n_j=n\),
	\(\bSigma_j=\bI_p\) for all \(j\in[m]\), and \(v_-=v_+=1\).
	Then
	\[
	\mathscr V_r
	=
	\left\{
	\bV\succ0:
	\frac{1}{rn}\bI_p
	\preceq
	\bV
	\preceq
	\frac{1}{n}\bI_p
	\right\}.
	\]
	For any \(\bV\in\mathscr V_r\), write
	\[
	n\bV
	=
	\bU\operatorname{diag}(x_1,\ldots,x_p)\bU^\top,
	\qquad
	x_k\in[1/r,1].
	\]
	Since \(n_j^{-1}\bSigma_j=n^{-1}\bI_p\), we have
	\[
	\bS_{j,\bV}
	=
	\bV\left(\bV+\frac1n\bI_p\right)^{-1}\bV
	=
	\frac1n
	\bU
	\operatorname{diag}
	\left\{
	\frac{x_1^2}{1+x_1},\ldots,
	\frac{x_p^2}{1+x_p}
	\right\}
	\bU^\top .
	\]
	Let \(g(x)=x^2/(1+x)\). Since \(g\) is increasing on
	\((0,\infty)\),
	\[
	\frac{\Tr(\bS_{j,\bV})}{\|\bS_{j,\bV}\|_2}
	\ge
	1+(p-1)\frac{g(1/r)}{g(1)}
	=
	1+\frac{2(p-1)}{r(r+1)}.
	\]
	The bound is attained, for example, by taking
	\(n\bV=\operatorname{diag}(1,1/r,\ldots,1/r)\). Hence
	\[
	\inf_{\bV\in\mathscr V_r}
	\frac{\Tr(\bS_{j,\bV})}{\|\bS_{j,\bV}\|_2}
	=
	1+\frac{2(p-1)}{r(r+1)},
	\qquad
	d_{\rm eff}
	=
	1+\frac{2(p-1)}{m(m+1)}.
	\]
	Therefore, \(d_{\rm eff}\asymp p\) when \(m=O(1)\), and
	Assumption~\ref{assump:gs-2} holds if and only if
	\[
	p>1+\frac{m(m+1)}{2}.
	\]
	\end{proof}

\begin{proof}[Proof of Theorem \ref{thm:greedy-seq-large-separation-trace-condition}]
	Define
	$
	\bC_j:=n_j^{-1}\bSigma_j$,
	$\underline q:=\lambda_{\min}\left(\bQ\right)$, and
	$\overline q:=\lambda_{\max}\left(\bQ\right)
	$.
	By assumption, we have
	$
	v_-n^{-1}\bI_p
	\preceq
	\bC_j
	\preceq
	v_+n^{-1}\bI_p$, for all
	$j\in\left[m\right]
	$.
	The proof is then decomposed into seven steps.
	
	\textbf{Step 1. Deterministic bounds.}
	For positive definite \(\bV,\bC\) and \(0\le a\le1\), let
	\[
	\bW
	:=
	\left(\bV^{-1}+\bC^{-1}\right)^{-1}\bC^{-1}
	=
	\bV\left(\bV+\bC\right)^{-1}.
	\]
	Then
	\begin{equation}
		\label{eq:gsls-cov-update-id-new}
		\begin{aligned}
			&\quad\quad
			\left(\bI_p-a\bW\right)\bV\left(\bI_p-a\bW\right)^\top
			+
			a^2\bW\bC\bW^\top
			\\
			&\quad
			=
			\bV
			-
			a\bW\bV
			-
			a\bV\bW^\top
			+
			a^2\bW\left(\bV+\bC\right)\bW^\top
			\\
			&\quad
			=
			\left(\bV^{-1}+\bC^{-1}\right)^{-1}
			+
			\left(1-a\right)^2
			\left\{
			\bV-\left(\bV^{-1}+\bC^{-1}\right)^{-1}
			\right\}
			\\
			&\quad
			=
			\left\{1-\left(1-a\right)^2\right\}
			\left(\bV^{-1}+\bC^{-1}\right)^{-1}
			+
			\left(1-a\right)^2\bV .
		\end{aligned}
	\end{equation}
	Since \(\bV\succeq\left(\bV^{-1}+\bC^{-1}\right)^{-1}\), \eqref{eq:gsls-cov-update-id-new} gives
	\begin{equation}
		\label{eq:gsls-cov-monotone-new}
		\left(\bV^{-1}+\bC^{-1}\right)^{-1}
		\preceq
		\left(\bI_p-a\bW\right)\bV\left(\bI_p-a\bW\right)^\top
		+
		a^2\bW\bC\bW^\top
		\preceq
		\bV .
	\end{equation}
	
	We index the selected source at step \(\ell\) by \(j_{\ell+1}\). By induction over the greedy updates,
	\begin{equation}
		\label{eq:gsls-Vell-bound-new}
		\frac{v_-}{\left(\ell+1\right)n}\bI_p
		\preceq
		\bV^{[\ell]}
		\preceq
		\frac{v_+}{n}\bI_p,
		\qquad
		\ell=0,\ldots,m-1 .
	\end{equation}
	The case \(\ell=0\) follows from \(\bV^{[0]}=\bC_1\). If \eqref{eq:gsls-Vell-bound-new} holds at step \(\ell\), then the upper bound for \(\bV^{[\ell+1]}\) follows from \eqref{eq:gsls-cov-monotone-new}. For the lower bound,
	\[
	\left(\bV^{[\ell]}\right)^{-1}
	+
	\bC_{j_{\ell+1}}^{-1}
	\preceq
	\frac{\left(\ell+1\right)n}{v_-}\bI_p
	+
	\frac n{v_-}\bI_p
	=
	\frac{\left(\ell+2\right)n}{v_-}\bI_p,
	\]
	and hence the lower half of \eqref{eq:gsls-cov-monotone-new} gives
	\[
	\bV^{[\ell+1]}
	\succeq
	\left\{
	\left(\bV^{[\ell]}\right)^{-1}
	+
	\bC_{j_{\ell+1}}^{-1}
	\right\}^{-1}
	\succeq
	\frac{v_-}{\left(\ell+2\right)n}\bI_p .
	\]
	
	Define
	\[
	\omega_+
	:=
	\left(\frac{\overline q}{\underline q}\right)^{1/2}
	\frac{v_+}{v_-},
	\qquad
	\omega_-
	:=
	\left(\frac{\underline q}{\overline q}\right)^{1/2}
	\left(1+\frac{mv_+}{v_-}\right)^{-1}.
	\]
	Whenever \(\bV\in\mathscr V_r\) for some \(r\le m\) and
	\(v_-n^{-1}\bI_p\preceq\bC\preceq v_+n^{-1}\bI_p\), the weight
	\(\bW=\left(\bV^{-1}+\bC^{-1}\right)^{-1}\bC^{-1}\) satisfies
	\begin{equation}
		\label{eq:gsls-W-upper-lower-new}
		\omega_-\left\|\bx\right\|_{\bQ}
		\le
		\left\|\bW \bx\right\|_{\bQ}
		\le
		\omega_+\left\|\bx\right\|_{\bQ}.
	\end{equation}
	Indeed, \(\left\|\bW\right\|_2\le\left\|\bV\right\|_2\left\|\bC^{-1}\right\|_2\le v_+/v_-\). Also,
	\[
	\bW^{-1}
	=
	\bI_p+\bC\bV^{-1},
	\qquad
	\left\|\bW^{-1}\right\|_2
	\le
	1+\frac{mv_+}{v_-}.
	\]
	Converting Euclidean norms to \(\bQ\)-norms gives \eqref{eq:gsls-W-upper-lower-new}.
	
	Define
	\[
	\mu_{\rm eff}
	:=
	\frac{\underline qv_-^2}{v_-+v_+}
	\left(1-\frac{2}{d_{\rm eff}}\right).
	\]
	Then \(\mu_{\rm eff}>0\). We next show that Assumption \ref{assump:gs-2} implies, uniformly over \(1\le r\le m\), \(j\in\left[m\right]\), and \(\bV\in\mathscr V_r\),
	\begin{equation}
		\label{eq:s-lower}
		\Tr\left(\bS_{j,\bV}\right)
		-
		2\left\|\bS_{j,\bV}\right\|_2
		\ge
		\mu_{\rm eff}
		\frac{p}{m^2n}.
	\end{equation}
	Fix \(r,j,\bV\). Since \(\bC_j\preceq v_+n^{-1}\bI_p\) and
	\(\bV\succeq v_-\left(rn\right)^{-1}\bI_p\), we have
	\[
	\bC_j
	\preceq
	\frac{rv_+}{v_-}\bV,
	\qquad
	\bV+\bC_j
	\preceq
	\left(1+\frac{rv_+}{v_-}\right)\bV .
	\]
	Therefore,
	\[
	\bW_{j,\bV}\bV
	=
	\bV\left(\bV+\bC_j\right)^{-1}\bV
	\succeq
	\left(1+\frac{rv_+}{v_-}\right)^{-1}\bV
	\succeq
	\frac{v_-^2}{rn\left(v_-+rv_+\right)}\bI_p .
	\]
	It follows that
	\[
	\Tr\left(\bS_{j,\bV}\right)
	=
	\Tr\left(\bQ\bW_{j,\bV}\bV\right)
	\ge
	p\underline q
	\frac{v_-^2}{rn\left(v_-+rv_+\right)} .
	\]
	Thus
	\[
	\frac{r^2n}{p}\Tr\left(\bS_{j,\bV}\right)
	\ge
	\underline q
	\frac{rv_-^2}{v_-+rv_+}
	\ge
	\underline q
	\frac{v_-^2}{v_-+v_+}.
	\]
	By Assumption \ref{assump:gs-2},
	\[
	\left\|\bS_{j,\bV}\right\|_2
	\le
	d_{\rm eff}^{-1}\Tr\left(\bS_{j,\bV}\right),
	\]
	and hence
	\[
	\Tr\left(\bS_{j,\bV}\right)
	-
	2\left\|\bS_{j,\bV}\right\|_2
	\ge
	\left(1-\frac2{d_{\rm eff}}\right)
	\Tr\left(\bS_{j,\bV}\right).
	\]
	Combining the last two displays gives \eqref{eq:s-lower}. Moreover,
	\begin{equation}
		\label{eq:s-upper}
		\Tr\left(\bS_{j,\bV}\right)
		\le
		\overline qv_+\frac pn,
	\end{equation}
	since
	\[
	\bW_{j,\bV}\bV
	=
	\bV\left(\bV+\bC_j\right)^{-1}\bV
	=
	\bV-\left(\bV^{-1}+\bC_j^{-1}\right)^{-1}
	\preceq
	\bV
	\preceq
	\frac{v_+}{n}\bI_p .
	\]
	
	\textbf{Step 2. High-probability events.}
	For \(t\ge1\), define
	\[
	r_n\left(t\right)
	:=
	\left\{
	\overline qv_+n^{-1}
	\left(
	p+2\sqrt{pt}+2t
	\right)
	\right\}^{1/2}.
	\]
	We use the Laurent--Massart inequality: for \(\bz\sim\calN\left(\bm0,\bI_p\right)\) and \(\bA\succeq\bm0\),
	\[
	\PP\left\{
	\bz^\top\bA\bz
	>
	\Tr\left(\bA\right)
	+
	2\sqrt{\Tr\left(\bA^2\right)t}
	+
	2\left\|\bA\right\|_2t
	\right\}
	\le
	e^{-t}.
	\]
	Since \(\operatorname{var}\left(\tbtheta_j-\btheta_j^\star\right)=\bC_j\) and
	\(\bQ^{1/2}\bC_j\bQ^{1/2}\preceq \overline qv_+n^{-1}\bI_p\), we have
	\[
	\PP\left\{
	\left\|\tbtheta_j-\btheta_j^\star\right\|_{\bQ}
	>
	r_n\left(t\right)
	\right\}
	\le
	e^{-t}.
	\]
	Hence there exists an event \(\mathcal E_{\rm noise}\) such that
	\[
	\PP\left(\mathcal E_{\rm noise}\right)
	\ge
	1-me^{-t},
	\qquad
	\max_{1\le j\le m}
	\left\|\tbtheta_j-\btheta_j^\star\right\|_{\bQ}
	\le
	r_n\left(t\right)
	\quad\text{on }\mathcal E_{\rm noise}.
	\]
	
	For \(\mathcal J\subsetneq\mathcal I_0\), define
	\[
	\bV^{\left(\mathcal J\right)}
	:=
	\left(
	\bC_1^{-1}
	+
	\sum_{k\in\mathcal J}\bC_k^{-1}
	\right)^{-1},
	\]
	\[
	\overline\btheta^{\left(\mathcal J\right)}
	:=
	\bV^{\left(\mathcal J\right)}
	\left(
	\bC_1^{-1}\tbtheta_1
	+
	\sum_{k\in\mathcal J}\bC_k^{-1}\tbtheta_k
	\right).
	\]
	For \(j\in\mathcal I_0\setminus\mathcal J\), define
	\[
	\overline\bW_j^{\left(\mathcal J\right)}
	:=
	\left\{
	\left(\bV^{\left(\mathcal J\right)}\right)^{-1}
	+
	\bC_j^{-1}
	\right\}^{-1}
	\bC_j^{-1},
	\qquad
	\overline\bS_j^{\left(\mathcal J\right)}
	:=
	\bQ^{1/2}
	\overline\bW_j^{\left(\mathcal J\right)}
	\bV^{\left(\mathcal J\right)}
	\bQ^{1/2}.
	\]
	Furthermore, define
	\[
	d_{\rm eff,0}
	:=
	\min_{\mathcal J\subsetneq\mathcal I_0}
	\min_{j\in\mathcal I_0\setminus\mathcal J}
	\frac{
		\Tr\left(\overline\bS_j^{\left(\mathcal J\right)}\right)
	}{
		\left\|\overline\bS_j^{\left(\mathcal J\right)}\right\|_2
	},
	\]
	and
	\[
	\xi_t
	:=
	2\sqrt{\frac{t}{d_{\rm eff,0}}}
	+
	\frac{2t}{d_{\rm eff,0}}
	+
	\frac2{d_{\rm eff,0}}.
	\]
	We have \(d_{\rm eff,0}\ge d_{\rm eff}>2\). Indeed, since
	$
	\frac n{v_+}\bI_p
	\preceq
	\bC_j^{-1}
	\preceq
	\frac n{v_-}\bI_p
$,
	we have
	\[
	\frac{\left(\left|\mathcal J\right|+1\right)n}{v_+}\bI_p
	\preceq
	\bC_1^{-1}
	+
	\sum_{k\in\mathcal J}\bC_k^{-1}
	\preceq
	\frac{\left(\left|\mathcal J\right|+1\right)n}{v_-}\bI_p,
	\]
	and hence
	\[
	\frac{v_-}{\left(\left|\mathcal J\right|+1\right)n}\bI_p
	\preceq
	\bV^{\left(\mathcal J\right)}
	\preceq
	\frac{v_+}{\left(\left|\mathcal J\right|+1\right)n}\bI_p .
	\]
	Thus \(\bV^{\left(\mathcal J\right)}\in\mathscr V_{\left|\mathcal J\right|+1}\). Since
	$
	\overline\bW_j^{\left(\mathcal J\right)}
	=
	\bW_{j,\bV^{\left(\mathcal J\right)}}$ and
	$\overline\bS_j^{\left(\mathcal J\right)}
	=
	\bS_{j,\bV^{\left(\mathcal J\right)}},
	$
	Assumption \ref{assump:gs-2} gives
	$
	\frac{
		\Tr\left(\overline\bS_j^{\left(\mathcal J\right)}\right)
	}{
		\left\|\overline\bS_j^{\left(\mathcal J\right)}\right\|_2
	}
	\ge
	d_{\rm eff}
	$.
	
	For \(j\in\mathcal I_0\setminus\mathcal J\), define
	\[
	\overline\tDelta_j^{\left(\mathcal J\right)}
	:=
	\overline\bW_j^{\left(\mathcal J\right)}
	\left(
	\tbtheta_j-\overline\btheta^{\left(\mathcal J\right)}
	\right).
	\]
	As \(\btheta_k^\star=\btheta_1^\star\) for every
	\(k\in\left\{1\right\}\cup\mathcal J\cup\left\{j\right\}\), we have
	$
	\EE\overline\btheta^{\left(\mathcal J\right)}
	=
	\EE\tbtheta_j
	=
	\btheta_1^\star
	$.
	Moreover, \(\tbtheta_j\) is independent of \(\overline\btheta^{\left(\mathcal J\right)}\), and hence
	\[
	\operatorname{cov}
	\left(
	\tbtheta_j-\overline\btheta^{\left(\mathcal J\right)}
	\right)
	=
	\bC_j+\bV^{\left(\mathcal J\right)}.
	\]
	Since
	\[
	\overline\bW_j^{\left(\mathcal J\right)}
	=
	\bV^{\left(\mathcal J\right)}
	\left(
	\bV^{\left(\mathcal J\right)}
	+
	\bC_j
	\right)^{-1},
	\]
	we obtain
	\[
	\begin{aligned}
		\operatorname{cov}
		\left(
		\overline\tDelta_j^{\left(\mathcal J\right)}
		\right)
		&=
		\overline\bW_j^{\left(\mathcal J\right)}
		\left(
		\bC_j+\bV^{\left(\mathcal J\right)}
		\right)
		\left(
		\overline\bW_j^{\left(\mathcal J\right)}
		\right)^\top
		\\
		&=
		\bV^{\left(\mathcal J\right)}
		\left(
		\bV^{\left(\mathcal J\right)}
		+
		\bC_j
		\right)^{-1}
		\bV^{\left(\mathcal J\right)}
		\\
		&=
		\overline\bW_j^{\left(\mathcal J\right)}
		\bV^{\left(\mathcal J\right)}.
	\end{aligned}
	\]
	Therefore,
	\[
	\overline\tDelta_j^{\left(\mathcal J\right)}
	\sim
	\calN
	\left(
	\bm0,\,
	\overline\bW_j^{\left(\mathcal J\right)}
	\bV^{\left(\mathcal J\right)}
	\right).
	\]
	By Laurent--Massart and the fact that $\Tr(\bA^2) \leq \Tr(\bA)\norm{\bA}_2$ for all $\bA \succ 0$, with probability at least \(1-e^{-t}\),
	\[
	\left\|\overline\tDelta_j^{\left(\mathcal J\right)}\right\|_{\bQ}^2
	\le
	\left(
	1+2\sqrt{\frac{t}{d_{\rm eff,0}}}
	+
	\frac{2t}{d_{\rm eff,0}}
	\right)
	\Tr\left(\overline\bS_j^{\left(\mathcal J\right)}\right).
	\]
	Also,
	\[
	\Tr\left(\overline\bS_j^{\left(\mathcal J\right)}\right)
	-
	2\left\|\overline\bS_j^{\left(\mathcal J\right)}\right\|_2
	\ge
	\left(
	1-\frac2{d_{\rm eff,0}}
	\right)
	\Tr\left(\overline\bS_j^{\left(\mathcal J\right)}\right).
	\]
	A union bound over at most \(2^{m_0}m_0\) pairs
	\(\left(\mathcal J,j\right)\) with
	\(\mathcal J\subsetneq\mathcal I_0\) and \(j\in\mathcal I_0\setminus\mathcal J\) gives an event \(\mathcal E_{\rm eff}\) such that
	\[
	\PP\left(\mathcal E_{\rm eff}\right)
	\ge
	1-2^{m_0}m_0e^{-t},
	\]
	and, on \(\mathcal E_{\rm eff}\), for all 	\(\left(\mathcal J,j\right)\),
	\begin{equation}
		\label{eq:gsls-oracle-event-new}
		\left\|\overline\tDelta_j^{\left(\mathcal J\right)}\right\|_{\bQ}^2
		\le
		\left(1+\xi_t\right)
		\Tr\left(\overline\bS_j^{\left(\mathcal J\right)}\right),
		\qquad
		\underline s_j^{\left(\mathcal J\right)}
		\ge
		\left(1-\xi_t\right)
		\Tr\left(\overline\bS_j^{\left(\mathcal J\right)}\right),
	\end{equation}
	where
	\[
	\underline s_j^{\left(\mathcal J\right)}
	:=
	\Tr\left(\overline\bS_j^{\left(\mathcal J\right)}\right)
	-
	2\left\|\overline\bS_j^{\left(\mathcal J\right)}\right\|_2 .
	\]
	Define
$
	\mathcal E
	:=
	\mathcal E_{\rm noise}
	\cap
	\mathcal E_{\rm eff}$ and
$
	N_{m_0,m}
	:=
	m+2^{m_0}m_0
$.
	Then
$
	\PP\left(\mathcal E\right)
	\ge
	1-N_{m_0,m}e^{-t}
$.
	
	\textbf{Step 3. Selection of indices in \(\mathcal I_0\).}
	Define
	\[
	h_\ell
	:=
	2\left(1+\omega_+\right)^\ell-1.
	\]
	On \(\mathcal E_{\rm noise}\), we prove by induction that, for
	\(\ell=0,\ldots,m_0\),
	\[
	\left\{j_1,\ldots,j_\ell\right\}
	\subseteq
	\mathcal I_0,
	\quad \text{ and }
	\left\|
	\widehat\btheta^{[\ell]}-\btheta_1^\star
	\right\|_{\bQ}
	\le
	h_\ell r_n\left(t\right).
	\]
	The case \(\ell=0\) follows from \(\widehat\btheta^{[0]}=\tbtheta_1\) and \(h_0=1\). Suppose the claim holds at step \(\ell<m_0\). For any
	\(j\in\mathcal I_0\setminus\left\{j_1,\ldots,j_\ell\right\}\), since \(\btheta_j^\star=\btheta_1^\star\),
	\[
	\left\|
	\tbtheta_j-\widehat\btheta^{[\ell]}
	\right\|_{\bQ}
	\le
	\left(1+h_\ell\right)r_n\left(t\right).
	\]
	By \eqref{eq:gsls-W-upper-lower-new},
	\[
	\left\|
	\tDelta_j^{[\ell]}
	\right\|_{\bQ}
	=
	\left\|
	\bW_j^{[\ell]}
	\left(
	\tbtheta_j-\widehat\btheta^{[\ell]}
	\right)
	\right\|_{\bQ}
	\le
	\omega_+
	\left(1+h_\ell\right)r_n\left(t\right).
	\]
	For \(s>0\), \(x\ge0\), and \(a=\min\left\{1,s/x\right\}\), one has
	\(2as-a^2x\ge s^2/\left(x+s\right)\). Hence, by \eqref{eq:s-lower} and \eqref{eq:s-upper},
	\begin{equation}
		\label{eq:score-near-lower}
		\inf_{j\in\mathcal I_0\setminus\left\{j_1,\ldots,j_\ell\right\}}
		\mathfrak S_j^{[\ell]}
		\ge
		\frac{
			\left\{\mu_{\rm eff}p/\left(m^2n\right)\right\}^2
		}{
			\omega_+^2
			\left(1+h_m\right)^2
			r_n\left(t\right)^2
			+
			\overline{q} v_{+}p/n
		}.
	\end{equation}
	
	For \(j\in\mathcal I_1\), if
	$
	\delta_{\min}
	\ge
	2\left(1+h_\ell\right)r_n\left(t\right)$  and 
	$\delta_{\min}
	\ge
	\frac{2\sqrt{	\overline{q} v_{+}p/n}}{\omega_-}
	$,
	then
	\[
	\left\|
	\tbtheta_j-\widehat\btheta^{[\ell]}
	\right\|_{\bQ}
	\ge \delta_{\min} - \left(1+h_\ell\right)r_n\left(t\right) \ge
	\frac{\delta_{\min}}{2},\] and thus
	\[
	\left\|
	\tDelta_j^{[\ell]}
	\right\|_{\bQ}
	\ge
	\frac{\omega_-\delta_{\min}}{2} \geq \sqrt{	\overline{q} v_{+}p/n}.
	\]
	Since \(\underline{s}_j^{[\ell]}\leq \Tr\left(\bS_j^{[\ell]}\right)\le \overline{q} v_{+} p/n \leq \left\|
	\tDelta_j^{[\ell]}
	\right\|_{\bQ}^2\), the above inequality yields
	\begin{equation}
		\label{eq:score-far-upper}
		\sup_{j\in\mathcal I_1}
		\mathfrak S_j^{[\ell]}
		\le
		\frac{
			4(\overline{q} v_{+})^2\left(p/n\right)^2
		}{
			\omega_-^2\delta_{\min}^2
		}.
	\end{equation}
	Define
	\begin{equation}
		\label{eq:gsls-Gamma-select-new}
		\begin{aligned}
			\Gamma_{\rm sel}
			:=
			\max\Bigg\{&
			2\left(1+h_m\right)\sqrt{5 \overline{q} v_{+}},\,
			\frac{2\sqrt{\overline{q} v_{+}}}{\omega_-},
			\frac{
				2\sqrt2\,\overline{q} v_{+}\,m^2
				\left[
				\overline{q} v_{+}\left\{
				5\omega_+^2\left(1+h_m\right)^2+1
				\right\}
				\right]^{1/2}
			}{
				\omega_-\mu_{\rm eff}
			}
			\Bigg\}.
		\end{aligned}
	\end{equation}
	When \(t\le p\), \(r_n\left(t\right)^2\le5\overline{q} v_{+}p/n\). Thus
	\(\delta_{\min}\ge\Gamma_{\rm sel}\sqrt{p/n}\) makes the upper bound in
	\eqref{eq:score-far-upper} at most one half of the lower bound in
	\eqref{eq:score-near-lower}. Therefore \(j_{\ell+1}\in\mathcal I_0\). Moreover,
	\[
	\begin{aligned}
		\left\|
		\widehat\btheta^{[\ell+1]}-\btheta_1^\star
		\right\|_{\bQ}
		&\le
		\left\|
		\widehat\btheta^{[\ell]}-\btheta_1^\star
		\right\|_{\bQ}
		+
		\left\|
		\gamma_{j_{\ell+1}}^{[\ell]}
		\tDelta_{j_{\ell+1}}^{[\ell]}
		\right\|_{\bQ}
		\\
		&\le
		h_\ell r_n\left(t\right)
		+
		\omega_+
		\left(1+h_\ell\right)r_n\left(t\right)
		\\
		&=
		h_{\ell+1}r_n\left(t\right).
	\end{aligned}
	\]
	This completes the induction.
	
	\textbf{Step 4. Perturbation from the oracle path on \(\mathcal I_0\).}
	Define
	$
	\mathcal J_\ell
	:=
	\left\{j_1,\ldots,j_\ell\right\}
	$.
	By Step 3, \(\mathcal J_\ell\subseteq\mathcal I_0\) on \(\mathcal E_{\rm noise}\) for \(\ell\le m_0\). For every \(\mathcal J\subseteq\mathcal I_0\), on \(\mathcal E_{\rm noise}\),
	\begin{equation}
		\label{eq:gsls-oracle-noise-clean}
		\left\|
		\overline\btheta^{\left(\mathcal J\right)}-\btheta_1^\star
		\right\|_{\bQ}
		=
		\left\|
		\sum_{k\in\left\{1\right\}\cup\mathcal J}
		\bV^{\left(\mathcal J\right)}
		\bC_k^{-1}
		\left(
		\tbtheta_k-\btheta_1^\star
		\right)
		\right\|_{\bQ}
		\le
		\omega_+r_n\left(t\right).
	\end{equation}
	Indeed,
	\[
	\left\|
	\bV^{\left(\mathcal J\right)}
	\bC_k^{-1}\bx
	\right\|_{\bQ}
	\le
	\left(\frac{\overline q}{\underline q}\right)^{1/2}
	\frac{v_+}{v_-\left(\left|\mathcal J\right|+1\right)}
	\left\|\bx\right\|_{\bQ},
	\]
	and there are \(\left|\mathcal J\right|+1\) terms in the sum.
	
	Define
	\[
	L_W
	:=
	\left(\frac{\overline q}{\underline q}\right)^{1/2}
	\frac{v_+}{v_-^2},
	\qquad
	L_M
	:=
	\frac{v_+^2}{v_-^2},
	\]
	and
	\[
	\zeta_t
	:=
	\frac{\overline{q} v_{+}m^2}{\mu_{\rm eff}}
	\left(
	1+2\sqrt{\frac tp}+2\frac tp
	\right).
	\]
	Set
	$
	D_{0,t}=
	E_{0,t}=0
	$.
	For \(\ell=0,\ldots,m_0-1\), define
	\[
	G_{\ell,t}
	:=
	L_W\left(1+h_m\right)E_{\ell,t}
	+
	\omega_+D_{\ell,t},
	\]
	\[
	H_{\ell,t}
	:=
	2G_{\ell,t}
	\left\{
	\left(1+\xi_t\right)\zeta_t
	\right\}^{1/2}
	+
	\xi_tG_{\ell,t}^2\zeta_t,
	\]
	\[
	J_{\ell,t}
	:=
	\frac{
		\overline q\left(1+L_M\right)E_{\ell,t}
		\left(1+2/p\right)m^2
	}{
		\mu_{\rm eff}
	},
	\qquad
	R_{\ell,t}
	:=
	2+H_{\ell,t}+J_{\ell,t}.
	\]
	Finally, define
	\begin{equation}
		\label{eq:gsls-DE-recursion-simplified}
		\begin{aligned}
			D_{\ell+1,t}
			&:=
			\left(1+\omega_+\right)D_{\ell,t}
			+
			\left(1+h_m\right)
			\left\{
			L_WE_{\ell,t}
			+
			\omega_+R_{\ell,t}
			\right\},
			\\
			E_{\ell+1,t}
			&:=
			L_ME_{\ell,t}
			+
			v_+R_{\ell,t}^2\xi_t .
		\end{aligned}
	\end{equation}
	
	We prove by induction that, on \(\mathcal E\),
	\begin{equation}
		\label{eq:gsls-I0-induction-simplified}
		\left\|
		\widehat\btheta^{[\ell]}
		-
		\overline\btheta^{\left(\mathcal J_\ell\right)}
		\right\|_{\bQ}
		\le
		D_{\ell,t}\xi_t r_n\left(t\right),
		\qquad
		\left\|
		\bV^{[\ell]}
		-
		\bV^{\left(\mathcal J_\ell\right)}
		\right\|_2
		\le
		E_{\ell,t}\xi_t\frac1n .
	\end{equation}
	The case \(\ell=0\) is exact since
	\(\widehat\btheta^{[0]}=\tbtheta_1=\overline\btheta^{\left(\emptyset\right)}\) and
	\(\bV^{[0]}=\bC_1=\bV^{\left(\emptyset\right)}\).
	
	Assume \eqref{eq:gsls-I0-induction-simplified} holds at step \(\ell<m_0\). By Step 3, \(j:=j_{\ell+1}\in\mathcal I_0\). Write
	\[
	\bV:=\bV^{[\ell]},
	\qquad
	\bU:=\bV^{\left(\mathcal J_\ell\right)},
	\qquad
	\bC:=\bC_j,
	\]
	and
	\[
	\bW
	:=
	\left(\bV^{-1}+\bC^{-1}\right)^{-1}\bC^{-1},
	\qquad
	\overline\bW
	:=
	\left(\bU^{-1}+\bC^{-1}\right)^{-1}\bC^{-1}.
	\]
	The identities
	$
	\bW=\bI_p-\bC\left(\bV+\bC\right)^{-1}$ and
	$\overline\bW=\bI_p-\bC\left(\bU+\bC\right)^{-1}
	$
	give
	\[
	\begin{aligned}
		\bW-\overline\bW
		&=
		\bC
		\left\{
		\left(\bU+\bC\right)^{-1}
		-
		\left(\bV+\bC\right)^{-1}
		\right\}
		\\
		&=
		\bC
		\left(\bU+\bC\right)^{-1}
		\left(\bV-\bU\right)
		\left(\bV+\bC\right)^{-1}.
	\end{aligned}
	\]
	Since \(\bC\preceq v_+n^{-1}\bI_p\), \(\bU+\bC\succeq v_-n^{-1}\bI_p\), and
	\(\bV+\bC\succeq v_-n^{-1}\bI_p\), the induction hypothesis gives
	\[
	\left\|
	\bW-\overline\bW
	\right\|_{\bQ\to\bQ}
	\le
	L_WE_{\ell,t}\xi_t.
	\]
	Moreover, \(\left\|\tbtheta_j-\widehat\btheta^{[\ell]}\right\|_{\bQ}\le\left(1+h_m\right)r_n\left(t\right)\). Using \eqref{eq:gsls-W-upper-lower-new} and \eqref{eq:gsls-I0-induction-simplified},
	\begin{equation}
		\label{eq:gsls-dir-diff-simplified}
		\begin{aligned}
			&
			\left\|
			\bW\left(
			\tbtheta_j-\widehat\btheta^{[\ell]}
			\right)
			-
			\overline\bW
			\left(
			\tbtheta_j-\overline\btheta^{\left(\mathcal J_\ell\right)}
			\right)
			\right\|_{\bQ}
			\\
			&\quad
			\le
			\left\|
			\bW-\overline\bW
			\right\|_{\bQ\to\bQ}
			\left\|
			\tbtheta_j-\widehat\btheta^{[\ell]}
			\right\|_{\bQ}
			+
			\left\|
			\overline\bW
			\right\|_{\bQ\to\bQ}
			\left\|
			\widehat\btheta^{[\ell]}
			-
			\overline\btheta^{\left(\mathcal J_\ell\right)}
			\right\|_{\bQ}
			\\
			&\quad
			\le
			G_{\ell,t}\xi_t r_n\left(t\right).
		\end{aligned}
	\end{equation}
	
	By \eqref{eq:gsls-oracle-event-new},
	\[
	\left\|
	\overline\tDelta_j^{\left(\mathcal J_\ell\right)}
	\right\|_{\bQ}^2
	\le
	\left(1+\xi_t\right)
	\Tr\left(\overline\bS_j^{\left(\mathcal J_\ell\right)}\right).
	\]
	Also, by \eqref{eq:s-lower},
	\[
	\Tr\left(\overline\bS_j^{\left(\mathcal J_\ell\right)}\right)
	\ge
	\underline s_j^{\left(\mathcal J_\ell\right)}
	\ge
	\mu_{\rm eff}\frac{p}{m^2n}.
	\]
	Therefore,
	\[
	\frac{
		r_n\left(t\right)^2
	}{
		\Tr\left(\overline\bS_j^{\left(\mathcal J_\ell\right)}\right)
	}
	\le
	\zeta_t.
	\]
	Using
	\[
	\left|
	\left\|\bx\right\|_{\bQ}^2
	-
	\left\|\by\right\|_{\bQ}^2
	\right|
	\le
	\left(
	2\left\|\by\right\|_{\bQ}
	+
	\left\|\bx-\by\right\|_{\bQ}
	\right)
	\left\|\bx-\by\right\|_{\bQ},
	\]
	together with \eqref{eq:gsls-dir-diff-simplified}, we obtain
	\[
	\begin{aligned}
		&\qquad
		\left|
		\left\|
		\tDelta_j^{[\ell]}
		\right\|_{\bQ}^2
		-
		\left\|
		\overline\tDelta_j^{\left(\mathcal J_\ell\right)}
		\right\|_{\bQ}^2
		\right|
		\\
		&\quad
		\le
		\left(
		2
		\left\|
		\overline\tDelta_j^{\left(\mathcal J_\ell\right)}
		\right\|_{\bQ}
		+
		\left\|
		\tDelta_j^{[\ell]}
		-
		\overline\tDelta_j^{\left(\mathcal J_\ell\right)}
		\right\|_{\bQ}
		\right)
		\left\|
		\tDelta_j^{[\ell]}
		-
		\overline\tDelta_j^{\left(\mathcal J_\ell\right)}
		\right\|_{\bQ}
		\\
		&\quad
		\le
		H_{\ell,t}\xi_t
		\Tr\left(\overline\bS_j^{\left(\mathcal J_\ell\right)}\right).
	\end{aligned}
	\]
	Thus
	\begin{equation}
		\label{eq:gsls-dir-upper-simplified}
		\left\|
		\tDelta_j^{[\ell]}
		\right\|_{\bQ}^2
		\le
		\left\{
		1+
		\left(1+H_{\ell,t}\right)\xi_t
		\right\}
		\Tr\left(\overline\bS_j^{\left(\mathcal J_\ell\right)}\right).
	\end{equation}
	
	Next compare \(\underline s_j^{[\ell]}\) with
	\(\underline s_j^{\left(\mathcal J_\ell\right)}\). Since
	\[
	\left(\bV^{-1}+\bC^{-1}\right)^{-1}
	=
	\bC-\bC\left(\bV+\bC\right)^{-1}\bC,
	\]
	and the same identity holds with \(\bU\) replacing \(\bV\),
	\[
	\begin{aligned}
		&\quad
		\left(\bV^{-1}+\bC^{-1}\right)^{-1}
		-
		\left(\bU^{-1}+\bC^{-1}\right)^{-1}
		\\
		&
		=
		\bC
		\left(\bU+\bC\right)^{-1}
		\left(\bV-\bU\right)
		\left(\bV+\bC\right)^{-1}
		\bC .
	\end{aligned}
	\]
	Therefore,
	\begin{equation}
		\label{eq:diff-inverse}
		\left\|
		\left(\bV^{-1}+\bC^{-1}\right)^{-1}
		-
		\left(\bU^{-1}+\bC^{-1}\right)^{-1}
		\right\|_2
		\le
		L_M
		\left\|
		\bV-\bU
		\right\|_2 .
	\end{equation}
	Using
	\[
	\bW\bV
	=
	\bV-
	\left(\bV^{-1}+\bC^{-1}\right)^{-1},
	\qquad
	\overline\bW\bU
	=
	\bU-
	\left(\bU^{-1}+\bC^{-1}\right)^{-1},
	\]
	we get, from \eqref{eq:gsls-I0-induction-simplified} and \eqref{eq:diff-inverse},
	\[
	\left\|
	\bW\bV-\overline\bW\bU
	\right\|_2
	\le
	\left(1+L_M\right)E_{\ell,t}\xi_t\frac1n .
	\]
	Hence
	\[
	\left|
	\underline s_j^{[\ell]}
	-
	\underline s_j^{\left(\mathcal J_\ell\right)}
	\right|
	\le
	\overline q
	\left(1+L_M\right)
	E_{\ell,t}\xi_t
	\frac{p+2}{n}.
	\]
	Since
	\(\Tr\left(\overline\bS_j^{\left(\mathcal J_\ell\right)}\right)
	\ge
	\mu_{\rm eff}p/\left(m^2n\right)\), this implies
	\[
	\left|
	\underline s_j^{[\ell]}
	-
	\underline s_j^{\left(\mathcal J_\ell\right)}
	\right|
	\le
	J_{\ell,t}\xi_t
	\Tr\left(\overline\bS_j^{\left(\mathcal J_\ell\right)}\right).
	\]
	Combining this with \eqref{eq:gsls-oracle-event-new} gives
	\begin{equation}
		\label{eq:gsls-score-lower-simplified}
		\underline s_j^{[\ell]}
		\ge
		\left\{
		1-\left(1+J_{\ell,t}\right)\xi_t
		\right\}
		\Tr\left(\overline\bS_j^{\left(\mathcal J_\ell\right)}\right).
	\end{equation}
	Equations \eqref{eq:gsls-dir-upper-simplified} and
	\eqref{eq:gsls-score-lower-simplified} imply
	\[
	\frac{
		\underline s_j^{[\ell]}
	}{
		\left\|\tDelta_j^{[\ell]}\right\|_{\bQ}^2
	}
	\ge
	\frac{
		1-\left(1+J_{\ell,t}\right)\xi_t
	}{
		1+\left(1+H_{\ell,t}\right)\xi_t
	}
	\ge
	1-R_{\ell,t}\xi_t,
	\]
	where the last inequality follows by multiplying by the positive denominator. Thus
	\begin{equation}
		\label{eq:gsls-gamma-simplified}
		0
		\le
		1-\gamma_j^{[\ell]}
		\le
		R_{\ell,t}\xi_t .
	\end{equation}
	
	The oracle update satisfies
	\[
	\overline\btheta^{\left(\mathcal J_\ell\cup\left\{j\right\}\right)}
	=
	\overline\btheta^{\left(\mathcal J_\ell\right)}
	+
	\overline\bW
	\left(
	\tbtheta_j-\overline\btheta^{\left(\mathcal J_\ell\right)}
	\right).
	\]
	Therefore,
	\[
	\begin{aligned}
		&
		\widehat\btheta^{[\ell+1]}
		-
		\overline\btheta^{\left(\mathcal J_\ell\cup\left\{j\right\}\right)}
		\\
		&\quad
		=
		\left(\bI_p-\overline\bW\right)
		\left(
		\widehat\btheta^{[\ell]}
		-
		\overline\btheta^{\left(\mathcal J_\ell\right)}
		\right)
		+
		\left(
		\gamma_j^{[\ell]}\bW-\overline\bW
		\right)
		\left(
		\tbtheta_j-\widehat\btheta^{[\ell]}
		\right).
	\end{aligned}
	\]
	Using
	\(\left\|\bI_p-\overline\bW\right\|_{\bQ\to\bQ}\le1+\omega_+\),
	\[
	\begin{aligned}
		\left\|
		\gamma_j^{[\ell]}\bW-\overline\bW
		\right\|_{\bQ\to\bQ}
		&\le
		\left\|
		\bW-\overline\bW
		\right\|_{\bQ\to\bQ}
		+
		\left(1-\gamma_j^{[\ell]}\right)
		\left\|\bW\right\|_{\bQ\to\bQ}
		\\
		&\le
		\left(
		L_WE_{\ell,t}
		+
		\omega_+R_{\ell,t}
		\right)\xi_t,
	\end{aligned}
	\]
	and \(\left\|\tbtheta_j-\widehat\btheta^{[\ell]}\right\|_{\bQ}\le\left(1+h_m\right)r_n\left(t\right)\), we obtain
	\[
	\left\|
	\widehat\btheta^{[\ell+1]}
	-
	\overline\btheta^{\left(\mathcal J_\ell\cup\left\{j\right\}\right)}
	\right\|_{\bQ}
	\le
	D_{\ell+1,t}\xi_t r_n\left(t\right).
	\]
	
	For the covariance recursion, \eqref{eq:gsls-cov-update-id-new} gives
	\[
	\bV^{[\ell+1]}
	=
	\left(\bV^{-1}+\bC^{-1}\right)^{-1}
	+
	\left(1-\gamma_j^{[\ell]}\right)^2
	\left\{
	\bV-\left(\bV^{-1}+\bC^{-1}\right)^{-1}
	\right\},
	\]
	while
	\[
	\bV^{\left(\mathcal J_\ell\cup\left\{j\right\}\right)}
	=
	\left(\bU^{-1}+\bC^{-1}\right)^{-1}.
	\]
	Thus, by \eqref{eq:gsls-Vell-bound-new},
	\eqref{eq:gsls-I0-induction-simplified},
	\eqref{eq:diff-inverse}, and \eqref{eq:gsls-gamma-simplified},
	\[
	\begin{aligned}
		\left\|
		\bV^{[\ell+1]}
		-
		\bV^{\left(\mathcal J_\ell\cup\left\{j\right\}\right)}
		\right\|_2
		&
		\le
		L_ME_{\ell,t}\xi_t\frac1n
		+
		v_+R_{\ell,t}^2\xi_t^2\frac1n
		\\
		&
		=
		E_{\ell+1,t}\xi_t\frac1n .
	\end{aligned}
	\]
	This completes the induction. At \(\ell=m_0\), \(\mathcal J_{m_0}=\mathcal I_0\), and hence, on \(\mathcal E\),
	\begin{equation}
		\label{eq:gsls-I0-final-simplified}
		\left\|
		\widehat\btheta^{[m_0]}
		-
		\overline\btheta^{\left(\mathcal I_0\right)}
		\right\|_{\bQ}
		\le
		D_{m_0,t}\xi_t r_n\left(t\right).
	\end{equation}
	
	\textbf{Step 5. Risk expansion for \(\widehat\btheta_{\rm gs}\).}
	Define $m_1=|\mathcal{I}_1| \geq 1$, 
	\[
	\Phi_{m_1,m}
	:=
	\frac{2m_1\overline{q}v_{+}}{\omega_-}, \quad \text{ and }	\mathfrak g_0:=
	\frac{\underline qv_-^2}{v_++v_-}.
	\]
	Choose
	\begin{equation}
		\label{eq:gsls-t-rho-clean}
		\begin{aligned}
			t_\rho
			:=
			\max\Bigg\{&
			1,\,
			\log\left(\frac{N_{m_0,m}}{\rho}\right),\,
			\log\left(\frac{128(\overline{q}v_{+})^2}{\mathfrak g_0^2}\right),
			\\
			&
			2\log_+\left[
			\frac{
				\left(1+\omega_+\right)^{2m}
				m^2N_{m_0,m}^{1/2}
				\left\{
				8\delta_{\max}^4+
				8\left(\overline{q}v_{+}/n\right)^2\left(p^2+2p\right)
				\right\}^{1/2}
			}{
				\min\left\{n^{-1},\,\mathfrak g_0p/\left(8n\right)\right\}
			}
			\right]
			\Bigg\},
		\end{aligned}
	\end{equation}
	where
	$
	\log_+\left(x\right)
	:=
	\max\left\{\log\left(x\right),0\right\}
	$.
	Then \(\PP\left(\mathcal E\right)\ge1-\rho\). Define
	\[
	\Lambda_\rho
	:=
	D_{m_0,t_\rho}\xi_{t_\rho}.
	\]
	By \eqref{eq:gsls-I0-final-simplified},
	\begin{equation}
		\label{eq:gsls-I0-error-rho-clean}
		\left\|
		\widehat\btheta^{[m_0]}
		-
		\overline\btheta^{\left(\mathcal I_0\right)}
		\right\|_{\bQ}
		\le
		\Lambda_\rho r_n\left(t_\rho\right)
		\qquad
		\text{on }\mathcal E .
	\end{equation}
	
	We next control the updates whose selected indices are in \(\mathcal I_1\). Suppose that after \(q\) such updates,
	\[
	\left\|
	\widehat\btheta^{[m_0+q]}
	-
	\overline\btheta^{\left(\mathcal I_0\right)}
	\right\|_{\bQ}
	\le
	\Lambda_\rho r_n\left(t_\rho\right)
	+
	\frac{2q\overline{q}v_{+}p/n}{\omega_-\delta_{\min}}.
	\]
	For the next selected index \(k\in\mathcal I_1\), Step 2 gives
	\(\left\|\tbtheta_k-\btheta_k^\star\right\|_{\bQ}\le r_n\left(t_\rho\right)\), and \eqref{eq:gsls-oracle-noise-clean} gives
	\[
	\left\|
	\overline\btheta^{\left(\mathcal I_0\right)}
	-
	\btheta_1^\star
	\right\|_{\bQ}
	\le
	\omega_+r_n\left(t_\rho\right).
	\]
	Therefore,
	\[
	\left\|
	\tbtheta_k-\widehat\btheta^{[m_0+q]}
	\right\|_{\bQ}
	\ge
	\delta_{\min}
	-
	\left(1+\omega_++\Lambda_\rho\right)r_n\left(t_\rho\right)
	-
	\frac{\Phi_{m_1,m}p/n}{\delta_{\min}}.
	\]
	When \(t_\rho\le p\), \(r_n\left(t_\rho\right)\le\sqrt{5\overline{q}v_{+}}\sqrt{p/n}\). Define
	\[
	\Gamma_{\rm alg}
	:=
	\max
	\left\{
	\Gamma_{\rm sel},\,
	\left(2+\omega_+\right)\sqrt{5\overline{q}v_{+}}
	+
	\sqrt{
		5\overline{q}v_{+}\left(2+\omega_+\right)^2
		+
		2\Phi_{m_1,m}
	}
	\right\}.
	\]
	Then
	\[
	\delta_{\min}
	\ge
	\Gamma_{\rm alg}
	\left(1+\Lambda_\rho\right)
	\sqrt{\frac pn}
	\]
	implies
	\[
	\left(1+\omega_++\Lambda_\rho\right)r_n\left(t_\rho\right)
	+
	\frac{\Phi_{m_1,m}p/n}{\delta_{\min}}
	\le
	\frac{\delta_{\min}}{2},
	\]
	and hence
	\[
	\left\|
	\tbtheta_k-\widehat\btheta^{[m_0+q]}
	\right\|_{\bQ}
	\ge
	\frac{	\delta_{\min}}{2}.
	\]
	By \eqref{eq:gsls-W-upper-lower-new},
	\[
	\left\|
	\tDelta_k^{[m_0+q]}
	\right\|_{\bQ}
	\ge
	\frac{\omega_-\delta_{\min}}{2}.
	\]
	Together with \eqref{eq:s-upper}, this gives
	\[
	\left\|
	\gamma_k^{[m_0+q]}
	\tDelta_k^{[m_0+q]}
	\right\|_{\bQ}
	\le
	\frac{\overline{q}v_{+}p/n}{
		\left\|
		\tDelta_k^{[m_0+q]}
		\right\|_{\bQ}
	}
	\le
	\frac{2\overline{q}v_{+}p/n}{\omega_-	\delta_{\min}}.
	\]
	Induction over at most \(m_1\) indices in \(\mathcal I_1\) yields
	\begin{equation}
		\label{eq:gsls-final-path-clean}
		\left\|
		\widehat\btheta_{\rm gs}
		-
		\overline\btheta^{\left(\mathcal I_0\right)}
		\right\|_{\bQ}
		\le
		\Lambda_\rho r_n\left(t_\rho\right)
		+
		\frac{\Phi_{m_1,m}p/n}{\delta_{\min}}
		\qquad
		\text{on }\mathcal E .
	\end{equation}
	
	The estimator \(\overline\btheta^{\left(\mathcal I_0\right)}\) is unbiased for
	\(\btheta_1^\star\) and has covariance
	\[
	\bV_{\mathcal I_0}
	=
	\left(
	\bC_1^{-1}
	+
	\sum_{j\in\mathcal I_0}\bC_j^{-1}
	\right)^{-1}
	=
	\left(
	n_1\bSigma_1^{-1}
	+
	\sum_{j\in\mathcal I_0}n_j\bSigma_j^{-1}
	\right)^{-1}.
	\]
	Therefore,
	\[
	\EE
	\left\|
	\overline\btheta^{\left(\mathcal I_0\right)}
	-
	\btheta_1^\star
	\right\|_{\bQ}^2
	=
	\ip{\bV_{\mathcal I_0}}{\bQ},
	\qquad
	\ip{\bV_{\mathcal I_0}}{\bQ}
	\le
	\overline{q}v_{+}\frac pn .
	\]
	On \(\mathcal E\), \eqref{eq:gsls-final-path-clean} gives
	\begin{equation}
		\label{eq:gsls-good-distance-clean}
		\left\|
		\widehat\btheta_{\rm gs}
		-
		\overline\btheta^{\left(\mathcal I_0\right)}
		\right\|_{\bQ}^2
		\le
		2\Lambda_\rho^2r_n\left(t_\rho\right)^2
		+
		2\Phi_{m_1,m}^2
		\frac{\left(p/n\right)^2}{\delta_{\min}^2}.
	\end{equation}
	Also, on \(\mathcal E\),
	\[
	\begin{aligned}
		\left\|
		\widehat\btheta_{\rm gs}
		-
		\btheta_1^\star
		\right\|_{\bQ}^2
		&
		\le
		\left\|
		\overline\btheta^{\left(\mathcal I_0\right)}
		-
		\btheta_1^\star
		\right\|_{\bQ}^2
		+
		2
		\left\|
		\overline\btheta^{\left(\mathcal I_0\right)}
		-
		\btheta_1^\star
		\right\|_{\bQ}
		\left(
		\Lambda_\rho r_n\left(t_\rho\right)
		+
		\frac{\Phi_{m_1,m}p/n}{\delta_{\min}}
		\right)
		\\
		&\qquad
		+
		2\Lambda_\rho^2r_n\left(t_\rho\right)^2
		+
		2\Phi_{m_1,m}^2
		\frac{\left(p/n\right)^2}{\delta_{\min}^2}.
	\end{aligned}
	\]
	For the exceptional event, the deterministic update formula gives
	\[
	\left\|
	\widehat\btheta_{\rm gs}
	-
	\btheta_1^\star
	\right\|_{\bQ}
	\le
	\left(1+\omega_+\right)^m
	\sum_{j=1}^m
	\left\|
	\tbtheta_j-\btheta_1^\star
	\right\|_{\bQ}.
	\]
	Moreover,
	\[
	\EE
	\left\|
	\tbtheta_j-\btheta_1^\star
	\right\|_{\bQ}^4
	\le
	8\delta_{\max}^4
	+
	8\left(\overline{q}v_{+}/n\right)^2
	\left(p^2+2p\right),
	\qquad
	j\in[m] .
	\]
	By Cauchy's inequality and the definition of \(t_\rho\),
	\[
	\begin{aligned}
		\EE
		\left[
		\left\|
		\widehat\btheta_{\rm gs}
		-
		\btheta_1^\star
		\right\|_{\bQ}^2
		\bm1_{\mathcal E^c}
		\right]
		&\quad
		\le
		\left(1+\omega_+\right)^{2m}
		m^2N_{m_0,m}^{1/2}
		\left\{
		8\delta_{\max}^4+
		8\left(\overline{q}v_{+}/n\right)^2\left(p^2+2p\right)
		\right\}^{1/2}
		e^{-t_\rho/2}
		\\
		&\quad
		\le
		\frac1n .
	\end{aligned}
	\]
	The same argument gives
	\[
	\EE
	\left[
	\left\|
	\widehat\btheta_{\rm gs}
	-
	\overline\btheta^{\left(\mathcal I_0\right)}
	\right\|_{\bQ}^2
	\bm1_{\mathcal E^c}
	\right]
	\le
	\frac1n .
	\]
	Combining the good-event and exceptional-event bounds, and using \(t_\rho\le p\), gives
	\begin{equation}
		\label{eq:gsls-greedy-distance-risk-clean}
		\EE
		\left\|
		\widehat\btheta_{\rm gs}
		-
		\overline\btheta^{\left(\mathcal I_0\right)}
		\right\|_{\bQ}^2
		\le
		10\overline{q}v_{+}\Lambda_\rho^2\frac pn
		+
		2\Phi_{m_1,m}^2
		\frac{\left(p/n\right)^2}{\delta_{\min}^2}
		+
		\frac1n ,
	\end{equation}
	and
	\begin{equation}
		\label{eq:gsls-greedy-risk-expanded-clean}
		\begin{aligned}
			\EE
			\left\|
			\widehat\btheta_{\rm gs}
			-
			\btheta_1^\star
			\right\|_{\bQ}^2
			&\le
			\ip{\bV_{\mathcal I_0}}{\bQ}
			+
			2\sqrt5\,\overline{q}v_{+}\Lambda_\rho\frac pn
			+
			10\overline{q}v_{+}\Lambda_\rho^2\frac pn
			\\
			&\quad
			+
			2\Phi_{m_1,m}\sqrt{\overline{q}v_{+}}
			\frac{\left(p/n\right)^{3/2}}{\delta_{\min}}
			+
			2\Phi_{m_1,m}^2
			\frac{\left(p/n\right)^2}{\delta_{\min}^2}
			+
			\frac1n .
		\end{aligned}
	\end{equation}
	
	\textbf{Step 6. Lower bound for \(\widehat\btheta_{\rm ss}\).} For the single-step shrinkage estimator in Appendix~\ref{sec:single-step-shrinkage}, we define
	\[
	\bV_{\mathcal I}
	:=
	\left(
	n_1\bSigma_1^{-1}
	+
	\sum_{k\in\mathcal I}n_k\bSigma_k^{-1}
	\right)^{-1},
	\quad
	\bW_j^{\left(\mathcal I\right)}
	:=
	n_j\bV_{\mathcal I}\bSigma_j^{-1},
	\quad j\in\mathcal I.
	\]
	Then
	\[
	\tDelta^{\left(\mathcal I\right)}
	:=
	\sum_{j\in\mathcal I}
	\bW_j^{\left(\mathcal I\right)}
	\left(
	\tbtheta_j-\tbtheta_1
	\right),
	\quad
	\bS^{\left(\mathcal I\right)}
	:=
	\sum_{j\in\mathcal I}
	n_1^{-1}
	\bQ^{1/2}\bW_j^{\left(\mathcal I\right)}\bSigma_1\bQ^{1/2}
	=
	\bQ^{1/2}
	\left(
	\bC_1-\bV_{\mathcal I}
	\right)
	\bQ^{1/2}.
	\]
 Let
	$
	\underline{s}^{\left(\mathcal I\right)}
	:=
	\Tr\left(
	\bS^{\left(\mathcal I\right)}
	\right)
	-
	2\left\|
	\bS^{\left(\mathcal I\right)}
	\right\|_2
	$ and 	$
	\overline{s}^{\left(\mathcal I\right)}
	:=
	\Tr\left(
	\bS^{\left(\mathcal I\right)}
	\right)
	$.
	The single-step shrinkage estimator is
	\[
	\widehat\btheta_{\rm ss}
	=
	\tbtheta_1
	+
	\gamma^{\left(\mathcal I\right)}
	\tDelta^{\left(\mathcal I\right)},
	\qquad
	\gamma^{\left(\mathcal I\right)}
	=
	\min\bigg\{
	1,\,
	\frac{
		s
	}{
		\big\|
		\tDelta^{\left(\mathcal I\right)}
		\big\|_{\bQ}^2
	}
	\bigg\},
	\]
	with $s \in [	\underline{s}^{\left(\mathcal I\right)}, 	\overline{s}^{\left(\mathcal I\right)}]$.
	Define
	\[
	\overline\btheta^{\left(\mathcal I\right)}
	:=
	\bV_{\mathcal I}
	\left(
	\bC_1^{-1}\tbtheta_1
	+
	\sum_{j\in\mathcal I}\bC_j^{-1}\tbtheta_j
	\right).
	\]
	Then
	\[
	\tDelta^{\left(\mathcal I\right)}
	=
	\overline\btheta^{\left(\mathcal I\right)}
	-
	\tbtheta_1
	=
	\sum_{j\in\mathcal I}
	\bW_j^{\left(\mathcal I\right)}
	\left(
	\tbtheta_j-\tbtheta_1
	\right),
	\]
	since \(\bW_j^{\left(\mathcal I\right)}=\bV_{\mathcal I}\bC_j^{-1}\). Define
	\[
	\Delta_{\btheta^\star}^{\left(\mathcal I\right)}
	:=
	\EE\tDelta^{\left(\mathcal I\right)}
	=
	\sum_{j\in\mathcal I}
	\bW_j^{\left(\mathcal I\right)}
	\left(
	\btheta_j^\star-\btheta_1^\star
	\right).
	\]
	Then \(\tDelta^{\left(\mathcal I\right)}-\Delta_{\btheta^\star}^{\left(\mathcal I\right)}\) is centered Gaussian. We compute its covariance. Since the sources are independent,
	\[
	\begin{aligned}
		\operatorname{var}
		\left(
		\overline\btheta^{\left(\mathcal I\right)}
		\right)
		&=
		\bV_{\mathcal I}
		\left(
		\bC_1^{-1}\bC_1\bC_1^{-1}
		+
		\sum_{j\in\mathcal I}
		\bC_j^{-1}\bC_j\bC_j^{-1}
		\right)
		\bV_{\mathcal I}
		\\
		&=
		\bV_{\mathcal I}.
	\end{aligned}
	\]
	The cross-covariance between \(\overline\btheta^{\left(\mathcal I\right)}\) and \(\tbtheta_1\) is
	\[
	\operatorname{cov}
	\left(
	\overline\btheta^{\left(\mathcal I\right)},\tbtheta_1
	\right)
	=
	\bV_{\mathcal I}\bC_1^{-1}
	\operatorname{var}\left(\tbtheta_1\right)
	=
	\bV_{\mathcal I}.
	\]
	Since \(\bV_{\mathcal I}\) is symmetric,
	\[
	\operatorname{cov}
	\left(
	\tbtheta_1,\overline\btheta^{\left(\mathcal I\right)}
	\right)
	=
	\bV_{\mathcal I}.
	\]
	Hence
	\[
	\begin{aligned}
		\operatorname{var}
		\left\{
		\tDelta^{\left(\mathcal I\right)}
		-
		\Delta_{\btheta^\star}^{\left(\mathcal I\right)}
		\right\}
		&
		\quad=
		\operatorname{var}
		\left(
		\overline\btheta^{\left(\mathcal I\right)}
		-
		\tbtheta_1
		\right)
		\\
		&\quad
		=
		\operatorname{var}
		\left(
		\overline\btheta^{\left(\mathcal I\right)}
		\right)
		+
		\operatorname{var}
		\left(\tbtheta_1\right)
		-
		\operatorname{cov}
		\left(
		\overline\btheta^{\left(\mathcal I\right)},\tbtheta_1
		\right)
		-
		\operatorname{cov}
		\left(
		\tbtheta_1,\overline\btheta^{\left(\mathcal I\right)}
		\right)
		\\
		&\quad
		=
		\bV_{\mathcal I}
		+
		\bC_1
		-
		\bV_{\mathcal I}
		-
		\bV_{\mathcal I}
		\\
		&\quad
		=
		\bC_1-\bV_{\mathcal I}.
	\end{aligned}
	\]
	In particular, \(\bC_1-\bV_{\mathcal I}\preceq\bC_1\), which implies
	\[
	\bQ^{1/2}
	\left(
	\bC_1-\bV_{\mathcal I}
	\right)
	\bQ^{1/2}
	\preceq
	\bQ^{1/2}\bC_1\bQ^{1/2}
	\preceq
	\frac{\overline{q}v_{+}}{n}\bI_p.
	\]
	By Laurent--Massart, for every \(t\ge1\),
	\[
	\PP\left\{
	\left\|
	\tDelta^{\left(\mathcal I\right)}
	-
	\Delta_{\btheta^\star}^{\left(\mathcal I\right)}
	\right\|_{\bQ}
	>
	r_n\left(t\right)
	\right\}
	\le
	e^{-t}.
	\]
	Taking \(t=t_\rho\), outside an event of probability at most \(e^{-t_\rho}\),
	\[
	\left\|
	\tDelta^{\left(\mathcal I\right)}
	-
	\Delta_{\btheta^\star}^{\left(\mathcal I\right)}
	\right\|_{\bQ}
	\le
	r_n\left(t_\rho\right).
	\]
	Recall
	\[
	\delta_{\mathcal I}
	=
	\left\|
	\Delta_{\btheta^\star}^{\left(\mathcal I\right)}
	\right\|_{\bQ}.
	\]
	If \(\delta_{\mathcal I}\ge2r_n\left(t_\rho\right)\), then
	\[
	\left\|
	\tDelta^{\left(\mathcal I\right)}
	\right\|_{\bQ}
	\ge
	\delta_{\mathcal I}-r_n\left(t_\rho\right)
	\ge
	\frac{\delta_{\mathcal I}}2.
	\]
	
	The condition in Theorem \ref{thm:risk-bound-s-multiset} gives
	\[
	\underline s^{\left(\mathcal I\right)}
	=
	\Tr\left(\bS^{\left(\mathcal I\right)}\right)
	-
	2\left\|\bS^{\left(\mathcal I\right)}\right\|_2
	>
	0.
	\]
	Also,
	\[
	\overline{s}
^{(\mathcal I)}	\le
	\Tr\left(\bS^{\left(\mathcal I\right)}\right)
	\le
	\Tr\left(\bQ^{1/2}\bC_1\bQ^{1/2}\right)
	\le
	\overline{q}v_{+}\frac pn.
	\]
	On the event
	\[
	\left\|
	\tDelta^{\left(\mathcal I\right)}
	-
	\Delta_{\btheta^\star}^{\left(\mathcal I\right)}
	\right\|_{\bQ}
	\le
	r_n\left(t_\rho\right),
	\]
	we have
	\[
	\left\|
	\gamma^{\left(\mathcal I\right)}
	\tDelta^{\left(\mathcal I\right)}
	\right\|_{\bQ}
	\le
	\frac{
		\overline s^{\left(\mathcal I\right)}
	}{
		\left\|
		\tDelta^{\left(\mathcal I\right)}
		\right\|_{\bQ}
	}
	\le
	\frac{2\overline{q}v_{+}p/n}{\delta_{\mathcal I}}.
	\]
	On the exceptional event,
	\[
	\left\|
	\gamma^{\left(\mathcal I\right)}
	\tDelta^{\left(\mathcal I\right)}
	\right\|_{\bQ}^2
	\le
	\overline s^{\left(\mathcal I\right)}
	\le
	\overline{q}v_{+}\frac pn,
	\]
	since
	\(\gamma^{\left(\mathcal I\right)}
	=
	\min\left\{
	1,s
	/
	\left\|\tDelta^{\left(\mathcal I\right)}\right\|_{\bQ}^2
	\right\}\).
	Therefore,
	\begin{equation}
		\label{eq:gsls-single-step-move-clean}
		\EE
		\left\|
		\gamma^{\left(\mathcal I\right)}
		\tDelta^{\left(\mathcal I\right)}
		\right\|_{\bQ}^2
		\le
		\left(
		\frac{2\overline{q}v_{+}p/n}{\delta_{\mathcal I}}
		\right)^2
		+
		\overline{q}v_{+}\frac pn e^{-t_\rho}.
	\end{equation}
	Finally,
	\[
	\widehat\btheta_{\rm ss}-\btheta_1^\star
	=
	\left(
	\tbtheta_1-\btheta_1^\star
	\right)
	+
	\gamma^{\left(\mathcal I\right)}
	\tDelta^{\left(\mathcal I\right)}.
	\]
	Using
	\[
	\left\|\bx+\by\right\|_{\bQ}^2
	\ge
	\left\|\bx\right\|_{\bQ}^2
	-
	2\left\|\bx\right\|_{\bQ}\left\|\by\right\|_{\bQ},
	\]
	taking expectations, and applying Cauchy's inequality gives
	\begin{equation}
		\label{eq:gsls-once-lower-clean}
		\begin{aligned}
			\EE
			\left\|
			\widehat\btheta_{\rm ss}
			-
			\btheta_1^\star
			\right\|_{\bQ}^2
			&\ge
			\EE
			\left\|
			\tbtheta_1-\btheta_1^\star
			\right\|_{\bQ}^2
			-
			2
			\left\{
			\EE
			\left\|
			\tbtheta_1-\btheta_1^\star
			\right\|_{\bQ}^2
			\right\}^{1/2}
			\left\{
			\EE
			\left\|
			\gamma^{\left(\mathcal I\right)}
			\tDelta^{\left(\mathcal I\right)}
			\right\|_{\bQ}^2
			\right\}^{1/2}
			\\
			&\ge
			\ip{\bC_1}{\bQ}
			-
			2
			\left\{
			\ip{\bC_1}{\bQ}
			\right\}^{1/2}
			\left[
			\left(
			\frac{2\overline{q}v_{+}p/n}{\delta_{\mathcal I}}
			\right)^2
			+
			\overline{q}v_{+}\frac pn e^{-t_\rho}
			\right]^{1/2}.
		\end{aligned}
	\end{equation}
	
	\textbf{Step 7. Deterministic gap and asymptotic conclusion.}
	First define
	$
	\bP_{\mathcal I_0}
	:=
	\sum_{j\in\mathcal I_0}\bC_j^{-1}
	$.
	Then
	$
	\bV_{\mathcal I_0}
	=
	\left(
	\bC_1^{-1}+\bP_{\mathcal I_0}
	\right)^{-1}
	$.
	Choose any \(j_0\in\mathcal I_0\). Since
	\(\bC_{j_0}\preceq v_+n^{-1}\bI_p\), we have
	$
	\bP_{\mathcal I_0}
	\succeq
	\bC_{j_0}^{-1}
	\succeq
	\frac n{v_+}\bI_p
	$.
	Define
	$
	\bH
	:=
	\bC_1^{1/2}
	\bP_{\mathcal I_0}
	\bC_1^{1/2}
	$.
	Since \(\bC_1\succeq v_-n^{-1}\bI_p\), we have
	$
	\bH
	\succeq
	\frac{v_-}{v_+}\bI_p.
	$
	Moreover,
	\[
	\bV_{\mathcal I_0}
	=
	\bC_1^{1/2}
	\left(\bI_p+\bH\right)^{-1}
	\bC_1^{1/2},
	\]
	and hence
	\[
	\bC_1-\bV_{\mathcal I_0}
	=
	\bC_1^{1/2}
	\bH
	\left(\bI_p+\bH\right)^{-1}
	\bC_1^{1/2}.
	\]
	Since \(x/\left(1+x\right)\) is increasing on \(\left[0,\infty\right)\),
	\[
	\bH
	\left(\bI_p+\bH\right)^{-1}
	\succeq
	\frac{v_-}{v_++v_-}\bI_p.
	\]
	Thus
	\[
	\bC_1-\bV_{\mathcal I_0}
	\succeq
	\frac{v_-}{v_++v_-}\bC_1
	\succeq
	\frac{v_-^2}{n\left(v_++v_-\right)}\bI_p.
	\]
	Taking the \(\bQ\)-trace gives
	\begin{equation}
		\label{eq:gsls-gap-clean}
		\ip{\bC_1}{\bQ}
		-
		\ip{\bV_{\mathcal I_0}}{\bQ}
		\ge
		\mathfrak g_0\frac pn .
	\end{equation}
	Also,
	\begin{equation}
		\label{eq:gsls-var-upper-clean}
		\ip{\bC_1}{\bQ}
		\le
		\overline{q}v_{+}\frac pn,
		\qquad
		\ip{\bV_{\mathcal I_0}}{\bQ}
		\le
		\overline{q}v_{+}\frac pn .
	\end{equation}
	
	Define
	\[
	\Gamma_{\mathcal I}
	:=
	\max
	\left\{
	2\sqrt{5\overline{q}v_{+}},\,
	\frac{16\sqrt2\,(\overline{q}v_{+})^{3/2}}{\mathfrak g_0}
	\right\}.
	\]
	If \(t_\rho\le p\) and
	\(\delta_{\mathcal I}\ge\Gamma_{\mathcal I}\sqrt{p/n}\), then
	\(\delta_{\mathcal I}\ge2r_n\left(t_\rho\right)\) and
	\[
	\left(
	\frac{2\overline{q}v_{+}p/n}{\delta_{\mathcal I}}
	\right)^2
	\le
	\frac{
		\left\{\mathfrak g_0p/n\right\}^2
	}{
		128\overline{q}v_{+}\left(p/n\right)
	}.
	\]
	By the definition of \(t_\rho\),
	\[
	\overline{q}v_{+}\frac pn e^{-t_\rho}
	\le
	\frac{
		\left\{\mathfrak g_0p/n\right\}^2
	}{
		128\overline{q}v_{+}\left(p/n\right)
	}.
	\]
	Substituting these bounds into \eqref{eq:gsls-once-lower-clean} and using
	\eqref{eq:gsls-var-upper-clean} yields
	\begin{equation}
		\label{eq:gsls-once-final-clean}
		\EE
		\left\|
		\widehat\btheta_{\rm ss}
		-
		\btheta_1^\star
		\right\|_{\bQ}^2
		\ge
		\ip{\bC_1}{\bQ}
		-
		\frac14
		\left\{
		\ip{\bC_1}{\bQ}
		-
		\ip{\bV_{\mathcal I_0}}{\bQ}
		\right\}.
	\end{equation}
	
	Define
	\[
	\Gamma_{\rm cmp}
	:=
	\max
	\left\{
	2\Gamma_{\rm alg},\,
	\frac{32\Phi_{m_1,m}\sqrt{\overline{q}v_{+}}}{\mathfrak g_0},\,
	\frac{4\sqrt2\,\Phi_{m_1,m}}{\sqrt{\mathfrak g_0}}
	\right\}.
	\]
	If
	\[
	\delta_{\min}
	\ge
	\Gamma_{\rm cmp}
	\sqrt{\frac pn},
	\]
	then
	\[
	2\Phi_{m_1,m}\sqrt{\overline{q}v_{+}}
	\frac{\left(p/n\right)^{3/2}}{\delta_{\min}}
	\le
	\frac{\mathfrak g_0p}{16n},
	\qquad
	2\Phi_{m_1,m}^2
	\frac{\left(p/n\right)^2}{\delta_{\min}^2}
	\le
	\frac{\mathfrak g_0p}{16n}.
	\]
	
	We now invoke the asymptotic assumptions. From \eqref{eq:gsls-t-rho-clean},
	\[
	t_\rho
	=
	O
	\left(
	1+
	\log\left(\frac1\rho\right)
	+
	\log\left(n\delta_{\max}^2+p\right)
	\right).
	\]
	Since
	\[
	p
	\gg
	\log\left(\frac1\rho\right)
	+
	\log\left(n\delta_{\max}^2+p\right),
	\]
	we have \(t_\rho=o\left(p\right)\), and hence \(t_\rho\le p\) for all sufficiently large \(n\). Since \(d_{\rm eff,0}\ge d_{\rm eff}\asymp p\),
	\[
	\xi_{t_\rho}
	=
	2\sqrt{\frac{t_\rho}{d_{\rm eff,0}}}
	+
	\frac{2t_\rho}{d_{\rm eff,0}}
	+
	\frac2{d_{\rm eff,0}}
	=
	o\left(1\right).
	\]
	Since \(m_0\) and \(m\) are fixed, the recursion \eqref{eq:gsls-DE-recursion-simplified} gives
	\[
	D_{m_0,t_\rho}=O\left(1\right),
	\qquad
	E_{m_0,t_\rho}=O\left(1\right),
	\qquad
	\Lambda_\rho=D_{m_0,t_\rho}\xi_{t_\rho}=o\left(1\right).
	\]
	Also, since \(d_{\rm eff}\asymp p\),
	\[
	\mu_{\rm eff}
	=
	\frac{\underline qv_-^2}{v_-+v_+}
	\left(1-\frac{2}{d_{\rm eff}}\right)
	\asymp
	1,
	\]
	and hence \(\Gamma_{\rm sel}\), \(\Gamma_{\rm alg}\), \(\Gamma_{\mathcal I}\), and
	\(\Gamma_{\rm cmp}\) are all \(O\left(1\right)\).
	
	Since \(\delta_{\min}\gg\sqrt{p/n}\), we have
	\[
	\delta_{\min}
	\ge
	\Gamma_{\rm sel}\sqrt{\frac pn},
	\qquad
	\delta_{\min}
	\ge
	\Gamma_{\rm alg}
	\left(1+\Lambda_\rho\right)
	\sqrt{\frac pn}
	\]
	for all sufficiently large \(n\). Therefore Step 3 gives
	\[
	\PP
	\left(
	\left\{j_1,\ldots,j_{m_0}\right\}
	=
	\mathcal I_0
	\right)
	\ge
	1-\rho .
	\]
	Furthermore, \(\sqrt{p/n}/\delta_{\min}=o\left(1\right)\), \(\Lambda_\rho=o\left(1\right)\), and \(p\to\infty\). Thus \eqref{eq:gsls-greedy-distance-risk-clean} gives
	\[
	\EE
	\left\|
	\widehat\btheta_{\rm gs}
	-
	\overline\btheta^{\left(\mathcal I_0\right)}
	\right\|_{\bQ}^2
	=
	o\left(\frac pn\right).
	\]
	Consequently,
	\[
	\begin{aligned}
		&\quad
		\left|
		\EE
		\left\|
		\widehat\btheta_{\rm gs}
		-
		\btheta_1^\star
		\right\|_{\bQ}^2
		-
		\EE
		\left\|
		\overline\btheta^{\left(\mathcal I_0\right)}
		-
		\btheta_1^\star
		\right\|_{\bQ}^2
		\right|\\
		&
		\le
		2
		\left\{
		\EE
		\left\|
		\overline\btheta^{\left(\mathcal I_0\right)}
		-
		\btheta_1^\star
		\right\|_{\bQ}^2
		\right\}^{1/2}
		\left\{
		\EE
		\left\|
		\widehat\btheta_{\rm gs}
		-
		\overline\btheta^{\left(\mathcal I_0\right)}
		\right\|_{\bQ}^2
		\right\}^{1/2}
		+
		\EE
		\left\|
		\widehat\btheta_{\rm gs}
		-
		\overline\btheta^{\left(\mathcal I_0\right)}
		\right\|_{\bQ}^2
		\\
		&
		=
		o\left(\frac pn\right).
	\end{aligned}
	\]
	Since
	\[
	\EE
	\left\|
	\overline\btheta^{\left(\mathcal I_0\right)}
	-
	\btheta_1^\star
	\right\|_{\bQ}^2
	=
	\ip{\bV_{\mathcal I_0}}{\bQ},
	\]
	we obtain
	\[
	\EE
	\left\|
	\widehat\btheta_{\rm gs}
	-
	\btheta_1^\star
	\right\|_{\bQ}^2
	=
	\ip{\bV_{\mathcal I_0}}{\bQ}
	+
	o\left(\frac pn\right).
	\]
	
	Finally suppose the condition in Theorem \ref{thm:risk-bound-s-multiset} holds and
	\(\delta_{\mathcal I}\gg\sqrt{p/n}\). Then
	\[
	\delta_{\mathcal I}
	\ge
	\Gamma_{\mathcal I}
	\sqrt{\frac pn}
	\]
	for all sufficiently large \(n\), and therefore \eqref{eq:gsls-once-final-clean} holds. Since
	\(\delta_{\min}\gg\sqrt{p/n}\), we also have
	\[
	\delta_{\min}
	\ge
	\Gamma_{\rm cmp}
	\sqrt{\frac pn}
	\]
	for all sufficiently large \(n\). Using \eqref{eq:gsls-greedy-risk-expanded-clean}, \(\Lambda_\rho=o\left(1\right)\), and the last display, we get
	\[
	\EE
	\left\|
	\widehat\btheta_{\rm gs}
	-
	\btheta_1^\star
	\right\|_{\bQ}^2
	\le
	\ip{\bV_{\mathcal I_0}}{\bQ}
	+
	\frac12\mathfrak g_0\frac pn
	\]
	for all sufficiently large \(n\). By \eqref{eq:gsls-gap-clean},
	\[
	\mathfrak g_0\frac pn
	\le
	\ip{\bC_1}{\bQ}
	-
	\ip{\bV_{\mathcal I_0}}{\bQ}.
	\]
	Hence
	\begin{equation}
		\label{eq:gsls-greedy-final-clean}
		\EE
		\left\|
		\widehat\btheta_{\rm gs}
		-
		\btheta_1^\star
		\right\|_{\bQ}^2
		\le
		\ip{\bC_1}{\bQ}
		-
		\frac12
		\left\{
		\ip{\bC_1}{\bQ}
		-
		\ip{\bV_{\mathcal I_0}}{\bQ}
		\right\}.
	\end{equation}
	Comparing \eqref{eq:gsls-greedy-final-clean} with
	\eqref{eq:gsls-once-final-clean}, and using the strict positivity of the gap in
	\eqref{eq:gsls-gap-clean}, yields
	\[
	\EE
	\left\|
	\widehat\btheta_{\rm gs}
	-
	\btheta_1^\star
	\right\|_{\bQ}^2
	<
	\EE
	\left\|
	\widehat\btheta_{\rm ss}
	-
	\btheta_1^\star
	\right\|_{\bQ}^2
	\]
	for all sufficiently large \(n\).
\end{proof}

\begin{proof}[Proof of Theorem \ref{thm:risk-bound-s-multiset}]
	Recall that
	\[
	\widehat{\btheta}_s = \tbtheta_1 +
	\frac{s}{ \Big\|\sum_{j=2}^{m}\bW_j\big(\tbtheta_j-\tbtheta_1\big)\Big\|_{\bQ}^2}	\sum_{j=2}^{m}\bW_j\Big(\tbtheta_j-\tbtheta_1\Big).
	\]
	Treat  $\{\tbtheta_j\}_{j=2}^m$ and $s$ as fixed and define the function
	\[
	\bg(\bx)
	=
	\frac{s}{ \Big\|\sum_{j=2}^{m}\bW_j\big(\tbtheta_j-\bx\big)\Big\|_{\bQ}^2}	\sum_{j=2}^{m}\bW_j\Big(\tbtheta_j-\bx\Big),
	\qquad 
	\bx\in\R^p.
	\]
	The Jacobian matrix of $\bg(\bx)$ is
	\[
	\bJ(\bx)
	=
	\frac{s	\Big(
		-\Big\|\sum\limits_{j=2}^{m}\bW_j\big(\tbtheta_j-\bx\big)\Big\|_{\bQ}^2\sum\limits_{j=2}^{m}\bW_j
		+2\left[\sum\limits_{j=2}^{m}\bW_j\big(\bx-\tbtheta_j\big)\right]\left[\sum\limits_{j=2}^{m}\bW_j\big(\bx-\tbtheta_j\big)\right]^{\top}\bQ\sum\limits_{j=2}^{m}\bW_j
		\Big)}{\Big\|\sum\limits_{j=2}^{m}\bW_j\big(\tbtheta_j-\bx\big)\Big\|_{\bQ}^4}
	.
	\]
	We then verify that the conditions in Lemma \ref{lem:sure} hold for
	$\bg$ defined above. Similar to $\eqref{eq:inverse-moment}$, we first show that there exists a constant $\underline{\lambda}>0$ such that
	\begin{equation}\label{eq:inverse-moment-multiset}
		\frac{1}{\Tr(\bS)+\norm{\sum_{j=2}^{m}\bW_j\big(\btheta_j^{\star}-\btheta_1^{\star}\big)}_{\bQ}^2}
		\leq
		\E\!\left(\frac{1}{\Big\|\sum_{j=2}^{m}\bW_j\big(\tbtheta_j-\tbtheta_1\big)\Big\|_{\bQ}^2}\right)
		\leq \frac{n_1\Big(1+n_1\Big(\sum_{j=2}^{m} n_j\Big)^{-1}\Big)}{\underline{\lambda}(p-2)}.
	\end{equation}
	
	Let 
	$
	\tDelta := \sum_{j=2}^{m}\bW_j\big(\tbtheta_j-\tbtheta_1\big)
	$.
	Since $\tbtheta_j\sim\mathcal{N}(\btheta_j^{\star},\,n_j^{-1}\bSigma_j)$ and the datasets are independent, we have
	\[
	\tDelta \sim \mathcal{N}\big(	\Delta_{\btheta^\star},\,	\bSigma_{\Delta}\big),
	\]
	where 
	\[
	\Delta_{\btheta^\star} = \sum_{j=2}^{m}\bW_j\big(\btheta_j^{\star}-\btheta_1^{\star}\big),\]
	\[
	\bSigma_{\Delta} = n_1^{-1}\bSigma_1-\Big(\sum_{j=1}^{m}n_j\bSigma_j^{-1}\Big)^{-1}=n_1^{-2}\Big( n_1^{-1}+\Big(\sum_{j=2}^{m}n_j\Big)^{-1}\Big)^{-1}\bSigma_1\overline{\bSigma}^{-1}\bSigma_1,\] 
	and 
	\[
	\overline{\bSigma}=\Big( n_1^{-1}+\Big(\sum_{j=2}^{m}n_j\Big)^{-1}\Big)^{-1}\Big(n_1^{-1}\bSigma_1+\Big(\sum_{j=2}^{m}n_j\bSigma_j^{-1}\Big)^{-1}\Big).
	\]
	Since $1/M \leq \lambda_{\min}(\bSigma_j)\leq \lambda_{\max}(\bSigma_j) \leq M$, it is easy to verify that $1/M \leq \lambda_{\min}(\overline\bSigma)\leq \lambda_{\max}(\overline\bSigma) \leq M$. Define
	$
	\bz := 	\bSigma_{\Delta}^{-1/2}\tDelta
	$.
	Then $\bz\sim\mathcal{N}\big(\bSigma_{\Delta}^{-1/2}\Delta_{\btheta^\star},\bI_{p\times p}\big)$, and
	\[
	\normQ{\tDelta}^2
	= \bz^{\top}\bSigma_{\Delta}^{1/2}\bQ\bSigma_{\Delta}^{1/2}\bz \geq\frac{\underline{\lambda}n_1^{-2}\bz^{\top}(\bSigma_1\overline{\bSigma}^{-1}\bSigma_1)^{1/2}\bQ (\bSigma_1\overline{\bSigma}^{-1}\bSigma_1)^{1/2}\bz }{n_1^{-1}+(\sum_{j=2}^{m} n_j)^{-1}}\geq \frac{\underline{\lambda}n_1^{-1}\norm{\bz}_2^2}{1+n_1(\sum_{j=2}^{m} n_j)^{-1}},
	\]
	where $\underline{\lambda}>0$ denotes the smallest eigenvalue of $(\bSigma_1\overline{\bSigma}^{-1}\bSigma_1)^{1/2}\bQ (\bSigma_1\overline{\bSigma}^{-1}\bSigma_1)^{1/2}$.
	Hence, by \eqref{eq:E-chisq-inverse},
	\[
	\E\!\left(\frac{1}{\normQ{\tDelta}^2}\right)
	\le
	\underline{\lambda}^{-1}n_1\Big(1+n_1\Big(\sum_{j=2}^{m} n_j\Big)^{-1}\Big)
	\E\!\left(\frac{1}{\norm{\bz}_2^2}\right)\leq 	n_1\Big(1+n_1\Big(\sum_{j=2}^{m} n_j\Big)^{-1}\Big)\frac{1}{\underline{\lambda}(p-2)}.
	\]
	
	For the lower bound, note that Jensen's inequality gives
	\[
	\E\!\left(\frac{1}{\normQ{\tDelta}^2}\right)
	\;\ge\;
	\frac{1}{\E\normQ{\tDelta}^2}.
	\]
	A direct calculation yields
	\[
	\E\normQ{\tDelta}^2
	=
	\E\big[\tDelta^{\top}\bQ\tDelta\big]
	=
	\ip{\bQ}{\bSigma_{\Delta}}
	+
	\normQ{\Delta_{\btheta^{\star}}}^2.
	\]
	Recalling that 
	$\bS=\sum_{j=2}^{m}n_1^{-1}\bQ^{1/2}\bW_j\bSigma_1\bQ^{1/2}$, one can check that
	$
	\ip{\bQ}{\bSigma_{\Delta}}
	=
	\Tr(\bS).
	$
	Therefore,
	\[
	\E\!\left(\frac{1}{\normQ{\tDelta}^2}\right)
	\ge
	\frac{1}{\Tr(\bS)+\normQ{\Delta_{\btheta^{\star}}}^2},
	\]
	which gives \eqref{eq:inverse-moment-multiset}.
	
	Now we check that for each $k, k'\in [p]$,
	$
	\E\big| (\tbtheta_{1,k}-\btheta_{1,k}^{\star})\, g_{k'}\big(\tbtheta_1\big)\big|
	\;+\;
	\E\left|\frac{\partial g_{k'}\big(\tbtheta_1\big)}{\partial x_{k}}\right|
	< \infty
	$.  The result here is non-asymptotic and thus $\{n_j\}$ and $p$ are finite. 
	It suffices to show 
	\[\EE\norm{\tbtheta_1-\btheta_1^{\star}}_2^2\E\norm{\bg\big(\tbtheta_1\big)}_2^2<\infty,\] and $\E\big\|\bJ\big(\tbtheta_1\big)\big\|_{\mathrm{F}}<\infty$. 
	Using $\normQ{\bv}^2\ge\lambda_{\min}(\bQ)\,\|\bv\|_2^2$, we have
	\[
	\EE\norm{\tbtheta_1-\btheta_1^{\star}}_2^2\E\norm{\bg\big(\tbtheta_1\big)}_2^2
	\le
	\frac{s^2\Tr(\bSigma_1)}{n_1\lambda_{\min}(\bQ)}\,
	\E\!\left(\frac{1}{\norm{\sum_{j=2}^{m}\bW_j\Big(\tbtheta_j-\tbtheta_1\Big)}^2_{\bQ}}\right),
	\]
	and
	\[
	\big\|\bJ\big(\tbtheta_1\big)\big\|_{\mathrm{F}}
	\le
	\frac{s\Big\|\sum\limits_{j=2}^{m}\bW_j\Big\|_{\mathrm{F}}}{\normQ{\tDelta}^2}
	+
	\frac{2s}{\normQ{\tDelta}^4}\,
	\Big\|\tDelta\tDelta^{\top}\bQ\sum_{j=2}^{m}\bW_j\Big\|_{\mathrm{F}}\le 		\frac{s\Big\|\sum\limits_{j=2}^{m}\bW_j\Big\|_{\mathrm{F}}}{\normQ{\tDelta}^2}+\frac{2s\Big\|\sum\limits_{j=2}^{m}\bW_j^{\!\top}\bQ^{1/2}\Big\|_2}{\sqrt{\lambda_{\min}(\bQ)}	\normQ{\tDelta}^2},
	\]
	and then the finiteness follows from \eqref{eq:inverse-moment-multiset}.
	
	Applying Lemma \ref{lem:sure} with $\bx=\tbtheta_1$ and $\bSigma=n_1^{-1}\bSigma_1$ gives
	\[
	\begin{aligned}
		\E\normQ{\widehat{\btheta}_s-\btheta_1^\star}^2
		&=
		\ip{n_1^{-1}\bSigma_1}{\bQ}
		+\E\!\left[
		2\ip{\bJ\big(\tbtheta_1\big)\,n_1^{-1}\bSigma_1}{\bQ}
		+\normQ{\bg\big(\tbtheta_1\big)}^2
		\right],
	\end{aligned}
	\]
	where
	$
	\normQ{\bg\big(\tbtheta_1\big)}^2
	=
	\frac{s^2}{\norm{\tDelta}_{\bQ}^2}
	$,
	and
	\[
	\bJ(\tbtheta_1)
	=
	\frac{s	\Big(
		-\Big\|\sum\limits_{j=2}^{m}\bW_j\big(\tbtheta_j-\tbtheta_1\big)\Big\|_{\bQ}^2\sum\limits_{j=2}^{m}\bW_j
		+2\left[\sum\limits_{j=2}^{m}\bW_j\big(\tbtheta_1-\tbtheta_j\big)\right]\left[\sum\limits_{j=2}^{m}\bW_j\big(\tbtheta_1-\tbtheta_j\big)\right]^{\top}\bQ\sum\limits_{j=2}^{m}\bW_j
		\Big)}{\Big\|\sum\limits_{j=2}^{m}\bW_j\big(\tbtheta_j-\tbtheta_1\big)\Big\|_{\bQ}^4}
	.
	\]
	Using $\bS=\sum_{j=2}^{m}n_1^{-1}\bQ^{1/2}\bW_j\bSigma_1\bQ^{1/2}$, we have
	$
	\ip{\sum_{j=2}^m\bW_j\,n_1^{-1}\bSigma_1}{\bQ}=\Tr(\bS)
	$.
	Moreover, the second term inside $\bJ\big(\tbtheta_1\big)$ satisfies
	\[
	\left\langle
	\tDelta\tDelta^{\top}\bQ\sum_{j=2}^{m}\bW_j	n_1^{-1}\bSigma_1,\ \bQ
	\right\rangle
	=
	\tDelta^{\top}\bQ^{1/2}\,\bS\,\bQ^{1/2}	\tDelta
	\le
	\lVert\bS\rVert_2\,\normQ{\tDelta}^2.
	\]
	Combining the pieces,
	\[
	\begin{aligned}
		2\ip{\bJ(\tbtheta_1)n_1^{-1}\bSigma_1}{\bQ}
		&\le
		2\,\frac{s}{\normQ{	\tDelta}^4}
		\big[
		-\normQ{\tDelta}^2\,\Tr(\bS)
		+2\,\lVert\bS\rVert_2\,\normQ{\tDelta}^2
		\big] \\
		&= \frac{s}{\normQ{	\tDelta}^2}
		\big[-2\Tr(\bS)+4\lVert\bS\rVert_2\big].
	\end{aligned}
	\]
	Putting everything into Lemma~\ref{lem:sure},
	\[
	\begin{aligned}
		\E\normQ{\widehat{\btheta}_s-\btheta_1^\star}^2
		&\le 
		\ip{n_1^{-1}\bSigma_1}{\bQ}
		+\E\!\left[
		\frac{s}{\normQ{	\tDelta}^2}
		\big[-2\Tr(\bS)+4\lVert\bS\rVert_2\big]
		+\frac{s^2}{\normQ{	\tDelta}^2}
		\right] \\
		&=
		\ip{n_1^{-1}\bSigma_1}{\bQ}
		+
		s[-2\Tr(\bS)+4\lVert\bS\rVert_2+s]\,
		\E\!\left(
		\frac{1}{\normQ{	\tDelta}^2}
		\right),
	\end{aligned}
	\]
	which is exactly inequality~\eqref{eq:risk-upper-multiset}.
\end{proof}

\begin{proof}[Proof of Theorem \ref{thm:risk-bound-shat-multiset}]
	By Theorem \ref{thm:risk-bound-s-multiset},
	\begin{equation}\label{eq:risk-upper-proof-multiset}
	\E\normQ{\widehat{\btheta}_s-\btheta_1^\star}^2\leq
	\ip{n_1^{-1}\bSigma_1}{\bQ}
	+ s[-2\Tr(\bS)+4\lVert\bS\rVert_2+s]\,
	\E\!\left(\frac{1}{\normQ{\tDelta}^2}\right).
	\end{equation}
	Substituting $s=\underline{s}= \Tr(\bS)-2\lVert\bS\rVert_2$ and using \eqref{eq:inverse-moment-multiset},
	\[
	\begin{aligned}
		\E\normQ{\widehat{\btheta}_{\underline{s}}-\btheta_1^{\star}}^2
		&\le 
		\ip{n_1^{-1}\bSigma_1}{\bQ}
		- \frac{(\Tr(\bS)-2\lVert\bS\rVert_2)^2}
		{\Tr(\bS)+\normQ{\sum_{j=2}^{m}\bW_j(\btheta_j^{\star}-\btheta_1^{\star})}^2}.
	\end{aligned}
	\]
	Next, apply the identity
	\[
	\ip{n_1^{-1}\bSigma_1}{\bQ}
	=
	\ip{\Big(\sum_{j=1}^{m}n_j\bSigma_j^{-1}\Big)^{-1}}{\bQ}
	+
	\Tr(\bS),
	\]
	which follows from the matrix identity
	\[
	(\bA^{-1}+\bB^{-1})^{-1}
	= 
	\bA - \bA(\bA+\bB)^{-1}\bA,
	\quad \text{ with }
	\bA=n_1^{-1}\bSigma_1,\ 
	\bB=\big(\sum_{j=2}^{m}n_j\bSigma_j^{-1}\big)^{-1}.
	\]
	Thus,
	\[
	\begin{aligned}
		\E\normQ{\widehat{\btheta}_{\underline{s}}-\btheta_1^{\star}}^2
		&\le 
		\ip{\Big(\sum_{j=1}^{m}n_j\bSigma_j^{-1}\Big)^{-1}}{\bQ}
		+ \Tr(\bS)
		- \frac{(\Tr(\bS)-2\lVert\bS\rVert_2)^2}
		{\Tr(\bS)+\normQ{\sum_{j=2}^{m}\bW_j(\btheta_j^{\star}-\btheta_1^{\star})}^2} \\
		&\leq
		\ip{\Big(\sum_{j=1}^{m}n_j\bSigma_j^{-1}\Big)^{-1}}{\bQ}
		+ \frac{\Tr(\bS)\,\normQ{\sum_{j=2}^{m}\bW_j(\btheta_j^{\star}-\btheta_1^{\star})}^2}
		{\Tr(\bS)+\normQ{\sum_{j=2}^{m}\bW_j(\btheta_j^{\star}-\btheta_1^{\star})}^2} \\
		&\quad + 4\lVert\bS\rVert_2
		\left(1 - \frac{\lVert\bS\rVert_2}{\Tr(\bS)}\right).
	\end{aligned}
	\]
	Similarly, plugging $s=\overline{s}=\Tr(\bS)$ into \eqref{eq:risk-upper-proof-multiset},
	\[
	\begin{aligned}
		\E\normQ{\widehat{\btheta}_{\bar s}-\btheta_1^{\star}}^2
		&\le 
		\ip{n_1^{-1}\bSigma_1}{\bQ}
		+
		\Tr(\bS)\,[ -2\Tr(\bS)+4\lVert\bS\rVert_2+\Tr(\bS)]
		\,\E\!\left(\frac{1}{\normQ{\tDelta}^2}\right)
		\\
		&=
		\ip{n_1^{-1}\bSigma_1}{\bQ}
		+
		\Tr(\bS)[\Tr(\bS)-4\lVert\bS\rVert_2]
		\,\E\!\left(\frac{1}{\normQ{\tDelta}^2}\right),
	\end{aligned}
	\]
	leading to
	\[
	\begin{aligned}
		\E\normQ{\widehat{\btheta}_{\bar s}-\btheta_1^{\star}}^2
		&\le 
		\ip{\big(\sum_{j=1}^{m}n_j\bSigma_j^{-1}\big)^{-1}}{\bQ}
		+
		\frac{\Tr(\bS)\,\normQ{\sum_{j=2}^{m}\bW_j(\btheta_j^{\star}-\btheta_1^{\star})}^2}
		{\Tr(\bS)+\normQ{\sum_{j=2}^{m}\bW_j(\btheta_j^{\star}-\btheta_1^{\star})}^2}
		+4\lVert\bS\rVert_2.
	\end{aligned}
	\]
		These inequalities establish the upper bound in the theorem for any $s\in [\underline{s}, \overline{s}]$ since \eqref{eq:risk-upper-proof-multiset} is a quadratic function in $s$.	
\end{proof}

\subsection{Proof of the Results for General Loss Function}

\begin{proof}[Proof of Proposition \ref{prop:optimal-aggregation}]
The first claim follows directly from the asymptotic normality of each local estimator $\tbtheta_j$ and the independence across datasets. We establish the optimality of $\{\bW_j^\star\}_{j=1}^m$. Define
	\[
	\bS_{\bSigma} := \sum_{k=1}^m n_k\bSigma_k^{-1},
	\qquad
	\bW_j^\star := n_j 	\bS_{\bSigma}^{-1}\bSigma_j^{-1}.
	\]
	Then $\sum_{j=1}^m \bW_j^\star = \bI_{p\times p}$ holds immediately. For any feasible weights $\{\bW_j\}_{j=1}^m$ satisfying \eqref{eq:set}, we consider
	\[
	\sum_{j=1}^m n_j^{-1}\bW_j\bSigma_j\bW_j^\top - \sum_{j=1}^m n_j^{-1}\bW_j^\star\bSigma_j\bW_j^{\star\top}.
	\]
	A direct expansion shows that
	\begin{align*}
		&\quad \sum_{j=1}^m n_j^{-1}\big(\bW_j-\bW_j^\star\big)\bSigma_j\big(\bW_j-\bW_j^\star\big)^\top \\
		&=
		\sum_{j=1}^m n_j^{-1}\bW_j\bSigma_j\bW_j^\top
		-\sum_{j=1}^m n_j^{-1}\bW_j\bSigma_j\bW_j^{\star\top}-\sum_{j=1}^m n_j^{-1}\bW_j^{\star}\bSigma_j\bW_j^{\top}
		+\sum_{j=1}^m n_j^{-1}\bW_j^\star\bSigma_j\bW_j^{\star\top}.
	\end{align*}
	Using 
	$n_j^{-1}\bSigma_j\bW_j^{\star\top}=\bS_{\bSigma}^{-1}$, we have
	\[
	\sum_{j=1}^m n_j^{-1}\bW_j\bSigma_j\bW_j^{\star\top}
	=
	\Big(\sum_{j=1}^m \bW_j\Big)	\bS_{\bSigma}^{-1}
	=
	\bS_{\bSigma}^{-1}=	\sum_{j=1}^m n_j^{-1}\bW_j^{\star}\bSigma_j\bW_j^{\top}=\sum_{j=1}^m n_j^{-1}\bW_j^\star\bSigma_j\bW_j^{\star\top}.
	\]
	Combining the above identities gives the decomposition
	\[
	\sum_{j=1}^m n_j^{-1}\bW_j\bSigma_j\bW_j^\top -  \sum_{j=1}^m n_j^{-1}\bW_j^\star\bSigma_j\bW_j^{\star\top}
	=
	\sum_{j=1}^m n_j^{-1}\big(\bW_j-\bW_j^\star\big)\bSigma_j\big(\bW_j-\bW_j^\star\big)^\top
	\succeq 0.
	\]
\end{proof}

\begin{proof}[Proof of Theorem \ref{thm:risk-upper-general}]
	For $j\in[m]$, recall that $f_j(\btheta) = \frac{1}{n_j} \sum_{i=1}^{n_j } \ell_j (\btheta , \bx_{ji})$ and $\widetilde{\btheta}_j=\argmin f_j(\btheta)$, which implies that $\nabla f_j\big(\widetilde{\btheta}_j\big) = \bm{0}$.
	\begin{lemma}\label{lem:tail-bounds} 
		Under assumptions in Theorem \ref{thm:risk-upper-general}, there exist constants $C_0, c_0>0$ such that
		\[
		\norm{\nabla f_j(\btheta_j^{\star}) }_2 \leq C_0 \sqrt{\frac{p\log n_j}{n_j}}, \text{ and } 	\sup_{\btheta\in\mathcal{B}_j(r)} \norm{\nabla^2 f_j(\btheta) -\EE\nabla^2\ell_j(\btheta, \bx)}_2 \leq C_0 \sqrt{\frac{p\log n_j}{n_j}},
		\]
		with probability at least $1-c_0n_j^{-1}$.
	\end{lemma}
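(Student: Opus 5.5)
Both claims of Lemma~\ref{lem:tail-bounds} will follow from a standard $\varepsilon$-net argument combined with Bernstein-type concentration, using the Orlicz-norm conditions in Assumption~\ref{assump:2}. A union bound over the two events then gives the stated probability $1-c_0n_j^{-1}$.

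\emph{Gradient bound.} Since $\btheta_j^\star$ minimizes the population loss, $\EE\nabla\ell_j(\btheta_j^\star,\bx)=\bm 0$. For any fixed $\bv\in\mathbb S^{p-1}$, $\bv^\top\nabla f_j(\btheta_j^\star)$ is an average of $n_j$ i.i.d.\ centered variables with $\psi_2$-norm at most $\tau$. Hoeffding's inequality gives
\[
\PP\big(|\bv^\top\nabla f_j(\btheta_j^\star)|>u\big)\le 2\exp(-cn_ju^2/\tau^2).
\]
Take a $1/2$-net $\mathcal N$ of $\mathbb S^{p-1}$ with $|\mathcal N|\le 5^p$. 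Then $\|\nabla f_j(\btheta_j^\star)\|_2\le 2\max_{\bv\in\mathcal N}|\bv^\top\nabla f_j(\btheta_j^\star)|$. Setting $u=C\tau\sqrt{(p\log n_j)/n_j}$, the union bound leaves a failure probability of at most $2\cdot 5^p n_j^{-C'p}\lesssim n_j^{-1}$.

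\emph{Hessian bound, fixed $\btheta$.} Fix $\btheta\in\mathcal B_j(r)$. Apply the same net argument to the symmetric matrix $\nabla^2 f_j(\btheta)-\EE\nabla^2\ell_j(\btheta,\bx)$, using quadratic forms $\bv^\top(\cdot)\bv$ on a $1/4$-net, which gives operator norm at most $2\max_{\bv\in\mathcal N}|\cdot|$. Each quadratic form is a centered average of sub-exponential variables with $\psi_1$-norm $\lesssim\tau^2$, so Bernstein's inequality gives
\[
\exp\big(-cn_j\min\{u^2/\tau^4,u/\tau^2\}\big).
\]
Since $(p\log n_j)/n_j=o(1)$, the quadratic regime applies, and the deviation is $O(\sqrt{(p\log n_j)/n_j})$ with probability $1-e^{-Cp\log n_j}$.

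\emph{Uniformity over $\btheta$ (main obstacle).} Cover $\mathcal B_j(r)$ by an $\epsilon$-net $\mathcal T$ with $\epsilon=n_j^{-2}$, so $|\mathcal T|\le(3r/\epsilon)^p=e^{O(p\log n_j)}$. A union bound over $\mathcal T$ keeps the failure probability polynomially small after enlarging constants. The approximation error between a point and its nearest net point is controlled by third derivatives via the mean value theorem. The population part contributes at most $L\epsilon$ by the first condition in Assumption~\ref{assump:3}. The empirical part contributes at most $\epsilon\cdot n_j^{-1}\sum_i\sup_{\btheta}\|\nabla^3\ell_j(\btheta,\bx_{ji})\|_{\mathrm{op}}$. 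By Assumption~\ref{assump:3} and Markov's inequality, this is at most $\epsilon\, Lp^c n_j$ with probability at least $1-n_j^{-1}$, i.e.\ $O(p^c/n_j)$. That is $o\big(\sqrt{p\log n_j/n_j}\big)$ as long as $p^{c}\lesssim\sqrt{n_j p}$. If that fails, I would instead choose $\epsilon=n_j^{-1}p^{-c}$; the $\log(1/\epsilon)$ factor then becomes $\log n_j+c\log p\lesssim\log n_j$, since $p\le n_j$.

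This Lipschitz step is the delicate part: the third-derivative bound is only in expectation, so Markov's inequality must be used with care. Also, $\mathcal B_j(r)$ is stated in the Euclidean norm, so the net must be built in that same norm. Combining the net deviation and the approximation error gives the second claim.
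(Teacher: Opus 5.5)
Your route is the same as the paper's. You use a fixed net on $\mathbb S^{p-1}$ for the gradient and for the Hessian quadratic forms, a polynomially fine discretization of the parameter ball, and Markov's inequality on the third-derivative envelope from Assumption~\ref{assump:3}, which is exactly where the $n_j^{-1}$ failure probability comes from. You finish with Bernstein's inequality and a union bound over $e^{O(p\log n_j)}$ points. However, the step you propose for the case $p^c\not\lesssim\sqrt{n_jp}$ fails as written. With $\epsilon=n_j^{-1}p^{-c}$, your own Markov bound gives a discretization error of
\[
\epsilon\cdot Lp^cn_j=L,
\]
which is a constant, not $o\big(\sqrt{p\log n_j/n_j}\big)$. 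So as stated your argument only covers the regime $p^c\lesssim\sqrt{n_jp}$, and that regime can be violated when $c>1$.

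The fix is to make the mesh polynomially finer from the outset, as the paper does with its $n_j^{-\gamma}$ grid. Choose $\epsilon=n_j^{-\gamma}$ with $\gamma>c+3/2$. Since $p\le n_j$, the discretization error becomes
\[
Lp^cn_j^{1-\gamma}\le Ln_j^{c+1-\gamma}=o\big(\sqrt{p\log n_j/n_j}\big).
\]
Meanwhile $\log(1/\epsilon)=\gamma\log n_j$, so $|\mathcal T|=e^{O(p\log n_j)}$ still holds and Bernstein with a larger constant still beats the union bound. No case split is needed, and with this correction your proof goes through.

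Your use of a Euclidean net whose points lie inside $\mathcal B_j(r)$ is slightly cleaner than the paper's $\ell_\infty$ grid. The paper's arbitrarily chosen cell representatives may lie outside $\mathcal B_j(r)$, where Assumption~\ref{assump:2} is not assumed.
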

	
	Lemma \ref{lem:tail-bounds} shows that there exist constants $C_0, c_0>0$ such that
	\begin{equation}\label{eq:concentration-grad-star}
		\norm{\nabla f_j(\btheta_j^{\star}) }_2
		\leq C_{0}\sqrt{\frac{p\log n_j}{n_j}},
	\end{equation}
	and 
	\begin{equation}\label{eq:concentrate-sup-hessian}
		\sup_{\btheta \in \mathcal{B}_j(r)}  \norm{\nabla^2 f_j(\btheta) - \EE_{\bx\in \mathcal{P}_j} \nabla^2\ell_{j}(\btheta, \bx)}_2 \leq C_{0}\sqrt{\frac{p\log n_j}{n_j}},
	\end{equation}
	with probability at least $1-c_0n_j^{-1}$. Define $\mathcal{E}_j$ to be the event that $\eqref{eq:concentration-grad-star}$ and $\eqref{eq:concentrate-sup-hessian}$ both hold, and then $\PP(\mathcal{E}_j)\geq 1-c_0n_j^{-1}$. The remainder of the proof proceeds on the event $\mathcal{E}_j$.
	
	For all $\btheta \in \mathcal{B}_{j}(r)$,
	\begin{equation}\label{eq:lips-hessian}
		\norm{ \EE_{\bx\in \mathcal{P}_j} \nabla^2\ell_{j}(\btheta, \bx)- \EE_{\bx\in \mathcal{P}_j} \nabla^2\ell_{j}(\btheta_j^{\star}, \bx)}_2 \leq L\norm{\btheta-\btheta_j^{\star}}_2.
	\end{equation}
	Since $\lambda_{\min}\big(\EE_{\bx\in \mathcal{P}_j} \nabla^2\ell_{j}(\btheta_j^{\star}, \bx)\big) \geq M^{-1}$, we have that, for all $\btheta$ such that $\norm{\btheta-\btheta_j^{\star}}_2 \leq \min\{r, 1/(2ML)\}$ and  $n_j$ such that $C_0\sqrt{p\log n_j / n_j} < \min\{r/(8M), 1/(16LM^2), 1/(4M)\}$, 
	\[
	\lambda_{\min}\big(\nabla^2 f_j(\btheta)\big) \geq \lambda_{\min}\big(\EE_{\bx\in \mathcal{P}_j} \nabla^2\ell_{j}(\btheta_j^{\star}, \bx)\big) - \norm{\nabla^2 f_j(\btheta)-\EE_{\bx\in \mathcal{P}_j} \nabla^2\ell_{j}(\btheta_j^{\star}, \bx)}_2 > \frac{1}{4M},
	\]
	and 
	\[\norm{\nabla f_j(\btheta^{\star}_j)}_2 < \min\{r/(8M), 1/(16LM^2)\}.\]
	Then, by Taylor's expansion,  for all $\btheta \in \mathcal{B}_{j}\big(\min \{r, 1/(2ML) \}\big)$,
	\begin{equation}\label{eq:10}
		\begin{aligned}
			f_j(\btheta) - f_j(\btheta^{\star}_j) &= \nabla f_j(\btheta^{\star}_j)^{\top}(\btheta-\btheta^{\star}_j) + \frac{1}{2} (\btheta-\btheta^{\star}_j) ^{\top} \left[\int_{0}^{1} \nabla^2f_j\big(\btheta^{\star}_j+t(\btheta-\btheta^{\star}_j)\big)  \mathrm{d}t \right] (\btheta-\btheta^{\star}_j)\\
			& \geq \norm{\btheta-\btheta^{\star}_j}_2\left(\frac{1}{8M}\norm{\btheta-\btheta^{\star}_j}_2-\norm{\nabla f_j(\btheta^{\star}_j)}_2\right). 
		\end{aligned}
	\end{equation}
	
	For any $\check\btheta$ such that $\norm{\check\btheta-\btheta^{\star}_j}_2=\min \{r, 1/(2ML) \}$, the above equations imply that $f_j\big(\check\btheta\big) > f_j(\btheta^{\star}_j)$. 
	
	For any $\btheta$ such that $\norm{\btheta-\btheta^{\star}_j}_2>\min \{r, 1/(2ML) \}$, there must exist $w \in [0, 1]$ such that   \[\norm{w\btheta+(1-w)\btheta^{\star}_j-\btheta^{\star}_j}_2=\min \{r, 1/(2ML) \},\] 
	and hence $f_j\big(w\btheta+(1-w)\btheta^{\star}_j\big) > f_j(\btheta^{\star}_j)$.  By the convexity of $f_j$, we further have 
	\[f_j\big(w\btheta+(1-w)\btheta^{\star}_j\big) \leq wf_j(\btheta) + (1-w)f_j(\btheta_j^{\star}),\] which implies that $f_j(\btheta)>f_j(\btheta^{\star}_j)$.
	
	By definition, we have $f_j\big(\tbtheta_j\big)\leq f(\btheta^{\star}_j)$, and hence   $\norm{\tbtheta_j-\btheta^{\star}_j}_2<\min \{r, 1/(2ML) \}$. By $\eqref{eq:10}$, we further obtain that, on the event $\mathcal{E}_j$,
	\begin{equation}\label{eq:bound-tbtheta}
		\norm{\tbtheta_j-\btheta^{\star}_j}_2 \leq 8M\norm{\nabla f_j(\btheta_j^{\star})}_2 \leq 8MC_0\sqrt{\frac{p\log n_j}{n_j}}.
	\end{equation}
	
	Furthermore, we have the following expansion for $\nabla f_j$ \citep{feng2014exact}:
	\begin{equation}\label{eq:taylor-grad}
		\nabla f_j\big(\widetilde{\btheta}_j\big)\\
		=\nabla f_j\big(\btheta_j^{\star}\big)+\bigg[\int_{0}^{1}\nabla^2 f_j\Big(\btheta_j^{\star}+t\big(\widetilde{\btheta}_j-\btheta_j^{\star}\big)\Big)  \mathrm{d} t \bigg] \big(\widetilde{\btheta}_j-\btheta_j^{\star}\big)=:\nabla f_j\big(\btheta_j^{\star}\big)+\widetilde{\bH}_j \big(\widetilde{\btheta}_j-\btheta_j^{\star}\big),
	\end{equation}
	where
	\begin{equation}
		\begin{aligned}
			\widetilde{\bH}_j&:=\int_{0}^{1}\nabla^2 f_j\Big(\btheta_j^{\star}+t\big(\widetilde{\btheta}_j-\btheta_j^{\star}\big)\Big)  \mathrm{d} t\\
			&=\bH_{j}^{\star}+\int_{0}^{1}\bigg[\nabla^2 f_j\Big(\btheta_j^{\star}+t\big(\widetilde{\btheta}_j-\btheta_j^{\star}\big)\Big)-\EE\nabla^2  \ell_j\Big(\btheta_j^{\star}+t\big(\widetilde{\btheta}_j-\btheta_j^{\star}\big);\bx\Big)\bigg]  \mathrm{d} t\\
			&\quad + \int_{0}^{1}\bigg[\EE\nabla^2  \ell_j\Big(\btheta_j^{\star}+t\big(\widetilde{\btheta}_j-\btheta_j^{\star}\big);\bx\Big)-\EE\nabla^2  \ell_j\big(\btheta_j^{\star};\bx\big)\bigg] \mathrm{d} t\\
			& =: \bH_{j}^{\star}+ \bR_{j, 1} + \bR_{j, 2},
		\end{aligned}
	\end{equation}
	with $\bH_{j}^{\star}:=\EE\nabla^2  \ell_j\big(\btheta_j^{\star};\bx\big)$. By \eqref{eq:concentrate-sup-hessian}, \eqref{eq:lips-hessian}, and \eqref{eq:bound-tbtheta}, we have 
	$
	\norm{\bR_{j, 1}}_2 \leq C_0\sqrt{\frac{p\log n_j}{n_j}}
	$ and 
	$
	\norm{\bR_{j, 2}}_2 \leq 4LMC_0\sqrt{\frac{p\log n_j}{n_j}}
	$, and therefore, 
	\[
	\lambda_{\min}\big(\widetilde{\bH}_j\big) \geq  	\lambda_{\min}\big(\bH^{\star}_j\big) - (1+4LM)C_0\sqrt{\frac{p\log n_j}{n_j}},
	\]
	which yield that 
	$
	\norm{\widetilde{\bH}_j^{-1}}_2 \leq 2M
	$ for sufficiently large $n_j$ such that $(1+4LM)C_0\sqrt{\frac{p\log n_j}{n_j}} \leq \frac{1}{2M}$. Moreover, using the fact that $\bA^{-1}+\bB^{-1}=\bA^{-1}(\bA+\bB)\bB^{-1}$ for all non-singular matrices $\bA$ and $\bB$, we have
	\[
	\norm{\widetilde{\bH}_j^{-1}-\bH^{\star-1}_j}_2 \leq \norm{\widetilde{\bH}_j^{-1}}_2 \norm{\widetilde{\bH}_j-\bH^{\star}_j}_2\norm{\bH^{\star-1}_j}_2  \leq 2M^2(1+4LM)C_0\sqrt{\frac{p\log n_j}{n_j}}.
	\]
	Therefore, \eqref{eq:taylor-grad} implies that
	\begin{equation}\label{eq:tbtheta-expansion}
		\widetilde{\btheta}_j-\btheta_j^{\star}=-\bH_j^{\star-1}\nabla f_j\big(\btheta_j^{\star}\big)  +\big(\bH_j^{\star-1}-\widetilde\bH_j^{-1}\big)\nabla f_j\big(\btheta_j^{\star}\big)=:-\bH_j^{\star-1}\nabla f_j\big(\btheta_j^{\star}\big) + \br_j,
	\end{equation}
	where the remainder $\norm{\br_j}_2 \lesssim (p\log n_j) / n_j$ on the event $\mathcal{E}_j$.
	
	\begin{lemma}\label{lem:consistency-HhatVhat}
		Assume assumptions in Theorem \ref{thm:risk-upper-general} hold. For any $j\in[m]$, let 
		\[\widehat\bH_j =  \frac{1}{n_j} \sum_{i=1}^{n_j } \nabla^2 \ell_j \big(\tbtheta_j , \bx_{ji}\big), \text{ and } \widehat\bV_j =  \frac{1}{n_j} \sum_{i=1}^{n_j } \nabla \ell_j \big(\tbtheta_j , \bx_{ji}\big) \nabla \ell_j \big(\tbtheta_j , \bx_{ji}\big)^{\top}.\] Then there exists $C_0', c_0'>0$, such that
		\[\norm{\hbH_j-\bH_j^{\star}}_2 \leq C_0'\sqrt{\frac{p\log n_j}{n_j}} , \text{ and } \norm{\hbV_j-\bV_j^{\star}}_2 \leq C_0'\sqrt{\frac{p\log n_j}{n_j}},\]
		with probability at least $1-c_0'n_j^{-1}$.
	\end{lemma}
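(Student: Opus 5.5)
The statement to prove is Lemma~\ref{lem:consistency-HhatVhat}. The plan is to decompose each error into a uniform empirical-process term plus a smoothness term at the random point $\tbtheta_j$. Throughout I work on the event $\mathcal{E}_j$ from Lemma~\ref{lem:tail-bounds}. On this event, \eqref{eq:bound-tbtheta} gives $\|\tbtheta_j-\btheta_j^\star\|_2\le 8MC_0\sqrt{p\log n_j/n_j}$, so for $n_j$ large, $\tbtheta_j\in\mathcal B_j(r)$.

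\textbf{Hessian.} Write
\[
\hbH_j-\bH_j^\star=\big[\nabla^2 f_j(\tbtheta_j)-\EE\nabla^2\ell_j(\tbtheta_j,\bx)\big]+\big[\EE\nabla^2\ell_j(\tbtheta_j,\bx)-\EE\nabla^2\ell_j(\btheta_j^\star,\bx)\big].
\]
The first bracket is at most the supremum in \eqref{eq:concentrate-sup-hessian}, hence $\le C_0\sqrt{p\log n_j/n_j}$. The second bracket is bounded by \eqref{eq:lips-hessian}, a consequence of Assumption~\ref{assump:3}, and gives $\le L\|\tbtheta_j-\btheta_j^\star\|_2\lesssim\sqrt{p\log n_j/n_j}$. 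This step is immediate.

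\textbf{Score covariance.} Let $\bg_i(\btheta)=\nabla\ell_j(\btheta,\bx_{ji})$. Then
\[
\hbV_j-\bV_j^\star=\Big[\tfrac1{n_j}\sum_i\bg_i(\btheta_j^\star)\bg_i(\btheta_j^\star)^\top-\bV_j^\star\Big]+\Big[\tfrac1{n_j}\sum_i\big(\bg_i(\tbtheta_j)\bg_i(\tbtheta_j)^\top-\bg_i(\btheta_j^\star)\bg_i(\btheta_j^\star)^\top\big)\Big].
\]

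The first term is a sample covariance of sub-Gaussian vectors, by the $\psi_2$ bound in Assumption~\ref{assump:2}. A standard $\epsilon$-net argument on $\mathbb S^{p-1}$ handles it. For fixed $\bv$, $(\bv^\top\bg_i)^2-\EE(\bv^\top\bg_i)^2$ is sub-exponential with norm $\lesssim\tau^2$. Bernstein's inequality plus a union bound over a $1/4$-net of size $9^p$, taking deviation $t\asymp\sqrt{(p+\log n_j)/n_j}$, yields $\|\cdot\|_2\lesssim\sqrt{p\log n_j/n_j}$ with probability $1-O(n_j^{-1})$.

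For the second term, I use the mean value theorem along the segment between $\btheta_j^\star$ and $\tbtheta_j$. For unit $\bv$ the derivative of $(\bv^\top\bg_i(\btheta))^2$ is $2(\bv^\top\bg_i)\,\nabla^2\ell_j(\btheta,\bx_{ji})\bv$. Its contribution is therefore bounded by
\[
2\|\tbtheta_j-\btheta_j^\star\|_2\cdot\frac1{n_j}\sum_i\sup_{\btheta\in\mathcal B_j(r)}\big\|\nabla\ell_j(\btheta,\bx_{ji})\otimes\nabla^2\ell_j(\btheta,\bx_{ji})\big\|_{\rm op}.
\]
Assumption~\ref{assump:3} controls the expectation of this average by $Lp^c$.

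\textbf{Main obstacle.} Markov's inequality alone would leave a polynomial-in-$p$ factor times $\sqrt{p\log n_j/n_j}$, which is too weak. The plan to avoid this is to instead bound the average on a high-probability event. I would use $|\bv^\top\bg_i(\btheta)|\le|\bv^\top\bg_i(\btheta_j^\star)|+\|\btheta-\btheta_j^\star\|\sup\|\nabla^2\ell_j\|$, together with the $\psi_1$ Hessian bound uniformly over a net of $\mathcal B_j(r)$. This shows the empirical average of $|\bv^\top\bg_i|\,|\bv^\top\nabla^2\ell_j\bv|$ is $O(1)$ uniformly in $\bv$ with probability $1-O(n_j^{-1})$.

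If that uniform control proves delicate, the fallback is to accept the $p^c$ factor from Assumption~\ref{assump:3} through Markov's inequality on the event where $\|\tbtheta_j-\btheta_j^\star\|_2$ is small. This costs a weaker probability or rate, which I would then absorb under $p\log n_1/n_1=o(1)$ as the theorem's regime allows. Intersecting all events gives probability $\ge1-c_0'n_j^{-1}$ and constant $C_0'$.
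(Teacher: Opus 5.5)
Your Hessian step is the paper's argument, and the first term of your score decomposition (the sample covariance of the scores at $\btheta_j^\star$, via a $9^p$-net and Bernstein) is fine. The gap is the second term, the empirical increment $\frac1{n_j}\sum_i[\bg_i(\tbtheta_j)\bg_i(\tbtheta_j)^\top-\bg_i(\btheta_j^\star)\bg_i(\btheta_j^\star)^\top]$. Your argument never bounds it by $C\sqrt{p\log n_j/n_j}$ with probability $1-O(n_j^{-1})$.

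The difficulty is not only Markov's inequality. The average $\frac1{n_j}\sum_i\sup_{\btheta}\|\nabla\ell_j\otimes\nabla^2\ell_j\|_{\mathrm{op}}$ is genuinely polynomial in $p$. For logistic regression it is of order $\|\bx_{ji}\|_2^3\asymp p^{3/2}$ per sample, which is why Assumption~\ref{assump:3} allows $Lp^{c}$. On top of that, Markov at failure level $n_j^{-1}$ costs a further factor $n_j$.

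Your repair does not fix this, for two reasons. First, with $\bdelta=\tbtheta_j-\btheta_j^\star$, the derivative of $s\mapsto(\bv^\top\bg_i(\btheta_j^\star+s\bdelta))^2$ is $2(\bv^\top\bg_i)(\bv^\top\nabla^2\ell_j\,\bdelta)$. This involves a data-dependent direction, not $\bv^\top\nabla^2\ell_j\bv$. Second, the uniform claim you state is false in the lemma's regime. Take logistic regression with Gaussian design and bounded $\btheta_j^\star$, which satisfies Assumptions~\ref{assump:1}--\ref{assump:3} with $c=3/2$. Choosing $\bv=\bx_{ji}/\|\bx_{ji}\|_2$ shows that, with high probability, $\sup_{\bv\in\mathbb{S}^{p-1}}\frac1{n_j}\sum_i|\bv^\top\bg_i|\,|\bv^\top\nabla^2\ell_j\bv|\gtrsim p^{3/2}/n_j$. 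This diverges once $p\gg n_j^{2/3}$, which $p\log n_1/n_1=o(1)$ permits. The same obstruction defeats any bound that takes a supremum over the direction of $\bdelta$.

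Your fallback explicitly weakens the rate or the probability, so it does not prove the lemma as stated. No factor $p^c$, let alone $n_jp^c$, can be absorbed by $p\log n_1/n_1=o(1)$.

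The idea you are missing is the one the paper uses. Assumption~\ref{assump:2} holds at every $\btheta\in\mathcal{B}_j(r)$, not only at $\btheta_j^\star$. So you can concentrate $\bV_j(\btheta)=\frac1{n_j}\sum_i\bg_i(\btheta)\bg_i(\btheta)^\top$ around its own mean $\EE\bV_j(\btheta)$ simultaneously over a grid of $n_j^{\gamma p}$ points in $\mathcal{B}_j(\Delta)$ with mesh $2\sqrt p\,\Delta n_j^{-\gamma}$, for a deterministic $\Delta\asymp\sqrt{p\log n_j/n_j}$. The union bound over $9^pn_j^{\gamma p}$ pairs still gives rate $\sqrt{p\log n_j/n_j}$.

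Assumption~\ref{assump:3} plus Markov is then used only for the sub-mesh discretization error. There the loss $n_jp^{c+1/2}$ is killed by $n_j^{-\gamma}$ once $\gamma>c+3/2$.

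All smoothness in $\btheta$ is thus pushed onto the deterministic map $\btheta\mapsto\EE\bV_j(\btheta)$. The bound $\sup_{\btheta\in\mathcal{B}_j(\Delta)}\|\EE\bV_j(\btheta)-\bV_j^\star\|_2\lesssim\Delta$ follows from a Taylor expansion of the score and Cauchy--Schwarz. This uses the moment bounds implied by the $\psi_2$ and $\psi_1$ norms, together with convexity: $(\bu^\top\bA\bv)^2\le(\bu^\top\bA\bu)(\bv^\top\bA\bv)$ for $\bA=\nabla^2\ell_j\succeq 0$.

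Since $\tbtheta_j\in\mathcal{B}_j(\Delta)$ on $\mathcal{E}_j$, this yields the $\hbV_j$ bound without ever controlling an empirical increment along the segment.
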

	
	Lemma \ref{lem:consistency-HhatVhat} establishes that, on a slightly smaller event $\mathcal{E}'_j \subset \mathcal{E}_j$ with $\PP(\mathcal{E}'_j) \geq 1-c_0'n_j^{-1}$ for some constant $c_0'>c_0$, we have 
	\[\norm{\hbH_j-\bH_j^{\star}}_2 \leq C_0'\sqrt{\frac{p\log n_j}{n_j}} , \text{ and } \norm{\hbV_j-\bV_j^{\star}}_2 \leq C_0'\sqrt{\frac{p\log n_j}{n_j}}.\]
	Note that
	\[\hbH_j^{-1}\hbV_j\hbH_j^{-1}-\bH_j^{\star-1}\bV_j^{\star}\bH_j^{\star-1}=(\hbH_j^{-1}-\bH_j^{\star-1})\hbV_j\hbH_j^{-1}+\bH_j^{\star-1}(\hbV_j-\bV_j^{\star})\hbH_j^{-1}+\bH_j^{\star-1}\bV_j^{\star}(\hbH_j^{-1}-\bH_j^{\star-1}).\]
	Similar to the above analysis of $\widetilde{\bH_j}$, it holds that $\norm{\hbH_j^{-1}}_2 < +\infty$ and \[\norm{\hbH_j^{-1}-\bH_j^{\star-1}}_2 \lesssim  \sqrt{\frac{p\log n_j}{n_j}},\]
	which yields that 
	\[\norm{\hbSigma_j-\bSigma_j}_2=\norm{\hbH_j^{-1}\hbV_j\hbH_j^{-1}-\bH_j^{\star-1}\bV_j^{\star}\bH_j^{\star-1}}_2 \lesssim  \sqrt{\frac{p\log n_j}{n_j}},\]
	and  thus
	\[\norm{\hbSigma_j^{-1}}_2< + \infty, \quad  \norm{\hbSigma_j^{-1}-\bSigma_j^{-1}}_2\lesssim  \sqrt{\frac{p\log n_j}{n_j}}.\]
	
	Recall that for $j\in[m]$,
	\[
	\widehat{\bW}_j:=n_j\Big(\sum_{k=1}^m n_k\hbSigma_k^{-1}\Big)^{-1}\hbSigma_j^{-1},
	\qquad
	\bW_j:=n_j\Big(\sum_{k=1}^m n_k\bSigma_k^{-1}\Big)^{-1}\bSigma_j^{-1}.
	\]

	Let $\widehat{\bS}_{\bSigma}:=\sum_{k=1}^m n_k\hbSigma_k^{-1}$ and $\bS_{\bSigma}:=\sum_{k=1}^m n_k\bSigma_k^{-1}$. Since
	\[
	\norm{\hbSigma_j^{-1}-\bSigma_j^{-1}}_2\lesssim \sqrt{\frac{p\log n_j}{n_j}}, \qquad j\in[m],
	\]
	we have
	\[
	\norm{\widehat{\bS}_{\bSigma}-\bS_{\bSigma}}_2
	\leq \sum_{k=1}^m n_k\norm{\hbSigma_k^{-1}-\bSigma_k^{-1}}_2
	\lesssim \sum_{k=1}^m \sqrt{n_k p\log n_k} .
	\]
	Moreover, 
	\[\lambda_{\min}(\bS_{\bSigma})\geq \sum_{k=1}^m n_k\,\lambda_{\min}(\bSigma_k^{-1})
	\geq \sum_{k=1}^m n_k/M^3 \gg \sum_{k=1}^m \sqrt{n_k p \log n_k},\] 
	since $p \log n_k / n_k=o(1)$.
	Therefore,
	\[
\norm{\widehat{\bS}_{\bSigma}^{-1}}_2\lesssim \frac{1}{\sum_{k=1}^m n_k}, 	\quad
	\norm{\widehat{\bS}_{\bSigma}^{-1}-\bS_{\bSigma}^{-1}}_2\lesssim \norm{\widehat{\bS}_{\bSigma}^{-1}}_2\norm{\widehat{\bS}_{\bSigma}-\bS_{\bSigma}}_2 \norm{\bS_{\bSigma}^{-1}}_2 \lesssim \frac{\sum_{k=1}^m \sqrt{n_k p \log n_k}}{\big(\sum_{k=1}^m n_k\big)^2},
	\]
	and hence for each $j\in[m]$,
	\begin{equation}\label{eq:error-C}
		\norm{\widehat \bW_j-\bW_j}_2
		\leq n_j\norm{\widehat{\bS}_{\bSigma}^{-1}-\bS_{\bSigma}^{-1}}_2\norm{\hbSigma_j^{-1}}_2
		+n_j\norm{\bS_{\bSigma}^{-1}}_2\norm{\hbSigma_j^{-1}-\bSigma_j^{-1}}_2
		\lesssim \alpha_n=o(1),
	\end{equation}
	where $\alpha_n:=\max_{1\leq j\leq m}\sqrt{\frac{p\log n_j}{n_j}}\lesssim\sqrt{\frac{p\log n_1}{n_1}}$ since $n_1 \lesssim n_j$ for all $j$.


	Now we are ready to compute the risk of $\overline{\btheta}_t$ on the event $\mathcal{E}':= \cap_{j=1}^m \mathcal{E}'_j $. By \eqref{eq:tbtheta-expansion}, for each $j\in[m]$,
	\begin{equation}
		\tbtheta_j-\btheta_j^{\star}=:-\bH_j^{\star-1}\nabla f_j\big(\btheta_j^{\star}\big) + \br_j.
	\end{equation}
	By \eqref{eq:shrinkage-estimator-general}, we have
	\begin{equation}\label{eq:MSE-general-decomp}
		\begin{aligned}
			&\quad\,\,\normQ{\overline{\btheta}_{t} - \btheta_1^{\star}}^2\\
			&=\norm{\tbtheta_1- \btheta_1^{\star} + t\sum_{j=2}^m \widehat{\bW}_j\Big(\tbtheta_j-\tbtheta_1\Big)}_{\bQ}^2\\
			&=\norm{\tbtheta_1- \btheta_1^{\star}}_{\bQ}^2 
			+ 2t\left[\left(\tbtheta_1- \btheta_1^{\star}\right)^{\top} \bQ\sum_{j=2}^m \widehat{\bW}_j\Big(\tbtheta_j-\tbtheta_1\Big)\right]
			+ t^2\norm{\sum_{j=2}^m \widehat{\bW}_j\Big(\tbtheta_j-\tbtheta_1\Big)}_{\bQ}^2\\
			&=\norm{\tbtheta_1- \btheta_1^{\star}}_{\bQ}^2 
			-2t\left[\left(\tbtheta_1- \btheta_1^{\star}\right)^{\top} \bQ\sum_{j=2}^m \widehat{\bW}_j\Big(\tbtheta_1-\btheta_1^{\star}\Big)\right]
			+2t\left[\left(\tbtheta_1- \btheta_1^{\star}\right)^{\top} \bQ\sum_{j=2}^m \widehat{\bW}_j\Big(\tbtheta_j-\btheta_j^{\star}\Big)\right]\\
			&\quad
			+2t\left[\left(\tbtheta_1- \btheta_1^{\star}\right)^{\top} \bQ\sum_{j=2}^m \widehat{\bW}_j\Big(\btheta_j^{\star}-\btheta_1^{\star}\Big)\right]
			+ t^2\norm{\sum_{j=2}^m \widehat{\bW}_j\Big(\tbtheta_j-\tbtheta_1\Big)}_{\bQ}^2.
		\end{aligned}
	\end{equation}
	where
	\begin{align*}
		\norm{\tbtheta_1- \btheta_1^{\star}}_{\bQ}^2&=\nabla f_1\big(\btheta_1^{\star}\big)^{\top}\bH_1^{\star-1}\bQ\bH_1^{\star-1}\nabla f_1\big(\btheta_1^{\star}\big) - 2\br_1^{\top}\bQ\bH_1^{\star-1}\nabla f_1\big(\btheta_1^{\star}\big) +\br_1^{\top}\bQ\br_1\\
		&=\nabla f_1\big(\btheta_1^{\star}\big)^{\top}\bH_1^{\star-1}\bQ\bH_1^{\star-1}\nabla f_1\big(\btheta_1^{\star}\big)+O\left(\norm{\nabla f_1\big(\btheta_1^{\star}\big)}_2\norm{\br_1}_2+\norm{\br_1}_2^2\right)\\
		&=\nabla f_1\big(\btheta_1^{\star}\big)^{\top}\bH_1^{\star-1}\bQ\bH_1^{\star-1}\nabla f_1\big(\btheta_1^{\star}\big)+O\left(\Big(\frac{p\log n_1}{n_1}\Big)^{3/2}\right),
	\end{align*}
	and similarly,
	\begin{align*}
		-2t\left[\left(\tbtheta_1- \btheta_1^{\star}\right)^{\top} \bQ\sum_{j=2}^m \widehat{\bW}_j\Big(\tbtheta_1-\btheta_1^{\star}\Big)\right]
		&=-2t\sum_{j=2}^m\nabla f_1\big(\btheta_1^{\star}\big)^{\top}\bH_1^{\star-1}\bQ \bW_j\bH_1^{\star-1}\nabla f_1\big(\btheta_1^{\star}\big)\\
		&\quad +O\left(m\Big(\frac{p\log n_1}{n_1}\Big)^{3/2}\right),
	\end{align*}
	and
	\begin{align*}
		2t\left[\left(\tbtheta_1- \btheta_1^{\star}\right)^{\top} \bQ\sum_{j=2}^m \widehat{\bW}_j\Big(\tbtheta_j-\btheta_j^{\star}\Big)\right]
		&=
		2t\sum_{j=2}^m 
		\nabla f_1\big(\btheta_1^{\star}\big)^{\top}\bH_1^{\star-1}\bQ\bW_j\bH_j^{\star-1}\nabla f_j\big(\btheta_j^{\star}\big)\\
		&\quad +O\left(m\Big(\frac{p\log n_1}{n_1}\Big)^{3/2}\right).
	\end{align*}
For the fourth term, we first write
\begin{align*}
	&\quad 2t\left[\left(\tbtheta_1- \btheta_1^{\star}\right)^{\top} \bQ\sum_{j=2}^m \widehat{\bW}_j\Big(\btheta_j^{\star}-\btheta_1^{\star}\Big)\right] \\
	&= -2t\nabla f_1\big(\btheta_1^{\star}\big)^{\top}\bH_1^{\star-1}\bQ\sum_{j=2}^m \bW_j\big(\btheta_j^{\star}-\btheta_1^{\star}\big)
	+2t\br_1^{\top}\bQ\sum_{j=2}^m \widehat\bW_j\big(\btheta_j^{\star}-\btheta_1^{\star}\big)\\
	&\quad+2t\nabla f_1\big(\btheta_1^{\star}\big)^{\top}\bH_1^{\star-1}\bQ\sum_{j=2}^m \big(\widehat\bW_j-\bW_j\big)\big(\btheta_j^{\star}-\btheta_1^{\star}\big).
\end{align*}
By the Cauchy-Schwarz inequality, for $\alpha_{n}>0$,
\[
2t\br_1^{\top}\bQ\sum_{j=2}^m \widehat\bW_j\Big(\btheta_j^{\star}-\btheta_1^{\star}\Big)
\leq \alpha_{n}^{-1}\normQ{\br_1}^2+ \alpha_{n}t^2\norm{\sum_{j=2}^m \widehat\bW_j\Big(\btheta_j^{\star}-\btheta_1^{\star}\Big)}_{\bQ}^2,
\]
and 
\begin{align*}
	&\quad 2t\nabla f_1\big(\btheta_1^{\star}\big)^{\top}\bH_1^{\star-1}\bQ\sum_{j=2}^m \big(\widehat\bW_j-\bW_j\big)\Big(\btheta_j^{\star}-\btheta_1^{\star}\Big)\\
	&\leq \sum_{j=2}^m\alpha_{n}^{-1}\lambda_{\min}(\bQ)^{-1}\norm{\nabla f_1\big(\btheta_1^{\star}\big)^{\top}\bH_1^{\star-1}\bQ \big(\widehat\bW_j-\bW_j\big)}_2^2\\
	&\quad+ \sum_{j=2}^m\alpha_{n}\lambda_{\min}(\bQ)t^2\norm{ \btheta_j^{\star}-\btheta_1^{\star}}_{2}^2.
\end{align*}
By  \eqref{eq:error-C}, we obtain
\begin{align*}
	&\quad 2t\br_1^{\top}\bQ\sum_{j=2}^m \widehat\bW_j\big(\btheta_j^{\star}-\btheta_1^{\star}\big)
	+2t\nabla f_1\big(\btheta_1^{\star}\big)^{\top}\bH_1^{\star-1}\bQ\sum_{j=2}^m \big(\widehat\bW_j-\bW_j\big)\big(\btheta_j^{\star}-\btheta_1^{\star}\big) \\
	&=O\left(\sqrt{\frac{p\log n_1}{n_1}}\cdot \sum_{j=2}^m \norm{\btheta_j^{\star}-\btheta_1^{\star}}_2^2\right)
	+O\left(m\Big(\frac{p\log n_1}{n_1}\Big)^{3/2}\right),
\end{align*}
which finally leads to that, with probability $\PP(\mathcal{E}') \geq 1-\sum_{j=1}^{m}c_0'n_j^{-1}$,
\[
\normQ{\overline{\btheta}_{t} - \btheta_1^{\star}}^2 \leq \Upsilon_0+\Upsilon_1t+\Upsilon_2t^2+\zeta_1+\zeta_2,
\]
where 
\begin{align*}
	\Upsilon_0
	&=\nabla f_1\big(\btheta_1^{\star}\big)^{\top}\bH_1^{\star-1}\bQ\bH_1^{\star-1}\nabla f_1\big(\btheta_1^{\star}\big),\\
	\Upsilon_1
	&=-2\sum_{j=2}^m\nabla f_1\big(\btheta_1^{\star}\big)^{\top}\bH_1^{\star-1}\bQ
\bW_j\bH_1^{\star-1}\nabla f_1\big(\btheta_1^{\star}\big)
	+2\sum_{j=2}^m \nabla f_1\big(\btheta_1^{\star}\big)^{\top}\bH_1^{\star-1}\bQ\bW_j\bH_j^{\star-1}\nabla f_j\big(\btheta_j^{\star}\big)\\
	&\quad -2\sum_{j=2}^m\nabla f_1\big(\btheta_1^{\star}\big)^{\top}\bH_1^{\star-1}\bQ \bW_j\big(\btheta_j^{\star}-\btheta_1^{\star}\big), \\
	\Upsilon_2 
	&= \norm{\sum_{j=2}^m \widehat\bW_j\Big(\tbtheta_j-\tbtheta_1\Big)}_{\bQ}^2,\\
	\zeta_1 
	&= O\left(\sqrt{\frac{p\log n_1}{n_1}} \sum_{j=2}^m \norm{\btheta_j^{\star}-\btheta_1^{\star}}_2^2\right),
	\qquad
	\zeta_2 = O\left(m\Big(\frac{p\log n_1}{n_1}\Big)^{3/2}\right).
\end{align*}
	Then it is straightforward to compute that 
	\[\EE[\Upsilon_0]=\angles{n_1^{-1}\bSigma_1, \bQ}, \text{ and } \EE[\Upsilon_1]=-2\sum_{j=2}^{m}\angles{n_1^{-1}\bW_j\bSigma_1, \bQ}.\] 
\end{proof}

\subsection{Proof of the Technical Lemmas}\label{sec:proof-lemmas}
\begin{proof}[Proof of Lemma \ref{lem:sure}]
	This lemma follows from Exercise 2.56 in \cite{robert2007bayesian}, a direct application of Stein's lemma.
\end{proof}
\begin{proof}[Proof of Lemma \ref{lem:integrability}]
	Let 
	$
	\Delta_{\tbtheta} := \tbtheta_2-\tbtheta_1
	$.
	Since $\tbtheta_j\sim\mathcal{N}(\btheta_j^{\star},\,n_j^{-1}\bSigma_j)$ and the two samples are independent, we have
	\[
	\Delta_{\tbtheta} \sim \mathcal{N}\big(	\Delta_{\btheta^\star},\,	(n_1^{-1}+n_2^{-1})\overline{\bSigma}\big),
	\quad\text{where }
	\Delta_{\btheta^\star} := \btheta_2^{\star}-\btheta_1^{\star},
	\quad
	\overline{\bSigma} := (n_1^{-1}+n_2^{-1})^{-1}(n_1^{-1}\bSigma_1+n_2^{-1}\bSigma_2).
	\]
	Define
	$
	\bz := 	(n_1^{-1}+n_2^{-1})^{-1/2}\overline{\bSigma}^{-1/2}\Delta_{\tbtheta}
	$.
	Then $\bz\sim\mathcal{N}\big(\overline{\bSigma}^{-1/2}\Delta_{\btheta^\star},\bI_{p\times p}\big)$, and
	\[
	\normQ{\bW_2\Delta_{\tbtheta}}^2
	= (n_1^{-1}+n_2^{-1})\bz^{\top}\overline{\bSigma}^{1/2}\bW_2^{\top}\bQ\bW_2\overline{\bSigma}^{1/2}\bz =\frac{n_2\bz^{\top}\overline{\bSigma}^{-1/2}\bSigma_1\bQ \bSigma_1\overline{\bSigma}^{-1/2}\bz}{n_1(n_1+n_2)} \geq \frac{\underline{\lambda}n_2}{n_1(n_1+n_2)}\norm{\bz}_2^2,
	\]
	where $\bW_2=n_2 \big(n_1\bSigma_1^{-1}+n_2\bSigma_2^{-1}\big)^{-1}\bSigma_2^{-1}=n_1^{-1}\bSigma_1\overline{\bSigma}^{-1}$,
	and $\underline{\lambda}>0$ denotes the smallest eigenvalue of $\overline{\bSigma}^{-1/2}\bSigma_1\bQ \bSigma_1\overline{\bSigma}^{-1/2}$.
	Hence
	\[
	\E\!\left(\frac{1}{\normQ{\bW_2\Delta_{\tbtheta}}^2}\right)
	\le
\frac{n_1(n_1+n_2)}{\underline{\lambda}n_2}
	\E\left(\frac{1}{\norm{\bz}_2^2}\right).
	\]
	By the proof of Theorem 1 in \cite{bock1984simple}, for any non-central $\chi^2$ random variable $X\sim\chi^2_{p, \lambda}$ with degrees of freedom $p \geq 3$, it holds that
	\begin{equation}\label{eq:E-chisq-inverse}
			\EE[X^{-1}] =e^{-\lambda/2}\sum_{d=0}^{\infty}\frac{(\lambda/2)^d}{d !}\frac{\Gamma(p/2-1+d)}{2\Gamma(p/2+d)} = e^{-\lambda/2}\sum_{d=0}^{\infty}\frac{(\lambda/2)^d}{d !}\frac{1}{p+2d-2}\leq\frac{1}{p-2}.
	\end{equation}

	Since $\norm{\bz}_2^2$ has a non-central $\chi^2$ distribution with $p$ degrees of freedom and $p\ge3$, we obtain 
	\[\E\big(1/\normQ{\bW_2\Delta_{\tbtheta}}^2\big) <\frac{n_1(n_1+n_2)}{\underline{\lambda}n_2(p-2)}.\]
	
	For the lower bound, note that Jensen's inequality gives
	\[
	\E\!\left(\frac{1}{\normQ{\bW_2\Delta_{\tbtheta}}^2}\right)
	\;\ge\;
	\frac{1}{\E\normQ{\bW_2\Delta_{\tbtheta}}^2}.
	\]
	A direct calculation yields
	\[
	\E\normQ{\bW_2\Delta_{\tbtheta}}^2
	=
	\E\big[\Delta_{\tbtheta}^{\top}\bW_2^{\top}\bQ\bW_2\Delta_{\tbtheta}\big]
	=
	\ip{\bW_2^{\top}\bQ\bW_2}{(n_1^{-1}+n_2^{-1})\overline{\bSigma}}
	+
	\normQ{\bW_2\Delta_{\btheta^{\star}}}^2.
	\]
	Recalling that $\bW_2=(1+n_1n_2^{-1})^{-1}\bSigma_1\overline{\bSigma}^{-1}$ and 
	$\bS=n_1^{-1}\bQ^{1/2}\bW_2\bSigma_1\bQ^{1/2}$, one can check that
	$
	\ip{\bW_2^{\top}\bQ\bW_2}{(n_1^{-1}+n_2^{-1})	\overline{\bSigma}}
	=
	\Tr(\bS).
	$
	Therefore,
	\[
	\E\!\left(\frac{1}{\normQ{\bW_2\Delta_{\tbtheta}}^2}\right)
	\ge
	\frac{1}{\Tr(\bS)+\normQ{\bW_2(\btheta_2^{\star}-\btheta_1^{\star})}^2},
	\]
	which is exactly \eqref{eq:inverse-moment}.
	
	Now we check the existence of the expectation of $\big(\widetilde\theta_{1,k}-\theta_{1, k}^{\star}\big)g_{k'}\big(\tbtheta_1\big)$ and the derivatives.  Here we emphasize that the result in Theorem \ref{thm:risk-bound-s} is non-asymptotic with finite $n_j$ and $p$. 
	It suffices to show 
	\[\EE\norm{\tbtheta_1-\btheta_1^{\star}}_2^2\E\norm{\bg\big(\tbtheta_1\big)}_2^2<\infty,\] and $\E\big\|\bJ\big(\tbtheta_1\big)\big\|_{\mathrm{F}}<\infty$. 
	For the first one, using $\normQ{\bv}^2\ge\lambda_{\min}(\bQ)\,\|\bv\|_2^2$,
	\[
	\EE\norm{\tbtheta_1-\btheta_1^{\star}}_2^2\E\norm{\bg\big(\tbtheta_1\big)}_2^2\leq \frac{\Tr(\bSigma_1)}{n_1}
	\E\!\left(
	\frac{s^2 \|\bW_2\Delta_{\tbtheta}\|_2^2}{\normQ{\bW_2\Delta_{\tbtheta}}^4}
	\right)
	\le
	\frac{s^2\Tr(\bSigma_1)}{n_1\lambda_{\min}(\bQ)}\,
	\E\!\left(\frac{1}{\normQ{\bW_2\Delta_{\tbtheta}}^2}\right),
	\]
	where finiteness follows from \eqref{eq:inverse-moment}.
	For the derivatives, from the explicit expression of $\bJ(\bx)$ and evaluating at $\bx=\tbtheta_1$,
	\[
	\big\|\bJ\big(\tbtheta_1\big)\big\|_{\mathrm{F}}
	\le
	\frac{s}{\normQ{\bW_2\Delta_{\tbtheta}}^2}\,\|\bW_2\|_{\mathrm{F}}
	+
	\frac{2s}{\normQ{\bW_2\Delta_{\tbtheta}}^4}\,
	\big\|\bW_2\Delta_{\tbtheta}\Delta_{\tbtheta}^{\!\top}\bW_2^{\!\top}\bQ\bW_2\big\|_{\mathrm{F}}.
	\]
	Using
	\[
	\big\|\bW_2\Delta_{\tbtheta}\Delta_{\tbtheta}^{\!\top}\bW_2^{\!\top}\bQ\bW_2\big\|_{\mathrm{F}}
	=
	\|\bW_2\Delta_{\tbtheta}\|_2\,\|\bW_2^{\!\top}\bQ\bW_2\Delta_{\tbtheta}\|_2
	\le
	\frac{\|\bW_2^{\!\top}\bQ^{1/2}\|_2}{\sqrt{\lambda_{\min}(\bQ)}}\,
	\normQ{\bW_2\Delta_{\tbtheta}}^2,
	\]
	and \eqref{eq:inverse-moment} again  gives the finiteness of $\E\big\|\bJ\big(\tbtheta_1\big)\big\|_{\mathrm{F}}$, which completes the proof.
\end{proof}

%

\begin{proof}[Proof of Lemma \ref{lem:tail-bounds}]
	By Lemmas 5.2--5.4 in \cite{vershynin2010introduction}, there exists a set of vectors $\{\bv_k\}_{k=1}^{9^p} \subset \mathbb{S}^{p-1}$ such that
	\begin{equation}\label{eq:cover-v-vector}
		\norm{\bu}_2\leq \frac{4}{3}\sup_{{k} \in [9^p]} \abs{\bv^{\top}_{k}\bu},
	\end{equation}
	for any vector $\bu \in \R^p$, and
	\begin{equation}\label{eq:cover-v}
		\norm{\bA}_2\leq 2\sup_{{k} \in [9^p]} \abs{\bv^{\top}_{k}\bA\bv_{k}},
	\end{equation}
	for any symmetric $\bA \in \mathbb{R}^{p \times p}$.  Therefore,
	\[\norm{\nabla f_j(\btheta_j^{\star}) }_2\leq \frac{4}{3} \sup_{k\in[9^p]} \abs{\bv_k^{\top}\nabla f_j(\btheta_j^{\star})}, \text{ and } \norm{\nabla^2 f_j(\btheta) }_2\leq 2 \sup_{k\in[9^p]} \abs{\bv_k^{\top}\nabla^2 f_j(\btheta)\bv_k}.\]
	Since $\norm{\bv_k^{\top}\nabla \ell_j(\btheta_j^{\star})}_{\psi_2} \leq \tau$, we have
	\[
	\PP\left(\abs{\bv_k^{\top}\nabla f_j(\btheta_j^{\star})}>t\right) \leq C_1e^{-c_1n_jt^2},
	\]
	for some constants $C_1, c_1>0$, where $c_1$ depends on $\tau$. Hence,
	\[
	\PP\left(\norm{\nabla f_j(\btheta_j^{\star}) }_2>t\right)\leq \PP\left(\sup_{k\in [9^p]}\abs{\bv_k^{\top}\nabla f_j(\btheta_j^{\star})}>\frac{3t}{4}\right) \leq C_19^pe^{-\frac{9c_1n_jt^2}{16}}.
	\]
	Letting $t=t_0:=\sqrt{(16 p\log n_j) / (9c_1n_j)}$ yields that, for $C_1'=4/(3\sqrt{c_1})$, 
	\[
	\norm{\nabla f_j(\btheta_j^{\star}) }_2 \leq C_1' \sqrt{\frac{p\log n_j}{n_j}},
	\]
	with probability at least $1-C_1(9/n_j)^p$.

	Now we turn to bound $\sup_{\btheta \in \mathcal{B}_j(r)}\sup_{k\in[9^p]} \abs{\bv_k^{\top}\left[\nabla^2 f_j(\btheta)-\EE\nabla^2\ell_{j}(\btheta, x)\right]\bv_k}$. Without loss of generality, we assume $\EE\nabla^2\ell_{j}(\btheta, \bx)=0$. For a general case where $\EE\nabla^2\ell_{j}(\btheta, \bx)\neq 0$, replacing $f_j(\cdot)$ with $f_j(\cdot)-\EE\ell_j(\cdot, \bx)$ and repeating the following procedure yield the same result up to a constant factor, using the fact that $\norm{X-\EE X}_{\psi_1} \leq 2\norm{X}_{\psi_1}$ for any random variable $X$.
	
	For any positive integer $\gamma>0$ and each $d \in [p]$, divide the interval $[\theta^{\star}_{j, d}-r,\theta^{\star}_{j, d}+r]$ (where $\theta^{\star}_{j, d}$ denotes the $d$-th entry of $\btheta^{\star}_{j}$) into $n_j^{\gamma}$ small intervals, each with length $2r / n_j^{\gamma}$. This division creates $n^{\gamma}$ intervals on each dimension, and the direct product of those intervals divides the hypercube $\braces{\btheta: \norm{\btheta-\btheta_j^{\star}}_\infty\leq r}$ into $n_j^{\gamma p}$ small hypercubes. By arbitrarily picking a point on each small hypercube, we can find  $\braces{\btheta_1,\dots,\btheta_{n_j^{\gamma p}}} \subset \mathbb{R}^{p}$,  such that for all $\btheta$ in the ball $\braces{\btheta: \norm{\btheta-\btheta_j^{\star}}_2\leq r} \subset\braces{\btheta: \norm{\btheta-\btheta_j^{\star}}_\infty\leq r}$, there exists $k_{\btheta} \in [n_j^{\gamma p}]$ such that $\norm{\btheta-\btheta_{k_{\btheta}}}_2 \leq \sqrt{p}\norm{\btheta-\btheta_{k_{\btheta}}}_\infty \leq 2\sqrt{p}r / n_j^{\gamma}$.  Using this construction, we have
	\[
	\sup_{k\in[9^p]}\sup_{\btheta\in\mathcal{B}_{j}(r)}\abs{\bv_k^{\top}\nabla^2 f_j(\btheta)\bv_k-\bv_k^{\top}\nabla^2 f_j(\btheta_{k_{\btheta}})\bv_k} \leq \frac{1}{n_j}\sum_{i=1}^{n_j}\sup_{\btheta \in \mathcal{B}_j(r)}\norm{\nabla^3\ell_j(\btheta, \bx_{ji})}_{\mathrm{op}}\frac{2\sqrt{p}r}{n_j^{\gamma}}.\] 
	Since $\EE\left[\sup_{\btheta \in \mathcal{B}_j(r)}\norm{\nabla^3\ell_j(\btheta, \bx_{ji})}_{\mathrm{op}}\right] \leq Lp^c$, by Markov's inequality, we have
	\begin{equation}\label{eq:cover-theta}
		\sup_{k\in[9^p]}\sup_{\btheta\in\mathcal{B}_{j}(r)}\abs{\bv_k^{\top}\nabla^2 f_j(\btheta)\bv_k-\bv_k^{\top}\nabla^2 f_j(\btheta_{k_{\btheta}})\bv_k} \leq \frac{2Lp^{c+1/2}r}{n_j^{\gamma-1}},
	\end{equation}
	with probability at least $1-n_j^{-1}$.
	
	For any $k \in [9^p]$ and $k' \in [n_j^{\gamma p}]$, since $\norm{\bv_k^{\top}\nabla^2 \ell_j(\btheta_{k'})\bv_k}_{\psi_1} \leq \tau^2$, for some constant $c_2>0$ that depends on $\tau$, we have
	\[
	\PP\left(\abs{\bv_k^{\top}\nabla^2 f_j(\btheta_{k'})\bv_k}>t\right) \leq 2 e^{-c_2n_j\min\{t^2, t\}},
	\]
	and hence
	\[
	\PP\left(	\sup_{k \in [9^p], k' \in [n_j^{\gamma p}]}\abs{\bv_k^{\top}\nabla^2 f_j(\btheta_{k'})\bv_k}>t\right) \leq 2 \cdot 9^pn_j^{\gamma p} e^{-c_2n_j\min\{t^2, t\}}.
	\]
	By picking $t=\sqrt{2\gamma p\log n_j/c_2n_j}$, we obtain that
	\begin{equation}\label{eq:concentration-hessian}
		\sup_{k \in [9^p], k' \in [n^{\gamma p}]}\abs{\bv_k^{\top}\nabla^2 f_j(\btheta_{k'})\bv_k}\leq C_2\sqrt{\frac{p\log n_j}{n_j}},
	\end{equation}
	with probability at least $1-2(9n_j^{-\gamma})^p$, where $C_2=\sqrt{2\gamma /c_2}$. Picking $\gamma$ sufficiently large such that $\frac{2Lp^{c+1/2}r}{n_j^{\gamma-1}} \lesssim \sqrt{\frac{p\log n_j}{n_j}}$ and combing \eqref{eq:cover-v}, \eqref{eq:cover-theta}, and \eqref{eq:concentration-hessian} leads to
	\[
	\sup_{\btheta\in\mathcal{B}_j(r)} \norm{\nabla^2 f_j(\btheta) }_2 \leq C_2' \sqrt{\frac{p\log n_j}{n_j}},
	\]
	with probability at least $1-n_j^{-1}-2(9n_j^{-1})^p$, for some constant $C_2'>0$.
	
\end{proof}

\begin{proof}[Proof of Lemma \ref{lem:consistency-HhatVhat}]
	Let $\mathcal{B}_j(\Delta)=\{\btheta: \norm{\btheta-\btheta_j^{\star}}_2\leq \Delta\}$ with $\Delta \leq \min \{r, 1/(2ML) \}$.
	By \eqref{eq:concentrate-sup-hessian} and \eqref{eq:lips-hessian},
	\[
	\sup_{\btheta \in \mathcal{B}_j(\Delta)}  \norm{\nabla^2 f_j(\btheta) - \EE_{\bx\in \mathcal{P}_j} \nabla^2\ell_{j}(\btheta_j^{\star}, \bx)}_2\leq C_0 \sqrt{\frac{p\log n_j}{n_j}} + L\Delta,
	\]
	on the event $\mathcal{E}_j$ defined under \eqref{eq:concentrate-sup-hessian}. Then by \eqref{eq:bound-tbtheta}, it is straightforward to see that \[\norm{\hbH_j-\bH_j^{\star}}_2= \norm{\nabla^2 f_j\big(\tbtheta_j\big) - \EE_{\bx\in \mathcal{P}_j} \nabla^2\ell_{j}(\btheta_j^{\star}, \bx)}_2\leq (8ML+1)C_0\sqrt{\frac{p\log n_j}{n_j}}.\]
	
		Now we turn to bound $\norm{\hbV_j-\bV_j^{\star}}_2$. Define 
	\[\bV_j(\btheta):=\frac{1}{n_j} \sum_{i=1}^{n_j } \nabla \ell_j \big(\btheta , \bx_{ji}\big) \nabla \ell_j \big(\btheta , \bx_{ji}\big)^{\top}.\]
	We will first bound the concentration error $\sup\limits_{\btheta\in\mathcal{B}_j(\Delta)}\norm{\bV_j(\btheta)-\EE[\bV_j(\btheta)]}_2$. 
	By \eqref{eq:cover-v}, it suffices to bound $\sup\limits_{{k} \in [9^p]}\sup\limits_{\btheta \in \mathcal{B}_j(\Delta)} \abs{\bv^{\top}_{k}\brackets{\bV_j\left(\btheta\right)-\EE[\bV_j(\btheta)]}\bv_{k}}$. 
	For any positive integer $\gamma>0$ and each $d \in [p]$, divide the interval $[\theta^{\star}_{j, d}-\Delta,\theta^{\star}_{j, d}+\Delta]$ (where $\theta^{\star}_{j, d}$ denotes the $d$-th entry of $\btheta^{\star}_{j}$) into $n_j^{\gamma}$ small intervals, each with length $2\Delta / n_j^{\gamma}$. This division creates $n_j^{\gamma}$ intervals on each dimension, and the direct product of those intervals divides the hypercube $\braces{\btheta: \norm{\btheta-\btheta_j^{\star}}_\infty\leq \Delta}$ into $n_j^{\gamma p}$ small hypercubes. By arbitrarily picking a point on each small hypercube, we can find  $\braces{\btheta_1,\dots,\btheta_{n_j^{\gamma p}}} \subset \mathbb{R}^{p}$,  such that for all $\btheta$ in the ball $\braces{\btheta: \norm{\btheta-\btheta_j^{\star}}_2\leq \Delta} \subset\braces{\btheta: \norm{\btheta-\btheta_j^{\star}}_\infty\leq \Delta}$, there exists $k_{\btheta} \in [n_j^{\gamma p}]$ such that $\norm{\btheta-\btheta_{k_{\btheta}}}_2 \leq \sqrt{p}\norm{\btheta-\btheta_{k_{\btheta}}}_\infty \leq 2\sqrt{p}\Delta / n_j^{\gamma}$. Using this construction, let
	\begin{align*}
\phi_{i,j,k}(\btheta)
	&:=\left(\bv_k^{\top}\nabla \ell_j \big(\btheta , \bx_{ji}\big)\right)^2-\left(\bv_k^{\top}\nabla \ell_j \big(\btheta_{k_{\btheta}}, \bx_{ji}\big)\right)^2\\
	&=	2\left(\bv_k^{\top}\nabla\ell_j\big(\overline{\btheta}_{k_{\btheta}}, \bx_{ji}\big)\right)\left(\bv_k^{\top}\nabla^2\ell_j\big(\overline{\btheta}_{k_{\btheta}}, \bx_{ji}\big)\big(\overline{\btheta}_{k_{\btheta}}-\btheta\big)\right)\\
		&\leq  4\sqrt{p}\Delta n_j^{-\gamma}\sup_{\btheta \in \mathcal{B}_j(\Delta)} \norm{\nabla\ell_j\big(\btheta, \bx_{ji}\big)\otimes\nabla^2\ell_j\big(\btheta, \bx_{ji}\big)}_{\mathrm{op}},
	\end{align*}
	where $\overline{\btheta}_{k_{\btheta}}$ is between $\btheta$ and $\btheta_{k_{\btheta}}$. By assumption, 
	\[\EE\brackets{\sup_{\btheta \in \mathcal{B}_j(\Delta)}  \norm{\nabla\ell_j\big(\btheta, \bx_{ji}\big)\otimes\nabla^2\ell_j\big(\btheta, \bx_{ji}\big)}_{\mathrm{op}}}\leq Lp^c,\]
	and thus, by Markov's inequality, with probability at least $1-n_j^{-1}$,
	\begin{equation}\label{eq:concentration-V4-1}
		\begin{aligned}
			&\quad\sup_{k\in [9^p]}\sup_{\btheta \in \mathcal{B}_j(\Delta)} \abs{\bv_k^{\top}(\bV_{j}(\btheta)-\EE[\bV_{j}(\btheta)])\bv_k- \bv_k^{\top}(\bV_{j}(\btheta_{k_{\btheta}})-\EE[\bV_{j}(\btheta_{k_{\btheta}})])\bv_k}\\
			&\leq\frac{1}{n_j}\sum_{i=1}^{n_j}\sup_{\btheta \in \mathcal{B}_j(\Delta)}\sup_{k\in [9^p]} \abs{\phi_{i, j,k}(\btheta)} +4Lp^{c+1/2}\Delta n_j^{-\gamma}\\
			&\leq \frac{8Lp^{c+\frac{1}{2}}\Delta}{n_j^{\gamma-1}}\leq \Delta,
		\end{aligned}
	\end{equation}
	by picking $\gamma>c+3/2$.
	
		For any $k \in [9^p]$ and $k'\in[n^{\gamma p}]$, \[\norm{\left(\bv_{k}^{\top}\nabla \ell_j \big(\btheta_{k'}, \bx_{ji}\big)\right)^2}_{\psi_1} \leq 2\norm{\bv_{k}^{\top}\nabla \ell_j \big(\btheta_{k'} , \bx_{ji}\big)}_{\psi_2}^2 \leq 2\tau^2.\] Similar to \eqref{eq:concentration-hessian}, we obtain 
	\begin{equation}\label{eq:concentrate-V1}
		\sup_{k\in[9^p]}	\sup_{k'\in[n^{\gamma p}]}\abs{\bv^{\top}_{k}\brackets{\bV_j\left(\btheta_{k'}\right)-\EE[\bV_j(\btheta_{k'})]}\bv_{k}}\leq C_3 \sqrt{\frac{p\log n_j}{n_j}},
	\end{equation}
	with probability at least $1-2\cdot(9n^{-\gamma})^p$, for some constant $C_3>0$. Combining \eqref{eq:concentration-V4-1} and \eqref{eq:concentrate-V1} leads to 
	\begin{equation}\label{eq:concentrate-Vj}
	\sup_{k\in [9^p]}\sup_{\btheta \in \mathcal{B}_j(\Delta)} \abs{\bv_k^{\top}(\bV_{j}(\btheta)-\EE[\bV_{j}(\btheta)])\bv_k} \leq C_3 \sqrt{\frac{p\log n_j}{n_j}} +\Delta,
	\end{equation}
	with probability at least $1-n_j^{-1}-2\cdot(9n_j^{-\gamma})^p$.
	
	Moreover, for any $\btheta \in \mathcal{B}_j(\Delta)$, expanding $\nabla \ell_j(\btheta)$ at $\btheta_j^{\star}$ yields
	\begin{align*}
		\nabla \ell_j \big(\btheta , \bx_{ji}\big) &= \nabla \ell_j \big(\btheta_j^{\star} , \bx_{ji}\big)+\left[\int_{0}^{1}\nabla^2 \ell_j \big((1-t)\btheta_j^{\star}+t\btheta, \bx_{ji}\big)\mathrm{d} t\right] \big(\btheta -\btheta_j^{\star}\big)\\
		&= \nabla \ell_j \big(\btheta_j^{\star} , \bx_{ji}\big)+\overline{\bH_{ji}}(\btheta) \big(\btheta -\btheta_j^{\star}\big),
	\end{align*}
	where $\overline{\bH_{ji}}(\btheta):=\int_{0}^{1}\nabla^2 \ell_j \big((1-t)\btheta_j^{\star}+t\btheta, \bx_{ji}\big)\mathrm{d}t$. Thus,
	\begin{equation}\label{eq:V-decompose}
		\begin{aligned}
			\bV_j(\btheta) &=  \frac{1}{n_j} \sum_{i=1}^{n_j } \nabla \ell_j \big(\btheta , \bx_{ji}\big) \nabla \ell_j \big(\btheta , \bx_{ji}\big)^{\top}\\
			&=\frac{1}{n_j} \sum_{i=1}^{n_j } \nabla \ell_j \big(\btheta_j^{\star} , \bx_{ji}\big) \nabla \ell_j \big(\btheta_j^{\star}, \bx_{ji}\big)^{\top}+\frac{1}{n_j} \sum_{i=1}^{n_j } \nabla \ell_j \big(\btheta_j^{\star} , \bx_{ji}\big)\big(\btheta -\btheta_j^{\star}\big)^{\top}\overline{\bH_{ji}}(\btheta)\\
			&\quad + \frac{1}{n_j} \sum_{i=1}^{n_j } \overline{\bH_{ji}}(\btheta) \big(\btheta -\btheta_j^{\star}\big)\nabla \ell_j \big(\btheta_j^{\star} , \bx_{ji}\big)^{\top} + \frac{1}{n_j} \sum_{i=1}^{n_j }\overline{\bH_{ji}}(\btheta) \big(\btheta -\btheta_j^{\star}\big) \big(\btheta -\btheta_j^{\star}\big)^{\top}\overline{\bH_{ji}}(\btheta),
		\end{aligned}
	\end{equation}
	which yields
	\[
	\bv_k^{\top}(\EE[\bV_j(\btheta)]-\bV_j^{\star})\bv_{k}=2\EE\left[\bv_k^{\top}\nabla \ell_j \big(\btheta_j^{\star} , \bx_{ji}\big)\big(\btheta -\btheta_j^{\star}\big)^{\top}\overline{\bH_{ji}}(\btheta)\bv_k\right]+\EE\left[\Big(\big(\btheta -\btheta_j^{\star}\big)^{\top}\overline{\bH_{ji}}(\btheta)\bv_k\Big)^2\right].\]

	By Cauchy-Schwarz, 
	\begin{align*}
		\EE\left[\bv_k^{\top}\nabla \ell_j \big(\btheta_j^{\star} , \bx_{ji}\big)\big(\btheta -\btheta_j^{\star}\big)^{\top}\overline{\bH_{ji}}(\btheta)\bv_k\right] & \leq \sqrt{\EE\left[\left(\bv_k^{\top}\nabla \ell_j \big(\btheta_j^{\star} , \bx_{ji}\big)\right)^2\right]\EE\left[\left(\big(\btheta -\btheta_j^{\star}\big)^{\top}\overline{\bH_{ji}}(\btheta)\bv_k\right)^2\right]}\\
		&\leq \Delta\sqrt{M}\sqrt{\EE\left[\left(\bdelta_{\btheta, j}^{\top}\overline{\bH_{ji}}(\btheta)\bv_k\right)^2\right]},
	\end{align*}
	where $\bdelta_{\btheta, j}:=\frac{\btheta -\btheta_j^{\star}}{\norm{\btheta -\btheta_j^{\star}}_2}$. Since $\norm{\bv^{\top}\nabla^2 \ell_j(\btheta, \bx_{ji})\bv}_{\psi_1} \leq \tau^2$ for all $\btheta \in \mathcal{B}_j(\Delta)$ and $\bv \in \mathbb{S}^{p-1}$, we have that $\EE\left[\left(\bv^{\top}\nabla^2 \ell_j(\btheta, \bx_{ji})\bv\right)^2\right] \leq 4\tau^4$ for all $\btheta \in \mathcal{B}_j(\Delta)$ and $\bv \in \mathbb{S}^{p-1}$, and hence
	\begin{align*}
	&\quad \EE\left[\left(\bdelta_{\btheta, j}^{\top}\overline{\bH_{ji}}(\btheta)\bv_k\right)^2\right]\\
	&\leq \EE\left[ \int_{0}^{1} \left(\bdelta_{\btheta, j}^{\top}\nabla^2\ell_j \big((1-t)\btheta_j^{\star}+t\btheta, \bx_{ji}\big)\bv_k\right)^2\mathrm{d}t\right]\\
	&= \int_{0}^{1}\EE\left[ \left(\bdelta_{\btheta, j}^{\top}\nabla^2\ell_j \big((1-t)\btheta_j^{\star}+t\btheta, \bx_{ji}\big)\bv_k\right)^2\right] \mathrm{d}t\\
	&\leq  \int_{0}^{1}\EE\left[ \left(\bdelta_{\btheta, j}^{\top}\nabla^2\ell_j \big((1-t)\btheta_j^{\star}+t\btheta, \bx_{ji}\big)\bdelta_{\btheta, j}\right)\left(\bv_k^{\top}\nabla^2\ell_j \big((1-t)\btheta_j^{\star}+t\btheta, \bx_{ji}\big)\bv_k\right)\right] \mathrm{d}t\\
	&\leq \int_{0}^{1}\sqrt{\EE\left[ \left(\bdelta_{\btheta, j}^{\top}\nabla^2\ell_j \big((1-t)\btheta_j^{\star}+t\btheta, \bx_{ji}\big)\bdelta_{\btheta, j}\right)^2\right]\EE\left[\left(\bv_k^{\top}\nabla^2\ell_j \big((1-t)\btheta_j^{\star}+t\btheta, \bx_{ji}\big)\bv_k\right)^2\right]} \mathrm{d}t\\
	&\leq 4\tau^4.
	\end{align*}
	Therefore, we obtain that
	\[
	\sup_{\btheta \in \mathcal{B}_j(\Delta)}\abs{\bv_k^{\top}(\EE[\bV_j(\btheta)]-\bV_j^{\star})\bv_{k}} \leq 4\Delta\sqrt{M}\tau^2+4\tau^4\Delta^2.
	\]
		Combing this inequality with \eqref{eq:concentrate-Vj} leads to
	\[
	\sup_{\btheta \in \mathcal{B}_j(\Delta)}\norm{\bV_j(\btheta)-\bV_j^{\star}}_2 \lesssim \sqrt{\frac{p\log n_j}{n_j}} + \Delta,
	\]
	with probability at least $1-n_j^{-1}-2\cdot(9n_j^{-\gamma})^p$.
	
	In particular, for the local $M$-estimator $\tbtheta_j$, by \eqref{eq:bound-tbtheta},
	\[\Delta=\norm{\tbtheta_j-\btheta_j^{\star}}_2 \leq 8MC_0 \sqrt{\frac{p\log n_j} {n_j}},\] 
	with probability at least $1-c_0n_j^{-1}$.
	Then it holds that 
	\[\norm{\bV_j\big(\tbtheta_j\big)-\bV_j^{\star}}_2 \leq C_0'\sqrt{\frac{p\log n_j}{n_j}},\]
	with probability $1-c_0n_j^{-1}-n_j^{-1}-2(9n_j^{-\gamma})^p \geq 1-c_0'n_j^{-1}$ for some constant $C_0', c_0'>0$.

\end{proof}


\ifseparatebib
\renewcommand{\refname}{References}
\putbib[main]
\end{bibunit}
\fi

\end{document}